\numberwithin{equation}{section}
\theoremstyle{plain}
\newtheorem{theorem}{Theorem}[section]
\newtheorem{corollary}[theorem]{Corollary}
\newtheorem{prop}[theorem]{Proposition}
\newtheorem{lemma}[theorem]{Lemma}
\newtheorem{claim}[theorem]{Claim}
\theoremstyle{definition}
\newtheorem{definition}[theorem]{Definition}
\newtheorem{defn}[theorem]{Definition}
\newtheorem{procedure}[theorem]{Procedure}
\theoremstyle{remark}
\newtheorem{remark}[theorem]{Remark}
\newtheorem{notation}[theorem]{Notation}
\newcommand{\argmin}{\operatornamewithlimits{argmin}}
\newcommand{\Sym}{{\rm Sym}}
\newcommand{\Symp}{ {\rm Sym}^+(p)  }
\newcommand{\Esr}{{E^{(\mathcal{SR})}}}
\newcommand{\Ensr}{{E_n^{(\mathcal{SR})}}}
\newcommand{\Epsr}{{E^{(\mathcal{PSR})}}}
\newcommand{\Enpsr}{{E_n^{(\mathcal{PSR})}}}
\newcommand{\lwr}{ {\rm lwr} }
\newcommand{\fpsr}{ f^{(\mathcal{PSR})} }
\newcommand{\fnpsr}{ f_n^{(\mathcal{PSR})} }
\newcommand{\fsr}{ f^{(\mathcal{SR})} }
\newcommand{\fnsr}{ f_n^{(\mathcal{SR})} }
\newcommand{\dpsr}{ d_{\mathcal{PSR}} }
\newcommand{\dsr}{ d_{\mathcal{SR}} }
\newcommand{\Sptop}{ S_p^{\rm top} }
\newcommand{\Splwr}{ S_p^{\rm lwr} }
\newcommand{\Real}{  \mathbb{R}}
\newcommand{\Diagp}{ {\rm Diag}^+(p)}
\newcommand{\J}{{\sf J}}
\newcommand{\Fc}{\mathcal{F}}
\newcommand{\Uc}{\mathcal{U}}
\newcommand{\Vc}{\mathcal{V}}
\newcommand{\bfr}{{\Real}}
\renewcommand{\Tilde}{\widetilde}
\newcommand{\be}{\begin{equation}}
\newcommand{\ee}{\end{equation}}
\newcommand{\bearray}{\begin{eqnarray}}
\newcommand{\eearray}{\end{eqnarray}}
\newcommand{\bestar}{\begin{eqnarray*}}
\newcommand{\eestar}{\end{eqnarray*}}
\newcommand{\ben}{\begin{displaymath}}
\newcommand{\een}{\end{displaymath}}
\newcommand{\lnbrk}{\linebreak}
\renewcommand{\d}{\delta}
\newcommand{\e}{\epsilon}
\renewcommand{\l}{\lambda}
\newcommand{\noi}{\noindent}
\renewcommand{\ss}{\vspace{.1in}}
\newcommand{\ssn}{\vspace{.1in}\noindent}
\newcommand{\indenum}{\mbox{\hspace{3ex}}}
\newcommand{\Union}{\bigcup}
\newcommand{\minus}{\backslash}
\newcommand{\diam}{{\rm diam}}
\newcommand{\D}{{\mathcal D}}
\newcommand{\F}{F}   
\newcommand{\G}{{\mathcal{G}}}
\renewcommand{\H}{{\mathcal{H}}}
\newcommand{\calc}{{\mathcal C}}
\renewcommand{\L}{\Lambda}
\newcommand{\cals}{{\mathcal{S}}}
\newcommand{\tS}{\tilde{S}}
\newcommand{\T}{{\mathcal{T}}}
\newcommand{\td}{{d_M}}
\newcommand{\tg}{{\tilde{g}}}
\newcommand{\tim}{{M}}
\newcommand{\sop}{{SO(p)}}
\newcommand{\dpp}{{{\rm Diag}^+(p)}}
\newcommand{\dso}{d_{SO}} 
\newcommand{\ddpp}{d_\D}
\newcommand{\sympp}{{\rm Sym}^+(p)}
\newcommand{\pset}{\{1,2,\dots, p\}}
\newcommand{\partpset}{{\rm Part}(\pset)}
\newcommand{\fpsrp}{\fpsr}
\newcommand{\fgl}{\mathfrak{gl}}
\newcommand{\glpr}{\fgl(p,\bfr)}
\newcommand{\gl}{{\mathfrak g}_\Lambda}
\newcommand{\glp}{\gl^\perp}
\newcommand{\ad}{{\rm ad}}
\newcommand{\lsop}{\mathfrak{so}(p)}
\newcommand{\plus}{\oplus}
\newcommand{\frob}{{\rm Frob}}
\newcommand{\fr}{{\rm Fr}}
\newcommand{\so}{\mathfrak{so}}
\newcommand{\fd}{\mathfrak{d}}
\newcommand{\tge}{{\tg_e}}
\newcommand{\dstrat}{\d_{\rm strat}}
\begin{document}

\begin{frontmatter}
\title{Averaging symmetric positive-definite matrices on the space of eigen-decompositions}
\runtitle{Averaging {SPD} matrices via eigen-decomposition}

\begin{aug}

\author[A]{\fnms{Sungkyu} \snm{Jung}\ead[label=e1]{sungkyu@snu.ac.kr}},
\author[B]{\fnms{Brian} \snm{Rooks}\ead[label=e2]{btrooks88@gmail.com}},
\author[C]{\fnms{David} \snm{Groisser}\ead[label=e3]{groisser@ufl.edu}}
\and
\author[D]{\fnms{Armin} \snm{Schwartzman}\ead[label=e4]{armins@ucsd.edu}}
\runauthor{Jung et al.}
\address[A]{Department of Statistics, Seoul National University, \printead{e1}}

\address[B]{Statistics \& Data Corporation, \printead{e2}}
\address[C]{Department of Mathematics, University of Florida, \printead{e3}}
\address[D]{Division of Biostatistics and Hal{\i}c{\i}o\u{g}lu Data Science Institute, University of California, San Diego, \printead{e4}}
\end{aug}
 
\begin{abstract}
We study extensions of Fr\'{e}chet means for random objects in the space ${\rm Sym}^+(p)$ of $p \times p$ symmetric positive-definite  matrices using the scaling-rotation   geometric framework
introduced by Jung et al. [\textit{SIAM J. Matrix. Anal. Appl.} \textbf{36} (2015) 1180-1201].
The scaling-rotation framework is designed to enjoy a clearer interpretation of the changes in random ellipsoids in terms of scaling and rotation. In this work, we formally define the \emph{scaling-rotation (SR) mean set} to be the set of Fr\'{e}chet means in ${\rm Sym}^+(p)$  with respect to the scaling-rotation distance. Since computing such means requires a difficult optimization, we also define the \emph{partial scaling-rotation (PSR) mean set} lying on the space of eigen-decompositions as a proxy for the SR mean set. The PSR mean set is easier to compute and its projection to ${\rm Sym}^+(p)$ often coincides with SR mean set. Minimal conditions are required to ensure that the mean sets are non-empty.
Because eigen-decompositions are never unique, neither are PSR means, but we give sufficient conditions for the sample PSR mean to be unique up to the action of a certain finite group. We also establish strong consistency of the sample PSR means as estimators of the population PSR mean set, and a central limit theorem.  In an application to multivariate tensor-based morphometry, we demonstrate that a two-group test using the proposed PSR means can have greater power than the two-group test using the usual affine-invariant geometric framework for  symmetric positive-definite matrices.
\end{abstract}

\begin{keyword}[class=MSC2020]
\kwd[Primary ]{62R30}
\kwd[; secondary ]{62E20}
\end{keyword}

\begin{keyword}
\kwd{scaling-rotation distance}
\kwd{statistics on manifolds}
\kwd{strong consistency}
\kwd{central limit theorem}
\end{keyword}

\end{frontmatter}



\section{Introduction}
Recently, much work has been done to advance the statistical analysis of random symmetric positive-definite (SPD) matrices. Applications in which data arise as SPD matrices include analysis of diffusion tensor imaging (DTI) data \citep{Alexander2005,Batchelor2005}, multivariate tensor-based morphometry (TBM) \citep{Lepore2008,Paquette2017}, and tensor computing \citep{Pennec2006}. In this paper we consider the setting in which we have a random sample of SPD matrices and wish to estimate a population mean. 

Location estimation is an important first step in the development of many statistical techniques. For applications in which data are SPD matrices, these techniques include two sample hypothesis testing \citep{Schwartzman2010} for comparing average brain scans from two groups of interest, principal geodesic analysis \citep{Fletcher2004} for visualizing major modes of variation in a sample of SPD matrices, and weighted mean estimation, which has useful applications in diffusion tensor processing, including fiber tracking, smoothing, and interpolation \citep{Batchelor2005,Carmichael2013}.

One of the challenges of developing methods for analyzing SPD-valued data is that the positive-definiteness constraint precludes ${\rm Sym}^+(p)$, the space of $p \times p$ SPD matrices, from being a vector subspace of ${\rm Sym}(p)$, the space of all symmetric $p\times p$ matrices. This can be easily visualized for $p=2$; the free coordinates (two diagonal elements and upper off-diagonal element) of all $2\times 2$ SPD matrices in ${\rm Sym}(2) \cong \mathbb{R}^3$ constitutes an open convex cone. Hence, conventional estimation or inferential techniques developed for data that varies freely over Euclidean space may not be appropriate for the statistical analysis of SPD matrices. With this in mind, many location estimation frameworks for ${\rm Sym}^+(p)$ have been developed in recent years, including the log-Euclidean framework \citep{Arsigny2007}, affine-invariant framework \citep{Fletcher2004,Pennec2006}, log-Cholesky framework \citep{lin2019riemannian}, and Procrustes framework \citep{Dryden2009,masarotto2019procrustes}; see \cite{feragen2017geometries} for other examples. Given a sample of SPD matrices, most of these estimation methods amount to transforming the SPD-valued observations, averaging in the space of the transformed observations, and then mapping the mean of the transformed data into ${\rm Sym}^+(p)$. For example, the log-Euclidean method maps each observation into ${\rm Sym}(p)$ via the matrix logarithm, computes the sample mean of the transformed observations, and then maps that mean into ${\rm Sym}^+(p)$ via the matrix exponential function, while the Procrustes size-and-shape method begins with averaging the Cholesky square roots of observations, and then maps the average $\hat{L}$ to ${\rm Sym}^+(p)$ as $\hat{\Sigma} = \hat{L}\hat{L}^T$, where ${A}^T$ denotes the transpose of a matrix ${A}$.


While these geometric frameworks account for the positive-definiteness constraint of ${\rm Sym}^+(p)$, it is not clear which, if any, of the log-Euclidean, affine-invariant, or Procrustes size-and-shape frameworks is most ``natural'' for describing deformations of SPD matrices. Motivated by the analysis of DTI data, a setting in which observations are SPD matrices represented as ellipsoids in $\mathbb{R}^3$, \cite{Jung2015} developed a different framework, called the scaling-rotation (SR) framework for ${\rm Sym}^+(p)$. Under this framework, the distance between SPD matrices $X$ and $Y$ is defined as the minimal amount of rotation of axes and scaling of axis lengths necessary to deform the ellipsoid associated with $X$ into the ellipsoid associated with $Y$. For this, an SPD matrix $X$ is decomposed into eigenvectors and eigenvalues, which respectively stand for rotations and scalings.
The SR framework yields interpolation curves that have desirable properties, including constant rate of rotation and log-linear scaling of eigenvalues, and is the only geometric framework (compared to the aforementioned frameworks)
to produce both pure-scaling interpolation curves and pure-rotation curves when the endpoints differ by pure scaling or pure rotation.
While interpolation approaches similar to the SR framework can be found in \cite{wang2014tracking} and \cite{collard2012anisotropy}, only the SR framework addresses the non-uniqueness of eigen-decompositions \citep{Groisser2017,Groisser2017a}. See \cite{feragen2017geometries} and \cite{feragen2020statistics} for a comparison of the SR framework with other geometric frameworks for SPD matrices.

A major complication in developing statistical procedures using the SR framework is that eigen-decompositions are not unique. For example, an SPD matrix
$X = \mbox{diag}(8,3) = \big(\begin{smallmatrix}
  8 & 0\\
  0 & 3
\end{smallmatrix}\big)$
can be eigen-decomposed into either
$$X = U_1 D_1 U_1^T, \quad U_1 = \big(\begin{smallmatrix}
                                    1 & 0 \\
                                    0 & 1
                                  \end{smallmatrix}\big), \ D_1 = \big(\begin{smallmatrix}
                                    8 & 0 \\
                                    0 & 3
                                  \end{smallmatrix}\big),$$
                                  or
$$X = U_2 D_2 U_2^T, \quad U_2 =  \big(\begin{smallmatrix}
                                    0 & -1 \\
                                    1 & 0 \\
                                  \end{smallmatrix}\big), \ D_2 = \big(\begin{smallmatrix}
                                    3 & 0 \\
                                    0 & 8 \\
                                  \end{smallmatrix}\big).$$
(There are in fact 4 distinct eigen-decompositions for $\mbox{diag}(8,3)$, if the eigenvector matrices are required to be orthogonal matrices of positive determinant.)
Write $(U_X, D_X)$ for an eigen-decomposition (a pair of eigenvector and eigenvalue matrices) of an SPD matrix $X$, and let $\mathcal{F}$ be the eigen-composition map, e.g., $\mathcal{F}(U_X, D_X) = U_XD_XU_X^T = X$ (see Definition \ref{defn:geodesic_dist}). The SR framework defines the ``distance" between $X, Y \in \Symp$ to be
$\dsr(X,Y) := \inf d_M( (U_X, D_X), (U_Y, D_Y))$, where the infimum is taken over all possible eigen-decompositions of both $X$ and $Y$, and $d_M$ is the (geodesic) distance function on the space $M(p)$ of eigen-decompositions
(see Definition \ref{def:dsr}).
$\Symp$ is a stratified space; the stratum to which $X \in \Symp$ belongs is determined by the topological structure of the fiber $\mathcal{F}^{-1}(X)$ (the set of all eigen-decompositions corresponding to $X \in \Symp$).
The scaling-rotation distance $\dsr$ fails to be  a true metric on $\Symp$, and is difficult to compute because the set we minimize over in the definition of $\dsr (X,Y)$ is a pair of these fibers (whose topology varies with the strata of $X$ and $Y$).
With these complications in mind, the goal of this paper is to establish location-estimation methods using the SR framework as a foundation for future methods that will inherit the interpretability of the framework.

If one of the well-established  geometric frameworks, such as the affine-invariant or log-Cholesky frameworks, is used, then $\Symp$ is understood as a Riemannian manifold with a Riemannian metric tensor defined on the tangent bundle. The Riemannian metric gives rise to a distance function, say $d$, and $(\Symp, d)$ is a metric space. For these metric spaces, the Fr\'{e}chet mean  \citep{Frechet1945} is a natural candidate for  a location parameter, and conditions that guarantee uniqueness of  Fr\'{e}chet means, convergence of empirical  Fr\'{e}chet means to the population counterpart, and central-limit-theorem type results, are well-known \citep[\emph{cf}.][]{Afsari2011,Bhatt2003,Bhatt2005,Bhatt2017,Huckemann2011a,Huckemann2011b,eltzner2021stability,schotz2022strong}.

But in the SR framework, since $\dsr$ is not a true metric on $\Symp$, many of the theoretical properties of Fr\'{e}chet means (if they are defined) are no longer guaranteed. Moreover, on a practical side, computing a scaling-rotation (SR) mean, defined as a minimizer over the sum of squared SR distances to observations, requires discrete optimization in general and is thus challenging to implement. As a proxy for the SR mean, we define a \emph{partial scaling-rotation (PSR) mean} on the space of eigen-decompositions; for a finite sample $X_1,\ldots,X_n \in \Symp$, the PSR mean set is the set of minimizers
\begin{equation}\label{eq:psrmean_intro}
  \argmin_{(U,D)} \frac{1}{n}\sum_{i=1}^n \left\{ \inf_{(U_X,D_X) \in \mathcal{F}^{-1}(X_i)} d_M((U_X,D_X),(U,D))\right\}^2.
\end{equation}
See Section~\ref{sec:3} for precise definitions and an iterative algorithm for computing a sample PSR mean.
The PSR means can be thought of as a special case of generalized Fr\'{e}chet means, proposed in \cite{Huckemann2011b} and studied in \cite{Huckemann2011a,huckemann2021data,schotz2019convergence,schotz2022strong}. The PSR means can be mapped to $\Symp$ (via the eigen-composition map $\mathcal{F}$), and we establish some sufficient conditions under which the PSR means are \emph{equivalent} to the SR mean. These conditions are related to the strata of $\Symp$ in which the sample and means are located.


Another artifact caused by the stratification of $\Symp$ is that the distance function $\dsr$ is not continuous on $\Symp$, and in principle we do not know whether an SR mean is well-defined. We show that the distance function $\dsr$, the cost function appeared in (\ref{eq:psrmean_intro}) for the PSR means, and their squares are \emph{lower semicontinuous}, and thus are measurable, which guarantees that both the SR and PSR mean sets are well-defined.
We also show that SR and PSR mean sets exist, under mild assumptions.

PSR means are never unique, due to the fact that eigen-decompositions are not unique. In the best case, there are $2^{p-1}p!$ elements in the PSR mean set for a $\Symp$-valued sample, corresponding to the number of distinct eigen-decompositions of any SPD matrix with no repeated eigenvalues.
As a result, if a PSR mean set $\Enpsr$ consists of exactly $2^{p-1}p!$ elements, then the corresponding $\Symp$-valued mean, $\mathcal{F}(\Enpsr)$ consists of a single element, and we may say that $\mathcal{F}(\Enpsr)$ is unique. A sufficient condition to ensure such uniqueness will be given in Section \ref{uniqueness} in terms of data-support diameter.

We also show that with only a finite-variance condition the sample PSR mean set is consistent with the population PSR mean set, in the sense of \cite{Bhatt2003}, following the now standard technique laid out in \cite{Huckemann2011b} (with modifications required to the fact that the cost function in (\ref{eq:psrmean_intro}) is not continuous).
With additional conditions, needed to ensure the equivalence between PSR mean sets and the SR mean, imposed, we conclude that the sample SR mean set is consistent with the (unique) SR mean. A type of central limit theorem for the PSR mean is also developed, in which the limiting normal distribution is defined on a tangent space of the space of eigen-decompositions. See Section~\ref{sec:4} for theoretical properties of (partial) SR means, including existence, uniqueness, and asymptotic results. Although these properties are developed to cope with the unique challenges (e.g. non-uniqueness of eigen-decompositions and the resulting stratification) coming from using the SR framework, we believe the course of our technical development will be instructive for developing statistics in other stratified Riemannian spaces.

Numerical results demonstrate the subtle difference between the SR mean and the PSR mean, and the advantage of (partial) SR means over other means defined via other geometric frameworks.
The potential advantage of the SR framework with PSR means is further demonstrated in an application to multivariate TBM for testing the shape difference in lateral ventricular structure in the brains of  pre-term and  full-term infants, using data from \cite{Paquette2017}.
In particular, an approximate bootstrap test based on PSR means is found to be more powerful than that based on the affine-invariant means of \cite{Pennec2006}.
We conclude with practical advice on the analysis of SPD matrices and a discussion of potential future directions of research.  Technical details, proofs, and additional lemmas that may be useful in other contexts, are contained in Appendix \ref{sec:proofs_in_appendix}.

\section{The Scaling-Rotation Framework}

In this section we provide a brief overview of the scaling-rotation framework
\citep{Jung2015,Groisser2017,Groisser2017a} for analyzing SPD-valued data.
The motivation for the scaling-rotation framework is intuitive: Any $X \in {\rm Sym}^+(p)$ can be identified with the ellipsoid with surface coordinates $\{ y\in \Real^p : y^TX^{-1}y =1 \}$, so a measure of distance between $X$ and $Y$ can be defined as a suitable combination of the minimum amount of rotation of axes and stretching or shrinking of axes needed to deform the ellipsoid corresponding to $X$ into the ellipsoid associated with $Y$. Since the semi-axes and squared semi-axis lengths of the ellipsoid associated with an SPD matrix are its eigenvectors and eigenvalues, respectively, this \emph{scaling-rotation distance} is computed on the space of eigen-decompositions. 

\subsection{Geometry of the Eigen-Decomposition Space}\label{eigen-decomp_geometry}

Recall that any $X \in {\rm Sym}^+(p)$ has an eigen-decomposition $X=UDU^T$, where $U \in SO(p)$, the space of $p \times p$ rotation matrices, and $D \in {\rm Diag}^+(p)$, the space of $p \times p$ diagonal matrices possessing positive diagonal entries. We denote the space of eigen-decompositions as $M(p) := SO(p) \times {\rm Diag}^+(p)$.
The Lie groups $SO(p)$ and $\Diagp$ carry natural bi-invariant Riemannian metrics $g_{SO}$ and $g_{\mathcal{D}^+}$, defined as follows.
The tangent space at $U$ of $SO(p)$ is $T_U(SO(p)) = \{AU : A \in \mathfrak{so}(p)\}$, where $\mathfrak{so}(p)$ is the space of $p\times p$ antisymmetric matrices. At an arbitrary point $U \in SO(p)$ we define $g_{SO} \mid_{U}(A_1,A_2) = -\frac{1}{2} {\rm tr}(A_1U^TA_2U^T)$ for $A_1,A_2 \in \mathfrak{so}(p)$, where ${\rm tr}(A)$ is the trace of the matrix $A$.
The tangent space $T_D (\Diagp) =\{LD: L \in {\rm Diag}(p)\}$ is canonically identified with  ${\rm Diag}(p)$, the set of $p\times p$ diagonal matrices, and we define $g_{\mathcal{D}^+} \mid_{D} (L_1,L_2) = {\rm tr} (L_1D^{-1}L_2D^{-1})$, for $L_1,L_2 \in {\rm Diag}(p)$.
Given eigen-decompositions $(U_1,D_1)$ and $(U_2,D_2)$ of SPD matrices $X_1$ and $X_2$, we measure the distance between their eigen-decompositions using the following product metric: 

\begin{definition}\label{defn:geodesic_dist}
We define the \textit{geodesic distance function} $d_M$ on $M(p)$, with a weighting parameter $k>0$, by
\begin{equation}\label{geodesic_dist_eqn}
d^2_M((U_1,D_1),(U_2,D_2)) = k d^2_{SO}(U_1,U_2) + d^2_{\mathcal{D}^+}(D_1,D_2),
\end{equation}
where $d_{SO}(U_1,U_2) = \frac{1}{\sqrt{2}}\| {\rm Log}(U_2U_1^T) \|_F$, $d_{\mathcal{D}^+}(D_1,D_2) = \| {\rm Log}(D_1) - {\rm Log}(D_2) \|_F$, and $\| . \|_F$ denotes the Frobenius norm.
\end{definition}

In Definition~\ref{defn:geodesic_dist} and (\ref{SSR_curve}) below, ${\rm Exp}(A)$ stands for the matrix exponential of $A$, and ${\rm Log}(R)$ for the principal matrix logarithm of $R$.\footnote{The principal logarithm for rotation matrices is defined on the set $\{R \in SO(p) : R$ is not an involution$\}$, a dense open subset of $SO(p)$. When there exists no principal logarithm of $R$, the notation ${\rm Log}(R)$ denotes any solution $A \in \mathfrak{so}(p)$ of ${\rm Exp}(A) = R$ satisfying that $\|{A}\|_F$ is the smallest among all such choices of $A$. For such rare cases, the geodesic (\ref{SSR_curve}) is not unique, but
$\|{\rm Log}(R)\|_F$ is well defined. 
} The weighting parameter $k$ is a fixed constant throughout.

For a geometric interpretation of the geodesic distance, note that the geodesic distance between eigen-decompositions $(U_1,D_1)$ and $(U_2,D_2)$ equals the length of the $M(p)$-valued geodesic 
\begin{equation}\label{SSR_curve}
\gamma_{(U_1,D_1),(U_2,D_2)}(t) = ({\rm Exp}(t\textrm{{\rm Log}}(U_2U_1^{-1}))U_1,{\rm Exp}(t{\rm Log}(D_2D_1^{-1}))D_1)
\end{equation}
connecting $(U_1,D_1)$ and $(U_2,D_2)$, which is a minimal-length smooth curve connecting these two points when the tangent spaces of $M(p)$ are equipped with the canonical inner product $g_M= k g_{SO}\oplus g_{{\mathcal D}^+}$.
The functions $d_{\mathcal{D}^+}(D_1,D_2)$ and $d_{SO}(U_1,U_2)$ in \eqref{geodesic_dist_eqn} have the following interpretations: $d_{\mathcal{D}^+}(D_1,D_2)$ computes the Euclidean distance between ${\rm Log}(D_1)$ and ${\rm Log}(D_2)$, while $d_{SO}(U_1,U_2)$ equals the magnitude of the rotation angle of $U_2U_1^{-1}$ when $p=2,3$.

The exponential map at $(U,D) \in M(p)$ is ${\rm Exp}_{(U,D)}: T_{(U,D)}M(p) \to M(p)$, given by
\begin{equation}\label{eq:exp_map_2.2a}
  {\rm Exp}_{(U,D)}((AU,LD)) = ({\rm Exp}(A) U, {\rm Exp} (L)D).
\end{equation}
The inverse of the exponential map at $(U,D) \in M(p)$, defined for $\mathcal{U}_{(U,D)} = \{(V,\Lambda) \in  M(p): \|{\rm Log}(VU^T)\|_F < \pi \}$, is
${\rm Log}_{(U,D)} : \mathcal{U}_{(U,D)} \to T_{(U,D)}M(p)$, and is given by
\begin{equation}\label{eq:log_map_2.2b}
{\rm Log}_{(U,D)}((V,\Lambda)) = ({\rm Log}(VU^T)U, {\rm Log}(\Lambda D^{-1})D).
\end{equation}
With the Riemannian metric $g_M = k g_{SO} \oplus g_{\mathcal{D}^+}$, the induced norm on the tangent space $T_{(U,D)}M(p)$ satisfies
$$\| (A_1U, L_1D) - (A_2U, L_2D)\|^2_{(U,D)} = \frac{k}{2}{\rm tr}((A_1-A_2)(A_1-A_2)^T) + {\rm tr}((L_1-L_2)(L_1-L_2)^T),$$
for any two tangent vectors $(A_1U, L_1D), (A_2U, L_2D) \in T_{(U,D)}M(p)$.

\subsection{Minimal Smooth Scaling-Rotation Curves and Scaling-Rotation Distance}\label{MSSR_defs}

Since eigen-decompositions are not unique, any method for computing the distance between SPD matrices using the eigen-decomposition space must take this non-uniqueness into account. To address this, \cite{Jung2015} proposed the following distance for ${\rm Sym}^+(p)$:

\begin{definition}\label{def:dsr}
Let $\mathcal{F} : M(p) \to {\rm Sym}^+(p)$
denote the eigen-composition map $\mathcal{F}(U,D) = UDU^T$,
and for any $X \in {\rm Sym}^+(p)$, let $\mathcal{F}^{-1}(X)$ denote the set of eigen-decompositions of $X$. The \textit{scaling-rotation distance} between $X \in {\rm Sym}^+(p)$ and $Y \in {\rm Sym}^+(p)$ is
\[
\dsr(X,Y) = \inf_{\substack{(U_X,D_X)\in\mathcal{F}^{-1}(X), \\ (U_Y,D_Y)\in\mathcal{F}^{-1}(Y)}} d_M((U_X,D_X),(U_Y,D_Y)).
\]
Eigen-decompositions $(U^*_X,D_X^*) \in \mathcal{F}^{-1}(X)$ and $(U_Y^*,D_Y^*) \in \mathcal{F}^{-1}(Y)$ form a \textit{minimal pair} if $d_M((U^*_X,D_X^*),(U_Y^*,D_Y^*))=d_{\mathcal{SR}}(X,Y)$.
\end{definition}

\begin{remark}
Since the sets $\mathcal{F}^{-1}(X)$ and $\mathcal{F}^{-1}(Y)$ are compact for any $X,Y \in {\rm Sym}^+(p)$, 
there will always be a pair of eigen-decompositions of $X$ and $Y$ that form a minimal pair.
\end{remark}

\begin{remark}
The function $d_{\mathcal{SR}}$ is not a true metric on ${\rm Sym}^+(p)$ since there are instances in which the triangle inequality fails. It is a semi-metric and invariant under simultaneous matrix inversion, uniform scaling and conjugation by a rotation matrix \citep[][Theorem 3.11]{Jung2015}. When restricted to the subset of SPD matrices which possess no repeated eigenvalues, $d_{\mathcal{SR}}$ is a true metric \citep[][Theorem 3.12]{Jung2015}. 
\end{remark}

For SPD matrices $X, Y$ and their eigen-decompositions $(U_X,D_X) \in \mathcal{F}^{-1}(X)$, $(U_Y,D_Y)\in \mathcal{F}^{-1}(Y)$, one can create a smooth scaling-rotation (SSR) curve on ${\rm Sym}^+(p)$ connecting $X$ and $Y$ as $\chi_{X,Y}(t) = \mathcal{F}(\gamma_{(U_X,D_X),(U_Y,D_Y)}(t))$, where $\gamma_{(U_X,D_X),(U_Y,D_Y)}(t)$ is a minimal-length geodesic curve defined in (\ref{SSR_curve}).
If one considers the family of all possible geodesics in $(M(p),g_M)$ from ${\mathcal F}^{-1}(X)$ to ${\mathcal F}^{-1}(Y)$,  the scaling-rotation distance equals the length of the shortest geodesics in that family. By definition, the shortest geodesic (which may not be uniquely defined) connects a minimal pair $(U^*_X,D_X^*) \in \mathcal{F}^{-1}(X)$ and $(U_Y^*,D_Y^*) \in \mathcal{F}^{-1}(Y)$.
%
Computing $d_{\mathcal{SR}}(X,Y)$ for any dimension $p$ is straightforward when $X$ and $Y$ both have no repeated eigenvalues, since  $X$ and $Y$ then both have finitely many eigen-decompositions and therefore finitely many connecting SSR curves, or when one of $X,Y$ is a scaled identity matrix. Formulas for computing $d_{\mathcal{SR}}(X,Y)$ for all possible eigenvalue-multiplicity combinations of arguments $X$ and $Y$ are provided in \cite{Groisser2017} for $p=2,3$.

\subsection{The stratification of $\Symp$ and fibers of the eigen-composition map}\label{sec:fiber}

The space $\Symp$ is naturally stratified by the eigenvalue-multiplicity types. We will use the notation $S_p^{\rm top}$ to denote the subset of SPD matrices which have no repeated eigenvalues (the superscript ``top'' refers to the ``top stratum''). We also use the notation $S_p^{\rm lwr} := \Symp \setminus S_p^{\rm top}$, for the union of all ``lower'' strata, and $S_p^{\rm bot} \subset S_p^{\rm lwr} $ denotes the set of SPD matrices with equal eigenvalues.
The eigenvalue-multiplicity stratification of $\Symp$ is equivalent to the fiber-type stratification of $\Symp$; SPD matrices $X,Y\in \Symp$ are in the same stratum if $\mathcal{F}^{-1}(X)$ and $\mathcal{F}^{-1}(Y)$ are diffeomorphic, as we elaborate below.

Let ${\rm Part}\{1,\ldots,p\}$ be the set of partitions of $\{1,\ldots,p\}$. Recall that ${\rm Part}\{1,\ldots,p\}$ is partially ordered by the refinement relation, with ``$\textsf{J}_1 \le \textsf{J}_2$,'' meaning that $\textsf{J}_2 \in {\rm Part}\{1,\ldots,p\}$ is a refinement of $\textsf{J}_1 \in {\rm Part}\{1,\ldots,p\}$. As an example, for $p = 2$, there are only two partitions $\textsf{J}_{\rm top} := \{ \{1\}, \{2\}\}$ and $\textsf{J}_{\rm bot} := \{ \{1, 2\}\}$, and $\textsf{J}_{\rm bot} \le \textsf{J}_{\rm top}$. 

Each $D \in \Diagp$ naturally determines a partition $\textsf{J}_D \in  {\rm Part}\{1,\ldots,p\}$, depending on which diagonal elements are equal. The group $SO(p)$ acts on $\Symp$ on the left via $(U, X) \mapsto UXU^T$.
 For $D \in \Diagp$, the stabilizer subgroup $G_D$
under the $SO(p)$ action on $\Symp$ is $G_D := \{R \in SO(p): RDR^T = D\}$.
 The stabilizer $G_D$ depends only on $\textsf{J}_D$, and generally has more than one connected component. Write $G_D^0 \subset G_D$ for the connected component of $G_D$ containing the identity.

Let $\mathcal{S}_p$ be the group of permutations of $\{1,2,\ldots,p\}$. 
For a permutation $\pi \in \mathcal{S}_p$ and $D \in \Diagp$, the natural left action of  $\mathcal{S}_p$ on $\Diagp$ is denoted by $\pi \cdot D$, and is given by permuting the diagonal entries of $D$.
 Write the matrix of the linear map ``$\pi\cdot$'' by $P_\pi \in \mathbb{R}^{p\times p}$, where the entries of $P_\pi$ are $(P_\pi)_{ij} = \delta_{i,\pi(j)}$, so that $\pi\cdot D = P_\pi D P_\pi^T$.
We call a $p\times p$ matrix $P$ a \emph{signed-permutation matrix} if for some $\pi \in \mathcal{S}_p$ the entries of $P$ satisfy $P_{ij} = \pm \delta_{i,\pi(j)}$. We call such $P$ \emph{even} if ${\rm det}(P) = 1$.
Each such $P$ thus represents a permutation of coordinates in $\mathbb{R}^p$, combined with an even number of sign changes.
The set $\mathcal{G}(p)$ of all such \emph{even signed-permutation matrices} has exactly $2^{p-1}p!$ elements, and is a matrix subgroup of $SO(p)$.
The natural  left-action of  $\mathcal{G}(p)$ on $M(p)$ is given by
\begin{equation}\label{eq:group_action}
  h\cdot (U,D):= (U  h^{-1}, h \cdot D),
\end{equation}
where $h \in \mathcal{G}(p)$ and $h \cdot D := hDh^{-1}$. The action of $h$ on $(U,D)$
represents
the simultaneous permutation (by the unsigned permutation associated with $h$) of columns of $U$ and diagonal elements of $D$, and the sign-changes of the columns of $U$.
The identity element of $\mathcal{G}(p)$ is $I_p$.

It is shown in \cite{Jung2015} that the fiber $\mathcal{F}^{-1}(X)$---that is, the set of eigen-decompositions of $X$---is characterized with any $(U,D) \in \mathcal{F}^{-1}(X)$ by
\begin{equation}\label{eq:fiber}
\mathcal{F}^{-1}(X) = \{h \cdot (UR, D): R \in G_D^0, h \in \mathcal{G}(p)\}.
\end{equation}
Thus,
the left-action of $\mathcal{G}(p)$ on $M(p)$ is fiber-preserving.

The structure of fiber $\mathcal{F}^{-1}(X)$ depends on the stratum to which $X$ belongs.
If $X \in \Sptop$, then for any eigen-decomposition $(U,D) \in \mathcal{F}^{-1}(X)$, we have $G_D^0 = \{I_p\}$ and the orbit
\begin{equation}\label{eq:orbit}
  \mathcal{G}(p) \cdot (U,D) =  \{h\cdot (U,D): h \in \mathcal{G}(p)\}
\end{equation}
 is \emph{exactly} the set of eigen-decompositions of $X$. Intuitively, any eigen-decomposition of $X \in \Sptop$ can be obtained from any other by a sign-change of eigenvectors and a simultaneous permutation of  eigenvectors and eigenvalues.
In contrast, if $X \in S_p^{\rm bot}$ (i.e., $X$ is a scaled identity matrix), then  $G_X^0 = G_X = SO(p)$ and $h \cdot X = X$ for all $h \in \mathcal{G}(p)$, thus the fiber of $\mathcal{F}$ at $X$ is $\mathcal{F}^{-1}(X) = SO(p) \times \{X\}$.
 A complete characterization of fibers of $\mathcal{F}$ for other lower strata can be found in \cite{Groisser2017}.





\section{Location estimation under the scaling-rotation framework}\label{sec:3}

\subsection{Fr\'{e}chet mean}\label{frechet_mean_estimation}

An approach often used for developing location estimators for non-Euclidean metric spaces is Fr\'{e}chet mean estimation \citep{Frechet1945}, in which estimators are derived as minimizers of a metric-dependent sample mean-squared error.

\begin{definition}\label{def:3.1}
Let $M$ be a metric space with metric $\rho$ and suppose that $X, X_1, \dotsc, X_n$ are i.i.d. $M$-valued random variables with induced probability measure $P$ on $M$. The \textit{population Fr\'{e}chet mean set} is
\[\argmin_{C \in M} \int_{M} \rho^2(X,C)P(dX). \]
The \textit{sample Fr\'{e}chet mean set} is
\[\argmin_{C \in M} \frac{1}{n}\sum_{i=1}^n \rho^2(X_i,C). \]
\end{definition}


Examples of location estimators that have been developed for ${\rm Sym}^+(p)$ using the sample Fr\'{e}chet mean estimation framework include the log-Euclidean mean \citep{Arsigny2007}, affine-invariant mean \citep{Fletcher2004,Pennec2006}, Procrustes size-and-shape mean \citep{Dryden2009}, and the log-Cholesky average \citep{lin2019riemannian}.  Below, we allow ourselves to use the ``Fr\'{e}chet mean'' terminology of Definition~\ref{def:3.1} when the metric space $(M,\rho)$ is replaced by the semi-metric space $({\rm Sym}^+(p), d_{\mathcal SR})$.

\subsection{Scaling-rotation means}\label{SR_sample_mean}\label{sec:SR_sample_mean}

We now define the population and sample scaling-rotation mean sets, consisting of the Fr\'{e}chet means of SPD matrices under the scaling-rotation framework.
Let $P$ be a Borel probability measure on $\Symp$, and $X_1,\ldots,X_n$ be deterministic data points in $\Symp$. Note that Borel measures on $\Symp$ include both discrete and absolutely continuous measures, as well as mixtures of those.

\begin{definition}\label{def:esr_ensr}
The \emph{population scaling-rotation (SR) mean set} with respect to $P$ is
\begin{equation}\label{eq:esr_definition}
  \Esr := \argmin_{S \in {\rm Sym}^+(p)} \fsr(S),\quad \fsr(S) =  \int_{{\rm Sym}^+(p)} \dsr^2(X, S) P(dX).
\end{equation}
Given $X_1, \dotsc, X_n \in {\rm Sym}^+(p)$, the \emph{sample SR mean set} is
\[ \Ensr := \argmin_{S \in {\rm Sym}^+(p)} \fnsr(S), \quad \fnsr(S) =  \frac{1}{n}\sum_{i=1}^n d^2_{\mathcal{SR}}(X_i,S). \]
\end{definition}

Since,  for some $S \in \Symp$, the function $\dsr(\cdot, S): \Symp \to \Real $ has discontinuities (see Appendix~\ref{sec:discontinuityofdsr}), we must address whether the objective function $\fsr$ of (\ref{eq:esr_definition}) is well-defined. We defer this discussion to Section~\ref{sec:LSC}.

Locating a sample SR mean can be recast as solving a difficult constrained optimization problem on $M(p)^n$ since
\begin{equation}\label{eigen-decomp_obj_function}
\frac{1}{n}\sum_{i=1}^n d^2_{\mathcal{SR}}(X_i,S) = \frac{1}{n}\sum_{i=1}^n d^2_M((U^*_i,D^*_i),(U_S^{*,i},D_S^{*,i})),
\end{equation}
where  for each $i=1, \dotsc, n$,  $(U^*_i,D^*_i) \in \mathcal{F}^{-1}(X_i)$ and $(U_S^{*,i},D_S^{*,i}) \in \mathcal{F}^{-1}(S)$ are an arbitrary minimal pair.
Due to the non-uniqueness of eigen-decompositions, there may be many pairs of eigen-decompositions of $X_i$ and $S$ which form a minimal pair.

However, when $S \in S_p^{\rm top}$ the scaling-rotation distance simplifies to
\begin{equation}\label{SR_distance_simplification}
d_{\mathcal{SR}}(X,S) = \inf_{(U_X,D_X) \in \mathcal{F}^{-1}(X)} d_M((U_X,D_X),(U_S,D_S)),
\end{equation}
where $(U_S,D_S)$ is \emph{any} eigen-decomposition of $S$. 
In this case, $d_{\mathcal{SR}}(X,S)$ is easier to compute since one can select an arbitrary eigen-decomposition $(U_S,D_S)$ of $S$ and then determine the infimum of the distances between $(U_S,D_S)$ and the eigen-decompositions of $X$.
If $S$ has repeated eigenvalues (or, equivalently, $S$ is in a lower stratum), this simplification does not hold in general; there may be no eigen-decomposition of $S$ that is at minimal distance from $\mathcal{F}^{-1}(X_i)$ simultaneously for all $i$.

From the simplification in (\ref{SR_distance_simplification}), we propose to solve for minimizers of the simplified objective function
\[ (U,D) \mapsto \frac{1}{n} \sum_{i=1}^n \inf_{(U_X,D_X) \in \mathcal{F}^{-1}(X_i)} d_M^2 ((U_X,D_X),(U,D)), \]
where the argument $(U,D)$ is an arbitrarily chosen eigen-decomposition of the argument $S$ from (\ref{eigen-decomp_obj_function}).



To formally define this simplified optimization problem, we first define the following measure of distance between an SPD matrix and a given eigen-decomposition of another SPD matrix:

\begin{definition}
The \textit{partial scaling-rotation (PSR) distance} is the map $d_{\mathcal{PSR}} : {\rm Sym}^+(p) \times M(p) \to [0,\infty)$ given by
\[ d_{\mathcal{PSR}}(X,(U,D)) = \inf_{(U_X,D_X) \in \mathcal{F}^{-1}(X)} d_M((U_X,D_X),(U,D)). \]
\end{definition}

It can be checked from the definitions that for any $X \in {\rm Sym}^+(p)$ and any $(U,D) \in M(p)$
\begin{equation}\label{SR_PSR_ineq}
d_{\mathcal{SR}}(X,\mathcal{F}(U,D)) \le d_{\mathcal{PSR}}(X,(U,D)),
\end{equation}
and by  (\ref{SR_distance_simplification}), the equality in (\ref{SR_PSR_ineq}) holds  if $\mathcal{F}(U,D) \in \Sptop$. 

\begin{definition}\label{def:epsr_enpsr}
The population and sample \emph{partial scaling-rotation (PSR) mean sets} are subsets of $M(p)$ and are defined respectively by $\Epsr  := \argmin_{(U,D) \in M(p)}\fpsr(U,D)$ and
  $\Enpsr  := \argmin_{(U,D) \in M(p)} \fnpsr(U,D)$, where
\begin{align}
    \fpsr(U,D) &=  \int_{{\rm Sym}^+(p)} \dpsr^2(X,(U,D)) P(dX), \label{eq:Epsr} \\
     \fnpsr(U,D) &=  \frac{1}{n}\sum_{i=1}^n d^2_{\mathcal{PSR}}(X_i,(U,D)). \nonumber
\end{align}
\end{definition}

In Sections~\ref{sec:LSC} and \ref{sec:existence} we show that for any Borel probability measure on $\Symp$, the population mean set $\Epsr$ is well-defined and  non-empty.
 There, we also show that both $\Enpsr$ and $\Ensr$ are non-empty for any sample $X_1,\ldots,X_n$. An iterative algorithm to compute a sample PSR mean is given in Section~\ref{sec:3.3}.

The PSR means lie in $M(p)$ and can be mapped to ${\rm Sym}^+(p)$ via the eigen-composition map. 
The sample PSR mean set can be thought of as yielding an approximation of the sample SR mean set, and it is of interest to know when the two sets are ``equivalent''.
The theorem below provides conditions under which $\Enpsr \subset M(p)$ is equivalent to $\Ensr \subset {\rm Sym}^+(p)$ in the sense that every member of $\Enpsr$ is an eigen-decomposition of a member of $\Ensr$ and vice-versa.
Define  $M^{\rm top}(p) = \mathcal{F}^{-1}( S_p^{\rm top} )$, the subset of $M(p)$ consisting of all elements $(U,D) \in M(p)$ in which the diagonal elements of $D$ are not repeated.

\begin{theorem}\label{thm:SRvsPSRequivalence}
Let $\Enpsr$ and $\Ensr$ be defined with deterministic data points $X_1,\ldots,X_n \in \Symp$.
\begin{itemize}
  \item[(a)] $E_n^{(\mathcal{PSR})} \supset \mathcal{F}^{-1}(E_n^{(\mathcal{SR})} \cap S_p^{\rm top} )$.
  \item[(b)] If $E_n^{(\mathcal{SR})} \cap S_p^{\rm top} \neq \emptyset$, then $\mathcal{F}(E_n^{(\mathcal{PSR})}) \subset E_n^{(\mathcal{SR})}$ and $E_n^{(\mathcal{PSR})} \subset \mathcal{F}^{-1}(E_n^{(\mathcal{SR})})$.
  \item[(c)] If $E_n^{(\mathcal{SR})} \cap S_p^{\rm top} \neq \emptyset$ and
        $E_n^{(\mathcal{PSR})} \subset M^{\rm top}(p)$, then
         $E_n^{(\mathcal{PSR})} = \mathcal{F}^{-1}(E_n^{(\mathcal{SR})} \cap S_p^{\rm top} )$.
\end{itemize}
In particular, since $\Ensr \neq \emptyset$ (see Corollary~\ref{thm:existence_sampleSR_PSRmeans}),
parts (a) and (b) together imply that
 if
 $E_n^{(\mathcal{SR})} \subset S_p^{\rm top}$,
then
$$E_n^{(\mathcal{PSR})}=
\mathcal{F}^{-1}(E_n^{(\mathcal{SR})})\ \mbox{ and } \ \mathcal{F}(E_n^{(\mathcal{PSR})}) = E_n^{(\mathcal{SR})}.$$
Moreover, the statements above hold when $\Ensr$ and $\Enpsr$ are replaced by $\Esr$ and $\Epsr$, respectively, provided that $\fpsr(U,D) < \infty$ for some $(U,D) \in M(p)$.
\end{theorem}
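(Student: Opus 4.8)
The plan is to reduce everything to two elementary facts about the functionals and then run pure optimization bookkeeping. \textbf{First} I would record these two preliminary facts. (i) Integrating (resp.\ averaging over $i$) the pointwise inequality (\ref{SR_PSR_ineq}) gives $\fsr(\mathcal{F}(U,D)) \le \fpsr(U,D)$ and $\fnsr(\mathcal{F}(U,D)) \le \fnpsr(U,D)$ for every $(U,D) \in M(p)$; since $\mathcal{F}$ maps $M(p)$ onto $\Symp$, taking infima yields $\inf \fnsr \le \inf \fnpsr$ (and likewise for the population functionals). (ii) By (\ref{SR_distance_simplification}), if $S \in \Sptop$ then $\dsr(X,S) = \dpsr(X,(U_S,D_S))$ for \emph{every} eigen-decomposition $(U_S,D_S)$ of $S$, hence $\fnsr(S) = \fnpsr(U_S,D_S)$ for every such $(U_S,D_S)$ (and similarly $\fsr(S) = \fpsr(U_S,D_S)$). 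Both facts are immediate from the displayed relations already in the paper.

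\textbf{Part (a).} If $\Ensr \cap \Sptop = \emptyset$ the assertion is vacuous; otherwise I would pick $S^* \in \Ensr \cap \Sptop$ and an arbitrary $(U_0, D_0) \in \mathcal{F}^{-1}(S^*)$, and combine (ii) with (i) to write $\fnpsr(U_0,D_0) = \fnsr(S^*) = \inf \fnsr \le \inf \fnpsr$. Since $\fnpsr \ge \inf \fnpsr$ everywhere, this forces $(U_0,D_0)$ to be a minimizer of $\fnpsr$, so $(U_0,D_0) \in \Enpsr$; as a byproduct, $\inf \fnsr = \inf \fnpsr$ whenever $\Ensr \cap \Sptop \neq \emptyset$. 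Letting $S^*$ and $(U_0,D_0)$ range gives $\mathcal{F}^{-1}(\Ensr \cap \Sptop) \subset \Enpsr$.

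\textbf{Part (b).} Assuming $\Ensr \cap \Sptop \neq \emptyset$, part (a) supplies $\inf\fnsr = \inf\fnpsr$; then for any $(U,D) \in \Enpsr$, fact (i) gives $\fnsr(\mathcal{F}(U,D)) \le \fnpsr(U,D) = \inf\fnpsr = \inf\fnsr$, and since $\fnsr \ge \inf\fnsr$ this is an equality, so $\mathcal{F}(U,D) \in \Ensr$; this is exactly $\mathcal{F}(\Enpsr) \subset \Ensr$, equivalently $\Enpsr \subset \mathcal{F}^{-1}(\Ensr)$. \textbf{Part (c):} the inclusion ``$\supset$'' is part (a), and for ``$\subset$'' I would take $(U,D) \in \Enpsr$, use part (b) to get $\mathcal{F}(U,D) \in \Ensr$, and use $(U,D) \in M^{\rm top}(p)$ to get $\mathcal{F}(U,D) \in \Sptop$, so $(U,D) \in \mathcal{F}^{-1}(\Ensr \cap \Sptop)$. \textbf{The ``in particular'' clause:} $\Ensr \subset \Sptop$ gives $\Ensr \cap \Sptop = \Ensr$, which is nonempty by Corollary~\ref{thm:existence_sampleSR_PSRmeans}, so parts (a) and (b) give $\mathcal{F}^{-1}(\Ensr) \subset \Enpsr \subset \mathcal{F}^{-1}(\Ensr)$, hence equality, and applying the surjection $\mathcal{F}$ gives $\mathcal{F}(\Enpsr) = \Ensr$.

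\textbf{The ``moreover'' clause, and the main subtlety.} The whole argument is formal once facts (i) and (ii) hold and the relevant infima are attained, so it transcribes verbatim with $\fsr, \fpsr, \Esr, \Epsr$ in place of the sample objects. The hypothesis $\fpsr(U,D) < \infty$ for some $(U,D)$ does two jobs: via fact (i) it also makes $\fsr(\mathcal{F}(U,D)) < \infty$, so neither functional is identically $+\infty$; and, combined with the lower semicontinuity/measurability established in Section~\ref{sec:LSC}, it guarantees $\fsr$ and $\fpsr$ are well-defined $[0,\infty]$-valued maps. I expect the only real subtlety to be cosmetic: infima need not a priori be attained, but --- exactly as in part (a) above --- the hypothesis $\Esr \cap \Sptop \neq \emptyset$ itself forces the PSR infimum to be attained at any eigen-decomposition of any point of $\Esr \cap \Sptop$, so no separate existence input is needed for the population analogues of (a)--(c). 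There is no deep obstacle here; all the content lives in the two preliminary facts, which follow directly from (\ref{SR_PSR_ineq}) and (\ref{SR_distance_simplification}).
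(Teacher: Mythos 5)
Your proposal is correct and follows essentially the same route as the paper: facts (i) and (ii) are precisely the inequalities \eqref{SR_PSR_ineq} and \eqref{SR_distance_simplification} that the paper combines into the chain $\fnpsr(\tilde Y)=\fnsr(Y)\le\fnsr(Z)\le\fnpsr(\tilde Z)$ for $Y\in\Ensr\cap\Sptop$ and arbitrary $\tilde Z$, from which (a) and (b) follow by forcing equalities, (c) by intersecting with $M^{\rm top}(p)$, and the population case by replacing sums with integrals under the finiteness hypothesis (the paper additionally cites Lemma~\ref{lem:finite_variance} to get finiteness everywhere, but your inf-based bookkeeping in $[0,\infty]$ is an equivalent, equally valid way to handle this).
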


The previous theorem suggests that in many realistic situations, there may be no cost to using the PSR means in place of the SR means, which are more difficult to compute in practice.
If minimizing $\fsr$  or $\fnsr$ over $S_p^{\rm lwr}$ (the union of lower strata of $\Symp$)
is feasible, then the following result can be used to tell whether a PSR mean is equivalent to an SR mean.

For the rest of the paper, we generally use the notation $m$ rather than $(U,D)$ for an arbitrary element of $M(p)$  if there is no explicit need for writing out the eigenvector and eigenvalue matrices separately.

\begin{theorem}\label{thm:how_to_tell_whether_PSR=SR}
Let $m^{\mathcal{PSR}} \in M(p)$ be a PSR mean with respect to a probability measure $P$ on $\Symp$.
\begin{itemize}
  \item[(a)] If $\fsr(\mathcal{F}(m^{\mathcal{PSR}})) \le \min_{S \in S_p^{\rm lwr}} \fsr(S)$, then $\mathcal{F}(m^{\mathcal{PSR}}) \in \Esr$. 
  \item[(b)] If $\fsr(\mathcal{F}(m^{\mathcal{PSR}})) > \min_{S \in S_p^{\rm lwr}} \fsr(S)$, then $\mathcal{F}(m^{\mathcal{PSR}}) \notin \Esr$ and $\Esr \subset S_p^{\rm lwr}.$
\end{itemize}
Let $\hat{m}^{\mathcal{PSR}} \in M(p)$ be a sample PSR mean with respect to a given sample $X_1,\ldots,X_n \in \Symp$. Similarly to the statements above,
\begin{itemize}
  \item[(c)] If   $\fnsr(\mathcal{F}(\hat{m}^{\mathcal{PSR}})) \le \min_{S \in S_p^{\rm lwr}} \fnsr(S)$, then $\mathcal{F}(\hat{m}^{\mathcal{PSR}}) \in \Ensr$.
  \item[(d)] If $\fnsr(\mathcal{F}(\hat{m}^{\mathcal{PSR}})) > \min_{S \in S_p^{\rm lwr}} \fnsr(S)$, then $\mathcal{F}(\hat{m}^{\mathcal{PSR}}) \notin \Ensr$ and $\Ensr \subset S_p^{\rm lwr}.$
\end{itemize}
\end{theorem}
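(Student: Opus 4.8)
The plan is to work directly from the inequality \eqref{SR_PSR_ineq} relating $\dsr$ and $\dpsr$, together with the fact (noted after \eqref{SR_PSR_ineq}) that equality holds when the relevant eigen-composition lands in $\Sptop$. The key observation is that integrating (or averaging) \eqref{SR_PSR_ineq} over the data immediately gives, for any $m \in M(p)$,
\[
\fsr(\mathcal{F}(m)) \le \fpsr(m),
\]
and that $m^{\mathcal{PSR}} \in \Epsr$ means $\fpsr(m^{\mathcal{PSR}}) = \min_{m'} \fpsr(m')$. The crucial step is to relate $\min_{m'}\fpsr(m')$ back to $\inf_{S \in \Symp}\fsr(S)$: since every $S \in \Symp$ has at least one eigen-decomposition, and since for $(U,D) \in \mathcal{F}^{-1}(S)$ we have $\fpsr(U,D) \ge \fsr(S)$ always, with \emph{equality} when $S \in \Sptop$, one gets
\[
\min_{m' \in M(p)} \fpsr(m') = \min\Bigl(\min_{S \in \Sptop}\fsr(S),\ \min_{m' \in \mathcal{F}^{-1}(\Splow)}\fpsr(m')\Bigr) \ge \min_{S \in \Sptop}\fsr(S) \ge \inf_{S \in \Symp}\fsr(S).
\]
Conversely $\min_{m'}\fpsr(m') \le \min_{S \in \Sptop}\fsr(S)$, so in fact $\min_{m'}\fpsr(m') = \min_{S \in \Sptop}\fsr(S)$. (Here I am using that $\Esr$ and $\Epsr$ are non-empty, i.e.\ the relevant minima are attained; this is Corollary~\ref{thm:existence_sampleSR_PSRmeans} in the sample case, and the stated hypothesis $\fpsr < \infty$ somewhere plus the existence results of Section~\ref{sec:existence} in the population case.)

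For part (a): the hypothesis $\fsr(\mathcal{F}(m^{\mathcal{PSR}})) \le \min_{S \in \Splow}\fsr(S)$ combined with $\fsr(\mathcal{F}(m^{\mathcal{PSR}})) \le \fpsr(m^{\mathcal{PSR}}) = \min_{S \in \Sptop}\fsr(S)$ shows that $\fsr(\mathcal{F}(m^{\mathcal{PSR}}))$ is $\le$ the minimum of $\fsr$ over $\Sptop$ and also $\le$ its minimum over $\Splow$; since $\Symp = \Sptop \sqcup \Splow$, this forces $\fsr(\mathcal{F}(m^{\mathcal{PSR}})) = \inf_{S \in \Symp}\fsr(S)$, i.e.\ $\mathcal{F}(m^{\mathcal{PSR}}) \in \Esr$. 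For part (b): if instead $\fsr(\mathcal{F}(m^{\mathcal{PSR}})) > \min_{S \in \Splow}\fsr(S)$, then $\mathcal{F}(m^{\mathcal{PSR}})$ is beaten by some lower-stratum matrix, so it is not an SR mean; moreover, since $\fsr(\mathcal{F}(m^{\mathcal{PSR}})) = \min_{S \in \Sptop}\fsr(S)$ from the displayed identity above, the strict inequality $\min_{S\in\Sptop}\fsr(S) > \min_{S \in \Splow}\fsr(S)$ shows the global minimum of $\fsr$ is attained only in $\Splow$, i.e.\ $\Esr \subset \Splow$. Parts (c) and (d) are the verbatim sample analogues, replacing $\fsr,\fpsr,P$-integrals by $\fnsr,\fnpsr$, empirical averages, and invoking Corollary~\ref{thm:existence_sampleSR_PSRmeans} for non-emptiness.

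The main obstacle — really the only subtle point — is justifying the identity $\min_{m' \in M(p)}\fpsr(m') = \min_{S \in \Sptop}\fsr(S)$, and in particular that the minimum of $\fpsr$ is attained and that passing to an eigen-decomposition of a top-stratum minimizer of $\fsr$ gives a minimizer of $\fpsr$. This rests on: (i) existence of SR and PSR means (cited), (ii) the equality case of \eqref{SR_PSR_ineq} on $\Sptop$, and (iii) the elementary fact that $\mathcal{F}(\mathcal{F}^{-1}(S)) = \{S\}$ so that ranging $m'$ over $M(p)$ is the same as ranging $S$ over $\Symp$ and then over the (finite, in the top stratum) fiber. One should also note carefully that in part (b)/(d) the conclusion $\Esr \subset \Splow$ (resp.\ $\Ensr \subset \Splow$) uses that $\fsr$ restricted to $\Sptop$ has its \emph{infimum} equal to $\fpsr(m^{\mathcal{PSR}})$, which is where the hypothesis $m^{\mathcal{PSR}} \in \Epsr$ (not just $m^{\mathcal{PSR}} \in M(p)$ arbitrary) is essential.
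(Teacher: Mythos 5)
Your overall strategy---playing $\fsr\circ\mathcal{F}$ against $\fpsr$ via \eqref{SR_PSR_ineq} and its equality case on the top stratum---is different from the paper's proof (which runs a case analysis on whether an SR mean lies in $S_p^{\rm lwr}$ and invokes Theorem~\ref{thm:SRvsPSRequivalence} and the contrapositive of its part (b)), but as written it has two gaps, both sitting in the step you yourself call the crux. First, in the claimed identity $\min_{m'}\fpsr(m') = \min_{S\in S_p^{\rm top}}\fsr(S)$, the ``$\ge$'' direction is precisely the assertion that $\fpsr(m')\ge \inf_{S\in S_p^{\rm top}}\fsr(S)$ for $m'$ lying over the lower strata; your display $\min(a,b)\ge a$ presupposes this rather than proves it, since for such $m'$ the relation \eqref{SR_PSR_ineq} only gives $\fpsr(m')\ge\fsr(\mathcal{F}(m'))$ with $\mathcal{F}(m')\in S_p^{\rm lwr}$, and your points (i)--(iii) never address these fibers. (The identity is true, with ``$\min$'' over $S_p^{\rm top}$ read as an infimum, but it needs an extra argument, e.g.\ continuity of $\fpsr$ from Lemma~\ref{lem:contfpsr} together with density of $M^{\rm top}(p)$ in $M(p)$.) Second, in part (b) you replace $\fpsr(m^{\mathcal{PSR}})$ by $\fsr(\mathcal{F}(m^{\mathcal{PSR}}))$ in that identity; these agree only when $\mathcal{F}(m^{\mathcal{PSR}})\in S_p^{\rm top}$, which is not known under the hypothesis of (b), so the equality $\fsr(\mathcal{F}(m^{\mathcal{PSR}}))=\min_{S\in S_p^{\rm top}}\fsr(S)$ is unjustified (only ``$\le$'' is available).

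Both defects are repairable, and in fact the ``$\le$'' facts you did establish already suffice, so the contested identity can be dropped entirely: for every $S\in S_p^{\rm top}$ and any $m_S\in\mathcal{F}^{-1}(S)$ one has $\fsr(\mathcal{F}(m^{\mathcal{PSR}}))\le\fpsr(m^{\mathcal{PSR}})\le\fpsr(m_S)=\fsr(S)$, using \eqref{SR_PSR_ineq}, the minimality of $m^{\mathcal{PSR}}$ over $M(p)$, and \eqref{SR_distance_simplification}. Under the hypothesis of (a) this gives $\fsr(\mathcal{F}(m^{\mathcal{PSR}}))\le\fsr(S)$ for all $S$ in both strata, hence $\mathcal{F}(m^{\mathcal{PSR}})\in\Esr$; under the hypothesis of (b) it gives $\fsr(S)\ge\fsr(\mathcal{F}(m^{\mathcal{PSR}}))>\min_{S'\in S_p^{\rm lwr}}\fsr(S')$ for every $S\in S_p^{\rm top}$, so neither $\mathcal{F}(m^{\mathcal{PSR}})$ nor any top-stratum point can minimize $\fsr$, whence $\Esr\subset S_p^{\rm lwr}$. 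The sample statements (c) and (d) follow verbatim with $\fnsr,\fnpsr$. With this repair your route is more direct than the paper's, at the cost of not reusing Theorem~\ref{thm:SRvsPSRequivalence}.
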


We remark that for $p = 2$, $S_p^{\rm lwr} = \{ c I_2 : c > 0\}$ and the function $\fnsr$ can be efficiently minimized over $S_p^{\rm lwr}$ by a one-dimensional numerical optimization.

 A key condition to ensure the equivalence of the SR means to PSR means is that all SR means have no repeated eigenvalues (i.e., $E^{(\mathcal{SR})} \subset S_p^{\rm top}$), which in fact depends on the distribution $P$.
Below, we give a sufficient condition
for 
$\Esr \subset S_p^{\rm top}$ or $E_n^{(\mathcal{SR})} \subset S_p^{\rm top}$.
Let
$\delta: {\rm Sym}^{+} (p) \to [0,\infty)$ be
$\delta(S) = \inf\{d_{\mathcal{SR}}(S, S'): S' \in S_p^{\rm lwr}\}$.
Thus, $\delta(S)$ is a ``distance'' from $S$ to lower strata of ${\rm Sym}^{+}(p)$.
 (Because $S_p^{\rm lwr}$ is closed, $\delta(S) > 0 $ for any $S \in S_p^{\rm top}$.)

\begin{theorem}\label{thm:avoid_low_strata_SR}
Let $X$ be a $ {\rm Sym}^+(p)$-valued random variable  with distribution $P$.
Assume that there exists $S_0 \in S_p^{\rm top}$ and $r \in (0, \delta(S_0)/3)$ such that
$$P( d_{\mathcal{SR}}(X,S_0) \le r ) = 1.$$
Then $\Esr \subset S_p^{\rm top}$.

Similarly, let $X_1, \dotsc, X_n \in {\rm Sym}^+(p)$, and assume that there exists $S_0 \in S_p^{\rm top}$ and $r \in (0, \delta(S_0)/3)$ satisfying
$d_{\mathcal{SR}}(X_i,S_0) \le r$
for $i = 1,\ldots,n$. Then $E_n^{(\mathcal{SR})} \subset S_p^{\rm top}$.
\end{theorem}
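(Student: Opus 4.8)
The plan is to establish the contrapositive-flavored statement: if the data lie in a small $\dsr$-ball around some $S_0$ in the top stratum, then no SR mean can lie in a lower stratum, because any candidate point in $S_p^{\rm lwr}$ is strictly worse than $S_0$ itself as a minimizer of $\fsr$ (resp.\ $\fnsr$). I will argue the sample case; the population case is identical after replacing sample averages by integrals and using $P(\dsr(X,S_0)\le r)=1$.

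\textbf{Step 1: the objective value at $S_0$ is small.} Since $\dsr(X_i,S_0)\le r$ for all $i$, we have $\fnsr(S_0)=\frac1n\sum_i \dsr^2(X_i,S_0)\le r^2$.

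\textbf{Step 2: lower bound the objective at any $S'\in S_p^{\rm lwr}$ via a (quasi-)triangle inequality.} For each $i$, I want to say $\dsr(X_i,S')\ge \dsr(S_0,S')-\dsr(X_i,S_0)\ge \delta(S_0)-r$. The subtlety is that $\dsr$ is only a semi-metric, so I cannot quote the triangle inequality directly; instead I would work on $M(p)$, where $d_M$ is a genuine metric. Fix a minimal pair realizing $\dsr(X_i,S')$, say eigen-decompositions $m_{X_i}\in\Fc^{-1}(X_i)$ and $m_{S'}\in\Fc^{-1}(S')$ with $d_M(m_{X_i},m_{S'})=\dsr(X_i,S')$. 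Pick any $m_{S_0}\in\Fc^{-1}(S_0)$ achieving $d_M(m_{S_0},m_{X_i})\ge\dsr(S_0,X_i)$ relative to the \emph{same} $m_{X_i}$ — more carefully, choose $m_{X_i}'\in\Fc^{-1}(X_i)$ and $m_{S_0}\in\Fc^{-1}(S_0)$ forming a minimal pair for $\dsr(S_0,X_i)$, then use the $\Sptp$-orbit structure \eqref{eq:fiber} to replace $m_{X_i}$ by a fiber element compatible with both comparisons. The cleanest route: for any $m_{X_i}\in\Fc^{-1}(X_i)$ and any $m_{S'}\in\Fc^{-1}(S')$, the genuine triangle inequality for $d_M$ gives $d_M(m_{S_0},m_{S'})\le d_M(m_{S_0},m_{X_i})+d_M(m_{X_i},m_{S'})$; taking infima appropriately over fibers of $S_0$ and of $S'$ (and choosing $m_{X_i}$ inside the minimal pair for $\dsr(X_i,S')$, then infimizing over $\Fc^{-1}(X_i)$ in the first term) yields $\dsr(S_0,S')\le \dsr(S_0,X_i)+\dsr(X_i,S')$. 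Hence $\dsr(X_i,S')\ge \dsr(S_0,S')-r\ge \delta(S_0)-r$. Since $r<\delta(S_0)/3$, this is $>2\delta(S_0)/3>0$, so $\fnsr(S')\ge(\delta(S_0)-r)^2>(2\delta(S_0)/3)^2 > r^2 \ge \fnsr(S_0)$, using $r<\delta(S_0)/3<2\delta(S_0)/3$.

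\textbf{Step 3: conclude.} By Step 1 and Step 2, $\inf_{S'\in S_p^{\rm lwr}}\fnsr(S') > \fnsr(S_0)\ge \min_{S\in\Symp}\fnsr(S)$, so every minimizer of $\fnsr$ lies in $S_p^{\rm top}$, i.e.\ $\Ensr\subset S_p^{\rm top}$. Existence of a minimizer is supplied by Corollary~\ref{thm:existence_sampleSR_PSRmeans}. For the population statement, Step 1 becomes $\fsr(S_0)=\int \dsr^2(X,S_0)\,P(dX)\le r^2<\infty$, and Step 2 gives $\dsr(X,S')\ge\delta(S_0)-r$ for $P$-a.e.\ $X$, so $\fsr(S')\ge(\delta(S_0)-r)^2>r^2\ge\fsr(S_0)$ for every $S'\in S_p^{\rm lwr}$; hence $\Esr\subset S_p^{\rm top}$.

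\textbf{Anticipated main obstacle.} The delicate point is Step 2: making the triangle-type inequality $\dsr(S_0,S')\le\dsr(S_0,X_i)+\dsr(X_i,S')$ rigorous despite $\dsr$ not being a metric. The key observation is that although the ``two intermediate points'' live in different fibers, one can always chain through a \emph{single} eigen-decomposition of the intermediate SPD matrix $X_i$: for fixed $m_{X_i}\in\Fc^{-1}(X_i)$, the functions $m\mapsto \inf_{m'\in\Fc^{-1}(S_0)}d_M(m',m)$ and $m\mapsto\inf_{m''\in\Fc^{-1}(S')}d_M(m,m'')$ each satisfy a triangle inequality in $m$ because $d_M$ does, and then infimizing the sum over $m_{X_i}\in\Fc^{-1}(X_i)$ gives exactly $\dsr(S_0,X_i)+\dsr(X_i,S')\ge \dsr(S_0,S')$ once we note $\dsr(S_0,S')=\inf_{m'\in\Fc^{-1}(S_0)}\inf_{m''\in\Fc^{-1}(S')}d_M(m',m'')$ does not involve $X_i$ at all, so any choice of $m_{X_i}$ gives an upper bound on $\dsr(S_0,S')$ via the two triangle inequalities. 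This ``chain through one fiber element'' trick is the heart of the argument; everything else is bookkeeping with the constant $\delta(S_0)/3$, which is in fact not tight — any $r<\delta(S_0)/2$ would do for this theorem, with the stronger constraint presumably reserved for later uniqueness or consistency results.
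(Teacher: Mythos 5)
The crux of your argument is Step 2's inequality $\dsr(S_0,S')\le \dsr(S_0,X_i)+\dsr(X_i,S')$, and the ``chain through one fiber element'' trick as you describe it does not prove it. For a fixed $m\in\Fc^{-1}(X_i)$, chaining through $m$ gives only $\dsr(S_0,S')\le \dpsr(S_0,m)+\dpsr(S',m)$, and infimizing over $m\in\Fc^{-1}(X_i)$ yields an upper bound by $\inf_m\bigl[\dpsr(S_0,m)+\dpsr(S',m)\bigr]$, which in general is strictly \emph{larger} than $\dsr(S_0,X_i)+\dsr(X_i,S')=\inf_m \dpsr(S_0,m)+\inf_m \dpsr(S',m)$: the two infima over the fiber of $X_i$ need not be attained at the same eigen-decomposition, and $\inf(f+g)\ge \inf f+\inf g$ goes the wrong way. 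This is not a cosmetic point --- the paper itself notes that $\dsr$ violates the triangle inequality for some triples, so no argument valid for an \emph{arbitrary} intermediate point can be correct.

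The inequality you need is nevertheless true in this setting, but only because each $X_i$ lies in the top stratum, which you never establish or use: if $X_i\in\Splwr$ then by the definition of $\delta$ one would have $\dsr(X_i,S_0)\ge\delta(S_0)>3r$, contradicting the hypothesis, so $X_i\in\Sptop$. For $X_i\in\Sptop$ the fiber $\Fc^{-1}(X_i)$ is a single $\mathcal{G}(p)$-orbit (see \eqref{eq:fiber}), and since the $\mathcal{G}(p)$-action is isometric and fiber-preserving, $\dpsr(S_0,\cdot)$ and $\dpsr(S',\cdot)$ are constant on that fiber and equal $\dsr(S_0,X_i)$ and $\dsr(X_i,S')$ at every point of it (this is exactly the simplification \eqref{SR_distance_simplification}); with that, chaining through any single $m\in\Fc^{-1}(X_i)$ does give the triangle inequality, and the rest of your comparison $\fnsr(S')\ge(\delta(S_0)-r)^2>r^2\ge\fnsr(S_0)$ (and its population analogue for $\fsr$) goes through, indeed for any $r<\delta(S_0)/2$ as you observe. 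Note that the paper takes a different route that never needs a triangle inequality through the data: it shows $\delta(S)\ge\delta(S_0)-r$ for all $S$ in the closed ball by reducing $\delta$ to a distance between diagonal parts only, shows the ball lies in $\Sptop$ where $\dsr$ is a genuine metric, and then compares $\dsr(X,S)\le 2r$ for $S$ in the ball with $\dsr(X,S_{\rm lwr})\ge\delta(S_0)-r$, which is where the constant $\delta(S_0)/3$ comes from.
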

The condition of Theorem \ref{thm:avoid_low_strata_SR} requires that the sample lie in a ball that is sufficiently far from   lower strata of ${\rm Sym}^{+}(p)$, but this condition is by no means necessary.  The condition, however, cannot be replaced by the weaker condition that all data lie in $S_p^{\rm top}$; there are examples in which this weaker condition is met, but $\Ensr \subset S_p^{\rm lwr}$. In Section \ref{sec:Appendix_numerical_examples}, we provide numerical examples where the PSR means are equivalent (or not equivalent) to the SR means.

\subsection{Sample PSR Mean Estimation Algorithm}\label{sec:3.3}

Given a sample $X_1, \dotsc, X_n \in \Symp$, we propose an algorithm for approximating a member of $\Enpsr$, that is to find a minimizer of $\fnpsr$. The algorithm is similar to the generalized Procrustes algorithm \citep{Gower1975}.


\begin{procedure}[Sample PSR Mean]\label{proc:estimation}
Set tolerance $\varepsilon > 0$ and pick initial guess $(\hat{U}^{(0)},\hat{D}^{(0)}) \in M(p)$. Set $j = 0$.
\begin{enumerate}
\item[Step 1.] For $i=1, \dotsc, n$, find $(U^{(j)}_i,D^{(j)}_i) \in \mathcal{F}^{-1}(X_i)$ that has the smallest geodesic distance from $(\hat{U}^{(j)},\hat{D}^{(j)})$.
\item[Step 2.] Compute $(\hat{U}^{(j+1)},\hat{D}^{(j+1)}) \in \argmin_{(U,D) \in M(p)} \frac{1}{n} \sum_{i=1}^n d^2_M((U^{(j)}_i,D^{(j)}_i),(U,D)).$
\end{enumerate}
 If $|\fnpsr(\hat{U}^{(j+1)},\hat{D}^{(j+1)}) - \fnpsr(\hat{U}^{(j)},\hat{D}^{(j)})|>\varepsilon$, increment $j$ and repeat Steps 1 and 2.
 Otherwise, $(\hat{U}_{\mathcal{PSR}},\hat{D}_{\mathcal{PSR}}) = (\hat{U}^{(j+1)},\hat{D}^{(j+1)})$ is the approximate sample PSR mean produced by this algorithm,  given the tolerance $\varepsilon$ and initial guess $(\hat{U}^{(0)},\hat{D}^{(0)})$.
\end{procedure}
%

\begin{remark}
The above procedure will always terminate since $\fnpsr(U,D) \ge 0$ for any $(U,D) \in M(p)$ and $\fnpsr(\hat{U}^{(j)},\hat{D}^{(j)}) \ge \fnpsr(\hat{U}^{(j+1)},\hat{D}^{(j+1)})$ for any $j \ge 0$.
\end{remark}

If $X_i$ lies in $S_p^{\rm top}$, performing  Step 1 will simply require searching over the $2^{(p-1)}p!$ distinct eigen-decompositions 
of $X_i$ to find one that attains the minimal geodesic distance from $(\hat{U}^{(j)},\hat{D}^{(j)})$.
Solving for the minimizing eigen-decomposition of $X_i$ is also easy if $X_i$ is a scaled identity matrix ($X_i \in S_p^{\rm bot}$), since the fact that $X_i = cI_p = U(cI_p)U^T$ for any $U \in SO(p)$ implies that $(\hat{U}^{(j)}, cI_p)$ will be the eigen-decomposition of $X_i$ with minimal geodesic distance from $(\hat{U}^{(j)},\hat{D}^{(j)})$.
Determining the minimizing eigen-decomposition of $X_i$ when $p=3$ and $X_i$ has two distinct eigenvalues can be done by comparing  three closed-form expressions, as
described in \cite{Groisser2017}. 
For $p > 3$, there are no known corresponding closed-form expressions for determining a minimizing eigen-decomposition of $X_i \in S_p^{\rm lwr} \setminus S_p^{\rm bot}$.

The optimization problem over $M(p)$ in Step 2 can be divided into separate minimization problems over ${\rm Diag}^+(p)$ and $SO(p)$:
\begin{align}
&\hat{D}^{(j+1)} =  \argmin_{D \in {\rm Diag}^+(p)} \frac{1}{n}\sum_{i=1}^n \| {\rm Log}(D_i^{(j)})-{\rm Log}(D) \|_F^2, \nonumber \\
&\hat{U}^{(j+1)} \in \argmin_{U \in SO(p)} \frac{1}{n}\sum_{i=1}^n \|{\rm Log}(U^{(j)}_iU^{-1})\|^2_F. \nonumber
\end{align}
The solution $\hat{D}^{(j+1)}$ is uniquely given by 
$\hat{D}^{(j+1)} = {\rm Exp} \{\frac{1}{n}\sum_{i=1}^n {\rm Log}(D_i^{(j)})\}$,
while $\hat{U}^{(j+1)}$ usually must be approximated via numerical procedures.
It is shown in \cite{Manton2004} that when the rotation matrices $U^{(j)}_1, \dotsc, U^{(j)}_n$ lie within a geodesic ball of radius $\frac{\pi}{2}$, there is a unique minimizer $\hat{U}^{(j+1)}$, and this minimizer can be approximated by a globally convergent gradient descent algorithm on $(SO(p),g_{SO})$. It is highly unlikely that one would be able to de-couple estimation of the eigenvalue and eigenvector means in this manner while solving for a sample SR mean.
 
\section{Theoretical Properties of Scaling-Rotation Means}\label{sec:4}

\subsection{Lower semicontinuity and other properties of $\dsr$ and $\dpsr$}\label{sec:LSC}
One of the complications in using the SR framework is that the symmetric function $\dsr$ is not continuous in either variable (see Appendix~\ref{sec:discontinuityofdsr} for an example).  Unfortunately, $\dpsr$ is also not continuous at every point of ${\rm Sym}^+(p) \times M(p)$, as illustrated by the following example. Let $X(\varepsilon) := \textrm{diag}(e^\varepsilon, e^{-\varepsilon})$ and $(U,D) = (R(\theta),I_2)$, where $R(\theta)$ is the $2 \times 2$ rotation matrix corresponding to a counterclockwise rotation by angle $\theta$. Then for any $\varepsilon \ne 0$ and $0<|\theta|<\pi/4$,
\[ d_{\mathcal{PSR}}(X(\varepsilon),(U,D)) = (k\theta^2 + 2\varepsilon^2)^{1/2}, \]
which implies that $d_{\mathcal{PSR}}(X(\varepsilon),(U,D)) \rightarrow \sqrt{k}|\theta|$ as $\varepsilon \rightarrow 0$. Since $d_{\mathcal{PSR}}(X(0),(U,D)) = 0$, it follows that $d_{\mathcal{PSR}}$ is not continuous at $(I_2,(U,D))$, and therefore $d_{\mathcal{PSR}}$ is not continuous on ${\rm Sym}^+(p) \times M(p)$. Nevertheless, $d_{\mathcal{PSR}}$ is continuous with respect to the second variable in $M(p)$, and is jointly continuous on $S_p^{\rm top} \times M(p)$, as we state below.

\begin{lemma}\label{lem:dpsr_continuity}
\begin{itemize}
  \item[(a)] $d_{\mathcal{PSR}}$ is continuous on $S_p^{\rm top} \times M(p)$.
  \item[(b)] For each $S \in \Symp$, the function $\dpsr(S,\cdot): M(p) \to [0,\infty)$ is Lipschitz, with Lipschitz-constant 1. That is, for all $m_1,m_2 \in M(p)$,
$$|d_{\mathcal{PSR}}(S,m_1) - d_{\mathcal{PSR}}(S,m_2)| < d_M(m_1,m_2).$$
In particular, $\dpsr(S,\cdot)$ is uniformly continuous for each $S$.
\end{itemize}
\end{lemma}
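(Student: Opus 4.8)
The plan is to prove part (b) first and then deduce part (a) from it together with a continuity statement in the $\Symp$-variable alone. For part (b), the key observation is that, for fixed $S$, the function $\dpsr(S,\cdot)$ is simply the $d_M$-distance from a point of $M(p)$ to the fiber $\mathcal{F}^{-1}(S)$, which is a nonempty compact subset of $(M(p),d_M)$. I would choose, by compactness of the fiber, an element $m_S^*\in\mathcal{F}^{-1}(S)$ realizing $d_M(m_S^*,m_2)=\dpsr(S,m_2)$, and estimate, using only the triangle inequality in $(M(p),d_M)$,
\[
\dpsr(S,m_1)\le d_M(m_S^*,m_1)\le d_M(m_S^*,m_2)+d_M(m_2,m_1)=\dpsr(S,m_2)+d_M(m_1,m_2);
\]
interchanging $m_1$ and $m_2$ yields the $1$-Lipschitz bound $|\dpsr(S,m_1)-\dpsr(S,m_2)|\le d_M(m_1,m_2)$, with a constant independent of $S$, whence the asserted uniform continuity of $\dpsr(S,\cdot)$.

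For part (a), fix $(S_0,m_0)\in\Sptop\times M(p)$. By part (b) the family $\{\dpsr(S,\cdot)\}_S$ is uniformly $1$-Lipschitz in its second argument, so
\[
|\dpsr(S,m)-\dpsr(S_0,m_0)|\le d_M(m,m_0)+|\dpsr(S,m_0)-\dpsr(S_0,m_0)|,
\]
and it suffices to show that $S\mapsto\dpsr(S,m_0)$ is continuous at $S_0$ along $\Sptop$. Here I would invoke the fact, from \citep{Jung2015,Groisser2017}, that on the top stratum the eigen-decomposition can be chosen continuously: there is a neighborhood $V\subset\Sptop$ of $S_0$ and a continuous local section $\sigma:V\to M^{\rm top}(p)$ of $\mathcal{F}$. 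Since for every $S\in\Sptop$ the fiber $\mathcal{F}^{-1}(S)$ is exactly the $\mathcal{G}(p)$-orbit of any one of its points, for $S\in V$ we have
\[
\dpsr(S,m_0)=\min_{h\in\mathcal{G}(p)}d_M\big(h\cdot\sigma(S),m_0\big),
\]
a minimum of the $2^{p-1}p!$ functions $S\mapsto d_M(h\cdot\sigma(S),m_0)$, each continuous because $\sigma$, the $\mathcal{G}(p)$-action, and $d_M(\cdot,m_0)$ are continuous. A finite minimum of continuous functions is continuous, so $S\mapsto\dpsr(S,m_0)$ is continuous near $S_0$, completing part (a).

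The main obstacle, and the only place where the top-stratum hypothesis enters, is the existence of the continuous local section $\sigma$, i.e.\ the continuous dependence of eigenvectors and eigenvalues on a symmetric matrix with simple spectrum (classical perturbation theory; equivalently, $\mathcal{F}$ restricts to a $2^{p-1}p!$-fold covering map $M^{\rm top}(p)\to\Sptop$). This genuinely fails at lower strata, consistent with the discontinuity of $\dpsr$ exhibited just before the lemma, so the real content is isolating the top-stratum behavior. If one prefers not to cite the covering structure, the same conclusion follows from a two-sided subsequential argument: upper semicontinuity of $S\mapsto\mathcal{F}^{-1}(S)$ (immediate from continuity of $\mathcal{F}$, compactness of $SO(p)$, and local boundedness of the spectrum) bounds $\liminf\dpsr(S,m_0)$ from below at \emph{every} $S_0$, while the local section over $\Sptop$ is precisely what bounds $\limsup\dpsr(S,m_0)$ from above at $S_0\in\Sptop$.
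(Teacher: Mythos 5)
Your proof is correct. Part (b) is essentially the paper's argument: $\dpsr(S,\cdot)$ is the $d_M$-distance from a point to the nonempty set $\mathcal{F}^{-1}(S)$, and the triangle inequality gives the $1$-Lipschitz bound (compactness of the fiber is not even needed---an $\varepsilon$-almost-minimizer suffices---and note that the correct conclusion is ``$\le$''; the strict inequality in the lemma statement is a typo, as it fails for $m_1=m_2$). For part (a) you arrive at the same core fact as the paper, namely that over the top stratum $\dpsr$ is a minimum of the $2^{p-1}p!$ continuous functions coming from the $\mathcal{G}(p)$-orbit of a continuously varying eigen-decomposition, but you package it differently: the paper defines $\rho((U',D'),(U,D))=\min_{h\in\mathcal{G}(p)}d_M((U'h^{-1},h\cdot D'),(U,D))$ on $M(p)\times M(p)$, notes it is jointly continuous (finite minimum of continuous functions) and $\mathcal{G}(p)$-invariant in its first slot, and lets it descend to $S_p^{\rm top}\times M(p)$, implicitly using that $\mathcal{F}|_{M^{\rm top}(p)}$ is a quotient (covering) map onto $\Sptop$; you instead split the variables, using the uniform-in-$S$ Lipschitz constant from (b) to reduce joint continuity to continuity of $S\mapsto\dpsr(S,m_0)$ on $\Sptop$, which you get from a continuous local section $\sigma$ of $\mathcal{F}$ together with the identification $\mathcal{F}^{-1}(S)=\mathcal{G}(p)\cdot\sigma(S)$ on the top stratum. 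Both routes rest on the same structural input (the covering structure of $\mathcal{F}$ over $\Sptop$ and the fiber being a single finite orbit there); the paper's descent argument yields joint continuity in one stroke without invoking (b), while your version makes the perturbation-theory/section input explicit and records, as a useful byproduct, that the Lipschitz constant $1$ is uniform over $S$, which is precisely what your variable-splitting step requires. Your subsequential fallback (properness of $\mathcal{F}$ for the lower bound, the section for the upper bound) is also sound and mirrors the paper's separate lower-semicontinuity analysis in Theorem 4.3.
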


Since both   $\dsr$ and $\dpsr$ are not continuous, in principle we do not know yet whether the integrals of $\dsr^2(\cdot, \Sigma)$ and $\dpsr^2(\cdot, (U,D))$, for $\Sigma \in \Symp$ and $(U,D)\in M(p)$, in Definitions~\ref{def:esr_ensr} and \ref{def:epsr_enpsr}, are well defined.
A related question is: under which conditions do the population (partial) scaling-rotation means exist? A key observation in answering these questions is that these functions
$\dsr^2(\cdot,\Sigma)$ and $\dpsr^2(\cdot; (U,D))$
 are \emph{lower semicontinuous} (LSC).
(Recall that a function $f: \mathcal{X} \to \Real$, where $\mathcal{X}$ is a topological space, is LSC at a point $x_0 \in \mathcal{X}$ if for all $\epsilon >0$, there exists an open neighborhood $\Uc$ of $x_0$ such that $f(x) > f(x_0) - \epsilon$ for all $x \in \Uc$. If $f$ is LSC at each $x_0 \in \mathcal{X}$, we say that $f$ is LSC.)

\begin{definition}\label{def:lsc}
Let $\mathcal{X}$ be a topological space and $\mathcal{Y}$ be a set, and let $f: \mathcal{X} \times \mathcal{Y} \to \Real$.
\begin{itemize}
  \item[(a)] We say that $f$ is \emph{LSC in its first variable, uniformly with respect to its second variable}, if for all $x_0 \in \mathcal{X}$ and $\epsilon>0$, there exists an open neighborhood $\Uc$ of $x_0$ such that
        \begin{equation}\label{lsc_unif2'}
f(x,y)> f(x_0,y)-\epsilon \ \ \ \mbox{for all $x\in \Uc$ and all $y\in \mathcal{Y}$.}
        \end{equation}
  \item[(b)] If $\mathcal{Y}$ is also a topological space, we say that {\em $f$ is LSC in its first variable,
\underline{\em  locally} uniformly with respect to its second variable},
if every $y_0\in \mathcal{Y}$ has an open neighborhood $\Vc$ such that $f|_{\mathcal{X}\times \Vc}$
is LSC in the  first variable, uniformly with respect to the second.  If $\mathcal{Y}$ is locally compact,
this property is equivalent to:  for every compact set $K \subset \mathcal{Y}$, $f|_{\mathcal{X}\times K}$
is LSC in the  first variable, uniformly with respect to the second.
\end{itemize}
\end{definition}

Any finite-dimensional manifold (in particular, $M(p)$) is locally compact.

\begin{theorem}\label{prop:lsc}
\begin{itemize}
  \item[(a)] Let $S_0 \in \Symp$, and $m_0 \in M(p)$. Then the functions $\dsr^2(\cdot, S_0)$ and $\dpsr^2(\cdot, m_0)$ and their square-roots are LSC.
  \item[(b)] The functions $\dsr(\cdot,\cdot)$, $\dsr^2(\cdot,\cdot)$, $\dpsr(\cdot,\cdot)$ and $\dpsr^2(\cdot,\cdot)$ are LSC in the first variable, locally uniformly with respect to the second variable.
\end{itemize}
\end{theorem}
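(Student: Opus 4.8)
The two structural facts I would build everything on are: (i) the eigen-composition map $\mathcal F : M(p) \to \Symp$ is continuous and \emph{proper} — its graph $\{(m,X) \in M(p)\times\Symp : \mathcal F(m)=X\}$ is closed, and since the eigenvalues of an SPD matrix depend continuously on it, the preimage $\mathcal F^{-1}(C)$ of a compact $C\subset\Symp$ is a closed subset of $SO(p)\times\{D\in\Diagp : \text{the diagonal entries lie in a fixed compact subinterval of }(0,\infty)\}$, hence compact; and (ii) $\dpsr(S,\cdot)$ is $1$-Lipschitz on $M(p)$ for each fixed $S$ (Lemma~\ref{lem:dpsr_continuity}(b)). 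I will also use that $\Symp$ and $M(p)$ are metrizable, so LSC may be checked along sequences, and that for a nonnegative function $f$, $f$ is LSC at a point iff $f^2$ is (compose with the continuous nondecreasing maps $t\mapsto t^2$ and $t\mapsto\sqrt t$); thus in part (a) it is enough to treat one of each pair $\{\dsr,\dsr^2\}$, $\{\dpsr,\dpsr^2\}$.

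For (a) I would in fact prove the slightly stronger statements that $\dpsr$ is jointly LSC on $\Symp\times M(p)$ and $\dsr$ is jointly LSC on $\Symp\times\Symp$; restricting to a slice $\{S_0\}$ or $\{m_0\}$ then yields (a). The two arguments are parallel. Suppose $(X_n,m_n)\to(X_0,m_0)$ and $L:=\liminf_n\dpsr(X_n,m_n)<\infty$; pass to a subsequence realizing $L$, and (using compactness of $\mathcal F^{-1}(X_n)$) choose $m_n'\in\mathcal F^{-1}(X_n)$ with $d_M(m_n,m_n')=\dpsr(X_n,m_n)$. The set $\{X_0\}\cup\{X_n\}$ is compact, so by properness the $m_n'$ lie in the compact set $\mathcal F^{-1}(\{X_0\}\cup\{X_n\})$; passing to a further subsequence, $m_n'\to m_0'$, and closedness of the graph of $\mathcal F$ gives $m_0'\in\mathcal F^{-1}(X_0)$, whence $\dpsr(X_0,m_0)\le d_M(m_0,m_0')=L$. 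For $\dsr$ one argues identically with minimal pairs $(m_n^X,m_n^Y)$, which exist since $\mathcal F^{-1}(X_n)$ and $\mathcal F^{-1}(Y_n)$ are compact, extracting convergent subsequences of both $m_n^X$ and $m_n^Y$.

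For (b), joint lower semicontinuity does not suffice — a simple two-variable example shows a jointly LSC function need not be LSC in the first variable locally uniformly in the second — so I would route through $\dpsr$. Using the locally-compact reformulation in Definition~\ref{def:lsc}(b), fix a compact $K\subset M(p)$, a point $X_0\in\Symp$, and $\varepsilon>0$; I claim there is a neighbourhood $\mathcal U$ of $X_0$ with $\dpsr(X,m)>\dpsr(X_0,m)-\varepsilon$ for all $X\in\mathcal U$, $m\in K$. If not, there are $\varepsilon_0>0$, $X_n\to X_0$, and $m_n\in K$ with $\dpsr(X_n,m_n)\le\dpsr(X_0,m_n)-\varepsilon_0$; passing to a subsequence with $m_n\to m_*\in K$, the $1$-Lipschitz bound gives $\dpsr(X_n,m_n)-\dpsr(X_n,m_*)\to0$ and $\dpsr(X_0,m_n)-\dpsr(X_0,m_*)\to0$, so $\liminf_n\dpsr(X_n,m_*)\le\dpsr(X_0,m_*)-\varepsilon_0$, contradicting part (a). This proves (b) for $\dpsr$. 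For $\dsr$, fix $Y_0\in\Symp$, choose a bounded open $W\ni Y_0$ with $\overline W\subset\Symp$ compact, and put $K:=\mathcal F^{-1}(\overline W)$, compact by properness; for every $Y\in W$ one has $\mathcal F^{-1}(Y)\subset K$ and $\dsr(X,Y)=\inf_{m\in\mathcal F^{-1}(Y)}\dpsr(X,m)$, so the uniform estimate on $K$ gives $\dsr(X,Y)\ge\inf_{m\in\mathcal F^{-1}(Y)}\bigl(\dpsr(X_0,m)-\varepsilon\bigr)=\dsr(X_0,Y)-\varepsilon$ for all $X\in\mathcal U$ and $Y\in W$; since $Y_0$ was arbitrary this is precisely LSC of $\dsr$ in the first variable, locally uniformly in the second. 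To pass from $\dsr,\dpsr$ to $\dsr^2,\dpsr^2$ in (b), note that on the relevant compact set the reference function $\dpsr(X_0,\cdot)$ is bounded (it is continuous, by Lemma~\ref{lem:dpsr_continuity}), and $\dsr(X_0,\cdot)$ is bounded on $\overline W$ (it is dominated by $d_M$ from a fixed eigen-decomposition of $X_0$ to an eigen-decomposition of $Y$, and eigenvalues are bounded on $\overline W$); hence $|a-b|<\delta$ with $a,b$ in a fixed bounded interval forces $|a^2-b^2|<\varepsilon$ for $\delta$ small, uniformly.

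The main obstacle is not any single computation but keeping the logic honest in the face of the stratification: because the fibers $\mathcal F^{-1}(X)$ jump in topological type across strata of $\Symp$, the only robust device is properness of $\mathcal F$, which is what lets one extract convergent subsequences of minimizing eigen-decompositions; and for the local-uniformity clause in (b) one genuinely cannot argue with $\dsr$ directly, since it is discontinuous in its second argument — the passage through $\dpsr$, using both its $1$-Lipschitz dependence on the second variable and the fact that properness converts a small neighbourhood of $Y_0$ in $\Symp$ into a compact set of eigen-decompositions, is the crux.
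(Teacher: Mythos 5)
Your proposal is correct, but it reaches Theorem~\ref{prop:lsc} by a genuinely different and more economical route than the paper. You rely on just three ingredients: properness of $\mathcal{F}$ (the paper's Proposition~\ref{fproper}), compactness of fibers (so minimizing eigen-decompositions and minimal pairs exist), and the $1$-Lipschitz dependence of $\dpsr$ on its second argument (Lemma~\ref{lem:dpsr_continuity}(b)); sequential compactness then gives joint lower semicontinuity for part (a), and a contradiction/subsequence argument plus the identity $\dsr(X,Y)=\inf_{m\in\mathcal{F}^{-1}(Y)}\dpsr(X,m)$ gives the locally uniform statement in part (b). The paper instead works quantitatively: Lemma~\ref{lsclemma-unif} is proved from the explicit minimization formulas for $\dsr$ and $\dpsr$ (stabilizer monotonicity across strata, uniform continuity of the diagonal part, a finite subcover of $\mathcal{F}^{-1}(K)$), and because $\mathcal{F}$ is not an open map the paper then needs the slice lemma (Lemma~\ref{slicelemma}, via the inverse function theorem) and Corollary~\ref{cor:a11} to convert the image set $\mathcal{F}(B_{\delta_1}^{d_M}(m))$ into an honest Frobenius-ball neighborhood of $S$. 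Your properness-based extraction of convergent subsequences sidesteps exactly this openness issue, which is the main technical burden of the paper's proof; what the paper's approach buys in exchange is explicit radii ($\delta_1$, $\delta_2$, $c$) rather than a purely existential neighborhood. Two further differences are harmless: you prove the unsquared statements first and pass to squares using boundedness of the reference functions $\dpsr(X_0,\cdot)$ on $K$ and $\dsr(X_0,\cdot)$ on $\overline W$ (the paper goes the other way, squared first, then square roots via the uniformly continuous $\sqrt{\cdot}$ in Corollary~\ref{sqrtlsc_unif}), and your final estimate for $\dsr$ yields a non-strict inequality $\dsr(X,Y)\ge \dsr(X_0,Y)-\varepsilon$, which matches the strict inequality in Definition~\ref{def:lsc} after running the argument with $\varepsilon/2$.
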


In this theorem, part (a) is actually redundant; it is a special case of part (b), with
the one-point set $\{S_0\}$ playing the role of the compact set in Definition \ref{def:lsc}(b).
Also, for $\dsr$ and $\dsr^2$, the terms ``first variable'' and ``second variable'' in Theorem~\ref{prop:lsc} can be interchanged, since $\dsr$ is symmetric.
Verifying Theorem~\ref{prop:lsc} requires substantial background work regarding the geometry of the eigen-decomposition space $M(p)$ and the eigen-composition map $\mathcal{F}$.
The following lemma is the key technical result used in proving Theorem~\ref{prop:lsc}. The radius-$r$ open ball centered at $m_0 \in M(p)$ is $B_{r}^\td(m_0) := \{ m \in M(p): d_M(m, m_0) < r\}$.

\begin{lemma}\label{lsclemma-unif} Let $K\subset \sympp$ be a compact set.  Let $\e>0$ and let $S\in \sympp$. There exists $\d_1=\d_1(S,K,\e)>0$
such that for  all $S_0\in K$,   all $m_0\in \Fc^{-1}(S_0)$, all $m \in \Fc^{-1}(S)$, and all
$S'\in \Fc\big(B_{\d_1}^\td(m )\big)$,
\be\label{final_est}
\dsr(S',S_0)^2 > \dsr(S,S_0)^2 -\e
\ee
 and
\be\label{final_est_psr}
\dpsr(S',m_0)^2 > \dpsr(S,m_0)^2 -\e.
\ee
\end{lemma}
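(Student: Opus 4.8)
\medskip
\noindent\emph{Plan of proof.} The idea is to use the explicit fiber description~\eqref{eq:fiber} to show that, once $\d_1$ is taken small enough (depending on $S$) to forbid any collision of eigenvalues, \emph{every} eigen-decomposition of a perturbed matrix $S'=\Fc(\tilde m)$ with $\td(\tilde m,m)<\d_1$ can be matched with an eigen-decomposition of $S$ at \emph{exactly the same} $\td$-distance; the bounds~\eqref{final_est} and~\eqref{final_est_psr} then follow from the triangle inequality together with an elementary estimate for squares. Concretely, I would first fix the constants: let $\lambda_1>\dots>\lambda_\ell>0$ be the distinct eigenvalues of $S$, put $\d_{\rm strat}(S):=\tfrac12\min_{i\ne j}|\log\lambda_i-\log\lambda_j|$ (and $\d_{\rm strat}(S):=+\infty$ if $\ell=1$, i.e.\ $S\in S_p^{\rm bot}$), and observe that $\Fc^{-1}(K)$ and $\Fc^{-1}(\{S\})$ are compact, since $SO(p)$ is compact and the eigenvalues of the matrices in the compact set $K$ lie in a fixed compact subinterval of $(0,\infty)$; hence $R:=\sup\{\td(m,m_0):m\in\Fc^{-1}(S),\,m_0\in\Fc^{-1}(K)\}<\infty$. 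I would then set $\d_1:=\min\{\d_{\rm strat}(S),\,\sqrt{\e},\,\e/(2R+1)\}$, which depends only on $S$, $K$, and $\e$.

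\medskip
\noindent\emph{The transport step.} Fix $m=(U,D)\in\Fc^{-1}(S)$, fix $\tilde m=(\tilde U,\tilde D)$ with $\td(\tilde m,m)<\d_1$, and set $S':=\Fc(\tilde m)$ (every $S'\in\Fc(B_{\d_1}^\td(m))$ arises this way). Since $\|{\rm Log}\,\tilde D-{\rm Log}\,D\|_F\le\td(\tilde m,m)<\d_1\le\d_{\rm strat}(S)$, we have entrywise $|\log\tilde D_{ii}-\log D_{ii}|<\d_{\rm strat}(S)$ for all $i$, so any two indices carrying distinct diagonal entries of $D$ still carry distinct entries of $\tilde D$; hence $\textsf{J}_{\tilde D}$ refines $\textsf{J}_D$, and therefore $G_{\tilde D}^0\subseteq G_D^0$ (recall $G_D$ depends only on $\textsf{J}_D$, and $G^0_{\textsf{J}}=\prod_{B\in\textsf{J}}SO(B)$ shrinks as $\textsf{J}$ is refined). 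Now take an arbitrary $m'\in\Fc^{-1}(S')$. By~\eqref{eq:fiber} with base point $\tilde m$, $m'=h\cdot(\tilde U R,\tilde D)$ for some $h\in\mathcal{G}(p)$ and $R\in G_{\tilde D}^0$; since $R\in G_D^0$, \eqref{eq:fiber} with base point $m$ gives $m'':=h\cdot(UR,D)\in\Fc^{-1}(S)$. Using that $\mathcal{G}(p)$ acts on $(M(p),\td)$ by isometries (right translation in $SO(p)$ is a $d_{SO}$-isometry by bi-invariance, and conjugation by a signed-permutation matrix permutes the diagonal entries, hence is a $d_{\mathcal{D}^+}$-isometry) and that $d_{SO}$ is bi-invariant, one computes $\td(m',m'')=\td\big((\tilde U R,\tilde D),(UR,D)\big)=\td(\tilde m,m)<\d_1$. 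Thus every eigen-decomposition of $S'$ lies within $\td$-distance $\td(\tilde m,m)$ of some eigen-decomposition of $S$.

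\medskip
\noindent\emph{Assembling the bounds.} For~\eqref{final_est}: since $\Fc^{-1}(S')$ and $\Fc^{-1}(S_0)$ are compact, choose a minimal pair $m_1^*\in\Fc^{-1}(S')$, $m_0^*\in\Fc^{-1}(S_0)$ with $\dsr(S',S_0)=\td(m_1^*,m_0^*)$, and pick $m''\in\Fc^{-1}(S)$ with $\td(m_1^*,m'')<\d_1$ as in the transport step; then
\[
\dsr(S,S_0)\le\td(m'',m_0^*)\le\td(m'',m_1^*)+\td(m_1^*,m_0^*)<\d_1+\dsr(S',S_0),
\]
i.e.\ $\dsr(S',S_0)>\dsr(S,S_0)-\d_1$. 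If $\dsr(S,S_0)<\d_1$ then $\dsr(S,S_0)^2<\d_1^2\le\e$, so $\dsr(S',S_0)^2\ge0>\dsr(S,S_0)^2-\e$; if $\dsr(S,S_0)\ge\d_1$, then
\[
\dsr(S',S_0)^2>(\dsr(S,S_0)-\d_1)^2\ge\dsr(S,S_0)^2-2\d_1\dsr(S,S_0)\ge\dsr(S,S_0)^2-2\d_1 R>\dsr(S,S_0)^2-\e,
\]
using $\dsr(S,S_0)\le R$ and $2\d_1 R<\e$. The proof of~\eqref{final_est_psr} is word-for-word identical, with the minimal pair replaced by a minimizer $m_1^*\in\Fc^{-1}(S')$ of $\td(\cdot,m_0)$ and with $\dpsr(S,m_0)\le\td(m,m_0)\le R$.

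\medskip
\noindent The step I expect to be the main obstacle is the transport step: pinning down the correct $S$-dependent radius $\d_{\rm strat}(S)$ that excludes eigenvalue collisions, and then recognizing that~\eqref{eq:fiber} lets one push the $\mathcal{G}(p)$- and $G^0$-components of an eigen-decomposition of $S'$ onto those of $S$ without changing the $\td$-distance. Were~\eqref{eq:fiber} not available, one would instead need a tubular-neighborhood / continuity-of-fibers argument — essentially the background work on the geometry of $M(p)$ and $\Fc$ alluded to earlier; everything after the transport step is routine.
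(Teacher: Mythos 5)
Your proof is correct, and it takes a genuinely different route from the paper's. The paper proves this lemma by working with the closed-form expressions for $\dsr^2$ and $\dpsr^2$ as minima over $h\in\G(p)$ of $k\,\hat{d}_h^2+\ddpp^2$ (Proposition 3.5 of Groisser et al.\ 2017a): it establishes local uniform continuity of the diagonal terms, uses the stabilizer monotonicity $G^0_{\L'}\subset G^0_\L$ inside the $\dstrat(S)$-ball to compare the $\hat{d}_h$ terms, perturbs the $SO(p)$ factor by a separately chosen $\d_2$, and finally obtains uniformity over $S_0\in K$ and $m_0\in\Fc^{-1}(S_0)$ through a finite subcover of the compact set $\Fc^{-1}(K)$. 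You instead bypass the distance formula entirely via a fiber-transport argument: using \eqref{eq:fiber}, the isometry of the $\G(p)$-action, right-invariance of $d_{SO}$, and the same key monotonicity $\J_{\tilde D}\ge\J_D\Rightarrow G^0_{\tilde D}\subseteq G^0_D$ valid once $\d_1\le\dstrat(S)$, you show that \emph{every} eigen-decomposition of $S'=\Fc(\tilde m)$ is matched by an eigen-decomposition of $S$ at exactly distance $\td(\tilde m,m)$; this gives the additive bounds $\dsr(S',S_0)>\dsr(S,S_0)-\d_1$ and $\dpsr(S',m_0)>\dpsr(S,m_0)-\d_1$, which you convert to the squared bounds using the uniform bound $R=\sup\{\td(m,m_0):m\in\Fc^{-1}(S),\,m_0\in\Fc^{-1}(K)\}<\infty$ (this is where compactness of $K$ enters, cf.\ Proposition~\ref{fproper}) and the choice $\d_1\le\min\{\dstrat(S),\sqrt{\e},\e/(2R+1)\}$. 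All the ingredients you invoke are available in or quoted by the paper (the fiber characterization \eqref{eq:fiber}, compactness of fibers for the existence of minimal pairs, the isometric $\G(p)$-action), and the case analysis in the squaring step is sound. What each approach buys: yours is shorter and more conceptual — it isolates the real mechanism (one-sided Hausdorff closeness of fibers within the stratum-safe radius) and makes the uniformity in $S_0$, $m_0$, and $m$ transparent through the single constant $R$, with no finite-subcover bookkeeping; the paper's heavier route stays within the $\hat{d}_h$-formula machinery that it reuses elsewhere, at the cost of several auxiliary constants $\d_2,\ldots,\d_5$.
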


Lemma \ref{lsclemma-unif}
does not immediately imply that $\dsr(\cdot, S_0)$
or $\dpsr(\cdot, m_0)$ is LSC at $S$, because the set $\mathcal{F}(B_{\d_1}^\td(m))$
in the lemma is not always open in $\sympp$ ($\Fc$ does not map arbitrary open sets to open sets).
However, as we show in an appendix, there exists an open ball centered at $\mathcal{F}(m)$ in $\Symp$ with radius smaller than $\delta_1$ that is contained in $\mathcal{F}(B_{\delta_1}^{d_M}(m))$ (Corollary~\ref{cor:a11}).
The background and our proofs of these supporting results and Theorem~\ref{prop:lsc} are provided in Appendix \ref{sec:prop:lsc_proof}.

Semicontinuous real-valued functions are (Borel) measurable, so an immediate   consequence of Theorem~\ref{prop:lsc}(a) is that the integrals defining the objective functions $\fsr$ and $\fpsr$ for the population (partial) scaling-rotation means exist  in $\Real \cup \{\infty\}$. This establishes the following. 

\begin{prop}\label{prop:fsr_fpsr is well defined}
Let $P$ be any
Borel
probability measure on $\Symp$.
\begin{itemize}
  \item[(i)] For any $S \in \Symp$, the integral $\int_{\Symp} \dsr^2(\cdot, S) dP$ is well-defined in $[0,\infty]$.
  \item[(ii)] For any $m \in M(p)$, the integral $\int_{\Symp} \dpsr^2(\cdot, m) dP$ is well-defined in $[0,\infty]$.
\end{itemize}
\end{prop}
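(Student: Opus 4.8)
The plan is to deduce Proposition~\ref{prop:fsr_fpsr is well defined} directly from Theorem~\ref{prop:lsc}(a) together with the elementary fact that a lower semicontinuous real-valued function on a topological space is Borel measurable, and that a nonnegative measurable function always has a well-defined Lebesgue integral in $[0,\infty]$. So the proof is essentially a two-line reduction; the substance lives in Theorem~\ref{prop:lsc}, which we are entitled to assume.

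More explicitly, for part (i), fix $S\in\Symp$. By Theorem~\ref{prop:lsc}(a), the function $g := \dsr^2(\cdot,S):\Symp\to[0,\infty)$ is LSC. A standard fact from real analysis is that an LSC function is Borel measurable: for each $a\in\Real$, the sublevel set $\{g \le a\}$ is closed (equivalently, $\{g > a\} = \bigcup_{x_0:\,g(x_0)>a} \Uc_{x_0}$ is open, where $\Uc_{x_0}$ is the neighborhood furnished by the LSC property with $\epsilon = g(x_0)-a > 0$), hence Borel. Since $\Symp$ carries its usual manifold topology and $P$ is a Borel measure, $g$ is $P$-measurable. As $g\ge 0$, the integral $\int_{\Symp} g\,dP = \int_{\Symp}\dsr^2(\cdot,S)\,dP$ is a well-defined element of $[0,\infty]$. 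Part (ii) is identical: fix $m\in M(p)$; by Theorem~\ref{prop:lsc}(a), $\dpsr^2(\cdot,m):\Symp\to[0,\infty)$ is LSC, hence Borel measurable and nonnegative, so $\int_{\Symp}\dpsr^2(\cdot,m)\,dP$ is well-defined in $[0,\infty]$.

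There is essentially no obstacle here: the only content beyond Theorem~\ref{prop:lsc} is the textbook implication ``LSC $\Rightarrow$ Borel measurable,'' which we may cite or prove in one sentence as above. If one wants to be fully self-contained, the single point worth spelling out is why $\{g>a\}$ is open, namely that LSC at $x_0$ with $\epsilon=g(x_0)-a$ gives a neighborhood on which $g>g(x_0)-\epsilon=a$; taking the union over all such $x_0$ shows $\{g>a\}$ is open. I would keep the proof to three or four sentences and simply invoke Theorem~\ref{prop:lsc}(a) and the measurability of semicontinuous functions, which the surrounding text already flags (``Semicontinuous real-valued functions are (Borel) measurable'').
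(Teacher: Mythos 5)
Your proposal is correct and follows exactly the paper's route: the paper also derives the proposition immediately from Theorem~\ref{prop:lsc}(a) plus the observation that semicontinuous real-valued functions are Borel measurable, and it in fact omits any further proof. Your extra sentence verifying that $\{g>a\}$ is open is a harmless (and correct) elaboration of that standard fact.
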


(Proof for Proposition~\ref{prop:fsr_fpsr is well defined} is omitted.)

A finite-variance condition for the random variable $X \in \Symp$ with respect to the (partial) scaling-rotation distance (already needed to define $\Esr$ and $\Epsr$) is required
to establish (semi-)continuity of $\fsr$ and $\fpsr$, non-emptiness of $\Esr$ and $\Epsr$ (discussed in Section~\ref{sec:existence}), and relationships between these sets (in Section~\ref{sec:SR_sample_mean}).
For a probability measure $P$ on $\Symp$, we say $P$ has \emph{finite SR-variance} if $\fsr(S) < \infty$ for all $S \in \Symp$. Likewise, $P$ has   \emph{finite PSR-variance} if  $\fpsr(U,D) < \infty$ for all $(U,D) \in M(p)$.
The following result shows that such a condition
needs to be assumed only at a single point, rather than at all points.

\begin{lemma}\label{lem:finite_variance}
 Let $P$ be a Borel probability measure on $\Symp$ and let $\fpsr$ and $\fsr$  be the corresponding objective functions defined in equations \eqref{eq:Epsr} and \eqref{eq:esr_definition}.
\begin{itemize}
  \item[(a)]  If $\fpsr(m) < \infty$ for some $m \in M(p)$, then $\fpsr(m) < \infty$ for any $m \in M(p)$, and $\fsr(S) < \infty$ for any $S \in \Symp$.
  \item[(b)] If $\fsr(S) < \infty$ for some $S \in \Symp$, then $\fsr(S) < \infty$ for any $S \in \Symp$.
\end{itemize}
\end{lemma}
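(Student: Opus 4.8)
The plan is to establish (a) and (b) by showing that the two objective functions are, up to additive constants, comparable when we move the base point, using the triangle-type inequalities available for $d_M$, $\dsr$, and $\dpsr$. First I would record the elementary consequence of the fact that $d_M$ is a genuine metric on $M(p)$: for any $m, m' \in M(p)$ and any $X \in \Symp$, picking a minimal pair realizing $\dpsr(X, m')$ and applying the triangle inequality in $M(p)$ gives $\dpsr(X, m) \le \dpsr(X, m') + d_M(m', m)$, hence $\dpsr^2(X,m) \le 2\dpsr^2(X,m') + 2 d_M^2(m',m)$. Integrating against $P$ yields $\fpsr(m) \le 2\fpsr(m') + 2 d_M^2(m',m)$, which proves the first assertion of (a): finiteness at one point of $M(p)$ propagates to every point, since $d_M(m',m)$ is a finite constant.

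Next, for the statement in (a) that $\fpsr(m) < \infty$ for some $m$ forces $\fsr(S) < \infty$ for all $S$, I would use the inequality $\dsr(X, \mathcal{F}(U,D)) \le \dpsr(X,(U,D))$ from \eqref{SR_PSR_ineq}. Given an arbitrary $S \in \Symp$, choose any eigen-decomposition $(U_S, D_S) \in \mathcal{F}^{-1}(S)$; then $\fsr(S) = \int \dsr^2(X,S)\,dP \le \int \dpsr^2(X,(U_S,D_S))\,dP = \fpsr(U_S,D_S)$, and the latter is finite by the first part of (a). This handles (a) entirely.

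For part (b), the argument is the same pattern but now purely at the level of $\dsr$: I would show that for $S, S' \in \Symp$ and $X \in \Symp$ we have $\dsr(X, S) \le \dsr(X, S') + C(S,S')$ for a finite constant $C(S,S')$ not depending on $X$, whence $\fsr(S) \le 2\fsr(S') + 2C(S,S')^2$. The natural choice is $C(S,S') = \dsr(S,S')$, \emph{but} $\dsr$ is only a semi-metric and the triangle inequality can fail, so I cannot simply invoke it. Instead I would route through $M(p)$: fix eigen-decompositions $(U_{S'}, D_{S'}) \in \mathcal{F}^{-1}(S')$ and $(U_S, D_S) \in \mathcal{F}^{-1}(S)$; for each $X$ pick a minimal pair for $\dsr(X,S')$, say $(U_X^*, D_X^*) \in \mathcal{F}^{-1}(X)$ and $(\tilde U_{S'}, \tilde D_{S'}) \in \mathcal{F}^{-1}(S')$, and note that $\dsr(X,S) \le d_M\big((U_X^*,D_X^*), h\cdot(U_S,D_S)\big)$ for any $h \in \mathcal{G}(p)$ and any element of $\mathcal{F}^{-1}(S)$. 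The issue is that the minimal pair for $\dsr(X,S')$ need not be $(U_{S'},D_{S'})$ itself, so I need to bound $d_M\big(\text{(minimal decomp. of }S')\text{, (some decomp. of }S)\big)$ uniformly over the fiber $\mathcal{F}^{-1}(S')$: this quantity is $\le \sup_{m' \in \mathcal{F}^{-1}(S')} \inf_{m \in \mathcal{F}^{-1}(S)} d_M(m', m)$, which is the Hausdorff-type one-sided distance between the two compact fibers, hence a finite constant $C(S,S')$ independent of $X$. Combining, $\dsr(X,S) \le \dsr(X,S') + C(S,S')$, so $\fsr(S) \le 2\fsr(S') + 2C(S,S')^2 < \infty$.

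The main obstacle is precisely this last point in (b): since $\dsr$ is not a true metric, one cannot chain distances through $S'$ directly, and the honest replacement is the (one-sided) distance between the fibers $\mathcal{F}^{-1}(S')$ and $\mathcal{F}^{-1}(S)$ in $(M(p), d_M)$. Its finiteness follows from compactness of fibers (already noted in the Remark after Definition \ref{def:dsr}) together with continuity of $d_M$, but one must be slightly careful to take the correct order of $\sup$ and $\inf$ so that the bound is uniform in $X$. Once that constant is in hand, everything reduces to the crude inequality $a^2 \le 2b^2 + 2c^2$ applied pointwise and integrated, which is routine.
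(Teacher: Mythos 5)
Your proposal is correct. Part (a) is essentially the paper's argument: the paper applies the triangle inequality to the $SO(p)$ and ${\rm Diag}^+(p)$ components separately inside the infimum over $\mathcal{F}^{-1}(X)$, which is exactly your Lipschitz bound $\dpsr(X,m)\le \dpsr(X,m')+d_M(m',m)$ followed by $a^2\le 2b^2+2c^2$ and integration, and the passage to $\fsr$ via \eqref{SR_PSR_ineq} is identical. Part (b), however, takes a genuinely different route. The paper first proves a standalone quasi-triangle inequality (its Lemma on $\dsr$): $\dsr(S_1,S_2)\le \dsr(S_1,S_0)+\dsr(S_0,S_2)+C(S_2)$ with a constant depending only on the eigenvalues of $S_2$, obtained by chaining minimal pairs through connected components of the fibers (using the fact that within a connected component the diagonal factor is fixed, so the detours cost at most multiples of $\sqrt{k}\,{\rm diam}(SO(p))$ plus a $d_{\mathcal{D}^+}$ term), and then applies it with the fixed finite-variance point in the middle. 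You instead bound the additive error by the one-sided Hausdorff distance $\sup_{m'\in\mathcal{F}^{-1}(S')}\inf_{m\in\mathcal{F}^{-1}(S)}d_M(m',m)$ between the two compact fibers, which is finite and independent of $X$, and your order of $\sup$/$\inf$ is the right one, so the bound $\dsr(X,S)\le \dsr(X,S')+C(S,S')$ and hence $\fsr(S)\le 2\fsr(S')+2C(S,S')^2$ goes through. Your version is more elementary (it needs only compactness of fibers and the triangle inequality in $(M(p),d_M)$, not the component structure from Groisser et al.), at the price of a constant depending on both endpoints rather than only on the eigenvalues of one of them; for the purposes of this lemma, where both points are fixed, that loss is immaterial, whereas the paper's sharper inequality is packaged as a separate lemma presumably because its stronger, one-argument-only dependence is of independent use.
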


By (\ref{SR_PSR_ineq}), any probability measure with   finite PSR-variance always has   finite SR variance.

We conclude this background section by answering a natural question: Are the SR and PSR mean functions $\fsr$ and $\fpsr$ (semi-)continuous?

\begin{lemma}\label{lem:contfsr}\label{lem:contfpsr}
Let $P$ be a Borel probability measure on $\Symp$.
\begin{itemize}
 \item[(i)] If $P$ is supported in a compact set $K \subset \Symp$, then $\fsr: \Symp \to \Real$ is LSC.
  \item[(ii)] If $P$ has finite PSR-variance, then $\fpsr: M(p)\to \Real$ is continuous.
\end{itemize}
\end{lemma}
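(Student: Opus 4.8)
\textbf{Proof plan for Lemma~\ref{lem:contfsr}\,\&\,\ref{lem:contfpsr}.}
The plan is to treat the two parts separately, using lower semicontinuity of the integrand together with standard limit-interchange arguments (Fatou for the LSC direction, dominated convergence for the USC direction in the continuous case).

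For part~(i), I would fix $S_0 \in \Symp$ and a sequence $S_j \to S_0$; the goal is $\liminf_j \fsr(S_j) \ge \fsr(S_0)$. By Theorem~\ref{prop:lsc}(b), $\dsr^2(\cdot,\cdot)$ is LSC in its first variable, locally uniformly with respect to the second. Concretely, choose a compact neighborhood $L$ of $S_0$ in $\Symp$ (using local compactness); then $\dsr^2|_{\Symp \times L}$ is LSC in the first variable uniformly with respect to the second. Now use Lemma~\ref{lsclemma-unif} with the compact set $K = \supp(P)$: for $\e > 0$ it produces a $\delta_1 > 0$ such that for every $X \in K$, every eigen-decomposition $m$ of $X$, and every $S'$ in the $\Symp$-image of the $\delta_1$-ball about $m$, we have $\dsr(S',S_0)^2 > \dsr(X,S_0)^2 - \e$ — note the roles: here $S_0$ ranges over $K$ in the lemma, but by symmetry of $\dsr$ and by reversing the roles, the same estimate holds with $S_0$ fixed and $X$ ranging over $K$. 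Combining this with Corollary~\ref{cor:a11} (an honest open ball about $\mathcal{F}(m)$ sits inside $\mathcal{F}(B_{\delta_1}^{d_M}(m))$), there is $\eta > 0$ so that $d(S',S_0) < \eta$ (Euclidean distance on $\Symp$) forces $\dsr^2(X, S') > \dsr^2(X, S_0) - \e$ for all $X \in K$ simultaneously. Hence for $j$ large, $\dsr^2(X,S_j) > \dsr^2(X,S_0) - \e$ for $P$-a.e.\ $X$; integrating (finiteness of $\fsr$ on $K$-supported $P$ is automatic since $\dsr$ is bounded on $K \times \{S_0\}$ up to the $\e$) gives $\fsr(S_j) \ge \fsr(S_0) - \e$, and letting $j \to \infty$ then $\e \to 0$ yields lower semicontinuity. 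Measurability of the integrand is already guaranteed by Theorem~\ref{prop:lsc}(a).

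For part~(ii), I would prove both $\liminf$ and $\limsup$ bounds. The LSC direction, $\liminf_j \fpsr(m_j) \ge \fpsr(m_0)$ for $m_j \to m_0$ in $M(p)$, follows exactly as above: Theorem~\ref{prop:lsc}(b) gives LSC of $\dpsr^2$ in the first variable locally uniformly in the second, so on a compact neighborhood of $m_0$ in $M(p)$ we get, for each $\e>0$ and $j$ large, $\dpsr^2(X,m_j) > \dpsr^2(X,m_0) - \e$ for all $X$; integrate and pass to the limit (no finite-variance worry on the lower side since we are subtracting a constant). For the USC direction, I would exploit Lemma~\ref{lem:dpsr_continuity}(b): $\dpsr(X,\cdot)$ is $1$-Lipschitz on $M(p)$, so $\dpsr^2(X,m_j) \le (\dpsr(X,m_0) + d_M(m_j,m_0))^2 = \dpsr^2(X,m_0) + 2\dpsr(X,m_0)\,d_M(m_j,m_0) + d_M(m_j,m_0)^2$. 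For $j$ large enough that $d_M(m_j,m_0) \le 1$, the right side is dominated by $\dpsr^2(X,m_0) + 2\dpsr(X,m_0) + 1$, which is $P$-integrable because finite PSR-variance gives $\int \dpsr^2(X,m_0)\,dP < \infty$ and hence $\int \dpsr(X,m_0)\,dP < \infty$ by Cauchy--Schwarz. Dominated convergence then yields $\limsup_j \fpsr(m_j) \le \fpsr(m_0)$, and combined with the $\liminf$ bound gives continuity. (Finiteness of $\fpsr(m_0)$, needed for this to be a statement about real numbers, is the finite-PSR-variance hypothesis.)

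\textbf{Main obstacle.} The delicate point is the passage from the ``$d_M$-ball in $M(p)$'' estimate of Lemma~\ref{lsclemma-unif} to a genuine neighborhood statement in $\Symp$ (for part~(i)) — i.e.\ converting ``$S'$ in the nonopen set $\mathcal{F}(B_{\delta_1}^{d_M}(m))$'' into ``$S'$ within Euclidean distance $\eta$ of $\mathcal{F}(m)$''. This is exactly what Corollary~\ref{cor:a11} is for, but one must take care that the radius $\eta$ and the $\delta_1$ can be chosen uniformly over $X \in K$ (using compactness of $K$ and of each fiber $\mathcal{F}^{-1}(X)$) and uniformly over the relevant choices of $m \in \mathcal{F}^{-1}(X)$; verifying this uniformity, rather than any single limit interchange, is where the real work lies. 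The other steps (Fatou-type lower bound, Lipschitz-plus-dominated-convergence upper bound) are routine once the integrands' semicontinuity and the domination are in hand.
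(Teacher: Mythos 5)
Your part~(i) is essentially the paper's argument: the paper simply invokes Theorem~\ref{prop:lsc} restricted to $\Symp\times K$ (uniform LSC in the first variable over the compact support $K=\supp(P)$, with the symmetry of $\dsr$ used implicitly) and integrates the resulting uniform $\epsilon$-estimate. Your extra descent into Lemma~\ref{lsclemma-unif} and Corollary~\ref{cor:a11} re-proves that uniformity rather than citing it, and the ``compact neighborhood $L$ of $S_0$'' is beside the point (the compact set that matters for the second variable is $\supp(P)$, which you then correctly use), but none of this is an error.

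Part~(ii) has a genuine gap in the $\liminf$ step. Theorem~\ref{prop:lsc}(b) gives lower semicontinuity of $\dpsr^2$ in its \emph{first} variable (the $\Symp$ argument), locally uniformly in the second; it says nothing about perturbing the $M(p)$ argument, and $\dpsr$ is not symmetric (its arguments live in different spaces), so ``follows exactly as above'' does not apply — the paper even flags this asymmetry explicitly when discussing related literature. Moreover, the intermediate claim you extract, namely $\dpsr^2(X,m_j)>\dpsr^2(X,m_0)-\e$ \emph{for all} $X$ once $m_j$ is close to $m_0$, is false in general: the 1-Lipschitz bound of Lemma~\ref{lem:dpsr_continuity}(b) only yields $\dpsr^2(X,m_j)\ge \dpsr^2(X,m_0)-2\,d_M(m_j,m_0)\,\dpsr(X,m_0)$, and the deficit $2\,d_M(m_j,m_0)\,\dpsr(X,m_0)$ is unbounded in $X$ unless $P$ has bounded support. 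The repair is immediate with the tools you already deploy for the $\limsup$ direction: integrate the displayed inequality and use $\int\dpsr(X,m_0)\,dP\le \fpsr(m_0)^{1/2}<\infty$ (Cauchy--Schwarz plus finite PSR-variance) to get $\liminf_j\fpsr(m_j)\ge\fpsr(m_0)$. In fact the paper does both directions in one stroke, showing $\bigl|\fpsr(m_1)^{1/2}-\fpsr(m_2)^{1/2}\bigr|\le d_M(m_1,m_2)$ directly from Lemma~\ref{lem:dpsr_continuity}(b) and Cauchy--Schwarz, so that $(\fpsr)^{1/2}$ is 1-Lipschitz and no semicontinuity machinery or dominated convergence is needed for this part; your two-sided argument, once the $\liminf$ step is corrected as above, is a slightly longer but valid alternative.
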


 The preceding result also implies that for any finite sample $X_1,\ldots,X_n$, $\fnsr$ (or $\fnpsr$) is LSC (or continuous, respectively). Lemma~\ref{lem:contfsr} plays an important role in developing theoretical properties of SR and PSR means, which we present in the subsequent sections.

\subsection{Existence of scaling-rotation means}\label{sec:existence}

The SR mean set $\Esr$ consists of the minimizers of the function $\fsr$. To prove existence of SR means (or, equivalently, non-emptiness of  $\Esr$) we use the fact that any LSC function on a compact set attains a minimum . For this purpose, we first verify  \emph{coercivity} of $\fsr$ (and $\fpsr$).
\begin{prop}\label{coercivity}
Let $P$ be a  Borel probability measure on $\sympp$.
\begin{itemize}
\item[(a)] There exists a compact set $K\subset \sympp$ such that
\be\label{coerce_sr}
\inf_{S\in \sympp} \fsr(S) =
\inf_{S\in K} \fsr(S).
\ee

\item[(b)] There exists a compact set $\widetilde{K}\subset \tim(p)$ such that
\be\label{coerce_psr}
\inf_{m \in \tim(p)} \fpsr(m) =
\inf_{m \in \widetilde{K}} \fpsr(m).
\ee
\end{itemize}
\end{prop}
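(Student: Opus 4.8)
The plan is to exhibit explicit sublevel sets that are bounded and then pass to their closures. First I would fix a base point: for part (a), pick any $S_1 \in \sympp$; for part (b), pick any $m_1 = (U_1, D_1) \in M(p)$. If $\fsr(S_1) = \infty$ (resp.\ $\fpsr(m_1) = \infty$), then by Lemma \ref{lem:finite_variance} the objective is identically $\infty$ and any compact set works, so I may assume $c_1 := \fsr(S_1) < \infty$ (resp.\ $\fpsr(m_1) < \infty$). The key point is that the infimum is not changed by restricting to the sublevel set $L = \{S : \fsr(S) \le c_1\}$ (resp.\ $\widetilde{L} = \{m : \fpsr(m) \le \fpsr(m_1)\}$), since $S_1 \in L$. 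So it suffices to show $L$ is contained in a compact set, and then take $K = \overline{L'}$ for a slightly enlarged bounded set $L'$; LSC of $\fsr$ (resp.\ continuity of $\fpsr$, from Lemma \ref{lem:contfpsr}) is not even needed here — only boundedness of sublevel sets matters for the stated equality of infima.

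The heart of the argument is a coercivity estimate: $\fsr(S) \to \infty$ as $S$ "escapes to infinity" in $\sympp$. For the PSR case this is cleaner, so I would do it first. Using the triangle inequality for $d_M$ together with $\dpsr(X, m) \ge d_M(m_X, m) \ge d_M(m_1', m) - d_M(m_X, m_1')$ for a suitable eigen-decomposition $m_X$ of $X$ and appropriate $m_1'$ in the fiber of $S_1$, and then $(a-b)^2 \ge \tfrac12 a^2 - b^2$, one gets
\[
\fpsr(m) \ge \tfrac12 \, d_M(m, m_1')^2 \cdot (\text{something}) - \fpsr(m_1'),
\]
after handling the fiber infimum carefully — the cleanest route is to note that for any fixed $m$, $\dpsr(X,m) \ge d_{\mathcal{D}^+}(D_X, D)$ whenever the minimal pair pairs $D$ with $D_X$ (the eigenvalue part of the distance is insensitive to the $\mathcal{G}(p)$-action up to permutation), and that $d_M(m,m_1)^2 \le 2kd_{SO}(U,U_1)^2 + 2d_{\mathcal{D}^+}(D,D_1)^2$ with $d_{SO} \le \pi/\sqrt2$ bounded since $SO(p)$ is compact. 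Thus escaping to infinity in $M(p)$ forces $\|{\rm Log}(D)\|_F \to \infty$, which forces $d_{\mathcal{D}^+}(D_X, D) \to \infty$ uniformly in $X$ on any set where $\|{\rm Log}(D_X)\|$ has bounded $P$-mean — and the finite-variance hypothesis ($\fpsr(m_1) < \infty$) gives exactly the needed integrability of $\|{\rm Log}(D_X)\|^2$. Concretely, $\fpsr(m) \to \infty$ as $\|{\rm Log}(D)\|_F \to \infty$, so $\widetilde{L} \subset SO(p) \times \{D : \|{\rm Log}(D)\|_F \le R\}$ for some finite $R$, which is compact.

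For part (a), by \eqref{SR_PSR_ineq} we have $\dsr(X, \mathcal{F}(m)) \le \dpsr(X, m)$, but I need the reverse-type control to get coercivity of $\fsr$ directly on $\sympp$. The right approach is: given $S \in \sympp$ with eigenvalues $\lambda_1 \ge \cdots \ge \lambda_p$, pick the eigen-decomposition $(U_S, D_S)$ with $D_S$ having those eigenvalues in decreasing order; then $\dsr(X,S) \ge $ (a lower bound involving $\|{\rm Log}(D_S) - {\rm sort}({\rm Log}(D_X))\|_F$ by a rearrangement/Hoffman–Wielandt–type inequality on the diagonal parts), since the eigenvalue contribution to $d_M$ is minimized by the sorted matching. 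This shows $\fsr(S) \to \infty$ as the spectrum of $S$ escapes (either $\lambda_1 \to \infty$ or $\lambda_p \to 0$), using integrability of $\|{\rm Log}(D_X)\|^2$ which follows from finite SR-variance (Lemma \ref{lem:finite_variance}; note finite PSR-variance is not assumed in (a), but $\fsr(S_1)<\infty$ gives what we need via that lemma). Hence $L$ lies in the set of $S$ whose eigenvalues satisfy $e^{-R} \le \lambda_p \le \lambda_1 \le e^{R}$, which is a compact subset of $\sympp$ (closed and bounded, bounded away from the boundary of the cone). Taking $K$ to be this set (or its closure) gives \eqref{coerce_sr}.

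\textbf{Main obstacle.} The delicate point is controlling the fiber-infimum in $\dpsr$ and the double infimum in $\dsr$ uniformly in $X$: one must show that escaping to infinity in $D$ (or in the spectrum of $S$) cannot be "rescued" by a clever choice of eigen-decomposition of $X$ drawn from the $\mathcal{G}(p)$-orbit and the $G_{D_X}^0$-coset. The resolution is that the eigenvalue part $d_{\mathcal{D}^+}$ of $d_M$ depends on the multiset of diagonal entries only up to permutation (the $\mathcal{G}(p)$-action permutes $D$), and over all permutations the minimum of $\|\pi \cdot {\rm Log}(D_X) - {\rm Log}(D)\|_F$ still diverges whenever $\|{\rm Log}(D)\|_F \to \infty$ with $\|{\rm Log}(D_X)\|_F$ bounded in mean — a uniform rearrangement bound. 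Making this rearrangement estimate precise, and checking that the rotation part (which lives in the compact group $SO(p)$ and so contributes a bounded amount) genuinely cannot interfere, is the one step requiring care; everything else is bookkeeping with the triangle inequality and Lemma \ref{lem:finite_variance}.
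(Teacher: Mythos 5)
Your proposal is correct, and at its core it rests on the same coercivity mechanism as the paper: escape to infinity is only possible through the eigenvalue (diagonal) factor, the rotation factor is harmless because $SO(p)$ is compact, and no choice within the fibers can rescue things because the log-eigenvalue data is only permuted by the $\mathcal{G}(p)$-action. The bookkeeping differs, though. The paper works with the sets $K_r=\{S:\ |\log\lambda|\le r \text{ for all eigenvalues } \lambda\}$ and shows directly that for $S\notin K_{r_2}$ one has $\fsr(S)\ge (r_2-r_1)^2P(K_{r_1})$, then chooses $r_1$ with $P(K_{r_1})>0$ and $r_2$ so large that this exceeds $\fsr(I)$ (resp.\ $\fpsr(I,I)$), taking $K=K_{r_2}$ and $\widetilde K=\mathcal{F}^{-1}(K_{r_2})$ (compact by properness of $\mathcal{F}$). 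This needs only the tightness fact $P(K_{r_1})>0$ and one finite value of the objective; it does not use Lemma~\ref{lem:finite_variance}, integrability of $\|{\rm Log}(D_X)\|_F^2$, or any rearrangement/Hoffman--Wielandt-type refinement, all of which your route invokes. Your sublevel-set argument is fine (and your case split via Lemma~\ref{lem:finite_variance} handles the degenerate $\fsr\equiv\infty$ case, which the paper leaves implicit), but the moment bounds are stronger tools than necessary: the crude estimate $d_{\mathcal D^+}(D_X,D)\ge \max_i|\log\lambda_i(D)|-\max_j|\log\lambda_j(X)|$, valid for every pair of eigen-decompositions, already suffices. One caution: your displayed heuristic $\dpsr(X,m)\ge d_M(m_1',m)-d_M(m_X,m_1')$ with the subsequent $(a-b)^2$ step is not literally valid, since minimizing over $m_X\in\mathcal{F}^{-1}(X)$ on the left forces a \emph{maximum}, not $\dpsr(X,m_1')$, on the right; you correctly abandon this in favor of the diagonal-part bound, so the final argument stands, but that intermediate inequality should not appear in a written-up proof.
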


Proposition~\ref{coercivity} says that
$\fsr$ (and $\fpsr$, respectively) is coercive, i.e. uniformly large outside some compact set, and, under the finite-variance condition, has a (non-strictly) smaller value somewhere inside that compact set.
Using this fact and the lower semicontinuity of $\fsr$ (respectively, $\fpsr$), we show in Theorem \ref{thm:existence_popSR_PSRmeans} that the SR and PSR mean sets are non-empty.
In this theorem, the bounded support condition for $P$ is used only to ensure the lower semicontinuity of $\fsr$. 

\begin{theorem}\label{thm:existence_popSR_PSRmeans}
Let $P$ be a Borel probability measure on $\Symp$.
\begin{itemize}
  \item[(a)]  If $P$ is supported on a compact set, then
$\Esr \neq \emptyset$.
  \item[(b)]  If $P$ has   finite PSR-variance, then
$\Epsr \neq \emptyset$.
\end{itemize}
\end{theorem}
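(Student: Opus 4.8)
The plan is to combine coercivity (Proposition \ref{coercivity}) with lower semicontinuity (Lemma \ref{lem:contfsr}) and the classical fact that a lower semicontinuous function on a nonempty compact set attains its infimum. First I would observe that in both cases the relevant objective function is not identically $+\infty$: in part (a), boundedness of $\supp(P)$ inside a compact set $K_0$ gives $\fsr(S) \le \sup_{X \in K_0} \dsr^2(X,S) < \infty$ for any fixed $S$ (the $\sup$ is finite because $\dsr(\cdot,S)$ is LSC, hence bounded on compacts — or one can just bound it crudely using the formulas for $\dsr$); in part (b), finite PSR-variance is precisely the hypothesis that $\fpsr(m) < \infty$ for all $m$, so Lemma \ref{lem:finite_variance}(a) is not even needed here, though it shows the hypothesis is equivalent to finiteness at one point. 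In particular each infimum in \eqref{coerce_sr}, \eqref{coerce_psr} is a real number, not $+\infty$.

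For part (a): by Proposition \ref{coercivity}(a), there is a compact $K \subset \sympp$ with $\inf_{\sympp} \fsr = \inf_K \fsr$. By Lemma \ref{lem:contfsr}(i), since $P$ is compactly supported, $\fsr$ is LSC on $\sympp$, hence LSC on $K$. A real-valued LSC function on a nonempty compact set attains its minimum, so there exists $S^* \in K$ with $\fsr(S^*) = \inf_K \fsr = \inf_{\sympp} \fsr$; thus $S^* \in \Esr$ and $\Esr \ne \emptyset$. For part (b): by Proposition \ref{coercivity}(b) there is a compact $\widetilde K \subset M(p)$ with $\inf_{M(p)} \fpsr = \inf_{\widetilde K} \fpsr$, and by Lemma \ref{lem:contfpsr}(ii), finite PSR-variance makes $\fpsr$ continuous (in particular LSC) on $M(p)$; the same compactness argument yields $m^* \in \widetilde K$ minimizing $\fpsr$ over all of $M(p)$, so $\Epsr \ne \emptyset$.

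The only genuine content has already been packaged into the two cited results, so there is no real obstacle left in this theorem itself — it is a two-line corollary of Proposition \ref{coercivity} and Lemma \ref{lem:contfsr}. If anything needs care, it is making sure the compact set over which we minimize is \emph{nonempty}, which is automatic since $\sympp$ and $M(p)$ are nonempty and the infimum is attained on a set of the form $K \cap \{\,\fsr \le c\,\}$ for a suitable level $c$ that is itself witnessed by a point; but this is already implicit in the statements of Proposition \ref{coercivity}. I would write the two parts in parallel and simply invoke "LSC on compact $\Rightarrow$ minimum attained" once.
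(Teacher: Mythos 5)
Your proposal is correct and follows essentially the same route as the paper: invoke Proposition \ref{coercivity} to reduce minimization to a compact set, use Lemma \ref{lem:contfsr} (LSC of $\fsr$ under compact support, continuity of $\fpsr$ under finite PSR-variance), and conclude via the attainment of a minimum by an LSC function on a compact set. The extra remarks on finiteness of the objective are harmless but not needed beyond what the cited lemmas already supply.
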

%
%

Since the conditions of Theorem~\ref{thm:existence_popSR_PSRmeans} are met for any empirical measure defined from a finite set $\{ X_1,\ldots,X_n\}  \subset \Symp$, a corollary of the population SR mean result is the existence of sample SR means:

\begin{corollary}\label{thm:existence_sampleSR_PSRmeans}
 For any finite $n$, and any $X_1,\ldots,X_n \in \Symp$, $\Ensr \neq \emptyset$ and $\Enpsr \neq \emptyset$.
\end{corollary}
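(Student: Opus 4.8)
The plan is to deduce this corollary from Theorem~\ref{thm:existence_popSR_PSRmeans} by applying that theorem to the empirical measure $P_n := \frac{1}{n}\sum_{i=1}^n \delta_{X_i}$ on $\Symp$, where $\delta_X$ denotes the Dirac measure at $X$. The key bookkeeping observation is that the population-level objects with respect to $P_n$ are exactly the sample-level objects: for every $S \in \Symp$ and every $m \in M(p)$,
\[
\int_{\Symp} \dsr^2(X,S)\, P_n(dX) = \frac{1}{n}\sum_{i=1}^n \dsr^2(X_i,S) = \fnsr(S),
\qquad
\int_{\Symp} \dpsr^2(X,m)\, P_n(dX) = \fnpsr(m),
\]
so that the population SR and PSR mean sets computed with respect to $P_n$ coincide with $\Ensr$ and $\Enpsr$ respectively. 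Hence it suffices to check that $P_n$ satisfies the hypotheses of the two parts of Theorem~\ref{thm:existence_popSR_PSRmeans}.

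For the SR statement, $P_n$ is supported on the finite set $\{X_1,\ldots,X_n\} \subset \Symp$, which is compact; so Theorem~\ref{thm:existence_popSR_PSRmeans}(a) applies and gives $\Ensr \neq \emptyset$. For the PSR statement, I would verify that $P_n$ has finite PSR-variance: fix $m \in M(p)$ and $i \in \{1,\ldots,n\}$; the fiber $\mathcal{F}^{-1}(X_i)$ is nonempty and compact (as noted after Definition~\ref{def:dsr}) and $d_M(\cdot,m)$ is continuous, so $\dpsr(X_i,m)$ is a finite infimum (indeed attained), whence $\fnpsr(m) = \frac{1}{n}\sum_{i=1}^n \dpsr^2(X_i,m) < \infty$ for every $m \in M(p)$. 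Then Theorem~\ref{thm:existence_popSR_PSRmeans}(b) yields $\Enpsr \neq \emptyset$.

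There is essentially no genuine obstacle here; the only thing requiring a moment's care is the identification of the sample objective functions and mean sets with their population counterparts for the empirical measure, together with the routine checks that an empirical measure automatically has compact support and finite PSR-variance. Alternatively, one could give a direct self-contained proof mirroring that of Theorem~\ref{thm:existence_popSR_PSRmeans}: by Lemma~\ref{lem:contfsr} (applied to $P_n$) the functions $\fnsr$ and $\fnpsr$ are LSC, and by Proposition~\ref{coercivity} they are coercive, so each attains its infimum on the compact set provided by coercivity and therefore on all of $\Symp$, respectively $M(p)$; but invoking Theorem~\ref{thm:existence_popSR_PSRmeans} directly is cleaner and is the route I would take.
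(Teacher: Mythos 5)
Your proposal is correct and matches the paper's own argument: the paper proves this corollary exactly by observing that the empirical measure of the finite set $\{X_1,\ldots,X_n\}$ has compact support and (trivially) finite PSR-variance, so Theorem~\ref{thm:existence_popSR_PSRmeans} applies. The identification of $\fnsr$ and $\fnpsr$ with the population objectives for the empirical measure is the same routine step the paper implicitly relies on.
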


\begin{remark}
For any Borel-measurable $\Symp$-valued random variable with finite PSR-variance, the PSR mean set $\Epsr$ is closed. In particular, every sample PSR mean set is closed. To verify this, recall from Lemma~\ref{lem:contfpsr} that $\fpsr$ is continuous. The PSR mean set is a level set of a continuous function, and therefore is closed.

Moreover, as seen in Proposition~\ref{coercivity}, the closed set $\Epsr$ is a subset of a compact set, thus is compact as well.
\end{remark}

\subsection{Uniqueness of PSR means}\label{uniqueness}

Much work has been done on the question of uniqueness of the Fr\'{e}chet mean of Riemannian manifold-valued observations. It is known that the Fr\'{e}chet mean is unique as long as the support of the probability distribution lies within a geodesic ball of a certain radius (see, for example, \cite{Afsari2011}). Although $d_{\mathcal{SR}}$ is not a geodesic distance on ${\rm Sym}^+(p)$,
we can obtain a similar result for a kind of uniqueness of the PSR mean.

For any $X \in \Symp$, recall from (\ref{eq:fiber}) that $\mathcal{F}^{-1}(X) = \{ h\cdot(UR,D): R \in G_D^0, h\in \mathcal{G}(p)\}$ for an eigen-decomposition $(U,D)$ of $X$. Since the finite group $\mathcal{G}(p)$ acts freely and isometrically on $M(p)$,  for any $h \in \mathcal{G}(p)$ and $m \in M(p)$,
\begin{align*}
 \dpsr(X, m ) &=\inf_{R \in G_D^0, h'\in \mathcal{G}(p) } d_M( h' \cdot(UR,D) , m) \\
            & = \inf_{R \in G_D^0, h \cdot h' \in \mathcal{G}(p) } d_M( h \cdot h' \cdot(UR,D) , h \cdot m) = \dpsr(X, h \cdot m ).
 \end{align*}
For a sample $X_1, \dotsc, X_n \in \Symp$, we have thus
\begin{equation}\label{sample_PSR_mean_nonunique}
\fnpsr(m) = \frac{1}{n}\sum_{i=1}^n d^2_{\mathcal{PSR}}(X_i, m) = \frac{1}{n}\sum_{i=1}^n d^2_{\mathcal{PSR}}(X_i, h\cdot m)) = \fnpsr(h\cdot m)
\end{equation}
for any $h \in \mathcal{G}(p)$ and $m \in M(p)$.
It follows from (\ref{sample_PSR_mean_nonunique}) that for any $m \in E_n^{(\mathcal{PSR})}$, the remaining members of its orbit $\mathcal{G}(p) \cdot m$  (see (\ref{eq:orbit})) also belong to $E_n^{(\mathcal{PSR})}$.
Thus, $E_n^{(\mathcal{PSR})}$ will contain at least $2^{p-1}p!$ elements.
In the case where $E_n^{(\mathcal{PSR})}$ only contains $2^{p-1}p!$ elements (necessarily belonging to the same orbit), we will say that the sample PSR mean is \emph{unique up to the action of $\mathcal{G}(p)$}.  The notion of uniqueness (up to the action of $\mathcal{G}(p)$) for the population PSR mean in $\Epsr$ is defined similarly.

The following lemma yields a useful lower bound on the distance between distinct eigen-decompositions of an SPD matrix in  $S_p^{\rm top}$.  (Note that for any $X \in S_p^{\rm lwr}$, the set of eigen-decompositions of $X$ is not discrete, so two eigen-decompositions of $X$ may be arbitrarily close to each other.) 

\begin{lemma}\label{eig_decomp_dist_bound}
(a) For any $(U,D) \in M(p)$ and
 for any $h \in \mathcal{G}(p)\setminus \{ I_p \}$,
\[ d_{M}((U,D), h\cdot (U,D) ) \ge \sqrt{k}\beta_{\mathcal{G}(p)} \]
where
$ \beta_{\mathcal{G}(p)} := \min_{h \in \mathcal{G}(p) \setminus \{ I_p \}} d_{SO}(I_p,h) = \min_{h \in \mathcal{G}(p) \setminus \{I_p\}} \frac{1}{\sqrt{2}}\| {\rm Log}(h) \|_F.$

(b) The quantity $\beta_{\mathcal{G}(p)}$ satisfies $\beta_{\mathcal{G}(p)} \le \frac{\pi}{2}$ for any $p \ge 2$.

(c) For any $X \in S_p^{\rm top}$, any two distinct eigen-decompositions $(U_X,D_X)$ and $(U'_X,D'_X)$ of $X$ satisfy
$ d_{M}((U_X,D_X),(U'_X,D'_X)) \ge \sqrt{k}\beta_{\mathcal{G}(p)}.$
\end{lemma}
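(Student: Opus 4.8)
The plan is to prove the three parts in order, with part (a) being the heart of the matter and parts (b) and (c) following with modest additional work.

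For part (a), I would start from the observation that the left action of $\mathcal{G}(p)$ on $M(p)$ is isometric with respect to $d_M$ (this is stated in the excerpt, just before Lemma~\ref{eig_decomp_dist_bound}, and follows from the product structure of the metric together with the fact that $h \in SO(p)$ acts on the $SO(p)$-factor by right multiplication, which is $g_{SO}$-isometric, and conjugates $D$, which preserves $d_{\mathcal{D}^+}$ since $h$ is a signed-permutation matrix). Using this, I would compute
\[
d_M\big((U,D),\, h\cdot(U,D)\big)^2 = d_M\big((U,D),\,(Uh^{-1}, hDh^{-1})\big)^2 = k\, d_{SO}(U, Uh^{-1})^2 + d_{\mathcal{D}^+}(D, hDh^{-1})^2.
\]
The second term is nonnegative, so I can drop it, and for the first term I use bi-invariance of $g_{SO}$: $d_{SO}(U, Uh^{-1}) = d_{SO}(I_p, UhU^{-1})$? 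No --- more carefully, $d_{SO}(U, Uh^{-1}) = \frac{1}{\sqrt 2}\|{\rm Log}(Uh^{-1}U^{-1})\|_F$, and since conjugation by $U$ is an orthogonal transformation of $\mathfrak{so}(p)$ preserving $\|\cdot\|_F$, and $h^{-1}$ is conjugate to $h^{-1}$... the cleanest route is: by right-invariance $d_{SO}(U, Uh^{-1}) = d_{SO}(Uh, U)$, and by left-invariance $= d_{SO}(h, I_p) = d_{SO}(I_p, h)$. Hence
\[
d_M\big((U,D),\, h\cdot(U,D)\big) \ge \sqrt{k}\, d_{SO}(I_p, h) \ge \sqrt{k}\,\beta_{\mathcal{G}(p)},
\]
where the last inequality is just the definition of $\beta_{\mathcal{G}(p)}$ as a minimum over $\mathcal{G}(p)\setminus\{I_p\}$, valid since $h \ne I_p$. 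I should also confirm the minimum defining $\beta_{\mathcal{G}(p)}$ is attained and positive: the set $\mathcal{G}(p)\setminus\{I_p\}$ is finite and nonempty (it has $2^{p-1}p! - 1 \ge 1$ elements for $p\ge 2$), and $d_{SO}(I_p, h) > 0$ for $h \ne I_p$.

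For part (b), I need to exhibit a single nontrivial element $h \in \mathcal{G}(p)$ with $d_{SO}(I_p, h) \le \pi/2$, which then bounds the minimum. The natural candidate is the even signed-permutation matrix that acts as a $90^\circ$ rotation in the coordinate-$1$/coordinate-$2$ plane and fixes the rest: in block form $h = \mathrm{diag}\big(\big(\begin{smallmatrix}0 & -1\\ 1 & 0\end{smallmatrix}\big), I_{p-2}\big)$. Its determinant is $1$, its entries are $\pm\delta_{i,\pi(j)}$ for the transposition-with-sign data, so $h \in \mathcal{G}(p)$; its principal logarithm is the block-diagonal matrix with $\big(\begin{smallmatrix}0 & -\pi/2\\ \pi/2 & 0\end{smallmatrix}\big)$ in the top-left and zeros elsewhere, so $\|{\rm Log}(h)\|_F = \sqrt{2}\cdot(\pi/2)$, giving $d_{SO}(I_p,h) = \pi/2$. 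Hence $\beta_{\mathcal{G}(p)} \le \pi/2$.

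For part (c), let $X \in S_p^{\rm top}$ and let $(U_X,D_X)$, $(U'_X,D'_X)$ be two distinct eigen-decompositions of $X$. By the fiber characterization \eqref{eq:fiber} specialized to the top stratum (where $G_{D_X}^0 = \{I_p\}$), the entire fiber $\mathcal{F}^{-1}(X)$ equals the orbit $\mathcal{G}(p)\cdot(U_X,D_X)$, so $(U'_X,D'_X) = h\cdot(U_X,D_X)$ for some $h\in\mathcal{G}(p)$; since the two eigen-decompositions are distinct and the $\mathcal{G}(p)$-action is free, $h \ne I_p$. Applying part (a) gives $d_M((U_X,D_X),(U'_X,D'_X)) = d_M((U_X,D_X), h\cdot(U_X,D_X)) \ge \sqrt{k}\,\beta_{\mathcal{G}(p)}$, as claimed.

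The only genuinely delicate point is the reduction in part (a) via bi-invariance: I must be careful that $d_{SO}(U, Uh^{-1})$ really collapses to $d_{SO}(I_p, h)$ and does not pick up a $U$-dependence, and that the footnote's convention for ${\rm Log}$ of an involution (relevant when $h$ is an involution, e.g. a pair of sign changes) does not cause trouble --- but since $\|{\rm Log}(\cdot)\|_F$ is well-defined in all cases and bi-invariance of $g_{SO}$ is asserted in the excerpt, this goes through. I would expect parts (b) and (c) to be essentially immediate once part (a) is in hand.
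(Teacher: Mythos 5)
Your proposal is correct and follows essentially the same route as the paper's proof: in (a) drop the diagonal term and use bi-invariance to reduce $d_{SO}(U,Uh^{-1})$ to $d_{SO}(I_p,h)$ (the paper writes $d_{SO}(I_p,h^{-1})$, which is the same by inverse-invariance and because $\mathcal{G}(p)$ is a group), in (b) exhibit the same $90^\circ$ even signed-permutation block rotation with $\|{\rm Log}(h)\|_F=\sqrt{2}\,\pi/2$, and in (c) use the top-stratum fiber characterization to write the second eigen-decomposition as $h\cdot(U_X,D_X)$ with $h\neq I_p$ and apply (a).
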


\begin{remark}
It is easily checked that $\beta_{\mathcal{G}(p)} = \frac{\pi}{2}$ when $p=2,3$.
\end{remark}

In Theorem~\ref{thm:uniqueness} below, we provide a sufficient condition for  uniqueness (up to the action of $\mathcal{G}(p)$) of the PSR means.
In preparation, we first provide a sufficient condition for a distribution on $M(p)$ to have a unique Fr\'{e}chet mean. Recall that $(M(p), g_M)$ is a Riemannian manifold, which in turn implies that $(M(p), d_M)$ is a metric space. The Fr\'{e}chet mean set for a probability distribution $P$ on $M(p)$ is thus well-defined.


\begin{lemma}\label{lem:geom}
 Let $\tilde{P}$ be a Borel probability measure on $M(p)$. Suppose that ${\rm supp}(\tilde{P})$, the support of $\tilde{P}$, satisfies
\begin{equation}\label{eq:condition_unique_3}
 {\rm supp}(\tilde{P})  \subseteq B_r^{d_M}(m_0)
\end{equation}
for some $r\le\sqrt{k} \beta_{\mathcal{G}(p)}$ and some $m_0 \in M(p)$. Then there exists a unique Fr\'{e}chet mean
$\bar{m}(\tilde{P}):= \argmin_{m \in M(p)} \int_{M(p)} d_M^2(\tilde{X}, m) \tilde{P}(d \tilde{X})$ of $P$, and
$\bar{m}(\tilde{P}) \in B_r^{d_M}(m_0)$.
\end{lemma}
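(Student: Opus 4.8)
The plan is to reduce this to a standard uniqueness theorem for Fr\'echet means on Riemannian manifolds whose support lies in a geodesic ball. The key structural facts to assemble about $(M(p), g_M) = (SO(p)\times \Diagp, k\,g_{SO}\oplus g_{\mathcal D^+})$ are: (i) the factor $(\Diagp, g_{\mathcal D^+})$ is isometric, via $\Log$, to Euclidean $\bfr^p$, hence flat, complete, and simply connected with no conjugate points; (ii) the factor $(SO(p), g_{SO})$ is a compact symmetric space whose sectional curvatures are bounded above by some $\Delta_{SO}>0$, and its injectivity radius is $\pi$ (with respect to $g_{SO}$ as normalized here); and (iii) for a Riemannian product, the sectional curvature is bounded above by the max of the upper bounds on the factors, so $M(p)$ has sectional curvature $\le \Delta := \max\{\Delta_{SO}/k,\,0\} = \Delta_{SO}/k$, and a geodesic ball in $M(p)$ of $d_M$-radius $\rho$ is geodesically convex and free of conjugate points as soon as $\rho$ is below the relevant thresholds. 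I would quote a clean version of the classical result --- e.g. Afsari (2011) or Karcher's theorem as cited in the excerpt --- which states: if $\mathrm{supp}(\tilde P)$ is contained in a geodesic ball $B_r^{d_M}(m_0)$ with $r < \tfrac12\min\{\mathrm{inj}(M(p)),\, \pi/\sqrt{\Delta}\}$, then the Fr\'echet function has a unique minimizer, which moreover lies in that same ball.

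The main work is then arithmetic: I must check that the hypothesis $r \le \sqrt{k}\,\beta_{\mathcal{G}(p)}$ implies $r$ is below the required threshold. From Lemma~\ref{eig_decomp_dist_bound}(b), $\beta_{\mathcal G(p)} \le \pi/2$, and the relevant curvature/injectivity radius quantities in $M(p)$ scale with $\sqrt{k}$ in the $SO(p)$ directions: the injectivity radius of $(SO(p), k\,g_{SO})$ is $\sqrt{k}\,\pi$, and the conjugate-point threshold $\pi/\sqrt{\Delta}$ in the $SO(p)$ directions equals $\sqrt{k}\cdot\pi/\sqrt{\Delta_{SO}}$ where $\Delta_{SO}$ is the max sectional curvature of $(SO(p), g_{SO})$; since $\Delta_{SO} \le 1$ for this normalization (the diameter-$\pi$ normalization of $SO(p)$ makes the relevant curvature bound $\le 1$, mirroring the unit-sphere case), we get $\pi/\sqrt{\Delta_{SO}} \ge \pi$. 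The flat $\Diagp$ directions impose no constraint. Hence the threshold $\tfrac12\min\{\mathrm{inj},\,\pi/\sqrt{\Delta}\}$ for $M(p)$ is at least $\sqrt{k}\cdot\tfrac{\pi}{2} \ge \sqrt{k}\,\beta_{\mathcal G(p)} \ge r$. So the hypothesis of the quoted theorem is satisfied, and uniqueness together with $\bar m(\tilde P)\in B_r^{d_M}(m_0)$ follows.

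I expect the main obstacle to be pinning down the exact curvature/injectivity-radius constants for $(SO(p), g_{SO})$ under the normalization $g_{SO}|_U(A_1,A_2) = -\tfrac12\tr(A_1 U^T A_2 U^T)$ used in Definition~\ref{defn:geodesic_dist}, and verifying that with this normalization the sectional curvatures are indeed bounded by $1$ (so that $\pi/\sqrt{\Delta_{SO}}\ge\pi$ and the injectivity radius is $\pi$). This is presumably where the constant $\pi/2$ in Lemma~\ref{eig_decomp_dist_bound}(b) is tuned to match exactly, so I would cross-reference the curvature computations in \cite{Jung2015} or \cite{Groisser2017} rather than redo them, and present the curvature and injectivity-radius facts for $(M(p),g_M)$ as a cited lemma. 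The remaining verification --- that a Riemannian product's curvature upper bound is the max of the factors', and that convexity-radius thresholds for products are the min of the factors' --- is routine and can be stated with a one-line justification. A minor point worth a sentence: one should note $\mathrm{supp}(\tilde P)$ is closed and contained in a bounded set (the closure of $B_r^{d_M}(m_0)$), hence the Fr\'echet function is finite and the infimum is attained before invoking uniqueness; but since $M(p)$ is complete and the ball is relatively compact this is immediate.
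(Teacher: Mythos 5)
Your proposal follows essentially the same route as the paper: cite the curvature bound and injectivity radius of the product manifold $(M(p), g_M)$ (the paper uses $\Delta(SO(p),k\,g_{SO})=1/(4k)$ and $r_{\rm inj}=\sqrt{k}\pi$ from Manton), apply Afsari's Theorem 2.1 for support in a ball of radius at most $\tfrac12\min\{r_{\rm inj},\pi/\sqrt{\Delta}\}=\sqrt{k}\pi/2$, and close with $r\le\sqrt{k}\beta_{\mathcal{G}(p)}\le\sqrt{k}\pi/2$ via Lemma~\ref{eig_decomp_dist_bound}(b). Your cruder curvature bound $\Delta_{SO}\le 1$ still clears the same threshold, so the argument is correct and matches the paper's proof in substance.
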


Similarly to Lemma~\ref{lem:geom}, if a deterministic sample $m_1,\ldots,m_n \in M(p)$ lies in $B_r^{d_M}(m_0)$ $(i = 1,\ldots,n)$ for some $r\le\sqrt{k} \beta_{\mathcal{G}(p)}$ and some $m_0 \in M(p)$, then the sample Fr\'{e}chet mean $\bar{m}:= \argmin_{m \in M(p)} \frac{1}{n}\sum_{i=1}^n d_M^2(m_i, m)$ is unique and lies in $B_r^{d_M}(m_0)$.

\begin{theorem}\label{thm:uniqueness}
Suppose the probability measure $P$ on $\Symp$ is absolutely continuous with respect to volume measure and that for two independent $\Symp$-valued random variables $X_1, X_2$ whose distribution is $P$,
\begin{equation}\label{eq:condition_unique}
P( \dsr(X_1, X_2) < r'_{cx}) = 1, \quad \mbox{\rm where}\ \  r'_{cx} := \frac{\sqrt{k}\beta_{\mathcal{G}(p)}}{4}.
\end{equation}
Then the population PSR mean set $\Epsr$ is unique up to the action of $\mathcal{G}(p)$.
\end{theorem}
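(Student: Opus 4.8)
The plan is to reduce the question about PSR means on $M(p)$ to a question about ordinary Fréchet means on the Riemannian manifold $(M(p), d_M)$, so that Lemma \ref{lem:geom} can be applied. The key observation is that $\fpsr(m) = \int \dpsr^2(X, m)\, P(dX)$ is an integral of squared distances to \emph{eigen-decompositions} of the $X_i$'s, not to the $X_i$'s themselves; the absolute-continuity hypothesis on $P$ guarantees that $P$-almost-every $X$ lies in $S_p^{\rm top}$, hence has exactly $2^{p-1}p!$ eigen-decompositions forming a single $\mathcal{G}(p)$-orbit, and in particular the set $\mathcal{F}^{-1}(X)$ is discrete with all inter-point distances $\ge \sqrt{k}\,\beta_{\mathcal{G}(p)}$ by Lemma \ref{eig_decomp_dist_bound}(c). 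First I would fix a putative population PSR mean $m^* = (U^*, D^*) \in \Epsr$ (nonempty by Theorem \ref{thm:existence_popSR_PSRmeans}(b), since finite SR-variance is automatic from the bounded-support hypothesis \eqref{eq:condition_unique} and absolute continuity, and in fact PSR-variance is finite). Using \eqref{eq:condition_unique} and the triangle-type estimates relating $\dsr$, $\dpsr$, and $d_M$, I would show that for $P$-a.e.\ $X$ there is a \emph{unique} eigen-decomposition $\iota_{m^*}(X) \in \mathcal{F}^{-1}(X)$ realizing $\dpsr(X, m^*)$, and that $d_M(\iota_{m^*}(X), m^*)$ is small — small enough that all these selected eigen-decompositions, together with $m^*$, lie in a common $d_M$-ball of radius at most $\sqrt{k}\beta_{\mathcal{G}(p)}$.

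The radius bookkeeping is where the constant $r'_{cx} = \sqrt{k}\beta_{\mathcal{G}(p)}/4$ enters. The reasoning I would run is: by \eqref{eq:condition_unique}, with probability one $\dsr(X_1, X_2) < r'_{cx}$ for independent copies; integrating, one gets that the "PSR radius" of the support around $m^*$ — i.e.\ $\sup$ over $X$ in the support of $\dpsr(X, m^*)$, or at least an $L^2$/essential-sup surrogate — is bounded by something like $2 r'_{cx} = \sqrt{k}\beta_{\mathcal{G}(p)}/2$, via a standard argument that a Fréchet minimizer cannot be farther from the data than (roughly) the data diameter. Then for $P$-a.e.\ $X$, the nearest eigen-decomposition $\iota_{m^*}(X)$ satisfies $d_M(\iota_{m^*}(X), m^*) \le \sqrt{k}\beta_{\mathcal{G}(p)}/2$, and two such selected eigen-decompositions are within $\sqrt{k}\beta_{\mathcal{G}(p)}$ of each other; crucially this is below the $\sqrt{k}\beta_{\mathcal{G}(p)}$ gap of Lemma \ref{eig_decomp_dist_bound}(c) (or strictly below, with care about whether the relevant inequalities are strict), which forces the selection map $\iota_{m^*}$ to be "consistent" — it picks eigen-decompositions all lying in one connected piece, so that the pushforward measure $\tilde P := (\iota_{m^*})_* P$ is supported in $B_r^{d_M}(m^*)$ with $r \le \sqrt{k}\beta_{\mathcal{G}(p)}$. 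On this ball, $\fpsr$ agrees with the ordinary Fréchet functional $m \mapsto \int d_M^2(\tilde X, m)\, \tilde P(d\tilde X)$ up to the inequality $\dpsr^2(X,\cdot) \le d_M^2(\iota_{m^*}(X), \cdot)$ with equality at $m^*$; so any PSR minimizer inside the ball is the unique Riemannian Fréchet mean of $\tilde P$ given by Lemma \ref{lem:geom}. Finally, equivariance \eqref{sample_PSR_mean_nonunique} (its population analogue) shows $\Epsr$ is $\mathcal{G}(p)$-invariant, and the local uniqueness just established, combined with the fact that distinct $\mathcal{G}(p)$-translates of $m^*$ are $\ge \sqrt{k}\beta_{\mathcal{G}(p)}$ apart (Lemma \ref{eig_decomp_dist_bound}(a)), upgrades local uniqueness to: $\Epsr = \mathcal{G}(p) \cdot m^*$, i.e.\ uniqueness up to the action of $\mathcal{G}(p)$.

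The main obstacle I anticipate is the radius arithmetic together with making the selection map $\iota_{m^*}$ genuinely well-defined and measurable $P$-a.e.: one must chase through how a bound on $\dsr(X_1, X_2)$ (a distance between SPD matrices, minimized over \emph{pairs} of fibers) converts into a bound on $\dpsr(X, m^*)$ and thence on $d_M(\iota_{m^*}(X), m^*)$, keeping the final radius at or below $\sqrt{k}\beta_{\mathcal{G}(p)}$ so Lemma \ref{lem:geom} applies — the factor of $4$ in $r'_{cx}$ presumably comes from one factor of $2$ for "data-diameter vs.\ radius from the mean" and another factor of $2$ for passing between $\dsr$ on $\Symp$ and $\dpsr$/$d_M$ on $M(p)$, and getting every constant and every strict-vs-nonstrict inequality to line up (so that the "consistency" of the selection is forced rather than merely plausible) is the delicate part. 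A secondary subtlety is ensuring that a population PSR minimizer cannot "leak" out of the ball $B_r^{d_M}(m^*)$ — this needs a convexity/coercivity argument showing $\fpsr$ restricted to the relevant region is minimized inside, which should follow from Proposition \ref{coercivity} and the strict convexity of the squared-distance functional on small balls in $(M(p), g_M)$, but requires attention because $\dpsr^2(X, \cdot)$ is only a pointwise infimum of smooth convex functions and hence not itself convex globally.
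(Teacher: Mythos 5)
Your first half can be made to work, and it shares ingredients with the paper (a measurable nearest-eigen-decomposition selection, Lemma \ref{lem:geom}, and $\mathcal{G}(p)$-equivariance): since $X\in\Sptop$ a.s., the hypothesis $\dsr(X_1,X_2)<r'_{cx}$ a.s.\ does give, for any $m^*\in\Epsr$, that $\dpsr(X,m^*)<2r'_{cx}$ for $P$-a.e.\ $X$ (use $\fpsr(m^*)\le\fpsr(m_Y)<(r'_{cx})^2$ for an eigen-decomposition $m_Y$ of a support point $Y$, together with $\dpsr(X_0,m^*)\le\dsr(X_0,X)+\dpsr(X,m^*)$ integrated in $X$); hence the nearest eigen-decomposition $\iota_{m^*}(X)$ is a.e.\ unique by the $\sqrt{k}\,\beta_{\mathcal{G}(p)}$ gap of Lemma \ref{eig_decomp_dist_bound}(c), the pushforward $\tilde P_{m^*}$ is supported in $B^{d_M}_{2r'_{cx}}(m^*)$ with $2r'_{cx}=\sqrt{k}\beta_{\mathcal{G}(p)}/2$, and since $F_{m^*}(m):=\int d_M^2(\iota_{m^*}(X),m)\,P(dX)\ge \fpsr(m)$ with equality at $m^*$, global PSR-minimality of $m^*$ makes $m^*$ the unique Fr\'{e}chet mean of $\tilde P_{m^*}$ supplied by Lemma \ref{lem:geom}.

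The genuine gap is the final ``upgrade.'' What this establishes is that every $m^*\in\Epsr$ is the Fr\'{e}chet mean of \emph{its own} selected lift; it does not exclude two PSR means $m_1,m_2$ whose selections $\iota_{m_1},\iota_{m_2}$ disagree on a positive-measure set without being related by a single $h\in\mathcal{G}(p)$, and the orbit separation $d_M(m,h\cdot m)\ge\sqrt{k}\beta_{\mathcal{G}(p)}$ does not bridge this: if $\iota_{m_1}(X)\neq\iota_{m_2}(X)$ one only gets $4r'_{cx}\le d_M(\iota_{m_1}(X),\iota_{m_2}(X))<4r'_{cx}+d_M(m_1,m_2)$, no contradiction; likewise $F_{m_1}(m_2)\ge F_{m_1}(m_1)=\fpsr(m_2)$ is consistent, so uniqueness of the Fr\'{e}chet mean of $\tilde P_{m_1}$ says nothing about $m_2$. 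Even the coherence you hope for (that $(\iota_{m^*}(X),\iota_{m^*}(Y))$ is a minimal pair) is not forced by the available constants: the minimal-pair partner $m_Y'$ of $\iota_{m^*}(X)$ in $\mathcal{F}^{-1}(Y)$ only satisfies $d_M(m_Y',\iota_{m^*}(Y))<r'_{cx}+4r'_{cx}=5r'_{cx}$, which does not beat the $4r'_{cx}$ gap. The paper's proof avoids this by anchoring the selection at a fixed support point $S_0$ rather than at the unknown mean: the resulting labels $m_\ell(S)$, $\ell=1,\dots,2^{p-1}p!$, are coherent across the support (same-label decompositions of any two support points are at distance $\dsr<r'_{cx}$ and form the minimal pair), each label yields a pushforward $P_\ell$ with unique Fr\'{e}chet mean $\bar m_\ell$ lying within $r'_{cx}$ of every $m_\ell(S)$, and a global dichotomy finishes: any $m$ farther than $2r'_{cx}$ from some labeled point for every $\ell$ has $\fpsr(m)>(r'_{cx})^2>\fpsr(\bar m_\ell)$, while any $m$ within $2r'_{cx}$ of all label-$\ell$ points satisfies $\fpsr(m)=\int d_M^2(m_\ell(X),m)\,P(dX)$, whose unique minimizer is $\bar m_\ell$. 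Some global comparison of this type is exactly what your proposal is missing to conclude that $\Epsr$ is a single $\mathcal{G}(p)$-orbit.
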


The number $r'_{cx}$ is a lower bound on the regular convexity radius
of the quotient space $M(p) / \mathcal{G}(p)$ with the induced Riemannian structure, as shown in  \cite{Groisser2018}.  This ensures that a ball in $M(p) / \mathcal{G}(p)$ with radius less than $r'_{cx}$ is convex. The quotient space $M(p) / \mathcal{G}(p)$ ``sits'' between $M(p)$ and $\Symp$; any $X \in \Sptop$ coincides with an element in $M(p) / \mathcal{G}(p)$, but there are multiple (in fact, infinitely many) elements in $M(p) / \mathcal{G}(p)$  corresponding to any $X \in \Splwr$ (cf. (\ref{eq:fiber})). Lemma~\ref{eig_decomp_dist_bound}  shows that $r'_{cx} \le \sqrt{k}\pi/8$.
 In contrast, the regular convexity radius of $(M(p),g_M)$ is $\sqrt{k}\pi/2$, which is much larger than $r'_{cx}$.
 Even though we work with the eigen-decomposition space $M(p)$, in Theorem~\ref{thm:uniqueness} we require  data-support diameter at most $r'_{cx} < \sqrt{k}\pi/2$ since, if
 $\dsr(S_1,S_2) \ge r'_{cx}$ for some $S_1,S_2 \in \Symp$,
 then there may be two or more eigen-decompositions of $S_1$ that are both closest to an eigen-decomposition of $S_2$.

The assumption of absolute continuity of $P$ in Theorem~\ref{thm:uniqueness} enables us to restrict our attention to the probability-1 event for which the random variables lie in the top stratum $S_p^{\rm top}$ of $\Symp$, since the complement of $S_p^{\rm top}$ has volume zero in $\Symp$. Corollary \ref{uniqueness_theorem} below explicitly states this restriction as a sufficient condition for the uniqueness of  sample PSR means of a deterministic sample. We also show that the estimation procedure (Procedure \ref{proc:estimation}) will yield the unique (up to the action of $\mathcal{G}(p)$) sample PSR mean.

\begin{corollary}\label{uniqueness_theorem}
Assume $X_1, \dotsc, X_n \in S_p^{\rm top}$. If
\begin{equation}\label{eq:condition_unique2}
d_{\mathcal{SR}}(X_i,X_j) < r'_{cx}
\end{equation}
for all $i,j = 1,\ldots,n$, then
\begin{itemize}
\item[(a)] the sample PSR mean is unique up to the action of $\mathcal{G}(p)$;
\item[(b)] choosing an eigen-decomposition of any observation from the sample as the initial guess will lead Procedure \ref{proc:estimation} to converge to the sample PSR mean after one iteration.
\end{itemize}
\end{corollary}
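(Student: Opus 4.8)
The plan is to reduce the corollary to the $M(p)$-valued Fr\'echet-mean uniqueness statement following Lemma \ref{lem:geom}. First I would fix a reference observation, say $X_1$, pick \emph{any} eigen-decomposition $m_1 \in \Fc^{-1}(X_1)$, and use the hypothesis $\dsr(X_i,X_j) < r'_{cx}$ together with Definition \ref{def:dsr} to select, for each $i$, an eigen-decomposition $m_i \in \Fc^{-1}(X_i)$ with $d_M(m_1,m_i) < r'_{cx}$ (i.e. $m_i$ forms a minimal pair with $m_1$; recall that a minimal pair always exists since the fibers are compact). Since each $X_i \in S_p^{\rm top}$, each fiber $\Fc^{-1}(X_i)$ is a single $\mathcal{G}(p)$-orbit of $2^{p-1}p!$ points, and by Lemma \ref{eig_decomp_dist_bound}(c) any two distinct eigen-decompositions of a fixed $X_i$ are at $d_M$-distance $\ge \sqrt{k}\beta_{\mathcal{G}(p)} = 4 r'_{cx}$. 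This ``separation at scale $4r'_{cx}$'' is the crucial rigidity: since all the chosen $m_i$ lie in the ball $B_{r'_{cx}}^{d_M}(m_1)$, for each $i$ the point $m_i$ is the \emph{unique} member of $\Fc^{-1}(X_i)$ lying in $B_{2r'_{cx}}^{d_M}(m_1)$, hence $\dpsr(X_i,m) = d_M(m_i,m)$ for every $m$ in a neighborhood of $m_1$ of radius $2r'_{cx}$ — the PSR cost function agrees locally with the honest squared $d_M$-distance to the discrete point set $\{m_1,\dots,m_n\}$.

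Next I would apply the sample version of Lemma \ref{lem:geom} (stated in the sentence right after it): the deterministic sample $m_1,\dots,m_n$ lies in $B_{r'_{cx}}^{d_M}(m_1) \subset B_{\sqrt{k}\beta_{\mathcal{G}(p)}}^{d_M}(m_1)$ — using $r'_{cx} = \tfrac14\sqrt{k}\beta_{\mathcal{G}(p)} \le \sqrt{k}\beta_{\mathcal{G}(p)}$ — so there is a unique Fr\'echet mean $\bar m := \argmin_m \tfrac1n\sum_i d_M^2(m_i,m)$, and moreover $\bar m \in B_{r'_{cx}}^{d_M}(m_1)$. By the local-agreement observation of the previous paragraph, on the ball $B_{2r'_{cx}}^{d_M}(m_1)$ the function $\fnpsr$ coincides with $m \mapsto \tfrac1n\sum_i d_M^2(m_i,m)$, so $\bar m$ is a local minimizer of $\fnpsr$ lying in that ball. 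To upgrade this to a \emph{global} minimizer of $\fnpsr$, I would use the $\mathcal{G}(p)$-invariance relation \eqref{sample_PSR_mean_nonunique}: any sample PSR mean $\hat m$ satisfies $\fnpsr(\hat m) \le \fnpsr(\bar m)$, and using again $\dpsr(X_i,\hat m) \le d_M(m_i,\hat m)$ for \emph{any} lifts, together with the fact that one can translate $\hat m$ by a suitable $h \in \mathcal{G}(p)$ to land in $B_{r'_{cx}}^{d_M}(m_1)$ (this is where the convexity-radius role of $r'_{cx}$ in the quotient $M(p)/\mathcal{G}(p)$ enters, exactly as invoked in the proof of Theorem \ref{thm:uniqueness}); comparing with the uniqueness of $\bar m$ forces $h\cdot\hat m = \bar m$, so $\Enpsr = \mathcal{G}(p)\cdot \bar m$, which has exactly $2^{p-1}p!$ elements. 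This proves (a).

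For part (b), starting Procedure \ref{proc:estimation} with $(\hat U^{(0)},\hat D^{(0)}) = m_1 \in \Fc^{-1}(X_1)$: Step 1 at $j=0$ finds, for each $i$, the eigen-decomposition of $X_i$ nearest to $m_1$, which by the separation argument above is precisely the point $m_i$ chosen in the first paragraph (the nearest point is unique because the others are at distance $\ge 4r'_{cx}$ while $m_i$ is at distance $< r'_{cx}$). Step 2 then computes $\argmin_m \tfrac1n\sum_i d_M^2(m_i,m)$, which is exactly $\bar m$ by the sample version of Lemma \ref{lem:geom}; this is the unique sample PSR mean by part (a), so $\fnpsr(\hat U^{(1)},\hat D^{(1)}) = \fnpsr(\bar m)$ is already minimal and running Step 1--2 again reproduces $\bar m$, so the termination criterion is met. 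The main obstacle I anticipate is the global-versus-local step in part (a): one must be careful that a hypothetical PSR mean outside $B_{2r'_{cx}}^{d_M}(m_1)$ — where $\fnpsr$ need no longer equal the clean squared-distance function, because a \emph{different} lift of some $X_i$ could become the closest one — cannot beat $\bar m$. Handling this cleanly requires the $\mathcal{G}(p)$-equivariance of $\fnpsr$ plus the quotient convexity-radius bound $r'_{cx}$, precisely mirroring the argument already used for Theorem \ref{thm:uniqueness}, so I would structure the proof to invoke that theorem's mechanism rather than re-deriving it.
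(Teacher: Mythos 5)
Your plan follows essentially the same route as the paper: pick, for each $X_i$, the unique eigen-decomposition $m_i$ forming a minimal pair with a fixed reference lift, use the $4r'_{cx}$ separation from Lemma~\ref{eig_decomp_dist_bound}(c) to identify $\fnpsr$ locally with the ordinary Fr\'echet function of $m_1,\dots,m_n$, invoke the sample version of Lemma~\ref{lem:geom} for existence and uniqueness of $\bar m$, and settle the global step by the mechanism of Theorem~\ref{thm:uniqueness}; indeed the paper's proof of part (a) is literally ``run the proof of Theorem~\ref{thm:uniqueness} with $P$ replaced by the empirical measure,'' and your argument for part (b) matches the paper's almost verbatim. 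Two caveats are worth recording. First, your local-agreement radius is overstated: from $d_M(m_1,m_i)<r'_{cx}$ you only get $\dpsr(X_i,m)=d_M(m_i,m)$ when $d_M(m,m_i)<2r'_{cx}$ for each $i$ (e.g.\ on $B_{r'_{cx}}^{d_M}(m_1)$), not on all of $B_{2r'_{cx}}^{d_M}(m_1)$ --- near the boundary of the larger ball $d_M(m,m_i)$ can approach $3r'_{cx}$ while a different lift of $X_i$ can come within roughly $r'_{cx}$ of $m$; this is exactly why the paper works with the sets $H_\ell(r)=\bigcap_i B_r^{d_M}(m_\ell(X_i))$ (balls around every lift) rather than a single ball around one reference lift. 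Second, the suggested shortcut ``translate $\hat m$ by a suitable $h\in\mathcal{G}(p)$ to land in $B_{r'_{cx}}^{d_M}(m_1)$'' is not available a priori: the $\mathcal{G}(p)$-action cannot move an arbitrary global minimizer of $\fnpsr$ into that ball, and excluding minimizers far from all lifts is precisely the content of the second half of the proof of Theorem~\ref{thm:uniqueness} (the bound $\fnpsr(m)>(r'_{cx})^2$ for $m\notin\bigcup_\ell H_\ell(2r'_{cx})$, against $\fnpsr(\bar m_\ell)<(r'_{cx})^2$). Since you ultimately defer to that argument --- noting that its use of absolute continuity must be replaced by the hypothesis $X_i\in S_p^{\rm top}$, which is exactly how the paper applies it to the empirical measure --- the plan is sound and coincides with the paper's proof.
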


\begin{remark}\label{remark:4.9}
 The data-diameter condition \eqref{eq:condition_unique2} in Corollary \ref{uniqueness_theorem} is  satisfied under either of the following two conditions (in the presence of the assumption $X_i \in \Sptop$):
  \begin{itemize}
    \item[(i)] There exists an $S_0 \in \Sptop$ such that $\dsr(S_0,X_i) < r'_{cx}/2$ for all $i = 1,\ldots, n$.
    \item[(ii)] There exists an $m \in M(p)$ such that $\dpsr(X_i,m) < r'_{cx}/2$ for all $i= 1,\ldots, n$.
  \end{itemize}
\noindent
 Similarly, the condition that $\dsr(X_1,X_2) < r'_{cx}$ almost surely in Theorem \ref{thm:uniqueness} is guaranteed by either (i) or (ii) above, when the latter two conditions are modified probabilistically; see Appendix~\ref{remark:4.9proof}.
In condition (i) above, it is necessary for the center of the open ball (data support) to lie in the top stratum, due to the fact that the functions $\dsr(\cdot, X_i)$ are, in general, only LSC (not continuous) at points belonging to $\Splwr$. For an $S_0 \in \Splwr$, even if a condition $\dsr(S_0,X_i) < \epsilon$ ($i = 1,\ldots,n$) is satisfied for arbitrarily small $\epsilon$, $d_{\mathcal{SR}}(X_i,X_j)$ may be larger than $r'_{cx}$.

 Proof of the statements given in this remark can be found in Appendix~\ref{remark:4.9proof}.

%
\end{remark}

 If the data-support is small enough to satisfy \eqref{eq:condition_unique} and also is far from the lower stratum (satisfying the conditions in Theorem~\ref{thm:avoid_low_strata_SR}), then the SR mean is unique, as the following corollary states.

\begin{corollary}\label{cor:uniqueness}
Let $X$ be a $\Symp$-valued random variable following the distribution $P$. Assume that there exist $S_0 \in \Sptop$ and $r < \min\{\delta(S_0)/3, r'_{cx}/2 \}$ satisfying
$P(\dsr(S_0,X_i) \le r) = 1$. Then, (i) the PSR mean is unique up to the action of $\mathcal{G}(p)$, (ii) $\Esr \subset \Sptop$, and (iii) $\Esr = \mathcal{F}(\Epsr)$ is a singleton set.
\end{corollary}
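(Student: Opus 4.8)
The plan is as follows. Claim (ii) will be immediate from Theorem~\ref{thm:avoid_low_strata_SR}: the hypotheses $r<\delta(S_0)/3$ and $P(\dsr(S_0,X)\le r)=1$ are exactly those of its population part, so $\Esr\subseteq\Sptop$. The same hypotheses also force $P(X\in\Sptop)=1$: since $r<\delta(S_0)=\inf\{\dsr(S_0,S'):S'\in\Splwr\}$, the event $\{X\in\Splwr\}$ is contained in $\{\dsr(S_0,X)\ge\delta(S_0)\}\subseteq\{\dsr(S_0,X)>r\}$, which has probability zero. This probability-one localization into the top stratum is what will let us run the remaining steps even though $P$ is not assumed absolutely continuous.

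For claim (i), I would first record the bookkeeping needed for non-emptiness. Because $S_0\in\Sptop$, \eqref{SR_distance_simplification} gives $\dpsr(X,m_0)=\dsr(X,S_0)\le r$ almost surely for every $m_0\in\mathcal{F}^{-1}(S_0)$, so $\fpsr(m_0)\le r^2<\infty$; by Lemma~\ref{lem:finite_variance}, $P$ has finite PSR-variance, hence $\Epsr\neq\emptyset$ by Theorem~\ref{thm:existence_popSR_PSRmeans}(b). Moreover the set $\{S:\dsr(S_0,S)\le r\}$ is closed (as $\dsr(S_0,\cdot)$ is LSC, by Theorem~\ref{prop:lsc}) and confines all eigenvalues of $S$ to a fixed compact subinterval of $(0,\infty)$, so it is compact; thus $\supp P$ is compact and $\Esr\neq\emptyset$ by Theorem~\ref{thm:existence_popSR_PSRmeans}(a). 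Next I would verify the data-diameter condition of Theorem~\ref{thm:uniqueness}: for independent copies $X_1,X_2$ of $X$, almost surely all three of $X_1,X_2,S_0$ lie in $\Sptop$, where $\dsr$ is a genuine metric \citep[Theorem~3.12]{Jung2015}; hence $\dsr(X_1,X_2)\le \dsr(X_1,S_0)+\dsr(S_0,X_2)\le 2r<r'_{cx}$ almost surely, using $r<r'_{cx}/2$ (this is also the probabilistic form of Remark~\ref{remark:4.9}(i)). Since the only role of absolute continuity in Theorem~\ref{thm:uniqueness} is to guarantee $P(X\in\Sptop)=1$, which we have just established directly, the argument proving that theorem applies and yields that $\Epsr$ is a single $\mathcal{G}(p)$-orbit, i.e., unique up to the action of $\mathcal{G}(p)$; this is (i).

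For (iii), combine $\Esr\subseteq\Sptop$ (part (ii)), $\Esr\neq\emptyset$, and finite PSR-variance with the ``moreover'' clause of Theorem~\ref{thm:SRvsPSRequivalence} to get $\Epsr=\mathcal{F}^{-1}(\Esr)$ and $\mathcal{F}(\Epsr)=\Esr$. Since $\Esr\subseteq\Sptop$, we have $\Epsr=\mathcal{F}^{-1}(\Esr)\subseteq M^{\rm top}(p)$, on which the fibers of $\mathcal{F}$ are precisely the single $\mathcal{G}(p)$-orbits of size $2^{p-1}p!$ (cf.\ \eqref{eq:orbit}). By (i), $\Epsr$ is one such orbit, so $\Esr=\mathcal{F}(\Epsr)$ consists of a single matrix, proving (iii).

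The step I expect to require the most care is (i): because $\dsr$ is only a semi-metric on $\Symp$, the triangle-inequality bound on $\dsr(X_1,X_2)$ is legitimate only after the almost-sure localization of the data into $\Sptop$ (which is why that must be carried out first), and one must be explicit that the proof of Theorem~\ref{thm:uniqueness} uses absolute continuity of $P$ nowhere except to secure $P(X\in\Sptop)=1$. Everything else is routine bookkeeping with the existence and equivalence results already in hand.
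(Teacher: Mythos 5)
Your proof is correct and follows the same skeleton as the paper's: (ii) from Theorem~\ref{thm:avoid_low_strata_SR}, (i) by verifying the data-diameter condition of Theorem~\ref{thm:uniqueness} (the paper does this by citing condition (i)$^\ast$ of Remark~\ref{remark:4.9}), and (iii) from Theorem~\ref{thm:SRvsPSRequivalence} together with (i) and (ii). The one substantive difference is your treatment of absolute continuity: the paper's proof invokes Remark~\ref{remark:4.9} and Theorem~\ref{thm:uniqueness} directly, both of which are stated under the assumption that $P$ is absolutely continuous, an assumption the corollary itself does not state; you instead observe that $r<\delta(S_0)$ already forces $\supp(P)\subset\{S:\dsr(S_0,S)\le r\}\subset\Sptop$ (the sublevel set is closed by lower semicontinuity, and Lemma~\ref{lem:13.6}(b) puts it in the top stratum), so that $P(X\in\Sptop)=1$ holds without absolute continuity, and you correctly note that this is the only use made of absolute continuity in the proof of Theorem~\ref{thm:uniqueness}. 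This makes your argument self-contained under the corollary's stated hypotheses (and also lets the triangle inequality for $\dsr$ on $\Sptop$ give the diameter bound $\dsr(X_1,X_2)\le 2r<r'_{cx}$), whereas the paper's citation-based proof implicitly presumes the absolute-continuity hypothesis carries over; your explicit checks of compact support, finite PSR-variance, and non-emptiness of $\Esr$ before applying the population version of Theorem~\ref{thm:SRvsPSRequivalence} are likewise bookkeeping the paper leaves tacit.
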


\subsection{Asymptotic properties of the sample PSR means}\label{asymptotics}

This subsection addresses two aspects of the asymptotic behavior of the sample PSR mean $\Enpsr$: (i) strong consistency of $\Enpsr$ with the population PSR mean set $\Epsr$ and (ii) the large-sample limiting distribution of a sample PSR mean. Much work has been done to establish consistency and central limit theorem-type results for sample Fr\'{e}chet means on Riemannian manifolds and metric spaces (\cite{Bhatt2003}, \cite{Bhatt2005}, \cite{Bhatt2017}, \cite{eltzner2019smeary}). Estimation of the PSR mean does not fit into the context of estimation on Riemannian manifolds or metric spaces since the sample space ${\rm Sym}^+(p)$ and \emph{parameter space} $M(p)$ are different.
Moreover, as we have seen, the PSR means are never unique.
With this in mind, we apply the framework of generalized Fr\'{e}chet means on general product spaces in \cite{Huckemann2011a} and \cite{Huckemann2011b} to our PSR mean estimation context, enabling us to establish conditions for strong consistency and for a central limit theorem.

We now establish a strong-consistency result for $E_n^{\mathcal{PSR}}$.
Throughout this subsection, let $X, X_1,\ldots$ be  independent random variables mapping from a complete probability space $(\Omega, \mathcal{A}, \mathcal{P})$ to $ {\rm Sym}^+(p)$ equipped with its Borel $\sigma$-field, and let $P$ be the induced Borel probability measure on $\Symp$. The
 sets $\Epsr$ and $\Enpsr$ denote the population and sample PSR-mean sets defined by $P$ and $X_1, \dots, X_n$, respectively.

\begin{theorem}\label{thm:consistency}
Assume that $P$ has finite PSR-variance.
Then
 \begin{equation}\label{eq:thm:consistency}
 \lim_{n \to \infty} \sup_{ m \in \Enpsr }d_M( m ,\Epsr) = 0
\end{equation}
almost surely.
\end{theorem}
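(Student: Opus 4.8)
\textbf{Proof strategy for Theorem~\ref{thm:consistency}.}
The plan is to apply the general strong-consistency machinery for generalized Fr\'{e}chet means of \cite{Huckemann2011b} (see also \cite{Ziezold1977,Bhatt2003}), adapted to the fact that the ``cost function'' here, $\dpsr^2:\Symp\times M(p)\to[0,\infty)$, is only lower semicontinuous (Theorem~\ref{prop:lsc}) rather than continuous in its first argument. The key structural ingredients we already have in hand are: (i) the parameter space $M(p)$ is a finite-dimensional Riemannian manifold, hence a separable, locally compact, complete (in the sense of Heine-Borel) metric space under $d_M$; (ii) $\fpsr$ is continuous (Lemma~\ref{lem:contfpsr}(ii)) and coercive (Proposition~\ref{coercivity}(b)), so $\Epsr$ is a nonempty compact subset of a fixed compact set $\widetilde K\subset M(p)$ (Remark following Corollary~\ref{thm:existence_sampleSR_PSRmeans}); and (iii) $\dpsr(S,\cdot)$ is $1$-Lipschitz in its second argument uniformly in $S$ (Lemma~\ref{lem:dpsr_continuity}(b)). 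Ingredient (iii) is what lets us upgrade a pointwise strong law of large numbers into locally uniform control.

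\textbf{Main steps.}
First, I would record a uniform strong law: for each fixed $m_0\in M(p)$, $\fnpsr(m_0)\to\fpsr(m_0)$ a.s.\ by Kolmogorov's SLLN (using finite PSR-variance). Then, using the $1$-Lipschitz bound $|\dpsr(X,m_1)-\dpsr(X,m_2)|\le d_M(m_1,m_2)$ together with the elementary inequality $|a^2-b^2|\le |a-b|(a+b)$, one gets, for $m_1,m_2$ in a fixed compact set,
\[
|\fnpsr(m_1)-\fnpsr(m_2)|\le d_M(m_1,m_2)\Big(2\max_{j}\dpsr(X_j,m_1)+d_M(m_1,m_2)\Big),
\]
and after controlling $\frac1n\sum_j \dpsr(X_j,m_0)$ a.s.\ via the SLLN, a standard $\varepsilon$-net argument over the compact set $\widetilde K$ yields $\sup_{m\in\widetilde K}|\fnpsr(m)-\fpsr(m)|\to 0$ almost surely. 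Second, because $\fpsr$ is coercive, one argues (for all large $n$ simultaneously, on the probability-one event above) that $\Enpsr\subset \widetilde K$: outside $\widetilde K$ the values of $\fpsr$, and hence eventually of $\fnpsr$, exceed $\inf\fpsr+1$, while $\inf_{\widetilde K}\fnpsr\to\inf_{\widetilde K}\fpsr=\inf\fpsr$. Third, with both $\Enpsr$ and $\Epsr$ confined to the compact $\widetilde K$ and with uniform convergence of $\fnpsr$ to the continuous $\fpsr$ there, the conclusion \eqref{eq:thm:consistency} follows from the classical argument-min convergence lemma: if $m_n\in\Enpsr$ with $d_M(m_n,\Epsr)\not\to 0$ along a subsequence, pass to a further convergent subsequence $m_n\to m_\ast\in\widetilde K\setminus\Epsr$; then $\fpsr(m_\ast)=\lim\fpsr(m_n)=\lim\fnpsr(m_n)=\lim\inf\fnpsr\le\lim\fnpsr(m^\ast)=\fpsr(m^\ast)=\inf\fpsr$ for $m^\ast\in\Epsr$, contradicting $m_\ast\notin\Epsr$. (Here the middle equality $\fpsr(m_n)\to\fpsr(m_\ast)$ uses continuity of $\fpsr$, and $|\fnpsr(m_n)-\fpsr(m_n)|\to 0$ uses the uniform convergence on $\widetilde K$.)

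\textbf{Main obstacle.}
The delicate point is that strong consistency results of Ziezold/Bhattacharya--Patrangenaru/Huckemann type typically presuppose (at least joint) continuity of the cost function, or at least upper semicontinuity of $d_M(\cdot,\Epsr)$-type quantities, whereas here $\dpsr^2(\cdot,m)$ is merely lower semicontinuous in the $\Symp$-variable. Fortunately this is exactly the variable that gets integrated out (or averaged over the sample), so the lower semicontinuity is used only to guarantee measurability of $\dpsr^2(\cdot,m)$ and well-definedness of $\fpsr$ (Proposition~\ref{prop:fsr_fpsr is well defined}, Lemma~\ref{lem:finite_variance}); the \emph{parameter}-space regularity that the consistency argument actually needs---continuity and coercivity of $\fpsr$, and equicontinuity of the family $\{\dpsr^2(X,\cdot)\}$ on compacta---is supplied by Lemma~\ref{lem:contfpsr}(ii), Proposition~\ref{coercivity}(b), and Lemma~\ref{lem:dpsr_continuity}(b). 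So the real work is to verify carefully that the non-continuity of $\dpsr$ in its first slot never enters the argument, and to assemble the uniform SLLN on the compact set $\widetilde K$ rigorously (handling the $\max_j$ term, which is not itself a sum of i.i.d.\ terms, by bounding it crudely by the sum $\sum_j\dpsr(X_j,m_0)$ or by invoking that $\frac1n\max_j \dpsr(X_j,m_0)\to 0$ a.s.\ under a finite first-moment condition). The non-uniqueness of PSR means causes no difficulty here: the statement is phrased in terms of set-distance $d_M(\cdot,\Epsr)$, and the $\mathcal{G}(p)$-equivariance \eqref{sample_PSR_mean_nonunique} is automatically respected because both $\Enpsr$ and $\Epsr$ are $\mathcal{G}(p)$-invariant.
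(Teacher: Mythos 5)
Your overall strategy coincides with the paper's: a pointwise SLLN for $\fnpsr$, upgraded to (locally) uniform convergence through the $1$-Lipschitz property of $\dpsr(X,\cdot)$ (the paper uses a countable dense set and ``continuous convergence'' where you use an $\varepsilon$-net --- same device), continuity of $\fpsr$, and an argmin-convergence argument, with coercivity invoked to prevent sample means from escaping to infinity. The one step whose justification does not go through as written is the confinement claim that $\Enpsr\subset\widetilde K$ for all large $n$. You argue: ``outside $\widetilde K$ the values of $\fpsr$, and hence eventually of $\fnpsr$, exceed $\inf\fpsr+1$.'' The ``hence'' is the gap: your uniform convergence $\sup_{m\in\widetilde K}|\fnpsr(m)-\fpsr(m)|\to 0$ lives only on the compact set $\widetilde K$, and neither it nor pointwise convergence yields a lower bound for $\fnpsr$ that is uniform over the noncompact set $M(p)\setminus\widetilde K$ and over all large $n$ simultaneously; coercivity of the population functional $\fpsr$ does not by itself transfer to the empirical functionals. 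What is needed is an \emph{empirical} coercivity estimate, and this is exactly what the paper supplies: fix a compact $K_{r_1}\subset\Symp$ with $P(K_{r_1})>0$; the geometric bound in the proof of Proposition~\ref{coercivity} gives $\dpsr(X,m)\ge r_2-r_1$ for every $X\in K_{r_1}$ and every $m\notin\Fc^{-1}(K_{r_2})$, whence $\fnpsr(m)\ge (r_2-r_1)^2\,P_n(K_{r_1})$ uniformly over such $m$, with $P_n(K_{r_1})\to P(K_{r_1})>0$ a.s. (The paper packages this as a contradiction: it first proves the Ziezold-type inclusion $\cap_{k}\overline{\cup_{n\ge k}\Enpsr}\subset\Epsr$, then rules out $\sup_{m\in\Enpsr}d_M(m,\Epsr)\to\infty$ by showing $\fnpsr(m_n)\ge \tfrac{k(n)}{n}M_n^2\to\infty$ while $\fnpsr(m_n)\le\fnpsr(m_0)\to\fpsr(m_0)<\infty$.) Once this estimate is inserted, your step 3 and the final subsequence argument are sound.

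A secondary point: the displayed modulus-of-continuity bound should carry $\tfrac2n\sum_j\dpsr(X_j,m_1)$ rather than $2\max_j\dpsr(X_j,m_1)$. The averaged version is what the SLLN controls at the net points (and is exactly the paper's $g_n$); the $\max_j$ term is not $O(1)$ in $n$ under a second-moment assumption (it is only $o(\sqrt n)$), and bounding it by the un-normalized sum would make the bound grow linearly in $n$ and break the net argument. Since you in fact propose to control $\tfrac1n\sum_j\dpsr(X_j,m_0)$, this is a slip of notation rather than of substance, but the max-based variants you mention in the last paragraph would not work as stated.
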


Our proof of Theorem~\ref{thm:consistency} is contained in Appendix~\ref{app:proof_of_consistency}. There, we closely follow the arguments of  \cite{Huckemann2011b} used in verifying the conditions required to establish strong consistency of the generalized Fr\'{e}chet means.
However, the theorems of \cite{Huckemann2011b} are not directly applied since the function $\dpsr$ is not continuous.
Nevertheless,  the Fr\'{e}chet-type objective function $\fpsr: M(p)\to \Real$ is continuous, as shown in Lemma~\ref{lem:contfpsr},
a fact that
 plays an important role in the proof of Theorem~\ref{thm:consistency}.
\cite{schotz2022strong} extends the results of \cite{Huckemann2011b} by, among other things and in our notation, allowing for $\dpsr(X,\cdot)$ to be only LSC. However, this is not actually helpful for $\dpsr$ either, because $\dpsr$ is actually \emph{continuous} with respect to the second variable (it is LSC  with respect to the \emph{first} variable);
see Lemma \ref{lem:dpsr_continuity} and Theorem \ref{prop:lsc}.

In the proof of Theorem~\ref{thm:consistency}, we first show that with probability 1
\begin{equation}\label{eq:Ziezold}
  \cap_{k=1}^\infty \overline{ \cup_{n=k}^\infty \Enpsr} \subset \Epsr.
\end{equation}
 In the terminology of \cite{Huckemann2011b}, (\ref{eq:Ziezold}) is called strong consistency of $\Enpsr$ as an estimator of  $\Epsr$ in the sense of \cite{ziezold1977expected}. Our result (\ref{eq:thm:consistency}) is equivalent to strong consistency in the sense of \cite{Bhatt2003} (again using the terminology of \cite{Huckemann2011b}), as shown in Lemma~\ref{lem:BPconsistency-vs-sup} in the appendix.
 \cite{schotz2022strong} classified three types of convergence for a sequence of sets, referring to (\ref{eq:Ziezold})
  as  convergence \emph{in the outer limit}, and to
 (\ref{eq:thm:consistency}) as convergence \emph{in the one-sided Hausdorff distance}. The last type of convergence is convergence \emph{in Hausdorff distance}. Recall that for a metric space $(M,d)$ the Hausdorff distance between non-empty sets $A,B \subset M$ is $d_H(A,B) := \max\{\sup_{m \in A}d(m,B), \sup_{ m \in B }d( A , m) \}$.

Theorem~\ref{thm:consistency} states that, with probability 1,  any sequence $m_n \in \Enpsr$ of sample PSR means will eventually  lie in an arbitrarily small neighborhood of  the population PSR mean set as the sample size $n$ increases. But, conceivably there could be a population PSR mean in $\Epsr$ with no sample PSR mean nearby even for large $n$, in which case $d_H(\Enpsr,\Epsr)$ would not approach zero.
In other words, $\Enpsr$ would be a strongly consistent estimator of $\Epsr$ only with respect to {\em one-sided} Hausdorff distance, not (two-sided) Hausdorff distance.
However,
if the population PSR mean is unique up to the action of $\mathcal{G}(p)$, then $\Enpsr$ \emph{is} a strongly consistent estimator of $\Epsr$ with respect to Hausdorff distance on $(M(p),d_M)$, as shown next.

\begin{corollary}\label{cor:Hausdorff}
Assume that $P$ has finite PSR-variance, and that $\Epsr = \mathcal{G}(p) \cdot \mu$ for some $\mu \in M(p)$. Then with probability 1,
\begin{equation}\label{eq:cor:Hausdorff1}
 \lim_{n \to \infty} \sup_{ m \in \Epsr }d_M( \Enpsr , m) = 0
\end{equation}
and
\begin{equation}\label{eq:cor:Hausdorff2}
\lim_{n \to \infty} d_H( \Enpsr , \Epsr) = 0.
\end{equation}
\end{corollary}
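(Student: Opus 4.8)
The plan is to deduce both displays from Theorem~\ref{thm:consistency} together with the uniqueness hypothesis $\Epsr = \mathcal{G}(p)\cdot\mu$. Note that $\Epsr$ is then a finite set with exactly $2^{p-1}p!$ elements (the group $\mathcal{G}(p)$ acts freely on $M(p)$), so in particular it is compact; this finiteness is what makes the one-sided statement upgradeable to a two-sided one. Recall also that by the remark following Corollary~\ref{thm:existence_sampleSR_PSRmeans} each $\Enpsr$ is non-empty and compact, and that $\fnpsr(h\cdot m) = \fnpsr(m)$ for all $h\in\mathcal{G}(p)$ by \eqref{sample_PSR_mean_nonunique}, so each $\Enpsr$ is $\mathcal{G}(p)$-invariant; hence $\Enpsr$ is a union of $\mathcal{G}(p)$-orbits and contains at least $2^{p-1}p!$ points.

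\emph{Proof of \eqref{eq:cor:Hausdorff1}.} Fix $\varepsilon>0$. By Theorem~\ref{thm:consistency}, with probability $1$ there is an $N$ such that for all $n\ge N$, $\sup_{m\in\Enpsr} d_M(m,\Epsr) < \varepsilon/2$; work on this probability-$1$ event. The key point is that each of the finitely many points of $\Epsr$ is separated from the others by at least $\sqrt{k}\,\beta_{\mathcal{G}(p)}$, by Lemma~\ref{eig_decomp_dist_bound}(a). Shrinking $\varepsilon$ if necessary so that $\varepsilon < \sqrt{k}\,\beta_{\mathcal{G}(p)}$, the balls $B^{d_M}_{\varepsilon/2}(h\cdot\mu)$, $h\in\mathcal{G}(p)$, are pairwise disjoint. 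For $n\ge N$, every element of the (non-empty) set $\Enpsr$ lies in one of these balls; moreover $\Enpsr$ is $\mathcal{G}(p)$-invariant and $\mathcal{G}(p)$ permutes these balls transitively, so if $\Enpsr$ meets one ball it meets all of them. Since $\Enpsr\ne\emptyset$, it meets every ball, i.e.\ for each $h\in\mathcal{G}(p)$ there is $m\in\Enpsr$ with $d_M(m,h\cdot\mu)<\varepsilon/2$, whence $d_M(\Enpsr, h\cdot\mu) < \varepsilon/2 < \varepsilon$. Taking the supremum over the finitely many points $h\cdot\mu$ of $\Epsr$ gives $\sup_{m\in\Epsr} d_M(\Enpsr,m) < \varepsilon$ for all $n\ge N$. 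As $\varepsilon>0$ was arbitrary, \eqref{eq:cor:Hausdorff1} follows.

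\emph{Proof of \eqref{eq:cor:Hausdorff2}.} By definition $d_H(\Enpsr,\Epsr) = \max\{\sup_{m\in\Enpsr} d_M(m,\Epsr),\ \sup_{m\in\Epsr} d_M(\Enpsr,m)\}$. The first term tends to $0$ a.s.\ by Theorem~\ref{thm:consistency} (convergence in one-sided Hausdorff distance), and the second term tends to $0$ a.s.\ by \eqref{eq:cor:Hausdorff1}; the maximum of two sequences converging to $0$ converges to $0$, so $d_H(\Enpsr,\Epsr)\to 0$ almost surely. \qedns

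I expect the only genuinely delicate step to be the argument in \eqref{eq:cor:Hausdorff1} that $\Enpsr$, once it is trapped inside the disjoint union of small balls around the orbit points, must actually meet \emph{every} ball rather than just some of them; this is exactly where the $\mathcal{G}(p)$-invariance of $\Enpsr$ (via \eqref{sample_PSR_mean_nonunique}) and the transitivity of the $\mathcal{G}(p)$-action on $\Epsr$ are essential, and where the orbit-separation bound of Lemma~\ref{eig_decomp_dist_bound}(a) is used to guarantee the balls are disjoint in the first place. Everything else is a routine packaging of Theorem~\ref{thm:consistency} and the definition of Hausdorff distance.
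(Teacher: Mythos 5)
Your proof is correct and follows essentially the same route as the paper's: both deduce \eqref{eq:cor:Hausdorff1} from Theorem~\ref{thm:consistency} by combining the $\mathcal{G}(p)$-invariance of $\Enpsr$ (via \eqref{sample_PSR_mean_nonunique}) with the isometric, transitive action of $\mathcal{G}(p)$ on the orbit $\Epsr=\mathcal{G}(p)\cdot\mu$, so that a sample PSR mean close to one orbit point is transported to sample PSR means close to every orbit point, and then obtain \eqref{eq:cor:Hausdorff2} by pairing this with the one-sided convergence. The only cosmetic difference is your appeal to Lemma~\ref{eig_decomp_dist_bound}(a) to make the $\varepsilon/2$-balls disjoint, which is not actually needed (the paper does not use it): invariance plus transitivity of the isometric action already force $\Enpsr$ to meet every such ball.
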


The strong consistency of sample PSR means with the population PSR means can be converted to strong consistency of sample PSR means with the \emph{population SR means}, as follows.  For $S \in \Symp$ and a set $\mathcal{E} \subset \Symp$, we define $\dsr(S, \mathcal{E}) := \inf_{E \in \mathcal{E}}\dsr(S, E)$.

\begin{corollary}\label{cor:consistency}
Assume that $P$ has finite PSR-variance.  Then,

(i) $\lim_{n\to\infty} \sup_{S \in \mathcal{F}(\Enpsr)} \dsr(S, \mathcal{F}(\Epsr)) = 0$ almost surely.

(ii) If, in addition, $\Esr \subset \Sptop$, then $\lim_{n\to\infty} \sup_{S \in \mathcal{F}(\Enpsr)} \dsr(S, \Esr) = 0$ almost surely.

(iii) If $\Esr \subset \Sptop$ and the population SR mean is unique with $\Esr = \{\mu^{(\mathcal{SR})}\}$, then  $\lim_{n\to\infty}  \dsr(\mathcal{F}(\Enpsr) , \mu^{(\mathcal{SR})}) = 0$ almost surely.
\end{corollary}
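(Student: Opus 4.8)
The plan is to transport the $M(p)$-level strong consistency from Theorem~\ref{thm:consistency} down to $\Symp$ through the eigen-composition map $\mathcal{F}$, using one elementary fact that I would record first: $\mathcal{F}$ is $1$-Lipschitz as a map $(M(p),d_M)\to(\Symp,\dsr)$, i.e.\ $\dsr(\mathcal{F}(m_1),\mathcal{F}(m_2))\le d_M(m_1,m_2)$ for all $m_1,m_2\in M(p)$. This is immediate from Definition~\ref{def:dsr}: $m_i\in\mathcal{F}^{-1}(\mathcal{F}(m_i))$, and $\dsr(\mathcal{F}(m_1),\mathcal{F}(m_2))$ is an infimum of $d_M$ over the two fibers. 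A direct consequence is that for any non-empty $A\subseteq M(p)$ and any $m\in M(p)$, $\dsr(\mathcal{F}(m),\mathcal{F}(A))=\inf_{m'\in A}\dsr(\mathcal{F}(m),\mathcal{F}(m'))\le\inf_{m'\in A}d_M(m,m')=d_M(m,A)$.

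\emph{Part (i).} I would fix the probability-one event on which \eqref{eq:thm:consistency} holds, and note $\Epsr\neq\emptyset$ by Theorem~\ref{thm:existence_popSR_PSRmeans}(b) (finite PSR-variance), so $\mathcal{F}(\Epsr)\neq\emptyset$ and $\dsr(\cdot,\mathcal{F}(\Epsr))$ is a genuine finite infimum. Since $\mathcal{F}(\Enpsr)=\{\mathcal{F}(m):m\in\Enpsr\}$, applying the bound above with $A=\Epsr$ gives $\sup_{S\in\mathcal{F}(\Enpsr)}\dsr(S,\mathcal{F}(\Epsr))=\sup_{m\in\Enpsr}\dsr(\mathcal{F}(m),\mathcal{F}(\Epsr))\le\sup_{m\in\Enpsr}d_M(m,\Epsr)\to 0$ as $n\to\infty$, by Theorem~\ref{thm:consistency}. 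This is (i).

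\emph{Parts (ii) and (iii).} Both need $\Esr\neq\emptyset$, which is implicit in the statement (if $\Esr=\emptyset$ then $\dsr(S,\Esr)=+\infty$). Granting this, the hypothesis $\Esr\subseteq\Sptop$ gives $\Esr\cap\Sptop=\Esr\neq\emptyset$, so the population form of Theorem~\ref{thm:SRvsPSRequivalence}(b) applies---its proviso ``$\fpsr<\infty$ somewhere'' being supplied by finite PSR-variance---and yields $\mathcal{F}(\Epsr)\subseteq\Esr$. Hence $\dsr(S,\Esr)\le\dsr(S,\mathcal{F}(\Epsr))$ for every $S\in\Symp$, and taking suprema over $S\in\mathcal{F}(\Enpsr)$ in (i) gives (ii). For (iii), where $\Esr=\{\mu^{(\mathcal{SR})}\}$, note $\mathcal{F}(\Enpsr)\neq\emptyset$ since $\Enpsr\neq\emptyset$ (Corollary~\ref{thm:existence_sampleSR_PSRmeans}); reading $\dsr(\mathcal{F}(\Enpsr),\mu^{(\mathcal{SR})})$ as the point-to-set distance via symmetry of $\dsr$, we get $\dsr(\mathcal{F}(\Enpsr),\mu^{(\mathcal{SR})})=\inf_{S\in\mathcal{F}(\Enpsr)}\dsr(S,\mu^{(\mathcal{SR})})\le\sup_{S\in\mathcal{F}(\Enpsr)}\dsr(S,\Esr)\to 0$ by (ii).

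Essentially everything here is a one-line consequence of the contraction property of $\mathcal{F}$ together with the already-established Theorem~\ref{thm:consistency}, so I do not expect a real obstacle. The step requiring the most care is the bookkeeping behind $\mathcal{F}(\Epsr)\subseteq\Esr$ used in (ii)--(iii): one must confirm that the hypotheses genuinely force $\Esr\cap\Sptop\neq\emptyset$ (so Theorem~\ref{thm:SRvsPSRequivalence} is applicable in the population setting) and that the finite-PSR-variance assumption supplies the proviso for that theorem's population version.
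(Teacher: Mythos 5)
Your proposal is correct and follows essentially the same route as the paper: part (i) is the paper's chain $\dsr(\mathcal{F}(m),\mathcal{F}(m_0))\le\dpsr(\mathcal{F}(m),m_0)\le d_M(m,m_0)$ (your ``$\mathcal{F}$ is $1$-Lipschitz'' observation) combined with Theorem~\ref{thm:consistency}, and parts (ii)--(iii) invoke the population version of Theorem~\ref{thm:SRvsPSRequivalence} under finite PSR-variance, exactly as the paper does. The only cosmetic difference is that you use the inclusion $\mathcal{F}(\Epsr)\subseteq\Esr$ (and explicitly flag the implicit non-emptiness of $\Esr$) where the paper asserts the equality $\mathcal{F}(\Epsr)=\Esr$; both suffice and the argument is otherwise identical.
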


Note that in establishing a strong consistency property of $\Enpsr$ with respect to population (partial) SR means, we assumed only that the \emph{population} mean set $\Epsr$ is unique up to the action of $\mathcal{G}(p)$, not that the \emph{sample} mean sets $\Enpsr$ have this  uniqueness property. We also did not assume that $\Enpsr \subset M_p^{\rm top}$.

We next establish a central limit theorem for our estimator $\Enpsr$.
Our strategy is to closely follow the arguments in \cite{Bhatt2005,Huckemann2011a,Bhatt2017,eltzner2021stability},
 for deriving central limit theorems for (generalized) Fr\'{e}chet means on a Riemannian manifold. In particular, our central limit theorem is expressed in terms of charts and the asymptotic distributions of "linearized" estimators.

Our parameter space of interest $M(p) =SO(p) \times {\rm Diag}^+(p)$ is a Riemannian manifold of dimension $d:=\tfrac{(p-1)p}{2} + p$. As defined in Section~\ref{eigen-decomp_geometry}, the tangent space at $(U,D) \in M(p)$ is $T_{(U,D)}M(p) =\{(AU, LD): A \in \mathfrak{so}(p), L \in {\rm Diag}(p)\}$, which can be canonically identified with $\mathfrak{so}(p) \oplus {\rm Diag}(p)$, a vector space of dimension $d$.

At $(U,D) \in M(p)$, we use the local chart $(\mathcal{U}_{(U,D)}, \tilde\varphi_{(U,D)})$, where $$\mathcal{U}_{(U,D)} = \{(V,\Lambda) \in  M(p): \|{\rm Log}(VU^T)\|_F < \pi \},$$
and where
$\tilde\varphi_{(U,D)}: \mathcal{U}_{(U,D)} \to \mathfrak{so}(p) \oplus {\rm Diag}(p)$ is defined by
\begin{equation}\label{eq:local_chart}
   \tilde\varphi_{(U,D)}(V, \Lambda)  = ({\rm Log}(VU^T), {\rm Log}(\Lambda D^{-1})).
\end{equation}
Observe that $ \tilde\varphi^{-1}_{(U,D)}(A,L)  = ({\rm Exp}(A) U, {\rm Exp} (L)D).$
 The maps $\tilde\varphi_{(U,D)}$ and $\tilde\varphi_{(U,D)}^{-1}$ are the Riemannian logarithm and exponential maps to (and from) the tangent space $T_{(U,D)}M(p)$,
composed with the right-translation isomorphism between $T_{(U,D)}M(p)$ and $T_{(I,I)}M(p)=\mathfrak{so}(p)\oplus {\rm Diag}(p)$.

We also write the elements of $\mathfrak{so}(p) \oplus {\rm Diag}(p)$ in a coordinate-wise vector form. For each $(A,L) \in \mathfrak{so}(p) \oplus {\rm Diag}(p)$, define  a suitable vectorization operator ${\rm vec}$,
\begin{equation}\label{eq:vectorization}
 {\rm vec}(A,L) := \begin{pmatrix}
 \sqrt{k} \ x_{SO}(A) \\
                   x_{\mathcal{D}}(L) \\
                   \end{pmatrix}
  \in \Real^d,
\end{equation}
 where $x_{SO}(A)\in \Real^{(p-1)p/2}$ consists of the upper triangular entries of $A$ (in the lexicographical ordering) and $x_{\mathcal{D}}(L) = (L_{11},\ldots, L_{pp})^T \in \Real^p$ consists of the diagonal entries of $L$.
 The inverse vectorization operator ${\rm vec}^{-1}$ is well-defined as well.
 We use the notation $\phi_{(U,D)}(\cdot,\cdot):= {\rm vec}\circ \tilde\varphi_{(U,D)}(\cdot,\cdot)$ and $\phi^{-1}_{(U,D)}(\cdot) := \tilde\varphi^{-1}_{(U,D)} \circ {\rm vec}^{-1}(\cdot)$.

Assume the following.

(A1) The probability measure $P$ induced by $X$ on $\Symp$ is absolutely continuous with respect to volume measure, and has finite PSR-variance.

(A2) $\Epsr$ is unique up to the action of $\mathcal{G}(p)$. With probability 1, so is $\Enpsr$ (for every $n$).

(A3) For some $m_0 \in \Epsr$, $P( d_{\mathcal{PSR}}(X,m_0)<r'_{cx}) = 1$.

 The absolute continuity assumption (A1) ensures that any volume-zero (Lebesgue-measurable) subset of $\Symp$ has probability zero. In particular, $P(X \in \Sptop) = 1 - P(X \in \Splwr) = 1$. This fact greatly simplifies our theoretical development.

The uniqueness assumption (A2) ensures that $\Enpsr$ converges  almost surely to $\Epsr$ with respect to the Hausdorff distance (by Corollary~\ref{cor:Hausdorff}). Therefore, for any $m_0 \in \Epsr$, there exists a sequence $m_n \in \Enpsr$ satisfying $ d_M(m_n, m_0)\to 0$ (or, equivalently, $\phi_{m_0}(m_n) \to \phi_{m_0}(m_0) = 0$)  as $n \to \infty$ almost surely. Assumption (A2) also guarantees that if (A3) is true for some PSR mean $m_0 \in \Epsr$ then it is true for any other PSR mean in $\Epsr$.

The radius $r'_{cx} = \sqrt{k}\beta_{\mathcal{G}(p)}/4$ in Assumption (A3) previously appeared in Theorem~\ref{uniqueness_theorem}, where the bounded-support assumption was used to ensure
uniqueness of one element of a minimal pair (see Definition~\ref{def:dsr}) when the other element is fixed.
Similarly, assumptions (A1) and (A3) ensure that, with probability 1, for each $X_i$ there exists a unique $m_i \in \mathcal{F}^{-1}(X_i)$ such that $m_i \in B^{d_M}_{r'_{cx}}(m_0)$, a radius-$r'_{cx}$ ball in $M(p)$ centered at $m$.
A stronger version of this fact will be used (in the proof of Theorem~\ref{thm:PSR_mean_CLT}, to be given shortly) to rewrite the objective function $\fnpsr$ involving $\dpsr$ as a Fr\'{e}chet objective function $m \mapsto \frac{1}{n}\sum_{i=1}^n d_M^2(m_i, m)$, with probability 1.

In addition, the bounded support condition (A3) ensures that with probability 1 the function $\dpsr^2(X, \cdot)$ is smooth ($C^\infty)$ and convex on a convex set. Using this fact and geometric results from given in \cite{Afsari2011} and \cite{afsari2013convergence}, we show in the proof that
the gradient
$${\rm grad}_x\, \dpsr^2(X, \phi^{-1}_{m_0}(x)) := \left(\frac{\partial}{\partial x_i} \dpsr^2(X, \phi^{-1}_{m_0}(x))\right)_{i=1,\ldots,d}$$
at $x = 0$ has mean zero, and has a finite covariance matrix $\Sigma_P := {\rm Cov}({\rm grad}_x\, \dpsr^2(X, \phi^{-1}_{m_0}(0)))$. (We conjecture that (A1) guarantees that $\Sigma_P$ is also positive-definite.)
 Likewise, as we will see in the proof,
the differentiability  and (strict) convexity of $\dpsr^2(X, \phi^{-1}_{m_0}( \cdot ))$ guarantee that the expectation of the Hessian
$ H_P(x) := E \left( \mathbf{H}\dpsr^2(X, \phi^{-1}_{m_0}( x ))\right) $
 exists and is positive definite at $x = 0$. Write $H_P:= H_P(0)$.

In summary, Assumptions (A1)---(A3) enable us to use a second-order Taylor expansion for $\fnpsr$, to which the classical central limit theorem and the law of large numbers are applied. Such an approach was used in \cite{Bhatt2005} and \cite{Huckemann2011a}.  In particular, our proof for part (b) of Theorem \ref{thm:PSR_mean_CLT}  (in Appendix \ref{sec:proof_of_thm:PSR_mean_CLT}) closely follows the proof of Theorem 6 of \cite{Huckemann2011a}.

\begin{theorem}\label{thm:PSR_mean_CLT}
Suppose that Assumptions (A1)---(A3) are satisfied, and let $m_0 \in \Epsr$ be a PSR mean.
Let
 $\{m_n' \in \Enpsr\}$ be any choice of sample PSR mean sequence. For each $n$, let $m_n \in \argmin_{m \in \mathcal{G}(p)\cdot m_n'} d_M(m, m_0)$. Then, with probability 1, the sequence $\{m_n\}$  is determined  uniquely.
  Furthermore,
\begin{itemize}
  \item[(a)] $m_n \to m_0$ almost surely as $n \to \infty$, and
  \item[(b)] $\sqrt{n}\phi_{m_0}(m_n) \to N_d(0, H_P^{-1}\Sigma_{P} H_P^{-1})$
in distribution as $n \to \infty$.
\end{itemize}
\end{theorem}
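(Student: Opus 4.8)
The plan is to reduce the PSR-mean problem to an ordinary Fréchet-mean problem on the Riemannian manifold $M(p)$ and then invoke the now-standard CLT machinery for Fréchet means (as in \cite{Bhatt2005,Huckemann2011a,afsari2013convergence}). First I would use Assumptions (A1) and (A3): by (A1), $P(X \in \Sptop) = 1$, so almost surely every observation $X_i$ has a discrete fiber $\mathcal{F}^{-1}(X_i)$; by (A3) and the convexity-radius bound $r'_{cx}$ (as in \cite{Groisser2018} and the discussion following Theorem~\ref{thm:uniqueness}), for each $i$ there is a \emph{unique} $m_i \in \mathcal{F}^{-1}(X_i) \cap B^{d_M}_{r'_{cx}}(m_0)$, and moreover this $m_i$ realizes the infimum defining $\dpsr(X_i, m)$ for all $m$ in the ball $B^{d_M}_{r'_{cx}}(m_0)$. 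Hence on that ball, with probability 1, $\fnpsr(m) = \tfrac1n\sum_i d_M^2(m_i, m)$ — an honest Fréchet functional for the i.i.d.\ sample $m_1,\dots,m_n \in M(p)$, and likewise $\fpsr(m) = E\, d_M^2(m_1, m)$. Since $m_0 \in \Epsr$ it minimizes $\fpsr$; and by the uniqueness-up-to-$\mathcal{G}(p)$ of $\Epsr$ together with Corollary~\ref{cor:Hausdorff}, any sample-PSR-mean sequence $m_n'$ has a $\mathcal{G}(p)$-translate $m_n$ converging to $m_0$ a.s. (uniqueness of the minimizer $m_n \in \argmin_{m \in \mathcal{G}(p)\cdot m_n'} d_M(m,m_0)$ for large $n$ follows because the orbit points are $\sqrt{k}\beta_{\mathcal{G}(p)}$-separated by Lemma~\ref{eig_decomp_dist_bound}, while $m_n' \to \mathcal{G}(p)\cdot m_0$). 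This is part (a).

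For part (b), I would work in the chart $\phi_{m_0}$ defined in \eqref{eq:local_chart}--\eqref{eq:vectorization}. Write $F_n(x) := \fnpsr(\phi^{-1}_{m_0}(x))$ and $F(x) := \fpsr(\phi^{-1}_{m_0}(x))$ for $x$ in a neighborhood of $0$. By (A3), $\dpsr^2(X, \phi^{-1}_{m_0}(\cdot)) = d_M^2(m(X), \phi^{-1}_{m_0}(\cdot))$ on the convex ball of radius $r'_{cx}$, which lies well within the convexity radius $\sqrt{k}\pi/2$ of $(M(p),g_M)$; hence, invoking the smoothness and strict-convexity results of \cite{Afsari2011,afsari2013convergence}, $x \mapsto \dpsr^2(X, \phi^{-1}_{m_0}(x))$ is $C^\infty$ and strictly convex there, with $\|{\rm grad}_x\,\dpsr^2(X,\phi^{-1}_{m_0}(x))\|$ and $\|\mathbf{H}\dpsr^2\|$ bounded on the ball uniformly in $X$ (using that $m(X)$ and $x$ both lie in a fixed compact ball). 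The finite-PSR-variance hypothesis in (A1) then legitimizes differentiation under the integral sign, so $F$ is $C^2$ near $0$ with $\nabla F(0) = E[{\rm grad}_x\,\dpsr^2(X,\phi^{-1}_{m_0}(0))]$ and Hessian $H_P = E[\mathbf{H}\dpsr^2(X,\phi^{-1}_{m_0}(0))]$. Because $m_0$ is a minimizer of $F$ lying in the interior of the chart domain, $\nabla F(0) = 0$; so the integrand's gradient at $0$ has mean zero, and $\Sigma_P = {\rm Cov}({\rm grad}_x\,\dpsr^2(X,\phi^{-1}_{m_0}(0)))$ is finite. Strict convexity of each integrand (again via \cite{afsari2013convergence}, using that $r'_{cx} < \sqrt{k}\pi/2$) forces $H_P \succ 0$.

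Now the estimating-equation argument proceeds exactly as in the proof of Theorem~6 of \cite{Huckemann2011a}: $x_n := \phi_{m_0}(m_n)$ is a critical point of $F_n$ for $n$ large (since $m_n \to m_0$ in the chart's interior and $F_n$ is $C^1$ there, being a finite average of $C^1$ functions), so $\nabla F_n(x_n) = \tfrac1n\sum_i {\rm grad}_x\,d_M^2(m_i, \phi^{-1}_{m_0}(x_n)) = 0$. A first-order Taylor expansion of $\nabla F_n$ about $0$ gives $0 = \nabla F_n(0) + \big(\tfrac1n\sum_i \mathbf{H}\,d_M^2(m_i,\phi^{-1}_{m_0}(\xi_n))\big)x_n$ for some $\xi_n$ on the segment from $0$ to $x_n$; by the uniform SLLN (Lebesgue-dominated, thanks to the uniform Hessian bound) the bracketed matrix converges a.s.\ to $H_P$, while by the classical multivariate CLT $\sqrt{n}\,\nabla F_n(0) \Rightarrow N_d(0,\Sigma_P)$ since $\nabla F_n(0) = \tfrac1n\sum_i {\rm grad}_x\,d_M^2(m_i,m_0)$ is an i.i.d.\ average of mean-zero, finite-covariance vectors. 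Combining via Slutsky yields $\sqrt{n}\,x_n = -\big(\tfrac1n\sum_i \mathbf{H}\big)^{-1}\sqrt{n}\,\nabla F_n(0) \Rightarrow N_d(0, H_P^{-1}\Sigma_P H_P^{-1})$, which is the claim. The main obstacle I anticipate is not any single estimate but the careful bookkeeping that makes the reduction "$\fnpsr = $ Fréchet functional on the ball" hold simultaneously for the (random) minimizer $m_n$, the fixed center $m_0$, and all the intermediate Taylor points $\xi_n$ — i.e.\ verifying that everything stays inside the radius-$r'_{cx}$ ball where the single closest-fiber-point selection is valid and $d_M^2(\cdot,\cdot)$ is $C^\infty$ and strictly convex — and in checking that the domination needed for differentiating under the integral and for the uniform SLLN genuinely follows from finite PSR-variance plus compactness of the ball; the $\mathcal{G}(p)$-equivariance in \eqref{sample_PSR_mean_nonunique} and the orbit-separation bound of Lemma~\ref{eig_decomp_dist_bound} are what keep the choice of branch $m_n$ well-defined throughout.
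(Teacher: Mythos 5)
Your proposal is correct and follows essentially the same route as the paper: reduce $\fnpsr$ on the ball $B^{d_M}_{r'_{cx}}(m_0)$ to an ordinary Fr\'{e}chet functional via the unique closest fiber point of each $X_i$ (justified by Lemma~\ref{eig_decomp_dist_bound} and the triangle inequality), get (a) from Corollary~\ref{cor:Hausdorff}, and prove (b) by the Taylor/estimating-equation argument in the chart $\phi_{m_0}$ with the classical CLT, LLN and Slutsky, exactly as in the paper's proof (which follows \cite{Huckemann2011a} and \cite{Afsari2011,afsari2013convergence}). The only small deviation is that you argue uniqueness of $m_n$ only for large $n$ via convergence, whereas the paper obtains it for every $n$ by noting (via Lemma~\ref{lem:geom}) that the Fr\'{e}chet mean of the in-ball lifts $m_{_{X_i}}$ itself lies in $B^{d_M}_{r'_{cx}}(m_0)$, so the other orbit points are automatically farther from $m_0$; your own reduction already supplies this, so the gap is cosmetic.
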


Estimating the covariance matrix ($H_P^{-1}\Sigma_{P} H_P^{-1}$ in our case) of the limiting Gaussian distribution (for Riemannian manifold-valued Fr\'{e}chet means) is a difficult task.
For general Riemannian manifold-valued Fr\'{e}chet means, \cite{Bhatt2005} and \cite{bhattacharya2012nonparametric} suggest using  a moment estimator for $H_P$ and $\Sigma_P$. 
This however requires specifying the second derivatives of $\dpsr^2(X, \phi_{m_0}(\cdot))$.
We note that in the literature, explicit expressions for $H_P$ and $\Sigma_P$ are only available for geometrically very simple manifolds, with a high degree of symmetry, such as spheres.
 As an instance, see \cite{hotz2015intrinsic} and Section 5.3 of \cite{bhattacharya2012nonparametric} for the cases where the data and the Fr\'{e}chet mean lie in the unit circle $S^1$ and the more general unit sphere $S^d$, respectively.
 Others, including \cite{eltzner2019smeary}, simply use the sample covariance matrix of  $\{\phi_{m_n}(m_{_{X_i}}): i =1,\ldots,n\}$ (in our notation) as an estimate of $H_P^{-1}\Sigma_{P} H_P^{-1}$.
 In Section~\ref{sec:5}, we will use a bootstrap estimator of ${\rm Var}(\phi_{m_0}(m_n))$, the variance of $\phi_{m_0}(m_n)$, instead of directly estimating  $H_P^{-1}\Sigma_{P} H_P^{-1}$. Out bootstrap estimator is defined as follows.

Choose a PSR mean $m_n$ computed from the original sample $\{X_1,\ldots,X_n\}$. For the $b$th bootstrap sample (that is, a simple random sample of size $n$ from the set $\{X_1,\ldots,X_n\}$, treated as a fixed set, with replacement), let $m_b^*$ be the PSR mean of the bootstrap sample that is closest to $m_n$.
(For the purpose of defining the bootstrap estimator, we are assuming that such an $m_b^*$ is unique.)
The bootstrap estimator of ${\rm Var}(\phi_{m_0}(m_n))$ is then defined to be
$$ \widehat{\rm Var}_{\rm boot}(\phi_{m_0}(m_n)) := \frac{1}{B}\sum_{b=1}^B \phi_{m_n}(m_b^*) \cdot (\phi_{m_n}(m_b^*))^T,$$
where
$B$ is the number of bootstrap replicates.


\section{Numerical examples}\label{sec:5}

\subsection{Numerical examples of scaling-rotation means}
\label{sec:Appendix_numerical_examples}
In this subsection, we provide an example where the SR means are equivalent to the PSR means, and an example where they are not. Consider a random variable $X \in \textrm{Sym}^+(2)$,
\begin{equation}\label{eq:dim2_spd_matrix_model}
 X = R(\theta) {\rm diag}(\exp(D_1),\exp(D_2)) R(\theta)^T,
\end{equation}
where $\theta$ follows the normal distribution with mean 0, standard deviation $\sigma_\theta$, truncated to lie in $(-\pi, \pi)$,
and independently $(D_1,D_2)$  follow a normal distribution with mean $(\mu_1,\mu_2)$ and covariance matrix $\sigma_D^2 I_2$. From this model, we generate two samples of size $n = 200$ with different choices of model parameters.

For each sample, a PSR mean, denoted $\hat{m}^{\mathcal{PSR}}$, is computed using the algorithm of Section~\ref{sec:3.3}, and we also numerically compute the minimizer of $\fnsr$ over $S_p^{\rm lwr}$, and denote it by $\hat{M}^{\mathcal{SR}}_{\rm lwr}$. Throughout we set $k = 1$.
By Theorem~\ref{thm:how_to_tell_whether_PSR=SR},
 if
 $\fnsr(\mathcal{F}(\hat{m}^{\mathcal{PSR}})) \le  \fnsr(\hat{M}^{\mathcal{SR}}_{\rm lwr})$,
$\mathcal{F}(\hat{m}^{\mathcal{PSR}})$ is an SR mean, and otherwise $\hat{M}^{\mathcal{SR}}_{\rm lwr}$ is a SR mean.

\begin{itemize}
  \item Case I: Set $\sigma_\theta = \pi/12$, $(\mu_1,\mu_2) = (2,0)$ and $\sigma_D = 0.2$. See the left panels of Fig.~\ref{fig:SRvsPSR}.
  \item Case II:  Set $\sigma_\theta = \pi/3$, $(\mu_1,\mu_2) = (1,0)$ and $\sigma_D = 0.2$. See the right panels of Fig.~\ref{fig:SRvsPSR}.
\end{itemize}

For Case I, the sample are relatively far from the lower stratum $S_2^{\rm lwr} =\{cI_2: c>0\}$ (shown as the green axis in the top row of Fig.~\ref{fig:SRvsPSR}). In this particular instance, $82 \approx \fnsr(\mathcal{F}(\hat{m}^{\mathcal{PSR}})) <  \fnsr(\hat{M}^{\mathcal{SR}}_{\rm lwr}) \approx   458$, and $\mathcal{F}(\hat{m}^{\mathcal{PSR}})$ is an SR mean.

\begin{figure}[pt]
\centering
\includegraphics[width = 1\textwidth]{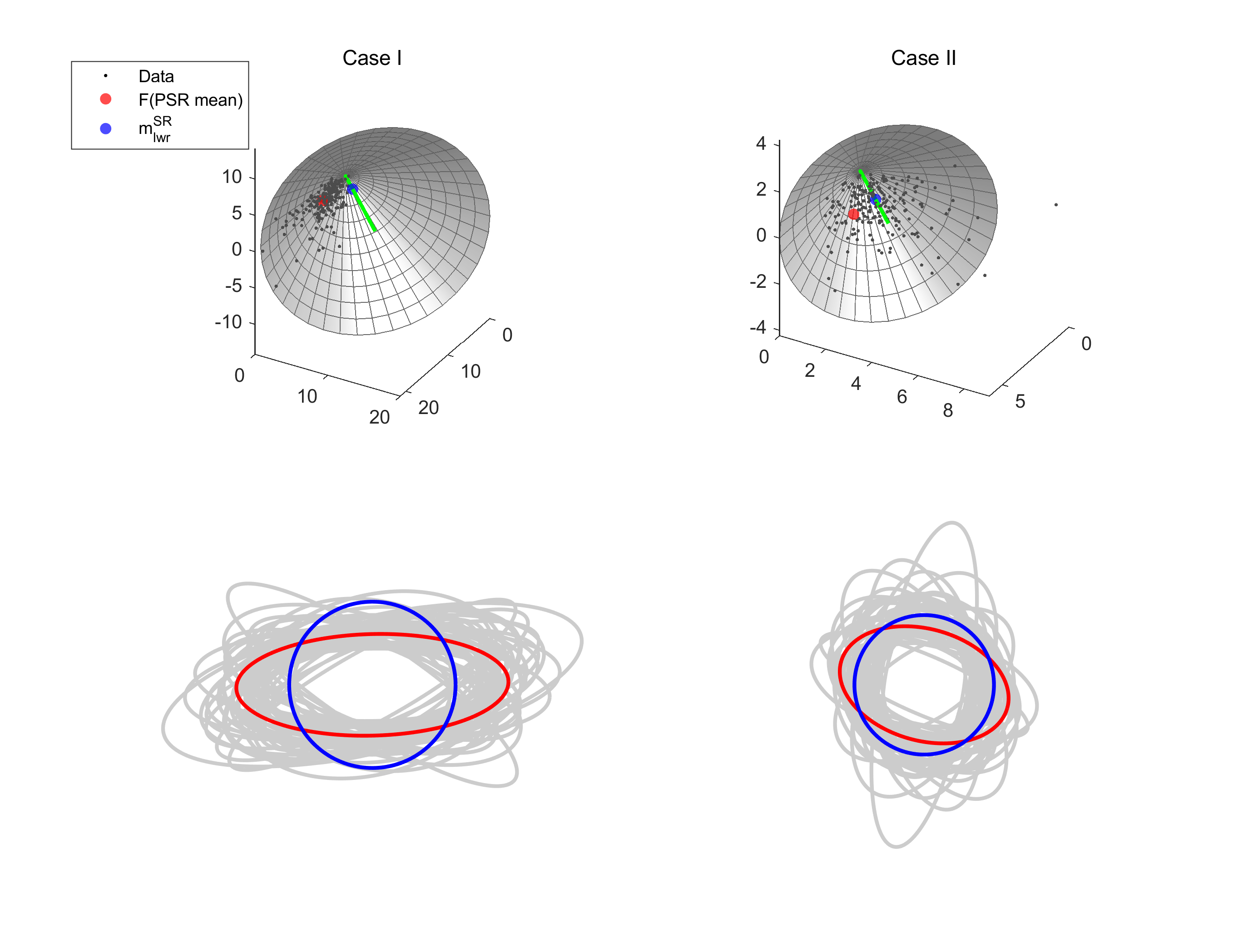}
\vspace{-.5in}
\caption{Data on ${\rm Sym}^+(2)$ shown in the cone of ${\rm Sym}^+(2)$ (top row) and as ellipses (bottom row), compared with their PSR means $\mathcal{F}(\hat{m}^{\mathcal{PSR}})$ (red) and $\hat{M}^{\mathcal{SR}}_{\rm lwr}$ (blue). Left panel shows data from Case I, where $\mathcal{F}(\hat{m}^{\mathcal{PSR}})$ is indeed an SR mean; Right panel shows data from Case II, where $\hat{M}^{\mathcal{SR}}_{\rm lwr}$ is an SR mean. See Section~\ref{sec:Appendix_numerical_examples} for details.
}\label{fig:SRvsPSR}
\end{figure}

For Case II, $ 196  \approx \fnsr(\mathcal{F}(\hat{m}^{\mathcal{PSR}})) >  \fnsr(\hat{M}^{\mathcal{SR}}_{\rm lwr}) \approx  173$, and
$\hat{M}^{\mathcal{SR}}_{\rm lwr}$ is an SR mean.

\subsection{Comparison with other geometric frameworks%
}\label{sec:simu}

In analyzing SPD matrices, the scaling-rotation (SR) framework has an advantage in interpretation as it allows describing the changes of SPD matrices in terms of rotation and scaling of the corresponding ellipsoids. For example, it is shown in  \cite{Jung2015} that only the SR framework yields interpolation curves which consist of pure rotation when the endpoints differ only by rotation, when compared to the commonly used log-Euclidean \citep{Arsigny2007} and affine-invariant \citep{Fletcher2004,Pennec2006} interpolation curves.

In this subsection, we illustrate situations under which averaging via the scaling-rotation framework has similar interpretive advantages over the affine-invariant mean. %
The affine-invariant (AI) mean $\bar{X}^{\rm (AI)}$ for a sample of SPD matrices $X_1,\ldots,X_n\in \Symp$ is the sample Fr\'{e}chet mean with respect to the affine-invariant metric $d_{AI}$:
$$
\bar{X}^{\rm (AI)} = \argmin_{M \in \Symp} \sum_{i=1}^n d_{AI}^2(M, X_i),$$
where $d_{AI}(X,Y) = \|{\rm Log}(X^{-1/2}YX^{-1/2})\|_F$. The AI mean exists and is unique  for any finite sample  \citep{Pennec2006}.

In numerical experiments, we randomly generated SPD matrices from the model (\ref{eq:dim2_spd_matrix_model}) with parameters set as in Case I but with $\sigma_\theta = \pi/6$. A sample of size $n = 200$ is plotted in Fig. \ref{fig:Sec2Comparison_110}. There, we have used two different types of ``linearizations'' of ${\rm Sym}^+(2)$, as explained below.

\begin{figure}[t]
\centering
\includegraphics[width = 0.95\textwidth]{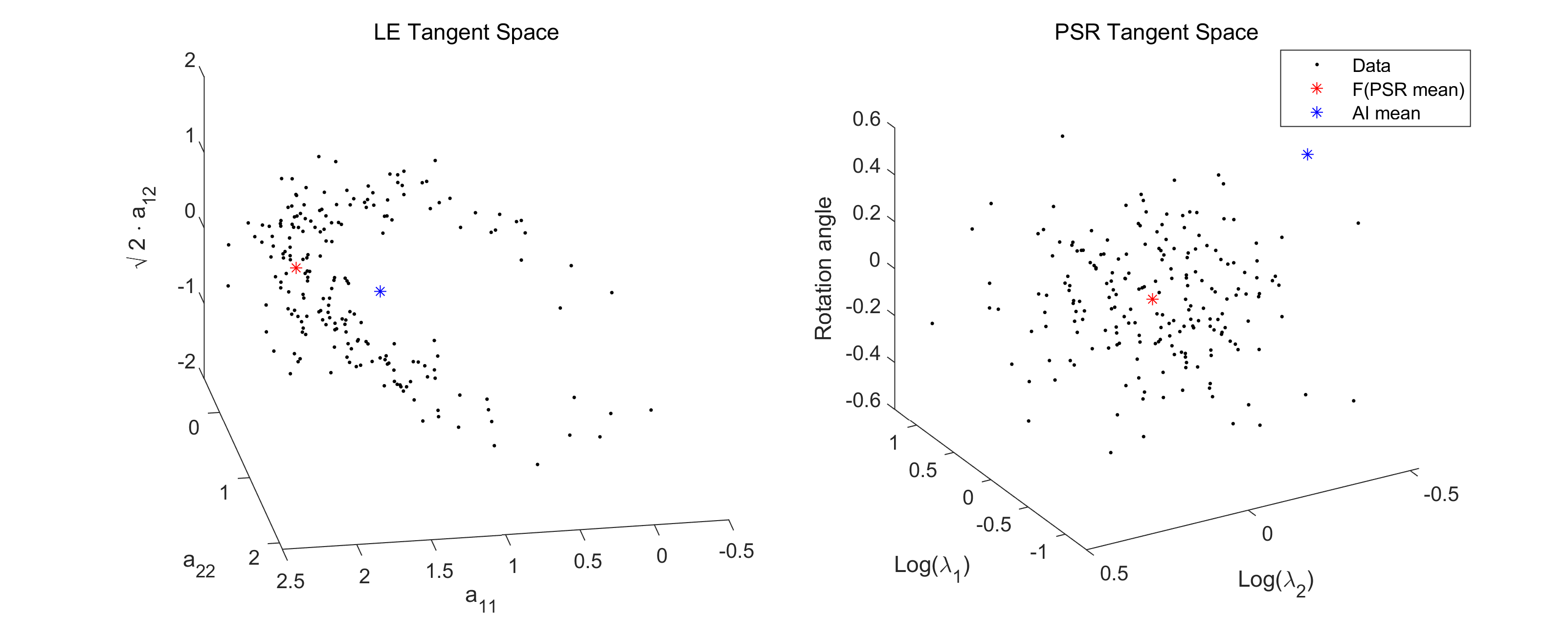}
\caption{A sample of SPD matrices (sampled from the model (\ref{eq:dim2_spd_matrix_model})) shown in the Log-Euclidean (LE) coordinates (left) and the PSR coordinates (right), overlaid with the PSR mean and AI mean. For this data set, PSR mean appears to be a better representative for the data, while the AI mean does not lie in the data-dense region.
See Section~\ref{sec:simu} for details. }\label{fig:Sec2Comparison_110}
\end{figure}

The \emph{Log-Euclidean coordinates} on ${\rm Sym}^+(2)$ are given by the three free parameters $y_{11}$, $y_{22}$ and $\sqrt{2}y_{12}$ of $Y = (y_{ij}) = {\rm Log}(X) \in {\rm Sym}(2)$. Write ${\rm vecd}(Y):= (y_{11}, y_{22}, \sqrt{2}y_{12})^T \in \Real^3$ . These coordinates are chosen so that for any two vectors $({\rm vecd}(X),{\rm vecd}(Y)) = (x,y)$, the usual inner product $\langle x,y \rangle = x^Ty$ corresponds to the Riemannian metric when $X, Y \in {\rm Sym}(2)$ are viewed as tangent vectors in the affine-invariant framework.  The left panel of  Fig. \ref{fig:Sec2Comparison_110} plots the data on the Log-Euclidean coordinates.

The \emph{PSR coordinates}, used in the right panel of the figure for the same data, come from the coordinates defined on a tangent space of the eigen-decomposition space $M(p)$. More precisely, given a reference point $(U,D) \in M(p)$, we use the local chart $(\mathcal{U}_{(U,D)}, \phi_{(U,D)})$ defined in (\ref{eq:local_chart}), followed by the vectorization via ${\rm vec}$ (see (\ref{eq:vectorization})), to determine a coordinate system.
To illustrate this concretely, let $p = 2$. Then
$\tilde\varphi_{(U,D)}(V,\Lambda) = ({\rm Log}(VU^T), {\rm Log}(\Lambda D^{-1}))=: (A,L) \in  \mathfrak{so}(p) \oplus {\rm Diag}(p)$.
 The first coordinate of $x_{V,\Lambda} := {\rm vec}(\phi_{(U,D)}(V,\Lambda))  \in \Real^3$ is the free parameter $a_{21}$ of $A$ (multiplied by the scale parameter $\sqrt{k}$), and corresponds to the rotation angle of $VU^T$ in radians (scaled by $\sqrt{k}$). The second and last coordinates of $x_{V,\Lambda}$ are the diagonal entries of $L$.
 Multiplying by $\sqrt{k}$ as above affords us the convenience that for any two $x,y$, the usual inner product $\langle x,y \rangle = x^T y$ corresponds to the Riemannian metric $g_M$ we have assumed on the tangent spaces of $M(p)$.

When representing SPD-valued data $X_1,\ldots,X_n \in {\rm Sym}^+(2)$ in PSR coordinates, we choose the reference point $(U,D)$  to be an arbitrarily chosen  PSR mean $\hat{m}^{\mathcal PSR}$ of the data. Care is needed since there are multiple eigen-decompositions corresponding to each observation $X_i$. For each $X_i$, an eigen-decomposition $m_i \in \mathcal{F}^{-1}(X_i) \subset M(p)$ is chosen so that $m_i$ has the smallest geodesic distance from $\hat{m}^{\mathcal PSR}$ among all elements of $\mathcal{F}^{-1}(X_i)$.
The right panel of  Fig. \ref{fig:Sec2Comparison_110} plots the same data as in the left panel, but in these PSR coordinates.

The AI mean and a PSR mean for this data set  are also plotted in Fig. \ref{fig:Sec2Comparison_110}. It can be seen that major modes of variation in the data are well described in terms of rotation angles and scaling, while the variation appears to be highly non-linear in  Log-Euclidean (LE) coordinates. As one might expect from this non-linearity, we observe that the AI mean is located far from the data, while the PSR mean appears to be a better representative of the data.

In the opposite direction, we also considered a data set sampled from an SPD-matrix log-normal distribution  \citep{Schwartzman2016}. Note that the SPD-matrix log-normal distributions on ${\rm Sym}^+(p)$ correspond to a multivariate normal distribution in {Log-Euclidean coordinates}. The data and their AI and PSR means are plotted in Fig. \ref{fig:Sec2Comparison_210}.
While the AI mean  is well approximated by the average of data in LE coordinates, the PSR mean (in LE coordinates) is also not far from this average.
Similarly, the PSR mean is approximately the average in PSR coordinates, and the AI mean is also not far.
Therefore, we may conclude that using the SR framework and PSR means is beneficial especially if variability in the sample (or in a population) is pronounced in terms of rotations, while the cost of using the SR framework is small for the log-normal case.

\begin{figure}[t]
\centering
\includegraphics[width = 0.95\textwidth]{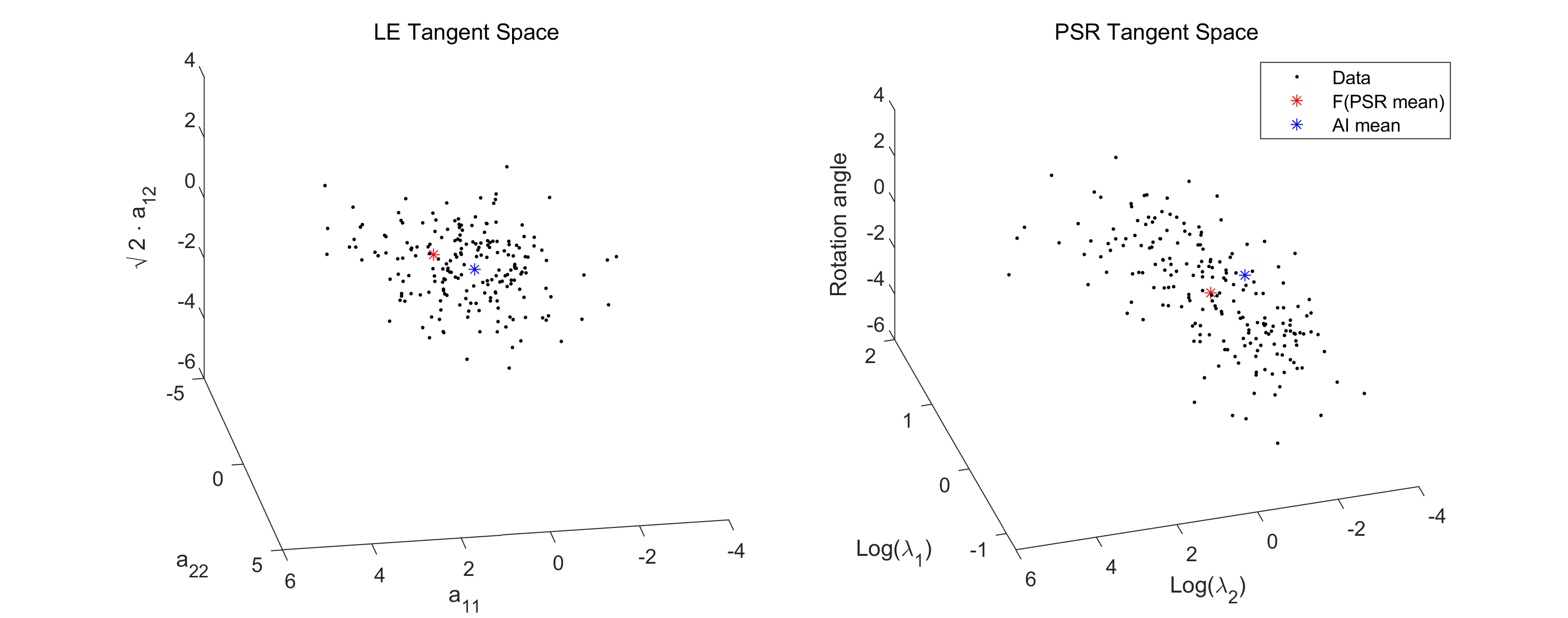}
\caption{A sample of SPD matrices shown in the Log-Euclidean (LE) coordinates (left) and the PSR coordinates (right), overlaid with the PSR mean and AI mean. For this data set, PSR mean appears to be a better representative for the data, while the AI mean does not lie in the data-dense region.
See Section~\ref{sec:simu} for details. }\label{fig:Sec2Comparison_210}
\end{figure}

\subsection{An application to multivariate tensor-based morphometry}\label{sec:tbm}
In \cite{Paquette2017}, the authors compared the lateral ventricular structure in the brains of 17 pre-term and 19 full-term infant children. After an MRI scan of a subject's brain was obtained and processed through an image processing pipeline, the shape data collected at 102816 vertices on the surfaces of their left and right ventricles were mapped onto the left and right ventricles of a template brain image, after which the $2 \times 2$ Jacobian matrix $J$ from that surface registration transformation was computed at each vertex for each subject. The deformation tensor $X=(J^TJ)^{1/2} \in {\rm Sym}^+(2)$ was then computed at each vertex for each subject. To summarize the structure of the data, there are 102816 vertices along the surfaces of the template ventricles, and at each vertex there are deformation tensors ($2 \times 2$ SPD matrices) from $n_1= 17$ pre-term and $n_2 = 19$ full-term infants. We will call these group 1 and group 2, respectively.

One way that the authors tested for differences in ventricular shape between the two groups was by performing two-sample location tests at each vertex via use of the log-Euclidean version of Hotelling's $T^2$ test statistic introduced in \cite{Lepore2008}. The log-Euclidean (LE) version of the $T^2$ test statistic is the squared Mahalanobis distance between the full-term and pre-term log-Euclidean sample means on the LE coordinates (defined in Section~\ref{sec:simu}). %

Similarly, one could also measure separation between groups by comparing their respective PSR means in PSR coordinates. For this two-group context, the reference point for the PSR coordinates is given by a PSR mean computed from pooled sample (with sample size $n_1+n_2$).

We have chosen vertex 75412 as an example to illustrate a scenario in which two groups have little separation in the LE coordinates but are well-separated in the PSR coordinates.
In the top row of \autoref{fig:TBM_tangent_space_obs}, tensors from the two groups as well as the group-wise LE and PSR means are plotted in their respective coordinates.
There is little visible separation between the two groups in the LE coordinates, while there is near-total separation in the PSR coordinates.

\begin{figure}[t]
\centering
\includegraphics[width = 1\textwidth, trim = 10 350 0 10]{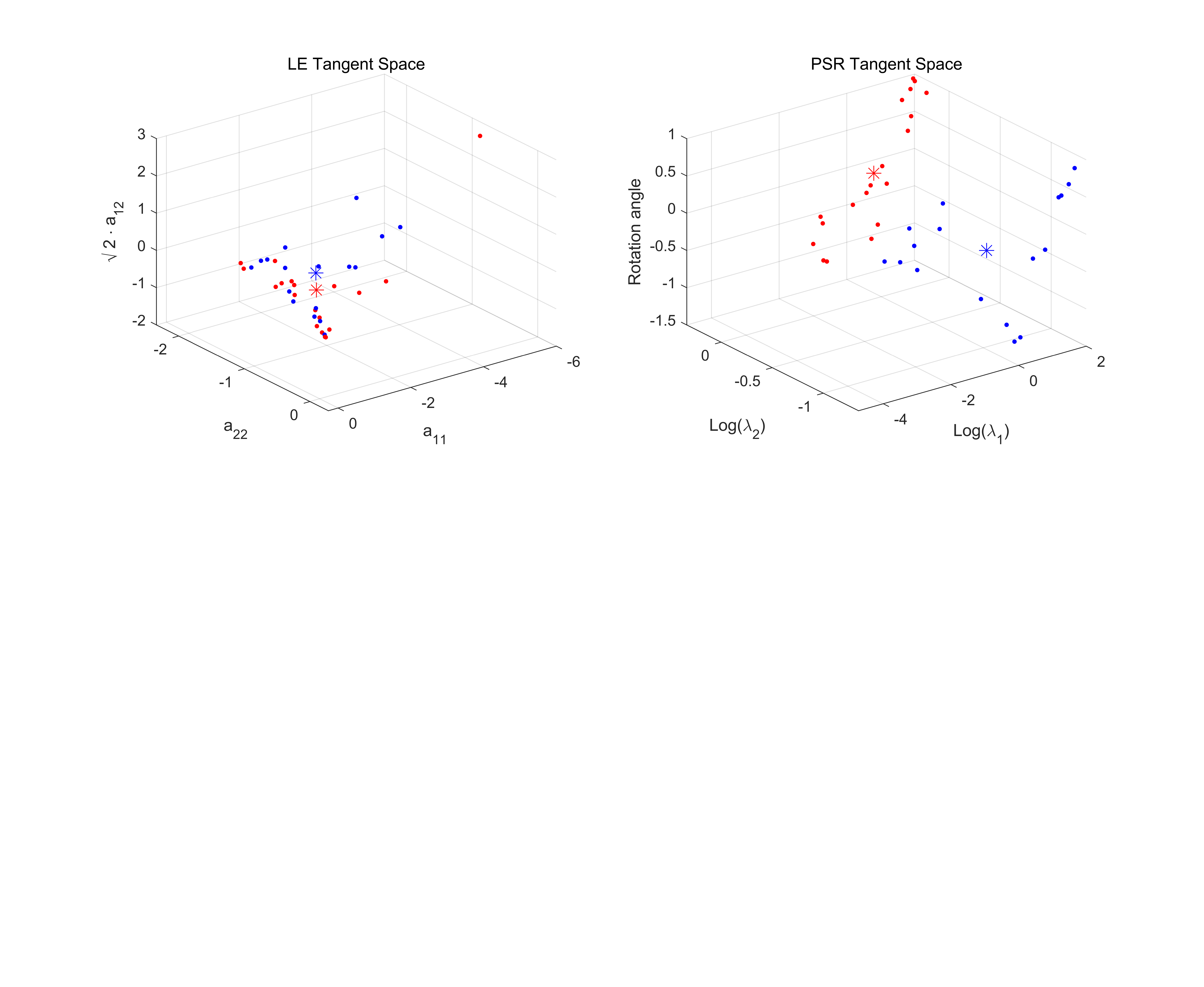}
\includegraphics[width = 1\textwidth, trim = 10 350 0 10]{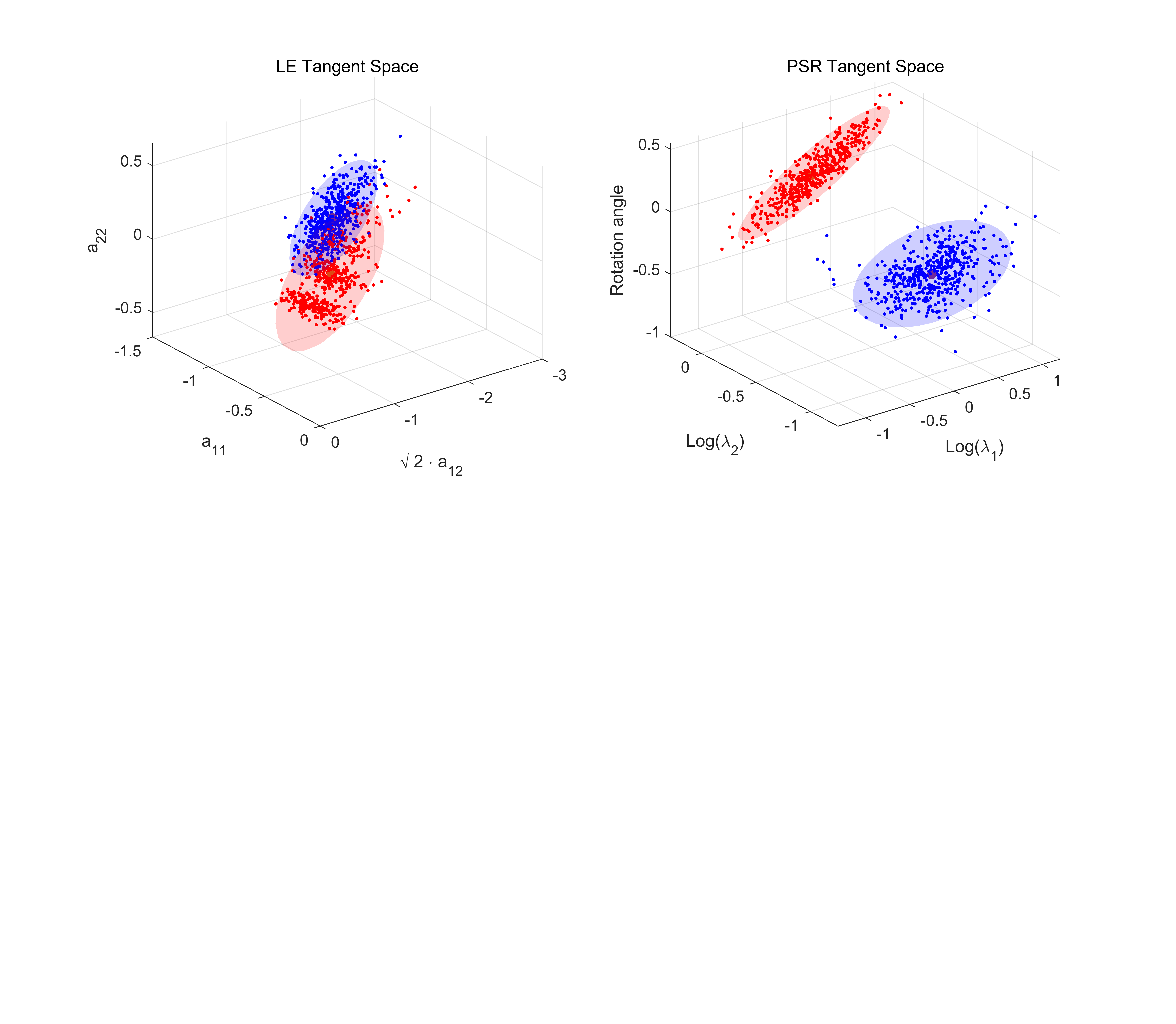}
\caption{Real data example. (Top row) Observations corresponding to Group 1 (and Group 2) are shown in blue (and red, respectively) dots.  The group-wise LE and PSR means are shown as the asterisks. (Bottom row) Bootstrap replications of the LE and PSR means (left and right panels, respectively) for each group, with the 95\% approximate confidence regions shown as transparent ellipsoids. See Section~\ref{sec:tbm} for details.
}\label{fig:TBM_tangent_space_obs}\label{fig:TBM_bootstrap_means}
\end{figure}

To visualize the sampling distributions of the group-wise means under the log-Euclidean and scaling-rotation frameworks, we computed 500 bootstrap sample means for each group, under both geometric frameworks. These 
are plotted in their respective tangent spaces in the bottom row of \autoref{fig:TBM_bootstrap_means}. (See also \autoref{fig:Bootstrap_PSR_Mean_QQ_plots} in Appendix \ref{sec:app_QQplot}, in which one can see that the (bootstrap) sampling distributions of the group-wise PSR means are approximately normal.)
The nonparametric bootstrap provides an estimate of standard errors of the sample means, from which (bootstrap-approximated) parametric 95\% confidence regions are obtained. For this, we assume normality, as suggested by the central limit theorem, Theorem~\ref{thm:PSR_mean_CLT}, and obtain an approximate 95\% confidence region given by $\{x \in \mathbb{R}^3: x \hat{\Sigma}^{-1} x^T \le \chi^2_{0.05,2}\}$, for each sample mean. Here, $\hat\Sigma$ is the sample covariance matrix of the bootstrap (group-wise LE or PSR) sample means, and $\chi^2_{0.05,2}$ is the 95\% quantile of the $\chi^2_2$ distribution. The resulting confidence regions are overlaid in the bottom row of  \autoref{fig:TBM_bootstrap_means} as well.
 As in the top row,
 there is considerable overlap between the LE confidence regions, while there is complete separation between the two confidence regions for the group-wise PSR sample means, especially along the direction of rotation angles. This example suggests that the scaling-rotation framework may be better at detecting group differences than other frameworks when most of the variability between the groups is due to rotation.

\section{Discussion}
We have presented the first statistical estimation methods for ${\rm Sym}^+(p)$ based on the scaling-rotation framework of \cite{Jung2015}. These estimation methods are intended to set the foundation for the development of scaling-rotation-framework-based statistical methods, such as testing the equality of two or more PSR means, testing for a variety of eigenvalue and eigenvector patterns of SPD matrices, and an analogue of principal component analysis for SPD-valued data.
The scaling-rotation framework should also be particularly useful for diffusion tensor processing since the eigenvectors and eigenvalues of a diffusion tensor model the principal directions and intensities of water diffusion at a given voxel, and are thus the primary objects of interest.

We recommend using the scaling-rotation estimation procedure presented here for $p=2,3$, since the number of eigen-decompositions of an SPD matrix from $S_p^{\rm top}$ grows rapidly with $p$.
One avenue for interesting future work is to develop computational procedures for higher $p$. Another avenue is to develop a proper two-sample or multi-sample testing framework, and dimension-reduction and regression methods using the eigen-decomposition spaces, and to establish asymptotic and non-asymptotic properties of these statistical methods, reflecting the structure of ${\rm Sym}^+(p)$ as a stratified space under eigen-decomposition. 

\begin{appendix}


\section{Discontinuity of $\dsr$}\label{sec:discontinuityofdsr}
While the scaling-rotation ``distance'' function $\dsr: \Symp \times \Symp \to [0,\infty)$ is continuous when restricted to $S_p^{\rm top} \times S_p^{\rm top}$, it is not so in general. Even the one-variable function $\dsr(\cdot, S)$, with a fixed $S \in S_p^{\rm top}$, has many discontinuities in lower strata. While it may be of interest to characterize the set of discontinuity, here we provide just an example.
For $0 < \theta < \theta' < \pi/4$, and $\lambda > 1$, let $S = R(\theta') {\rm diag}(e^\lambda,e^{-\lambda})R(\theta')^T$, where $R(\theta)$ is the $2\times 2$ rotation matrix corresponding to the counterclockwise rotation by angle $\theta$. For $n = 1,2,\ldots$, let $S_n = R(\theta) \textrm{diag}(e^{1/n}, e^{-1/n})R(\theta)^T$. For every $n$, it can be checked that
$(R(\theta'), {\rm diag}(e^\lambda,e^{-\lambda})) \in \mathcal{F}^{-1}(S)$ and
$(R(\theta), \textrm{diag}(e^{1/n}, e^{-1/n})) \in \mathcal{F}^{-1}(S_n)$ form a minimal pair, which implies that $\dsr(S_n,S)^2 = k(\theta'-\theta)^2 + 2(\lambda - \tfrac{1}{n})^2$. On the other hand, $\lim_{n\to\infty}S_n = I$ and $\dsr(I,S)^2 = 2\lambda^2$.
Thus,
$$
\lim_{n\to\infty} \dsr(S_n,S) = \{k(\theta'-\theta)^2 + 2\lambda^2\}^{1/2}  > 2\lambda^2 = \dsr(\lim_{n\to\infty}S_n,S),$$
and the function $\dsr(\cdot,S)$ is not continuous at $I$.

\section{Technical details, additional lemmas and proofs}\label{sec:proofs_in_appendix}

\subsection{Proofs for Section \ref{sec:3}}
\subsubsection{Proof of Theorem~\ref{thm:SRvsPSRequivalence}}

\begin{proof}[Proof of Theorem~\ref{thm:SRvsPSRequivalence}]


Let $Y \in E_n^{(\mathcal{SR})} \cap S_p^{\rm top}$, let $\tilde{Y} \in \mathcal{F}^{-1}(Y)$
 be an arbitrary eigen-decomposition of $Y$, let $\tilde{Z}=(U,D)\in M(p)$ be arbitrary, and let
 $Z=\mathcal{F}(\tilde{Z})$.
Since $Y$ has no repeated eigenvalues, it follows from \eqref{SR_distance_simplification} and \eqref{SR_PSR_ineq} that

%
\begin{equation}\label{SR_PSR_var_ineq}
\sum_{i=1}^n d^2_{\mathcal{PSR}}(X_i,\tilde{Y})
= \sum_{i=1}^n d^2_{\mathcal{SR}}(X_i,Y) \le \sum_{i=1}^n d^2_{\mathcal{SR}}(X_i,Z)
\le \sum_{i=1}^n d^2_{\mathcal{PSR}}(X_i,\tilde{Z}),
\end{equation}
implying that $\tilde{Y}\in E_n^{(\mathcal{PSR})}$.  Since the case where $E_n^{(\mathcal{SR})} \cap S_p^{\rm top} = \emptyset$ is trivial, we have shown (a).

 (b)  Suppose now
$\tilde{Z}\in E_n^{(\mathcal{PSR})}$. Then the first and fourth sums in
\eqref{SR_PSR_var_ineq} must be equal, so the two  inequalities must be equalities.
In particular, the second and third sums are equal, so $Z\in E_n^{(\mathcal{SR})}$
and hence $\tilde{Z}\in \mathcal{F}^{-1}(E_n^{(\mathcal{SR})})$.
 This shows $\mathcal{F}^{-1}(E_n^{(\mathcal{SR})}) \supset E_n^{(\mathcal{PSR})}$, which immediately implies
$\mathcal{F}(E_n^{(\mathcal{PSR})}) \subset E_n^{(\mathcal{SR})}$.

(c)  Assume that $E_n^{(\mathcal{SR})} \cap S_p^{\rm top} \neq \emptyset$ and $E_n^{(\mathcal{PSR})} \subset M^{\rm top}(p)$.
Then, using (b) and (a),
\begin{eqnarray}
\label{9.1prime-1}
E_n^{(\mathcal{PSR})} = E_n^{(\mathcal{PSR})} \cap  M^{\rm top}(p)
&\subset &\mathcal{F}^{-1}(E_n^{(\mathcal{SR})}) \cap  M^{\rm top}(p)
\\
\nonumber
&=& \mathcal{F}^{-1}(E_n^{(\mathcal{SR})}) \cap  \ \mathcal{F}^{-1}(S_p^{\rm top})\\
\nonumber
 &=&       \mathcal{F}^{-1} (E_n^{(\mathcal{SR})} \cap S_p^{\rm top}  )\\
&\subset& E_n^{(\mathcal{PSR})}.
\label{9.1prime-2}
\end{eqnarray}
Hence the inclusions in \eqref{9.1prime-1} and\eqref{9.1prime-2} are equalities.

Finally, note that  (\ref{SR_PSR_var_ineq}) holds with the finite summation replaced by the integration with respect to the probability measure $P$, provided that $\fpsr(U,D) < \infty$
is finite for any $(U,D) \in M(p)$,
which also guarantees that
$\fsr(Z) < \infty$
  for any $Z \in \Symp$. Since these conditions are satisfied by Lemma~\ref{lem:finite_variance}, the statements (a)--(c) hold for $\Ensr$ and $\Enpsr$ replaced by $\Esr$ and $\Epsr$, respectively.
\end{proof}

\subsubsection{Proof of Theorem~\ref{thm:how_to_tell_whether_PSR=SR}}
\begin{proof}[Proof of Theorem~\ref{thm:how_to_tell_whether_PSR=SR}] We give a proof for (a) and (b). Assertions (c) and (d) can be  verified similarly.

For (a), consider the case where  the inequality is strict, i.e.,    $\fsr(\mathcal{F}(m^{\mathcal{PSR}})) < \min_{\Sigma \in S_p^{\rm lwr}} \fsr(\Sigma)$. Then no scaling-rotation mean can lie in $S_p^{\rm lwr}$, but since scaling-rotation means always exist, $\Esr \subset S_p^{\rm top}$. By Theorem~\ref{thm:SRvsPSRequivalence}, $\mathcal{F}(m^{\mathcal{PSR}}) \in \Esr$. Now consider the situation where
\begin{equation}\label{eq:condition_a_equal_PSR=SR=low}
\fsr(\mathcal{F}(m^{\mathcal{PSR}})) = \min_{\Sigma \in S_p^{\rm lwr}} \fsr(\Sigma).
\end{equation}
Suppose that no scaling-rotation mean lies in $S_p^{\rm lwr}$. Then $\Esr \subset S_p^{\rm top}$ and, by Theorem~\ref{thm:SRvsPSRequivalence}, $\mathcal{F}(m^{\mathcal{PSR}}) \in \Esr$. Since these arguments and (\ref{eq:condition_a_equal_PSR=SR=low}) contradict, there must be a scaling-rotation mean in $S_p^{\rm lwr}$. Moreover, by (\ref{eq:condition_a_equal_PSR=SR=low}), $\mathcal{F}(m^{\mathcal{PSR}}) \in \Esr$ as well.

The hypothesis of (b) leads that $\mathcal{F}(m^{\mathcal{PSR}}) \notin \Esr$. By the inverse of Theorem~\ref{thm:SRvsPSRequivalence}(b), $\Esr \cap S_p^{\rm top} = \emptyset$.
\end{proof}

\subsubsection{Proof of Theorem~\ref{thm:avoid_low_strata_SR}}

We need several technical lemmas.
For $(U,D) \in M(p)$, define
$$\tilde{\delta}(U,D) = \inf\{d_M( (U,D), (V, \Lambda)) : (V, \Lambda) \in M(p) \setminus M^{\rm top}(p)\},$$
where the infimum can be replaced by minimum. The minimum is achieved since $M(p) \setminus M^{\rm top}(p)$ is closed in $M(p)$, and as a finite-dimensional manifold, $M(p)$ is locally compact.

\begin{lemma}\label{lem:13.1}
For $(U,D) \in M(p)$, $\tilde{\delta}(U,D) = \min\{d_{\mathcal{D}^+}(D, \Lambda): \Lambda \in {\rm Diag}^+(p) \setminus D_+^{\rm top}(p) \}$, where $D_+^{\rm top}(p)$ is the subset of ${\rm Diag}^+(p)$ consisting of distinct diagonal entries.
\end{lemma}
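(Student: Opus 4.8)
The plan is to exploit the product structure of $d_M$ together with the fact that membership of a pair $(V,\Lambda)$ in $M(p)\setminus M^{\rm top}(p)$ imposes a condition on $\Lambda$ alone, with no restriction on $V$. Indeed, by the definition $M^{\rm top}(p)=\mathcal{F}^{-1}(S_p^{\rm top})$ and the fact that the eigenvalues of $V\Lambda V^T$ are exactly the diagonal entries of $\Lambda$, we have $(V,\Lambda)\in M^{\rm top}(p)$ iff $\Lambda$ has pairwise distinct diagonal entries, i.e.\ $\Lambda\in D_+^{\rm top}(p)$. Hence $(V,\Lambda)\in M(p)\setminus M^{\rm top}(p)$ iff $\Lambda\in {\rm Diag}^+(p)\setminus D_+^{\rm top}(p)$, with $V\in SO(p)$ arbitrary; that is, $M(p)\setminus M^{\rm top}(p)=SO(p)\times\big({\rm Diag}^+(p)\setminus D_+^{\rm top}(p)\big)$.

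First I would substitute the splitting $d_M^2((U,D),(V,\Lambda))=k\,d_{SO}^2(U,V)+d_{\mathcal{D}^+}^2(D,\Lambda)$ into the definition of $\tilde\delta(U,D)$. Since the two variables $V$ and $\Lambda$ range independently over $SO(p)$ and over ${\rm Diag}^+(p)\setminus D_+^{\rm top}(p)$ respectively, the infimum of the sum is the sum of the infima. The infimum of $k\,d_{SO}^2(U,V)$ over $V\in SO(p)$ is $0$, attained at $V=U$. This yields $\tilde\delta(U,D)^2=\inf\{d_{\mathcal{D}^+}^2(D,\Lambda):\Lambda\in{\rm Diag}^+(p)\setminus D_+^{\rm top}(p)\}$.

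Next I would observe that this last infimum is a minimum. One way is simply to invoke the local compactness of $M(p)$ and the closedness of $M(p)\setminus M^{\rm top}(p)$, already noted in the paragraph preceding the lemma, which shows $\tilde\delta(U,D)$ itself is a minimum and hence, by the previous step, is realized by a pair $(U,\Lambda^\ast)$; alternatively, ${\rm Diag}^+(p)\setminus D_+^{\rm top}(p)$ is closed in ${\rm Diag}^+(p)$ (its complement $D_+^{\rm top}(p)$ being open), and under the isometry $\Lambda\mapsto{\rm Log}(\Lambda)$ from $({\rm Diag}^+(p),d_{\mathcal{D}^+})$ onto Euclidean ${\rm Diag}(p)$ this set becomes a finite union of hyperplanes, whose distance from the point ${\rm Log}(D)$ is attained. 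Either way, $\inf\{d_{\mathcal{D}^+}^2(D,\Lambda):\cdots\}=\min\{d_{\mathcal{D}^+}(D,\Lambda):\cdots\}^2$, and taking square roots gives the asserted formula.

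There is no real obstacle here: the content is entirely the reduction ``only $\Lambda$ matters,'' which is immediate from the definition of $M^{\rm top}(p)$, combined with the product form of $d_M$; the only points requiring a line of care are that the $V$-infimum vanishes and that the $\Lambda$-infimum is attained, both of which are routine.
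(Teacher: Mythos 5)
Your proof is correct and follows essentially the same route as the paper's: the paper's two inequalities (dropping the $k\,d_{SO}^2$ term for ``$\ge$'' and restricting to pairs $(U,\Lambda)$ with $\Lambda\in{\rm Diag}^+(p)\setminus D_+^{\rm top}(p)$ for ``$\le$'') are exactly your observation that the infimum over the product set $SO(p)\times\big({\rm Diag}^+(p)\setminus D_+^{\rm top}(p)\big)$ of the sum splits, with the $SO(p)$-infimum vanishing at $V=U$. Your attainment argument via the ${\rm Log}$ isometry onto Euclidean ${\rm Diag}(p)$ is a slightly more explicit version of the paper's appeal to closedness of $D_+^{\rm lwr}(p)$ in the (proper) metric space $({\rm Diag}^+(p),d_{\mathcal{D}^+})$, and is fine.
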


\begin{proof}
Write $M^{\lwr}(p) := M(p) \setminus M^{\rm top}(p)$ and $ D_+^{\rm lwr}(p) := {\rm Diag}^+(p) \setminus D_+^{\rm top}(p)$.
Clearly $ \inf\{d_M( (U,D), (V, \Lambda)) : (V, \Lambda) \in M^{\lwr}(p)\} \ge \inf\{d_{\mathcal{D}^+}(D, \Lambda): \Lambda \in D_+^{\rm lwr}(p) \}$. Conversely, if $\Lambda \in D_+^{\rm lwr}(p)$, then $(U,\Lambda) \in M^{\lwr}(p)$. So,
\begin{align*}
  \inf\{d_M( (U,D), (V, \Lambda)) : (V, \Lambda) \in M^{\lwr}(p))\} & \le
    \inf\{d_M( (U,D), (U, \Lambda)) :   \Lambda \in D_+^{\lwr}(p)\}   \\
   &  = \inf\{d_{\mathcal{D}^+}( D,  \Lambda) :   \Lambda \in D_+^{\lwr}(p)\} \\
   &  = \min\{d_{\mathcal{D}^+}( D,  \Lambda) :   \Lambda \in D_+^{\lwr}(p)\},
\end{align*}
where the minimum is achieved since $D_+^{\lwr}(p)$ is a closed subset of the locally compact metric space $({\rm Diag}^+(p), d_{\mathcal{D}^+})$.
\end{proof}

\begin{lemma}\label{lem:13.2}
The function
$\tilde\delta(U,D)$
is constant on fibers of $\mathcal{F}$. That is, for each $S \in \Symp$, the value of $\tilde{\delta}(U,D)$ is independent of the choice of $(U,D)\in F^{-1}(S)$.
%
\end{lemma}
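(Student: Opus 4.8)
\textbf{Proof proposal for Lemma~\ref{lem:13.2}.}
The plan is to reduce the statement to Lemma~\ref{lem:13.1}, which already does most of the work: it shows that $\tilde\delta(U,D)$ does not depend on $U$ at all, since
$\tilde\delta(U,D)=\min\{d_{\mathcal{D}^+}(D,\Lambda):\Lambda\in D_+^{\rm lwr}(p)\}$, where $D_+^{\rm lwr}(p):=\Diagp\setminus D_+^{\rm top}(p)$. Consequently it suffices to show that this quantity is unchanged when $D$ is replaced by the eigenvalue matrix $D'$ of any other eigen-decomposition $(U',D')\in\mathcal{F}^{-1}(S)$.

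First I would invoke the fiber characterization \eqref{eq:fiber}: fixing $(U,D)\in\mathcal{F}^{-1}(S)$, every element of $\mathcal{F}^{-1}(S)$ is of the form $h\cdot(UR,D)=(URh^{-1},\,hDh^{-1})$ with $R\in G_D^0$ and $h\in\mathcal{G}(p)$. Since $R\in G_D^0$ satisfies $RDR^T=D$, the $G_D^0$-factor leaves the eigenvalue matrix untouched, so $D'=hDh^{-1}=P_\pi D P_\pi^T=\pi\cdot D$, where $\pi\in\mathcal{S}_p$ is the permutation underlying the signed-permutation matrix $h$ (the signs are irrelevant when conjugating a diagonal matrix). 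Thus, within a single fiber, the eigenvalue matrix is determined exactly up to a permutation of its diagonal entries.

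Second, I would observe that $\pi\cdot$ acts on $\Diagp$ by permuting diagonal entries, hence acts on $\{\mathrm{Log}(D):D\in\Diagp\}={\rm Diag}(p)$ by the corresponding permutation of coordinates, which is a Euclidean (Frobenius-norm) isometry; therefore $d_{\mathcal{D}^+}(\pi\cdot D_1,\pi\cdot D_2)=d_{\mathcal{D}^+}(D_1,D_2)$ for all $D_1,D_2$. Moreover $\pi\cdot$ maps $D_+^{\rm lwr}(p)$ bijectively onto itself, since possessing a repeated diagonal entry is preserved under permutation. Substituting $\Lambda=\pi\cdot\Lambda'$ and using that $\Lambda'$ ranges over $D_+^{\rm lwr}(p)$ as $\Lambda$ does,
\[
\tilde\delta(U',D')=\min_{\Lambda\in D_+^{\rm lwr}(p)}d_{\mathcal{D}^+}(\pi\cdot D,\Lambda)=\min_{\Lambda'\in D_+^{\rm lwr}(p)}d_{\mathcal{D}^+}(\pi\cdot D,\pi\cdot\Lambda')=\min_{\Lambda'\in D_+^{\rm lwr}(p)}d_{\mathcal{D}^+}(D,\Lambda')=\tilde\delta(U,D).
\]
There is essentially no obstacle here; the only point requiring care is keeping track of the fact (exactly the content of \eqref{eq:fiber}, once one notes the $G_D^0$-factor fixes $D$) that two eigen-decompositions of the same $S$ have eigenvalue matrices differing only by a permutation of diagonal entries, and not by anything coarser. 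Once that is isolated, the invariance of $d_{\mathcal{D}^+}$ and of $D_+^{\rm lwr}(p)$ under coordinate permutations finishes the argument.
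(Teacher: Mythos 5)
Your proof is correct and follows essentially the same route as the paper's: both reduce to Lemma~\ref{lem:13.1} and then use that the eigenvalue matrices of two eigen-decompositions of the same $S$ differ by the $\mathcal{G}(p)$-action (a permutation of diagonal entries), under which both $d_{\mathcal{D}^+}$ and $D_+^{\rm lwr}(p)$ are invariant. Your write-up merely spells out in more detail why $D'=\pi\cdot D$ and why the action is an isometry, which the paper states more briefly.
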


\begin{proof}
Let $S\in \Symp$ and let $(U,D), (U_1,D_1)\in F^{-1}(S).$ Then
  $D_1 = h\cdot D$ for some $h \in \mathcal{G}(p)$. But the set $D_+^{\lwr}(p)$ and the metric $d_{\mathcal{D}^+}$ are invariant under the action of $\mathcal{G}(p)$, defined in Section~\ref{sec:fiber}, so
\begin{align*}
  \{ d_{\mathcal{D}^+} (h \cdot D, \Lambda): \Lambda \in  D_+^{\lwr}(p) \} & =  \{ d_{\mathcal{D}^+} (h \cdot D, h \cdot \Lambda): \Lambda \in D_+^{\lwr}(p) \} \\
    &  =  \{ d_{\mathcal{D}^+} (D, \Lambda): \Lambda \in  D_+^{\lwr}(p) \} .
\end{align*}
Hence by Lemma~\ref{lem:13.1}, $\tilde\delta(U_1,D_1)$ and $\tilde\delta(U,D)$ are the infimum of the same set of real numbers.
\end{proof}

The following lemma shows a relation between $\delta(S)$ and $\tilde{\delta}(U,D)$.
\begin{lemma}\label{lem:13.3}
For any $S \in \Symp$, $\delta(S) = \tilde\delta(U,D)$ for any $(U,D) \in \mathcal{F}^{-1}(S)$.
\end{lemma}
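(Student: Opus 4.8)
\textbf{Proof proposal for Lemma~\ref{lem:13.3}.}
The plan is to establish the two inequalities $\delta(S)\le\tilde\delta(U,D)$ and $\delta(S)\ge\tilde\delta(U,D)$ separately, for an arbitrary fixed $(U,D)\in\mathcal{F}^{-1}(S)$, using Lemmas~\ref{lem:13.1} and~\ref{lem:13.2} together with two elementary observations about the fibered structure: (i) $S'\in S_p^{\rm lwr}$ if and only if every (equivalently, some) eigen-decomposition of $S'$ lies in $M^{\rm lwr}(p):=M(p)\setminus M^{\rm top}(p)$, i.e. $S_p^{\rm lwr}=\mathcal{F}(M^{\rm lwr}(p))$; and (ii) $\dsr$ is by Definition~\ref{def:dsr} an infimum of $d_M$ over pairs of fibers.

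For the upper bound, I would use Lemma~\ref{lem:13.1} to choose $\Lambda_0\in D_+^{\rm lwr}(p)$ attaining $\tilde\delta(U,D)=d_{\mathcal{D}^+}(D,\Lambda_0)$. Then $(U,\Lambda_0)\in M^{\rm lwr}(p)$, so $S':=U\Lambda_0 U^T=\mathcal{F}(U,\Lambda_0)\in S_p^{\rm lwr}$, and since $(U,D)$ and $(U,\Lambda_0)$ are particular eigen-decompositions of $S$ and $S'$ over which the infimum defining $\dsr$ is taken,
\[
\delta(S)\ \le\ \dsr(S,S')\ \le\ d_M\big((U,D),(U,\Lambda_0)\big)\ =\ d_{\mathcal{D}^+}(D,\Lambda_0)\ =\ \tilde\delta(U,D),
\]
the middle equality holding because $d_{SO}(U,U)=0$. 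For the lower bound, take any $S'\in S_p^{\rm lwr}$ and any eigen-decompositions $(U_S,D_S)\in\mathcal{F}^{-1}(S)$, $(U_{S'},D_{S'})\in\mathcal{F}^{-1}(S')$. By (i), $(U_{S'},D_{S'})\in M^{\rm lwr}(p)$, hence $d_M\big((U_S,D_S),(U_{S'},D_{S'})\big)\ge\tilde\delta(U_S,D_S)$ by the definition of $\tilde\delta$, and $\tilde\delta(U_S,D_S)=\tilde\delta(U,D)$ by Lemma~\ref{lem:13.2}. Taking the infimum over all such pairs gives $\dsr(S,S')\ge\tilde\delta(U,D)$, and then the infimum over $S'\in S_p^{\rm lwr}$ gives $\delta(S)\ge\tilde\delta(U,D)$. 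Combining the two inequalities completes the proof.

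I do not anticipate a serious obstacle here: the argument is essentially bookkeeping with the fiber structure of $\mathcal{F}$, and the substantive work (reducing $\tilde\delta$ to a distance computation in $\Diagp$, and showing $\tilde\delta$ is constant on fibers) has already been carried out in Lemmas~\ref{lem:13.1}--\ref{lem:13.2}. The only points that require a little care are keeping track of the fact that $\tilde\delta$ is independent of the choice of eigen-decomposition (so it may be evaluated at whichever representative is convenient in each half of the argument), and the identification of "$S'$ lies in a lower stratum" with "every eigen-decomposition of $S'$ has a repeated eigenvalue," which is just the definition of the eigenvalue-multiplicity stratification recalled in Section~\ref{sec:fiber}.
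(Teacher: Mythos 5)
Your proof is correct and follows essentially the same route as the paper: both arguments reduce to the observation that $\bigcup_{S'\in S_p^{\rm lwr}}\mathcal{F}^{-1}(S')=M(p)\setminus M^{\rm top}(p)$ together with the constancy of $\tilde\delta$ on fibers (Lemma~\ref{lem:13.2}), the paper packaging this as a single chain of nested infima swapped and identified, while you split it into two inequalities and use Lemma~\ref{lem:13.1} to exhibit an explicit lower-stratum witness $(U,\Lambda_0)$ for the upper bound. This is only a difference in presentation, not in substance.
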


\begin{proof}[Proof  of Lemma~\ref{lem:13.3}]
Recall that we write $S_p^{\lwr} = {\rm Sym}^{+}(p) \setminus S_p^{\rm top}$.
\begin{align*}
  \delta(S) & = \inf\{ d_{\mathcal{SR}}(S,S') : S' \in S_p^{\lwr} \} \\
   & = \inf\{ \inf \{ d_M( (U,D) , (V, \Lambda)): (U,D) \in \mathcal{F}^{-1}(S), (V, \Lambda) \in \mathcal{F}^{-1}(S')   \} : S' \in S_p^{\lwr} \} \\
   & = \inf \{ d_M( (U,D) , (V, \Lambda)):(U,D) \in \mathcal{F}^{-1}(S), (V, \Lambda) \in  M^{\lwr}(p)  \} \\
   & =  \inf \{ \inf \{ d_M( (U,D) , (V, \Lambda)): (V, \Lambda) \in  M^{\lwr}(p)  \} :(U,D) \in \mathcal{F}^{-1}(S)  \} \\
   & = \inf \{ \tilde\delta(U,D) :(U,D) \in \mathcal{F}^{-1}(S)  \}.
\end{align*}
The above and Lemma~\ref{lem:13.2} give the result.
\end{proof}

By Lemmas~\ref{lem:13.1}---\ref{lem:13.3}, we have $\delta(S) > 0$ if and only if $S \in S_p^{\rm top}$.



\begin{lemma}\label{lem:13.6}
 Let $S_0 \in S_p^{\rm top}$, let $r>0$, and write $\bar{B}_r^{\dsr}(S_0) = \{Y \in \Symp:  d_{\mathcal{SR}}(Y,S_0) \le r \}$.
 \begin{itemize}
\item [(a)] If $S \in \bar{B}_r^{\dsr}(S_0)$, then $\delta(S) \ge \delta(S_0) - r$.
\item [(b)] If $r < \delta(S_0)$, then $\bar{B}_r^{\dsr}(S_0) \subset S_p^{\rm top}$.
\item [(c)] If $r < \delta(S_0)/3$, then for any $S, S' \in \bar{B}_r^{\dsr}(S_0)$, and $S_{\rm lwr} \in {\rm Sym}^{+}(p) \setminus S_p^{\rm top}$,
    $$d_{\mathcal{SR}}(S,S') < d_{\mathcal{SR}}(S,S_{\rm lwr}).$$
\end{itemize}
\end{lemma}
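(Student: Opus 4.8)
The three parts are really statements about the function $\delta$ and the (failure of the) triangle inequality for $\dsr$, so the plan is to extract from \citet[Theorem 3.11]{Jung2015} exactly what is true. First, $\dsr$ is \emph{not} a metric, but it does satisfy the triangle inequality \emph{whenever one of the three points has no repeated eigenvalues}; more precisely, by \citet[Theorem 3.12]{Jung2015} the restriction of $\dsr$ to $S_p^{\rm top}$ is a genuine metric, and a careful reading shows that $\dsr(X,Z)\le \dsr(X,Y)+\dsr(Y,Z)$ holds as soon as $Y\in S_p^{\rm top}$ (because then $\dsr(\cdot,Y)=\dpsr(\cdot,(U_Y,D_Y))$ by \eqref{SR_distance_simplification}, and $\dpsr(\cdot,m)$ is $1$-Lipschitz in $m$ by Lemma~\ref{lem:dpsr_continuity}(b), with $\dsr$ itself $1$-Lipschitz-dominated from below; one also uses that $\dsr$ is symmetric). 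This "one vertex in the top stratum suffices" form of the triangle inequality is the workhorse for all three parts.

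\textbf{Part (a).} Let $S\in \bar B_r^{\dsr}(S_0)$ and let $S_{\rm lwr}\in S_p^{\rm lwr}$ be arbitrary. Using the triangle inequality with the top-stratum vertex $S_0$ (which lies in $S_p^{\rm top}$ by hypothesis), we get
\[
\delta(S_0)\le \dsr(S_0,S_{\rm lwr})\le \dsr(S_0,S)+\dsr(S,S_{\rm lwr})\le r+\dsr(S,S_{\rm lwr}),
\]
so $\dsr(S,S_{\rm lwr})\ge \delta(S_0)-r$. Taking the infimum over $S_{\rm lwr}\in S_p^{\rm lwr}$ gives $\delta(S)\ge \delta(S_0)-r$. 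The one subtlety is that the middle inequality needs $S_0\in S_p^{\rm top}$, which is given; without it the triangle inequality could fail.

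\textbf{Part (b).} Immediate from (a): if $r<\delta(S_0)$ and $S\in \bar B_r^{\dsr}(S_0)$, then $\delta(S)\ge \delta(S_0)-r>0$, and by the remark following Lemmas~\ref{lem:13.1}--\ref{lem:13.3} (namely $\delta(S)>0\iff S\in S_p^{\rm top}$) we conclude $S\in S_p^{\rm top}$; hence $\bar B_r^{\dsr}(S_0)\subset S_p^{\rm top}$.

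\textbf{Part (c).} Let $S,S'\in \bar B_r^{\dsr}(S_0)$ with $r<\delta(S_0)/3$, and let $S_{\rm lwr}\in S_p^{\rm lwr}$. By (b), both $S$ and $S'$ lie in $S_p^{\rm top}$, so the triangle inequality is available with either of them as a vertex. On one hand $\dsr(S,S')\le \dsr(S,S_0)+\dsr(S_0,S')\le 2r$. On the other hand, applying (a) to $S$ gives $\dsr(S,S_{\rm lwr})\ge \delta(S)\ge \delta(S_0)-r>2r\ge \dsr(S,S')$, where the strict inequality uses $r<\delta(S_0)/3$. This proves $\dsr(S,S')<\dsr(S,S_{\rm lwr})$.

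\textbf{Main obstacle.} The only genuinely non-routine point is justifying the triangle inequality $\dsr(X,Z)\le \dsr(X,Y)+\dsr(Y,Z)$ under the weak hypothesis "$Y\in S_p^{\rm top}$" (rather than "all three points in $S_p^{\rm top}$"). I expect this to follow from the simplification \eqref{SR_distance_simplification} together with the $1$-Lipschitz property in Lemma~\ref{lem:dpsr_continuity}(b) and the bound \eqref{SR_PSR_ineq}: fix a minimal pair $(U_X,D_X),(U_Y,D_Y)$ realizing $\dsr(X,Y)$ and a minimal pair $(U_Y',D_Y'),(U_Z,D_Z)$ realizing $\dsr(Y,Z)$; since $Y\in S_p^{\rm top}$, $(U_Y',D_Y')=h\cdot(U_Y,D_Y)$ for some $h\in\mathcal G(p)$, and replacing $(U_Z,D_Z)$ by $h^{-1}\cdot(U_Z,D_Z)$ (which does not change $d_M$, as $\mathcal G(p)$ acts by isometries) one aligns the two pairs at the common point $(U_Y,D_Y)$, whence $\dsr(X,Z)\le d_M((U_X,D_X),(U_Y,D_Y))+d_M((U_Y,D_Y),h^{-1}\cdot(U_Z,D_Z))=\dsr(X,Y)+\dsr(Y,Z)$ by the triangle inequality for the genuine metric $d_M$. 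Everything else is bookkeeping with infima.
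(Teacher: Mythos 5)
Your reduction of (b) and (c) to (a) is fine, and your ``middle vertex in $S_p^{\rm top}$'' triangle inequality is correct as you justify it in the last paragraph: the alignment argument works precisely because $\mathcal{F}^{-1}(Y)$ is a single $\mathcal{G}(p)$-orbit when $Y\in S_p^{\rm top}$, and the $\mathcal{G}(p)$-action is isometric and fiber-preserving. The gap is in how you then use it in part (a). There you invoke $\dsr(S_0,S_{\rm lwr})\le \dsr(S_0,S)+\dsr(S,S_{\rm lwr})$, whose \emph{middle} vertex is $S$, while the top-stratum point $S_0$ sits at an endpoint. Your own justification does not cover this case: if $S\in S_p^{\rm lwr}$, the two minimal pairs meet $\mathcal{F}^{-1}(S)$ at points that may differ by a continuous stabilizer element $R\in G_{D_S}^0$ as well as by an $h\in\mathcal{G}(p)$, and right-translation by $R^{-1}$, though an isometry of $M(p)$, does not preserve the fiber of $S_{\rm lwr}$, so the chains cannot be concatenated. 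You cannot appeal to part (b) to put $S$ in $S_p^{\rm top}$, since (b) rests on (a); and having an endpoint in the top stratum does not rescue the inequality --- the standard failures of the triangle inequality for $\dsr$ have both endpoints in $S_p^{\rm top}$ and a nearly isotropic middle point. So (a) is unproved as written, and (b), (c) inherit the problem. (Your preliminary claim that the triangle inequality holds ``as soon as one of the three points has no repeated eigenvalues'' is, read literally, false; only the middle-vertex version is established, and the Lipschitz property of $\dpsr$ in its $M(p)$-variable is not what drives it.)

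The step is repairable, and the paper avoids the issue entirely: by Lemmas~\ref{lem:13.1}--\ref{lem:13.3}, $\delta(S)$ depends only on the eigenvalues, namely $\delta(S)=\min\{d_{\mathcal{D}^+}(D,\Lambda):\Lambda\in {\rm Diag}^+(p)\setminus D_+^{\rm top}(p)\}$ for any $(U,D)\in\mathcal{F}^{-1}(S)$. Taking a minimal pair $\big((U,D),(U_0,D_0)\big)\in\mathcal{F}^{-1}(S)\times\mathcal{F}^{-1}(S_0)$, the genuine triangle inequality in the metric space $({\rm Diag}^+(p),d_{\mathcal{D}^+})$ gives $d_{\mathcal{D}^+}(D,\Lambda)\ge d_{\mathcal{D}^+}(D_0,\Lambda)-d_{\mathcal{D}^+}(D_0,D)$, and $d_{\mathcal{D}^+}(D_0,D)\le d_M\big((U,D),(U_0,D_0)\big)=\dsr(S,S_0)\le r$, which yields $\delta(S)\ge\delta(S_0)-r$ with no use of any $\dsr$ triangle inequality. (Alternatively, your route can be patched by noting that $S_p^{\rm lwr}$ is invariant under the isometries used to re-align the fiber of $S$, so one bounds $\delta(S_0)\le\dsr(S_0,S)+\dsr(S,S_{\rm lwr})$ with the right-hand SPD matrix replaced by another lower-stratum matrix, which is all that is needed after taking the infimum.) Your parts (b) and (c) then go through, and (c) is essentially the paper's argument, since there the middle vertex $S_0$ is in $S_p^{\rm top}$ and, after (b), all three points are.
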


\begin{proof}
(a) Let $S \in \bar{B}_{r}^{\dsr}(S_0)$. If $(U,D) \in \mathcal{F}^{-1}(S)$ and $\tilde{S}_0 := (U_0, D_0) \in \mathcal{F}^{-1}(S_0)$, then
by Lemmas~\ref{lem:13.1} and \ref{lem:13.3}, $\tilde{\delta}(U,D) = \min\{d_{\mathcal{D}^+}(D, \Lambda): \Lambda \in   D_+^{\lwr}(p) \}$. Since
$({\rm Diag}^+(p), d_{\mathcal{D}^+})$ is a metric space, $d_{\mathcal{D}^+}(D, \Lambda) \ge d_{\mathcal{D}^+}(D_0, \Lambda) - d_{\mathcal{D}^+}(D_0, D)$ for any $D, D_0, \Lambda \in {\rm Diag}^+(p)$. Thus
\begin{align*}
  \delta(S) & \ge \inf\{ d_{\mathcal{D}^+}(D_0, \Lambda) - d_{\mathcal{D}^+}(D_0, D) : \Lambda \in   D_+^{\lwr}(p) \}  \\
    & = \inf\{ d_{\mathcal{D}^+}(D_0, \Lambda) : \Lambda \in   D_+^{\lwr}(p) \}  - d_{\mathcal{D}^+}(D_0, D)   \\
    & = \delta(S_0) -  d_{\mathcal{D}^+}(D_0, D)  \\
    & \ge \delta(S_0) -  d_{\mathcal{SR}}(S_0, S) \\
    & \ge \delta(S_0) - r.
\end{align*}

(b) If $r < \delta(S_0)$ and $S \in \bar{B}_r^{\dsr}(S_0)$, then by part (a),
   $\delta(S) \ge \delta(S_0) - r > 0$, so $S \in S_p^{\rm top}$.

(c) By part (b), since $r < \delta(S_0)/3 < \delta(S_0)$, $\bar{B} := \bar{B}_r^{\dsr}(S_0) \subset  S_p^{\rm top}$, and $\bar{B}$ is a closed ball in the metric space $(S_p^{\rm top}, d_{\mathcal{SR}})$. Hence, for any $S,S' \in \bar{B}$,
$$d_{\mathcal{SR}}(S,S') \le d_{\mathcal{SR}}(S,S_0) + d_{\mathcal{SR}}(S_0, S') \le 2r < 2\delta(S_0)/3.$$
But by Lemma~\ref{lem:13.3} and  part (a), for $S_{\rm lwr} \in {\rm Sym}^{+}(p) \setminus S_p^{\rm top}$,
$$d_{\mathcal{SR}}(S,S_{\rm lwr}) \ge \delta(S) \ge \delta(S_0) - r > \delta(S_0) - \delta(S_0)/3 = 2\delta(S_0)/3.$$
Hence $d_{\mathcal{SR}}(S,S') \le 2\delta(S_0)/3 < d_{\mathcal{SR}}(S,S_{\rm lwr})$.
\end{proof}

The proof of Theorem~\ref{thm:avoid_low_strata_SR} heavily depends on Lemma~\ref{lem:13.6}(c).

\begin{proof}[Proof of Theorem~\ref{thm:avoid_low_strata_SR}]
The random variable $X$ in the hypothesis of the theorem lies in $\bar{B} := \bar{B}_r^{\dsr}(S_0)$ with probability 1. Thus, by  Lemma~\ref{lem:13.6}(c), for any $S \in \bar{B}$ and $S_{\rm lwr} \in {\rm Sym}^{+}(p) \setminus S_p^{\rm top}$,
 $$\fsr(S) < \fsr(S_{\rm lwr}).$$
Hence, no element of ${\rm Sym}^{+}(p) \setminus S_p^{\rm top}$ can be a minimizer of $f_{{\mathcal{SR}}}$. Since the set of  minimizers of the function $f_{{\mathcal{SR}}}$ is exactly $\Esr$, and $\Esr$ is non-empty, $\Esr \subset S_p^{\rm top}$.

The second part of the theorem can be shown similarly  by Lemma~\ref{lem:13.6}(c), but with the function $\fnsr(\cdot)$ 
defined with respect to the data $X_1,\ldots, X_n$.
\end{proof}

\subsection{Proofs and technical details for Section 4.1} \label{sec:prop:lsc_proof}
\subsubsection{Proof of Lemma~\ref{lem:dpsr_continuity}}
\begin{proof}[Proof of Lemma~\ref{lem:dpsr_continuity}]
(a) Define the map $\rho: M(p) \times M(p) \to [0,\infty)$ as
\[ \rho((U',D'),(U,D)) = \min_{h \in \mathcal{G}(p)} d_M((U' h^{-1},h\cdot D'),(U,D)). \]
Note that $\rho((U',D'),(U,D)) = \rho(h \cdot (U',D')),(U,D))$ for any $h \in \mathcal{G}(p)$.  Hence for each $(X,(U,D)) \in S_p^{\rm top} \times M(p)$, the function
$\rho$ is constant on the set $\mathcal{F}^{-1}(X)\times \{(U,D)\}$.  Therefore the restriction of
$\rho$
to the domain $M(p)^{\rm top} \times M(p)$
induces a function on $S_p^{\rm top} \times M(p)$, which by definition
is precisely our function $d_{\mathcal{PSR}}$. (Here, $M(p)^{\rm top} = \mathcal{F}^{-1}(S_p^{\rm top})$.)
For each $h\in \mathcal{G}(p)$, the function $((U',D'),(U,D)) \mapsto
d_M((U' h^{-1},h\cdot D'),(U,D))$ is continuous on $M(p) \times M(p)$.
Therefore
$\rho$ is also continuous on $M(p) \times M(p)$
since it is the minimum of a finite number of
continuous functions,
which implies that the restriction of $\rho$ to $M(p)^{\rm top} \times M(p)$ is continuous.
Hence the induced function $d_{\mathcal{PSR}}$ on
$S_p^{\rm top} \times M(p)$
is also continuous.

(b)
For any non-empty subset $A$ of a metric space $(M,d)$,
the triangle inequality implies that $|d(A,y) - d(A,y')| \le d(y,y')$
for any $y,y' \in M$, where
 $d(A,y) = \inf_{x \in A} d(x,y)$.
For any $S\in {\rm Sym}^+(p)$,
applying the above fact to the subset $\mathcal{F}^{-1}(S)$ of the metric space
$(M(p), d_M)$,
and noting that $d_{\mathcal{PSR}}(S,m) = d_M(\mathcal{F}^{-1}(S), m)$, the conclusion follows.
\end{proof}

\subsubsection{Background work on semicontinuous functions}
Recall Definition~\ref{def:lsc}.

\begin{prop}\label{complsc_unif}
Let $X$ be a topological space, let $Y$ be a set, and let $f:X\times Y \to\bfr$.
Let $J\subset \bfr$
be a set containing ${\rm range}(f)$, and let $g:J\to\bfr$ be a
(non-strictly) increasing, uniformly continuous
function.

\begin{itemize}
\item[(a)]
Assume that
$f:X\times Y \to\bfr$ is LSC in its first variable,
uniformly with respect to its second variable.
Then so is
$g\circ f: X\times Y\to \bfr.$

\item[(b)] Assume that $Y$ is a topological space and that
$f:X\times Y \to\bfr$ is LSC in its first variable,
locally uniformly with respect to its second variable.
Then so is
$g\circ f: X\times Y\to \bfr.$
\end{itemize}

\end{prop}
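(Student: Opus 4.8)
The plan is to reduce the statement to an elementary $\epsilon$--$\delta$ argument that leverages both hypotheses on $g$: monotonicity and uniform continuity on $J$. Note first that $g\circ f$ is well-defined since ${\rm range}(f)\subseteq J$. To prove (a), I would fix $x_0\in X$ and $\epsilon>0$, and use the uniform continuity of $g$ on $J$ to choose $\delta>0$ such that $|g(s)-g(t)|<\epsilon$ whenever $s,t\in J$ with $|s-t|<\delta$. Then I would apply the hypothesis on $f$ (LSC in its first variable, uniformly with respect to its second) at $x_0$ with tolerance $\delta$ in place of $\epsilon$, obtaining an open neighborhood $\mathcal{U}$ of $x_0$ with $f(x,y)>f(x_0,y)-\delta$ for all $x\in\mathcal{U}$ and all $y\in Y$.

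The core step is then a two-case comparison, carried out for an arbitrary fixed pair $(x,y)$ with $x\in\mathcal{U}$. If $f(x,y)\ge f(x_0,y)$, monotonicity of $g$ gives $g(f(x,y))\ge g(f(x_0,y))>g(f(x_0,y))-\epsilon$. If instead $f(x_0,y)-\delta<f(x,y)<f(x_0,y)$, then $|f(x,y)-f(x_0,y)|<\delta$, so the choice of $\delta$ yields $|g(f(x,y))-g(f(x_0,y))|<\epsilon$ and hence $g(f(x,y))>g(f(x_0,y))-\epsilon$ again. In either case $(g\circ f)(x,y)>(g\circ f)(x_0,y)-\epsilon$ on $\mathcal{U}$ (for all $y\in Y$), which is exactly the defining property in Definition~\ref{def:lsc}(a) for $g\circ f$.

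For (b) I would simply localize: given $y_0\in Y$, the hypothesis provides an open neighborhood $\mathcal{V}$ of $y_0$ on which $f|_{X\times\mathcal{V}}$ is LSC in its first variable, uniformly with respect to its second. Since ${\rm range}(f|_{X\times\mathcal{V}})\subseteq{\rm range}(f)\subseteq J$, part (a) applies verbatim to $f|_{X\times\mathcal{V}}$ and shows that $(g\circ f)|_{X\times\mathcal{V}}=g\circ(f|_{X\times\mathcal{V}})$ has the uniform property; as $y_0$ is arbitrary, $g\circ f$ is LSC in its first variable, locally uniformly with respect to its second, and the locally-compact reformulation at the end of Definition~\ref{def:lsc}(b) follows at once. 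There is no real obstacle here; the only subtleties are that uniform continuity of $g$ is useless in the case $f(x,y)\ge f(x_0,y)$ (monotonicity is essential there), and that one must keep all arguments of $g$ inside $J$, which is automatic since they are values of $f$.
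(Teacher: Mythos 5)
Your proof is correct and follows essentially the same route as the paper's: choose $\delta$ from the uniform continuity of $g$, obtain the neighborhood $\mathcal{U}$ from the uniform LSC hypothesis on $f$ with tolerance $\delta$, and conclude by the same two-case comparison (monotonicity when $f(x,y)\ge f(x_0,y)$, uniform continuity otherwise), with part (b) reduced to part (a) by localizing in $Y$ exactly as the paper does.
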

\begin{proof}[Proof of Proposition~\ref{complsc_unif}]
(a) Let $x_0\in X$ and let $\e>0$.
Since $g$ is uniformly continuous, we may select $\d>0$ such that whenever $z_1,z_2\in J$
and $|z_1-z_2|<\d$, we have $g(z_2)>g(z_1)-\e$.
By the hypothesis on $f$, we may select
an open neighborhood $U$ of $x_0$ such that
$f(x,y)> f(x_0,y)-\d$ for all $x\in U$ and all $y\in Y$.

Let $x\in U$ and $y\in Y$.  Then either (i) $f(x_0,y)-\d<f(x,y)\leq f(x_0,y)$ or (ii) $f(x,y)>f(x_0,y)$.  In case (i),
$|f(x,y)-f(x_0,y)|<\d,$ so $g(f(x,y))>g(f(x_0,y))-\e.$
 In case (ii), since $g$ is
increasing, $g(f(x,y))\geq g(f(x_0,y))>g(f(x_0,y))-\e$.  Hence in both cases, $g(f(x))>g(f(x_0))-\e$.

Thus $g\circ f$ is LSC in its first variable, uniformly with respect to its second.

(b) Follows immediately from part (a) and Definition \ref{def:lsc}(ii).
\end{proof}

\begin{corollary} \label{sqrtlsc_unif}
Let $X$ be a topological space, let $Y$ be a set, and let $f:X\times Y \to [0,\infty)$.

\begin{itemize}
\item[(a)]
Assume that
$f:X\times Y \to\bfr$ is LSC in its first variable,
uniformly with respect to its second variable.
Then so is $\sqrt{f}$.

\item[(b)] Assume that $Y$ is a topological space and that
$f:X\times Y \to\bfr$ is LSC in its first variable,
{\em locally} uniformly with respect to its second variable.
Then so is $\sqrt{f}$.
\end{itemize}

\end{corollary}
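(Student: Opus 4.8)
The plan is to obtain both parts as immediate specializations of Proposition~\ref{complsc_unif}, taking the function $g$ there to be the square root. Concretely, I would set $J = [0,\infty)$; since $f$ takes values in $[0,\infty)$ we have ${\rm range}(f) \subseteq J$, so $J$ is an admissible choice in Proposition~\ref{complsc_unif}. Let $g : J \to \bfr$ be given by $g(z) = \sqrt{z}$. The two hypotheses on $g$ needed to invoke Proposition~\ref{complsc_unif} are that $g$ is (non-strictly) increasing, which is clear, and that $g$ is uniformly continuous on $[0,\infty)$. For the latter I would record the elementary estimate $|\sqrt a - \sqrt b| \le \sqrt{|a-b|}$ for all $a,b \ge 0$: assuming without loss of generality that $a \ge b$, one has $\sqrt a \le \sqrt b + \sqrt{a-b}$, because squaring the right-hand side gives $b + 2\sqrt{b(a-b)} + (a-b) \ge a$. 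Hence, given any $\epsilon > 0$, the choice $\delta = \epsilon^2$ witnesses uniform continuity of $g$.

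With $g$ so chosen, $g \circ f = \sqrt{f}$, so part (a) is exactly the conclusion of Proposition~\ref{complsc_unif}(a), and part (b) is exactly the conclusion of Proposition~\ref{complsc_unif}(b). I do not anticipate any real obstacle here: the entire content of the corollary is this substitution, and the only non-trivial ingredient, uniform continuity of the square root on the half-line, is standard. The reason for isolating this corollary is that the natural lower-semicontinuous objects arising from the scaling-rotation framework are the \emph{squared} distances $\dsr^2(\cdot,\cdot)$ and $\dpsr^2(\cdot,\cdot)$, while one also wants the corresponding statements for $\dsr$ and $\dpsr$ themselves (cf.\ Theorem~\ref{prop:lsc}); this corollary supplies exactly that passage.
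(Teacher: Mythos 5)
Your proposal is correct and matches the paper's proof: both invoke Proposition~\ref{complsc_unif} with $g$ the square-root function on $[0,\infty)$, verifying its uniform continuity via the same estimate ($\sqrt{x+\delta}-\sqrt{x}\le\sqrt{\delta}$, equivalently your $|\sqrt a-\sqrt b|\le\sqrt{|a-b|}$). Nothing is missing.
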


\begin{proof}[Proof of Corollary~\ref{sqrtlsc_unif}]
  The square-root function $[0,\infty)\to [0,\infty)$ is uniformly continuous (since
$\sqrt{x+\d}-\sqrt{x}\leq \sqrt{\d}$ for $x,\d\geq 0$).  Hence the results follow from
Proposition \ref{complsc_unif}.
\end{proof}
\subsubsection{Background work on $M(p)$ and $\mathcal{F}$}
The strata of $\dpp$ (and the strata of $\tim(p)$)
are partially ordered by identifying
a stratum with the corresponding partition of $\pset$.  If $\T_\J\subset\dpp$ denotes the stratum labeled
by $\J$, then we have the following relations (the first of which is definition)
\be
\T_{\J_1}\leq \T_{\J_2} \iff \J_1\leq \J_2
\iff G_{\J_1}\supset G_{\J_2}.
\ee
\cite[See][Section 2.2.]{Groisser2017}

In Lemma~\ref{lem:delta_strat} and throughout, $B_\d^\D(D) = \{ \Lambda \in \dpp: d_{\mathcal{D}^+}(D,\Lambda) < \delta \}$ denotes the open ball in the metric space $(\dpp, d_{\mathcal{D}^+})$.

\begin{lemma}\label{lem:delta_strat} (a) Every $D\in \dpp$.
has an open neighborhood that intersects only strata that are at least as high as the stratum of $D$.
I.e. for any $D\in\dpp$ there is an open ball $B_\d^\D(D)$ such
that
\be\label{stratlsc-2}
\mbox{\rm if $\T$ is a stratum of $\dpp$ for which  $\T\cap B_\d^\D(D)\neq\emptyset$, then $\T\geq \T_D$};
\ee
equivalently,
\be\label{stratlsc}
\mbox{if $\J\in \partpset$ and $\T_{\J}\cap B_\d^\D(D)\neq\emptyset$, then $\J\geq \J_D$}.
\ee

(b) There is a function $\dstrat: \sympp\to (0,\infty)$
such that for all $S\in \sympp$ and all $(U,D)\in \Fc^{-1}(S)$,
\eqref{stratlsc-2} (equivalently, \eqref{stratlsc}) holds with $\d=\dstrat(S)$.
\end{lemma}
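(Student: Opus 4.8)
The plan is to use the elementary fact that the stratum of a diagonal matrix $D=\diag(d_1,\dots,d_p)\in\dpp$ is recorded entirely by the partition $\J_D\in\partpset$ grouping indices with equal diagonal entries, and that a sufficiently small $d_{\mathcal{D}^+}$-perturbation of $D$ can only \emph{split} a block of $\J_D$, never \emph{merge} two of its blocks. To quantify ``sufficiently small'', for $D$ having at least two distinct diagonal entries I would set
\[
 \d(D):=\tfrac12\,\min\bigl\{\,|\log d_i-\log d_j|\ :\ 1\le i,j\le p,\ d_i\ne d_j\,\bigr\}>0,
\]
and $\d(D):=1$ in the degenerate case that all $d_i$ coincide (the minimum above then being over the empty set).

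For part (a), I would take the open ball $B_{\d(D)}^\D(D)$. When all $d_i$ coincide, $\J_D$ is the coarsest partition $\{\pset\}$, which is the minimum of $(\partpset,\le)$, so $\J\ge\J_D$ for every $\J$ and \eqref{stratlsc} holds for any radius; this settles the degenerate case. Otherwise, let $\Lambda=\diag(\lambda_1,\dots,\lambda_p)\in B_{\d(D)}^\D(D)$, so that $|\log d_i-\log\lambda_i|\le d_{\mathcal{D}^+}(D,\Lambda)<\d(D)$ for each $i$. If $\lambda_i=\lambda_j$ while $d_i\ne d_j$, the triangle inequality gives $|\log d_i-\log d_j|\le|\log d_i-\log\lambda_i|+|\log\lambda_j-\log d_j|<2\d(D)\le|\log d_i-\log d_j|$, a contradiction; hence $\lambda_i=\lambda_j$ forces $d_i=d_j$. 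This means every block of $\J_\Lambda$ is contained in a block of $\J_D$, i.e.\ $\J_\Lambda$ refines $\J_D$, i.e.\ $\J_D\le\J_\Lambda$. Any stratum $\T$ of $\dpp$ meeting $B_{\d(D)}^\D(D)$ is $\T_{\J_\Lambda}$ for such a $\Lambda$, so by the order-equivalence $\J_1\le\J_2\iff\T_{\J_1}\le\T_{\J_2}$ we obtain $\T_D\le\T$, which is precisely \eqref{stratlsc-2} (equivalently \eqref{stratlsc}).

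For part (b), the point is that $\d(D)$ depends only on the multiset $\{d_1,\dots,d_p\}$ of diagonal entries of $D$; for $S\in\sympp$ and any $(U,D)\in\Fc^{-1}(S)$ this multiset is exactly the multiset of eigenvalues of $S$ counted with multiplicity, hence is independent of the chosen eigen-decomposition. I would therefore define $\dstrat(S):=\d(D)$ for any (equivalently, every) $(U,D)\in\Fc^{-1}(S)$; this gives a well-defined function $\sympp\to(0,\infty)$, and part (a) applied to each $(U,D)\in\Fc^{-1}(S)$ yields the assertion with $\d=\dstrat(S)$.

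I do not expect a genuine obstacle here: the estimate in part (a) is a one-line triangle-inequality argument, and part (b) is just an observation about the fibers of $\Fc$. The only thing demanding care is bookkeeping with the refinement order and sign convention on $\partpset$ — making sure that ``$\Lambda$ near $D$ can only refine $\J_D$'' translates into ``$\T_{\J_\Lambda}$ lies at least as high as $\T_D$'' and not the reverse — together with remembering to handle the scaled-identity case separately so that the minimum defining $\d(D)$ is never vacuous.
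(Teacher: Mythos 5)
Your proposal is correct and follows essentially the same route as the paper: part (a) is the paper's observation that strict eigenvalue inequalities persist near $D$, which you simply make quantitative with the explicit radius $\d(D)$ (half the minimal log-eigenvalue gap), and part (b) rests on the same point the paper makes via the isometric $\mathcal{S}_p$-action, namely that the admissible radius depends only on the eigenvalue multiset of $S$ and hence not on the chosen eigen-decomposition. No gaps; the bookkeeping with the refinement order ($\J_\Lambda$ refines $\J_D$, hence $\T_{\J_\Lambda}\ge\T_D$) and the scaled-identity case are handled correctly.
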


\begin{proof}[Proof of Lemma~\ref{lem:delta_strat}]
(a) This follows from the fact that any strict
eigenvalue-inequalities holding at $D$ persist on a small enough open neighborhood of $D$.

 (b)  Let $S\in \sympp$, and let $D$ be a diagonal matrix appearing in
some eigencomposition of $S$.  The set of such diagonal matrices is
$\{\pi\cdot D: \pi\in\cals_p\}$.
Because the action of $\cals_p$ on $\dpp$ is isometric, if $\d>0$ is such that \eqref{stratlsc}
holds for a given $D$, then for any $\pi\in\cals_p$, \eqref{stratlsc} holds with $D$ replaced by
$\pi\cdot D$ (with the same $\d$).  Thus any such $\d$ depends only on $S$,
not on any chosen eigendecomposition.
\end{proof}

\begin{defn}[just notational] \label{def_delta_strat}
\rm For each $S\in \sympp$, let $\dstrat(S)$
be as in Lemma \ref{lem:delta_strat}(b).
\end{defn}

\begin{prop}\label{fproper} $\Fc$ is a proper map (i.e. the inverse of any compact set is compact).
\end{prop}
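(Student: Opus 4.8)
The plan is to fix a compact set $K\subset \Symp$ and show that $\Fc^{-1}(K)$ is both closed in $M(p)$ and contained in a compact subset of $M(p)$; compactness of $\Fc^{-1}(K)$ then follows since a closed subset of a compact set is compact. First I would note that $\Fc(U,D)=UDU^T$ is polynomial in the entries of $(U,D)$, hence continuous, and that $K$, being compact in the metric space $\Symp$, is closed; therefore $\Fc^{-1}(K)$ is closed.

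The substantive step is to produce a compact set containing $\Fc^{-1}(K)$. Since $M(p)=SO(p)\times\Diagp$ and $SO(p)$ is already compact, it suffices to bound the $\Diagp$-component: I want constants $0<a\le b<\infty$ with $a\le D_{ii}\le b$ for all $i$ whenever $(U,D)\in\Fc^{-1}(K)$. The key observation is that if $\Fc(U,D)=UDU^T\in K$, then the diagonal entries of $D$ are exactly the eigenvalues of the matrix $UDU^T\in K$. The extreme-eigenvalue functions $X\mapsto\lambda_{\min}(X)=\min_{\|v\|=1}v^TXv$ and $X\mapsto\lambda_{\max}(X)=\max_{\|v\|=1}v^TXv$ are continuous on $\Symp$ (each is the minimum, resp.\ maximum, over the compact unit sphere of a function jointly continuous in $X$ and $v$) and strictly positive there, so on the compact set $K$ they attain a strictly positive minimum $a:=\min_{X\in K}\lambda_{\min}(X)$ and a finite maximum $b:=\max_{X\in K}\lambda_{\max}(X)$. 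Consequently
\[
\Fc^{-1}(K)\ \subseteq\ SO(p)\times\{\,D\in\Diagp : a\le D_{ii}\le b,\ i=1,\dots,p\,\},
\]
and the right-hand side is the product of the compact group $SO(p)$ with the closed box $[a,b]^p$ (under the identification of $\Diagp$ with $(0,\infty)^p$ via the diagonal entries), hence compact.

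Combining the two points, $\Fc^{-1}(K)$ is a closed subset of a compact set, hence compact, which is the assertion. I do not expect any genuine obstacle in this argument; the only step requiring a little care is the continuity and strict positivity of the extreme-eigenvalue functions on $\Symp$, which is standard. (Alternatively, one can avoid eigenvalue functions entirely and bound $\|X\|_\op$ and $\|X^{-1}\|_\op$ uniformly on $K$ using continuity of the operator norm and of matrix inversion on $\Symp$, which yields the same bounds $a,b$ on the diagonal entries of any $D$ appearing in an eigendecomposition of a matrix in $K$.)
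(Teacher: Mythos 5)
Your proof is correct and follows essentially the same route as the paper: both establish closedness of $\Fc^{-1}(K)$ from continuity of $\Fc$, bound the diagonal factor via the continuous extreme-eigenvalue functions $\lambda_{\min},\lambda_{\max}$ attaining their (strictly positive) extrema on the compact set $K$, and conclude that $\Fc^{-1}(K)$ is a closed subset of $SO(p)\times\{D\in\Diagp: a\le D_{ii}\le b\}$, hence compact. Your explicit remark that $a>0$ (so the box lies inside $\Diagp$) is a small point the paper leaves implicit, but there is no substantive difference.
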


\begin{proof}[Proof of Proposition~\ref{fproper}]
  Let $\l_{\max}, \l_{\min}:\sympp\to\bfr$ be the functions carrying $S\in\sympp$ to its largest and smallest eigenvalues,
respectively.  As is well known, these functions are continuous.

Let $K\subset\sympp$ be a nonempty compact set.  Then $K$ is closed, and since $\Fc$ is continuous, $\Fc^{-1}(K)$ is closed.

Let $\l_{\max}^K$ (respectively $\l_{\min}^K$) denote the maximum (resp. minimum) value of $\l_{\max}$ (resp. $\l_{\min}$) achieved on $K$, and let $\tilde{K}_{\D}=\{D\in \dpp: D_{ii}\in
[\l_{\min}^K, \l_{\max}^K], \ 1\leq i\leq p\}.$  Note that $\Fc^{-1}(K)\subset \sop\times \tilde{K}_\D$, a compact subset of $\tim(p).$
Hence $\Fc^{-1}(K)$ is a closed subset of a compact set, and is therefore compact.
\end{proof}

Next few results are needed because $\Fc$ is not an open map. (A map is {\em open} if it carries open sets to open sets.)

\begin{lemma}[{\bf ``Slice lemma''}]\label{slicelemma}  Let $\L\in \dpp,$ let $\gl\subset\lsop$ be the Lie algebra of $G_\L$,
and let $\glp\subset\lsop$ be the orthogonal complement of $\gl$ in $\lsop$ (with respect to $g_\sop|_I$, a multiple of
the Frobenius inner product).
Define $\nu_\L:=\glp\plus \fd(p)$  and $n_\L:=\dim(\nu_\L)=\dim(\glp)+p,$
and define
$\Psi:\nu_\L\to \sympp$ by
\ben
\Psi(A,L)= e^A \L e^L e^{-A}\ ;
\een
note that $\Psi$ is $C^\infty$ and that $\Psi(0,0)=\L$.
On $\fgl(p,\bfr)$ or any of its subspaces
let $\|\ \|_\fr$ denote the Frobenius norm;
on $\lsop$ let $\|\ \|_\so=\frac{1}{\sqrt{2}}\|\ \|_\fr$\,;
and on $\nu_\L$ let $\|\ \|_\tge $ be the norm defined by
$\|(A,L)\|_\tge=(k\|A\|_\so^2+\|L\|_\fr^2)^{1/2}.$

\indenum There exist ${\d_2}>0, c>0$,
and an open neighborhood $\tilde{\H}_\L$ of $(0,0)$ in $\nu_\L$,
such that the $(\Sym(p),\|\ \|_\fr)$-open ball $ B_{\d_2}^\frob(\L)$
lies in $\sympp$ and

\begin{itemize}
\item[(a)] $\Psi|_{\tilde{\H}_\L}$ is an embedding;

\ss
\item[(b)] $\H_\L:=\Psi(\tilde{\H}_\L)$  is an $n_\L$-dimensional
submanifold of $\sympp$ containing $\L$;

\ss
\item[(c)]  $\H_\L={\rm image}(\Psi)\cap B_{\d_2}^\frob(\L);$


\ss
\item[(d)]   letting $\Phi=\Psi|_{\tilde{\H}_\L}$, viewed as a map
$\tilde{\H}_\L\to \H_\L$,
\be\label{hest1}
\|\Phi^{-1}(S')\|_\tge \leq c\|S'-\L\|_\fr\ \ \ \mbox{\rm for all
$S'\in \H_\L$}\ ;
\ee
and

\ss
\item[(e)]
for every $S'\in B_{\d_2}^\frob(\L)\subset\sympp$, there exist $R\in G_\L^0$, $A\in \glp$, and
\lnbrk
$L\in \fd(p)$ such that
\bestar
\nonumber
S'&=&R\, \Psi(A,L)\, R^T, \\
\|A\|_\so&=&\dso(e^A,I),  \ \ \mbox{and}\\
\|(A,\L)\|_\tge &\leq & c\|S'-\L\| .
\nonumber
\eestar

\end{itemize}
\end{lemma}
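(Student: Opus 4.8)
The plan is to read Lemma~\ref{slicelemma} as the slice (tube) theorem for the conjugation action $R\cdot S = RSR^T$ of the compact group $SO(p)$ on $\sympp$, applied at the diagonal matrix $\L$, with the slice realized concretely by the map $\Psi$ and with quantitative (Lipschitz) versions of the usual statements. First I would establish the infinitesimal picture. Writing $\lsop = \gl\oplus\glp$ for the orthogonal decomposition, a direct computation gives
\[ d\Psi_{(0,0)}(A,L) = [A,\L] + \L L \in \Sym(p), \qquad (A,L)\in\nu_\L=\glp\oplus\fd(p). \]
Here $A\mapsto[A,\L]$ has kernel $\gl$ on $\lsop$, hence is injective on $\glp$, and its image consists of symmetric matrices with zero diagonal, while $\L L$ runs over all diagonal matrices; these two subspaces of $\Sym(p)$ intersect only in $0$, so $d\Psi_{(0,0)}$ is injective and $\Psi$ is a $C^\infty$ immersion at $(0,0)$. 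By the constant-rank theorem there is an open neighborhood $\tilde{\H}_\L$ of $(0,0)$ in $\nu_\L$ on which $\Psi$ is an embedding; I would shrink $\tilde{\H}_\L$ so that it is bounded and $\Psi(\tilde{\H}_\L)\subset B_{\d_2}^\frob(\L)$ for a radius $\d_2>0$ small enough that $B_{\d_2}^\frob(\L)\subset\sympp$ (possible since $\sympp$ is open in $\Sym(p)$). With $\H_\L := \Psi(\tilde{\H}_\L)$, parts (a), (b), and the inclusion ``$\subseteq$'' in (c) are then immediate.

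Next I would treat (d) and the reverse inclusion in (c). The restriction $\Phi = \Psi|_{\tilde{\H}_\L}\colon\tilde{\H}_\L\to\H_\L$ is a diffeomorphism with $\Phi^{-1}(\L) = (0,0)$, and $\Phi^{-1}$ is $C^1$ with differential bounded on the bounded set $\H_\L$; the mean value inequality then yields \eqref{hest1} with $c$ any constant exceeding $\sup_{\H_\L}\|d\Phi^{-1}\|_{\op}$, which can be taken close to $\|(d\Psi_{(0,0)})^{-1}\|_{\op}$ after shrinking $\tilde{\H}_\L$. For ``$\supseteq$'' in (c) I would shrink $\d_2$ once more and show that any $(A,L)\in\nu_\L$ with $\|\Psi(A,L)-\L\|_\fr<\d_2$ has $\Psi(A,L)\in\H_\L$. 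The eigenvalues of $\Psi(A,L)=e^A\L e^L e^{-A}$ are exactly $\{\lambda_i e^{L_{ii}}\}_{i=1}^p$, so continuity of the (unordered) spectrum forces $\|L\|_\fr$ to be small, quantitatively in terms of $\d_2$; then, with $L$ small and the residual freedom being only $e^A$ modulo $G_\L$, a compactness argument using that $e^A\in SO(p)$ lies in a compact group and that $\exp(\glp)$ meets every connected component of $G_\L$ shows $\Psi(A,L)=\Psi(A',L')$ for some $(A',L')\in\tilde{\H}_\L$.

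Finally, for (e), given $S'\in B_{\d_2}^\frob(\L)\subset\sympp$ I would choose an eigencomposition $S' = V'\Xi'V'^T$ with the eigenvalues labelled so that $\Xi' = \L e^L$ with $L\in\fd(p)$ small; then $V'\in SO(p)$ lies near $G_\L$. Using that the non-uniqueness of the eigenvectors of $S'$ within each repeated-eigenvalue block of $\L$ allows $V'$ to be right-multiplied by an element of $G_\L^0$, I would arrange $V' = Re^A$ with $R = e^X\in G_\L^0$, $X\in\gl$, and $A\in\glp$ small; this factorization is legitimate because $(X,A)\mapsto e^Xe^A$ is a local diffeomorphism $\gl\times\glp\to SO(p)$ near the origin and, by the product-block structure of $G_\L$ (cf.\ \cite{Groisser2017}), $G_\L = G_\L^0\cdot\exp(\glp)$. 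Then $S' = Re^A\L e^L e^{-A}R^T = R\,\Psi(A,L)\,R^T$, and for $S'$ close enough to $\L$ the logarithm $A$ is small enough to be the principal logarithm of $e^A$, so $\|A\|_\so = \dso(e^A,I)$. The maps $S'\mapsto\Xi'$, $S'\mapsto V'$ (for this selection) and $V'\mapsto(X,A)$ are Lipschitz near $\L$ and send $\L$ to $(\L,I,0)$, whence $\|(A,L)\|_\tge = O(\|S'-\L\|_\fr)$, giving the last estimate after enlarging $c$ if necessary.

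The immersion computation and the elementary perturbation bounds are routine; the genuinely delicate steps, and where I expect to spend most of the effort, are the ``$\supseteq$'' direction of (c) and the quantitative control in (e). Both require a careful analysis of eigencompositions of $S'$ near a matrix $\L$ with repeated eigenvalues: the mixing of eigenvectors \emph{within} a degenerate block is an $O(1)$ rotation that must be absorbed cleanly into the factor $R\in G_\L^0$, whereas the mixing \emph{between} distinct eigenspaces is exactly the $e^A$ factor and is $O(\|S'-\L\|_\fr)$; one must also account for the fact that the stabilizer $G_\L$ is in general disconnected, which is where the identity $G_\L = G_\L^0\cdot\exp(\glp)$ does the work.
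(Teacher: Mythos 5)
Your outline of the differential computation and of (a), (b), (d) matches the paper's (the paper gets (a)--(d) in one stroke from the inverse function theorem after showing $d\Psi|_{(0,0)}$ is injective), but there are two concrete gaps in the parts you yourself flag as delicate. First, in the reverse inclusion of (c), your key step --- ``the eigenvalues of $\Psi(A,L)$ are $\{\lambda_ie^{L_{ii}}\}$, so continuity of the unordered spectrum forces $\|L\|_\fr$ small'' --- is false. The unordered spectrum of $\L e^L$ only has to be close to that of $\L$ \emph{after a permutation}, so $L$ need not be small: for $\L={\rm diag}(1,2)$ and $L={\rm diag}(\log 2,-\log 2)$ one has $\L e^L={\rm diag}(2,1)$, with identical spectrum and $\|L\|_\fr$ bounded away from $0$; taking $e^A$ to be the rotation by $\pi/2$ even gives $\Psi(A,L)=\L$ exactly with $(A,L)$ large. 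More generally, whenever $\L$ lies in an intermediate stratum there exist $A_0\in\glp$ with $\|A_0\|$ large but $e^{A_0}\in G_\L$ (e.g.\ for $\L={\rm diag}(1,1,2)$, rotation by $\pi$ about any axis in the repeated-eigenvalue eigenspace), and such elements produce points of ${\rm image}(\Psi)\cap B_{\d_2}^\frob(\L)$ coming from large parameters. Ruling these out (or trading them away into $\tilde{\H}_\L$) is exactly the content of the hard direction of (c), and your appeal to ``a compactness argument using that $\exp(\glp)$ meets every connected component of $G_\L$'' does not supply it.

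Second, your proof of (e) rests on the claim that the selection maps $S'\mapsto\Xi'$, $S'\mapsto V'$, $V'\mapsto(X,A)$ are Lipschitz near $\L$. Eigenvector selections are not Lipschitz --- indeed not even continuous --- at a matrix with repeated eigenvalues: within a nearly degenerate cluster the eigenvectors are dictated by the \emph{direction} of the perturbation and jump as $S'\to\L$. This is precisely the $O(1)$ within-block mixing you mention at the end, and it cannot be both Lipschitz-selected and ``absorbed into $R$''; the absorption has to happen \emph{before} any norm estimate on $A$ is attempted. The paper's route for (e) is structured to avoid any regularity claim about eigendecompositions: it takes an arbitrary $(U,D)\in\Fc^{-1}(S')$, lets $R$ be a nearest point of the compact set $G_\L^0$ to $U$ (which forces $U=Re^A$ with $A\in\glp$ and $\|A\|_\so=d_{SO}(e^A,I)$), sets $L=\log(D\L^{-1})$, so $S'=R\,\Psi(A,L)\,R^T$; then conjugation-invariance of the Frobenius norm gives $\|\Psi(A,L)-\L\|_\fr=\|S'-\L\|_\fr<\d_2$, and the estimate $\|(A,L)\|_\tge\le c\|S'-\L\|_\fr$ is extracted from (c) and (d), not from a Lipschitz eigenvector selection. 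So as written, both your (c)$\supseteq$ and your quantitative bound in (e) are unproven; repairing them requires either the projection-onto-$G_\L^0$ device just described together with a genuine argument for (c), or a substantively different perturbation argument for invariant subspaces (e.g.\ estimates on the spectral projections attached to the distinct eigenvalues of $\L$), neither of which your sketch contains.
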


\begin{proof}[Proof of Lemma~\ref{slicelemma}]
For any $p\times p$ symmetric matrix $S$ and antisymmetric matrix $A$, the commutator $[S,A]$ is antisymmetric.
 Hence for the diagonal matrix $\L$, the map $\glpr\to\glpr$
defined by $A\mapsto [\L,A]$ restricts to a linear map $\ad_\L: \lsop\to\Sym(p)$.
Recall that the subalgebra $\gl\subset\lsop$ consists precisely of those elements of $\lsop$ that commute with $\L$.
Thus $\gl=\ker(\ad_\L)$, and the further-restricted map $\ad'_\L:=\ad_\L|_{\glp}$ is injective.

The  derivative of $\Psi$ at $(0,0)$ is the linear map
$d\Psi|_{(0,0)}:\nu_\L\to \Sym(p)$
given by
\be\label{dpsi}
d\Psi|_{(0,0)}(A,L) =[A,\L] +\L L
 = -\ad_\L'(A)+ \L L.
\ee

Let $A\in\glp$ and $L\in \fd(p)$.
Since the diagonal entries of $A$ are all zero, so are the diagonal entries of $A\L, \ \L A$, and $[A,\L]$.  Hence $[A,\L]$ is Frobenius-orthogonal to the diagonal matrix $\L L$.
Thus if $d\Psi|_{(0,0)}(A,L) =0$, equation \eqref{dpsi} implies that $\ad_\L'(A)=0$ and $\L L=0$.  Since $\ad_\L'$ is injective and $\L$ is
invertible, the latter pair of equations implies $A=0$ and $L=0$.  Thus $d\Psi|_{(0,0)}$ is injective.

Since $\Psi$ is continuously differentiable and $d\Psi|_{(0,0)}$ is injective, and $\sympp$ is an open subset of
the vector space $\Sym(p)$,
a standard application of the Inverse Function Theorem implies the existence of $\d_2, c,$ and $\tilde{\H}_\L$
for which properties (a)--(d) hold and for which  $B_{\d_2}^\frob(\L) \subset\sympp$. Note that, modulo the value of $c$, conclusion (d) does not depend on
our choices of norms,
since all norms on a finite-dimensional vector space are equivalent.

For conclusion (e), let $S'\in B_{\d_2}^\frob(\L)$ and let $(U,D)\in \Fc^{-1}(S')$.  Since $G_\L^0$ is compact, there exists an element  $R\in G_\L^0$ achieving the $\dso$-distance from $U$ to $G_\L^0$.  Since the Riemannian exponential map
$\exp_R: T_R(\sop)\to \sop$ is surjective, and the tangent space $T_R(G_\L^0)$
is
$\{RA\in \glpr : A\in \gl\}$,  the minimal-distance condition (together with our choice of inner
product on  $\lsop$) implies that $U=Re^A$ for some
$A\in \glp$ with $\|A\|=\dso(e^A,I).$  Letting $L=\log(D\L^{-1})$, we then have $S'=Re^A \L e^L e^{-A} R^{-1}=R\Psi(A,L)R^{-1}$.
Since the Frobenius norm on $\Sym(p)$ is invariant under
the action of $\sop$ (the map $(R,S)\mapsto RSR^T$),
\bearray
\nonumber
{\d_2}\ > \ \| S'-\L\|_\fr
&=& \| R\Psi(A,L) R^{-1}-\L\|_\fr \\
\nonumber
&=& \| \Psi(A,L) -R^{-1}\L R\|_\fr \\
&=& \| \Psi(A,L) -\L\|_\fr \ \ \ \ \ \mbox{(since $R\in G_\L)$}.
\label{getridofR}
\eearray

Thus $\Psi(A,L)\in B_{\d_2}^\frob(S) \cap {\rm image}(\Psi)=\H_\L$, and $(A,L)=\Phi^{-1}(\Psi(A,L))$.
Hence \eqref{hest1} and \eqref{getridofR} imply that
\ben
|(A,L)\|_\tge<c\| \Psi(A,L) -\L\|_\fr
=c\|S'-\L\|_\fr. \qedhere
\een
\end{proof}

\begin{remark}\label{nul_rmk}
The geometric significance of the space $\nu_\L$ in Lemma \ref{slicelemma} is the following. The manifold $\tim(p)=\sop\times\dpp$ is a Lie group
with identity element $e=(I,I)$ and Lie algebra $T_e(\tim(p))=\lsop\plus\fd(p)$.
For any $S\in \sympp$ and $(V,\L)\in \Fc^{-1}(S)$, let $\nu_{(V,\L)}(\Fc^{-1}(S))$ be the normal space to the fiber $\Fc^{-1}(S)$ at $(V,\L)$---i.e. the orthogonal
complement of $T_{(V,\L)}(\Fc^{-1}(S))\subset T_{(V,\L)}\tim(p)$
w.r.t. $\tg_{(V,\L)}$.
The space $\nu_\L\subset \lsop\plus\fd(p)=T_e(\tim(p))$ in Lemma \ref{slicelemma} is simply the image of $\nu_{(V,\L)}(\Fc^{-1}(S))$ under the map
$ T_{(V,\L)}(\tim(p))) \to T_e (\tim(p))$ induced by left-translation by $(V^{-1},\L^{-1}).$
\end{remark}

\begin{notation}  \rm Given any metric space $(X,d_X),$  any $Y\subset X$, and
any $\e>0$, we let $N_\e(Y)$ denote the $\e$-neighborhood of $Y$ in $X$:
\ben
N_\e(Y)=\{x\in X : d_X(x,Y)<\e\}
\een
\end{notation}

It is easily seen that
\be\label{e-neighborhoods}
N_\e(Y)=\Union_{y\in Y} B_\e^X(y).
\ee

\begin{corollary}\label{cor:a11}
\label{slicecor}  Let $S\in\sympp,$  let $(V,\L)\in \Fc^{-1}(S)$,
and let $\calc(V,\L)$ denote the connected component of $\Fc^{-1}(S)$ containing $(V,\L)$.
Let $VG_\L^0$ denote the
set \lnbrk $\{VR : R\in G_\L^0\}$.

There exist $\d_2=\d_2(\L)>0$ and $c_1=c_1(\L)>0$ (depending only on $\L$, not
$V$)
such that for all $\d\in (0,\d_2]$,
\bearray
\underbrace{B_{\d}^\frob(S)}_{\mbox{\scriptsize \rm in $\sympp$}}
\label{f(unionprodballs)}
&\subset&
F\big(\Union_{R\in G_\L^0}\big(B_{c\d/\sqrt{k}}^{SO}(VR)\times B_{c\d}^\D(\L) \big)\ \big)\\
\label{f(normneighborhood_x_ball)}
&=&
 \F\big( \underbrace{N_{c\d/\sqrt{k}}(VG_\L^0)}_{\mbox{\rm
\scriptsize in $\sop$}}\times B^\D_{c\d}(\L) \big)\\
&\subset &
\label{f(unionballs)}
 F\big(\Union_{\tS\in \calc(V,\L)}B_{c_1\d}^{\td}(\tS)\ \big)\\
&=&
F(\underbrace{N_{c_1\d}(\calc(V,\L))}_{\mbox{\rm
\scriptsize in $\tim(p)$}})
\label{fneighborhoodcc}
\eearray
\end{corollary}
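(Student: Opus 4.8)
The plan is to derive all four lines of the displayed chain from the Slice Lemma (Lemma~\ref{slicelemma}), applied with $\Lambda=\L$: I take $\d_2=\d_2(\L)$ and $c=c(\L)$ from that lemma and set $c_1:=\sqrt2\,c$. Apart from the slice decomposition packaged in Lemma~\ref{slicelemma}(e), the whole argument is a change of reference point by the $SO(p)$-action together with elementary rewriting. The first step is to reduce balls centered at $S$ to balls centered at $\L$: since $(V,\L)\in\Fc^{-1}(S)$ we have $S=V\L V^{T}$, i.e.\ $V^{-1}\cdot S=\L$ for the action $(R,X)\mapsto RXR^{T}$ on $\sympp$, and since the Frobenius norm is invariant under that action, $\|S'-S\|_\fr=\|V^{-1}\cdot S'-\L\|_\fr$. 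Hence $S'\in B_\d^\frob(S)$ iff $V^{-1}\cdot S'\in B_\d^\frob(\L)$; because $B_{\d_2}^\frob(\L)\subset\sympp$ and the action preserves $\sympp$, this already gives $B_\d^\frob(S)\subset\sympp$ for every $\d\le\d_2$, so the inverse images and eigencompositions below are legitimate.

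For the first inclusion, fix $\d\le\d_2$ and $S'\in B_\d^\frob(S)$, so $V^{-1}\cdot S'\in B_{\d_2}^\frob(\L)$, and apply Lemma~\ref{slicelemma}(e): there are $R\in G_\L^0$, $A\in\glp$, $L\in\fd(p)$ with $V^{-1}\cdot S'=R\,e^{A}\L e^{L}e^{-A}R^{T}$, $\|A\|_\so=\dso(e^{A},I)$, and $\|(A,L)\|_\tge\le c\|V^{-1}\cdot S'-\L\|_\fr=c\|S'-S\|_\fr<c\d$. Setting $D':=\L e^{L}\in\dpp$ (a product of commuting positive diagonal matrices) turns this into $S'=(VRe^{A})\,D'\,(VRe^{A})^{T}=\Fc(VRe^{A},D')$. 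From $\|(A,L)\|_\tge<c\d$ I extract $\|A\|_\so<c\d/\sqrt{k}$ and $\|L\|_\fr<c\d$; then left-invariance of $g_{SO}$ gives $\dso(VRe^{A},VR)=\dso(e^{A},I)=\|A\|_\so<c\d/\sqrt k$, while $d_{\mathcal{D}^+}(D',\L)=\|{\rm Log}(\L e^{L})-{\rm Log}(\L)\|_\fr=\|L\|_\fr<c\d$ since $\L$ and $e^{L}$ are diagonal. Hence $S'\in\Fc\big(B^{SO}_{c\d/\sqrt k}(VR)\times B^{\D}_{c\d}(\L)\big)$, which proves \eqref{f(unionprodballs)}. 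The equality \eqref{f(normneighborhood_x_ball)} is purely set-theoretic (the factor $B^{\D}_{c\d}(\L)$ is independent of $R$, so the union of products equals the product of $\bigcup_{R\in G_\L^0}B^{SO}_{c\d/\sqrt k}(VR)=N_{c\d/\sqrt k}(VG_\L^0)$ with $B^{\D}_{c\d}(\L)$, using \eqref{e-neighborhoods}), and \eqref{fneighborhoodcc} is \eqref{e-neighborhoods} once more.

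For the second inclusion I first observe that $VG_\L^0\times\{\L\}\subseteq\calc(V,\L)$: it is the image of the connected set $G_\L^0$ under the continuous map $R\mapsto(VR,\L)$, it lies in $\Fc^{-1}(S)$ (each $(VR,\L)$ eigencomposes to $V\L V^{T}=S$ because $R\in G_\L$), and it contains $(V,\L)$. Then, given $(W,D'')\in N_{c\d/\sqrt k}(VG_\L^0)\times B^{\D}_{c\d}(\L)$, I pick $R\in G_\L^0$ with $\dso(W,VR)<c\d/\sqrt k$; by Definition~\ref{defn:geodesic_dist},
\[
d_M^{2}\big((W,D''),(VR,\L)\big)=k\,\dso^{2}(W,VR)+d_{\mathcal{D}^+}^{2}(D'',\L)<k\cdot\frac{c^{2}\d^{2}}{k}+c^{2}\d^{2}=2c^{2}\d^{2},
\]
so $(W,D'')\in B^{\td}_{c_1\d}\big((VR,\L)\big)$ with $(VR,\L)\in\calc(V,\L)$; applying $\Fc$ yields \eqref{f(unionballs)}.

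The main obstacle is not internal to this corollary at all: the substantive content is the slice decomposition of Lemma~\ref{slicelemma}(e), proved via the Inverse Function Theorem applied to $\Psi$. Within the present argument the only points demanding care are (i) that $\d_2$ and $c_1=\sqrt2\,c$ can be taken to depend on $\L$ alone, which is inherited directly from Lemma~\ref{slicelemma} since $V$ enters only through the Frobenius isometry $X\mapsto VXV^{T}$, and (ii) the inclusion $VG_\L^0\times\{\L\}\subseteq\calc(V,\L)$, which rests on connectedness of the identity component $G_\L^0$.
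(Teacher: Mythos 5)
Your proof is correct and takes essentially the same route as the paper: conjugate by $V$ to reduce $B_\d^\frob(S)$ to $B_\d^\frob(\L)$, invoke Lemma~\ref{slicelemma}(e) for the inclusion \eqref{f(unionprodballs)}, obtain the equalities \eqref{f(normneighborhood_x_ball)} and \eqref{fneighborhoodcc} from \eqref{e-neighborhoods}, and get \eqref{f(unionballs)} from the product-metric estimate with $c_1=\sqrt{2}\,c$. The only (harmless) deviation is that where the paper quotes the characterization $\calc(V,\L)=\{(VR,\L):R\in G_\L^0\}$ from \cite{Groisser2017a}, you prove the needed inclusion $VG_\L^0\times\{\L\}\subseteq\calc(V,\L)$ directly via connectedness of $G_\L^0$, which suffices for the argument.
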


\begin{proof}[Proof of Corollary~\ref{cor:a11}]
 Let $ \glp,\d_2,c$, and all relevant norms be as in Lemma \ref{slicelemma},
let $c_1=c\sqrt{2}$,
and let
$\d\in (0,\d_2]$.

 We will prove \eqref{f(normneighborhood_x_ball)} last.
 First,
the equality
\eqref{fneighborhoodcc} follows from \eqref{e-neighborhoods}, as does the
equality
\be\label{latterequality}
N_{\e}(VG_\L^0)=\Union_{U\in VG_\L^0} B_{\e}^{SO}(U)
=\Union_{R\in G_\L^0} B_{\e}^{SO}(VR)
\ee
for any $\e>0$.  But \eqref{latterequality} implies that for any $\e,\e'>0,$
 \ben
 N_{\e}(VG_\L^0)\times B^\D_{\e'}(\L)=\big( \Union_{R\in G_\L^0} B_{\e}^{SO}(VR)\ \big) \times B_{\e'}^\D(\L)
=\Union_{R\in G_\L^0} \big( B_{\e}^{SO}(VR)\times B_{\e'}^\D(\L) \big),
\een
yielding the equality
\eqref{f(normneighborhood_x_ball)}.

Next,
$\calc(V,\L)=\{(VR,\L) : R\in G_\L^0\}$ \citep[See][Appendix A]{Groisser2017a},
and for any $R\in G_\L^0$ and
$(U,D)\in B_{c\d/\sqrt{k}}^{SO}(VR)\times B_{\d}^\D(\L),$
\ben
\td((U,D),(V,\L))^2=k\dso(U,VR)^2+\ddpp(D,\L)^2
<2(c\d)^2=(c_1\d)^2.
\een
Thus $B_{c\d/\sqrt{k}}^{SO}(VR)\times B_{\d}^\D(\L)\ \subset
\ B_{c_1\d}^{\td}(VR,\L)$.
Hence the RHS of \eqref{f(unionprodballs)} (and therefore
the equal RHS of \eqref{f(normneighborhood_x_ball)})  is contained in the RHS
of  \eqref{f(unionballs)}.

It remains only to establish the inclusion   \eqref{f(unionprodballs)}.
Let  $S'\in B_\d^\frob(S).$
Then
\ben
\d>\|S'-S\|_\fr=
\|S'-V\L V^T\|_\fr =\|V^{-1}S'V-\L \|_\fr,
\een
 so $S'':=V^{-1}S'V\in B^\frob_{\d}(\L)$.
By Lemma \ref{slicelemma}(e),
there exist $R\in G_\L$, $A\in \glp$ and
\lnbrk
$L\in \fd(p)$ such that
$S''=R e^A \L e^L e^{-A} R^T,$ $\|A\|_\so=\dso(e^A,I)$, and
\ben
(k\|A\|_\so^2+\| L\|_\fr^2)^{1/2}=\|(A,L)\|_\tge \leq \| e^A \L e^L e^{-A} -\L\|_\fr
< c\d.
\een
Then $S'=VS''V^T= VRe^AD(VRe^A)^T=F(VRe^A,D),$ where
and $D=\L e^L$.
Since $\dso(VRe^A,VR)=\dso(e^A, I)=\|A\|_\so<c\d/\sqrt{k}$ and
$\ddpp(D,\L)
=\ddpp(\L e^L, \L) =\|L\|_\fr<c\d,$   the pair $(VRe^A,D)$ lies in  $B_{c\d/\sqrt{k}}^{SO}(VR)\times B_{c\d}^\D(\L).$

 Hence $S'\in \F\big( B_{c\d/\sqrt{k}}^{SO}(VR)\times B_{c\d}^\D(\L) \big)$,
establishing  \eqref{f(unionprodballs)}.

\end{proof}

\subsubsection{Proof of Lemma~\ref{lsclemma-unif}}
Since the eigenvector and eigenvalue matrices $(U,D) \in M(p)$ have distinct roles in the proof of Lemma~\ref{lsclemma-unif}, we restate the lemma with a different notation:

\begin{lemma}[Lemma~\ref{lsclemma-unif}]
Let $K\subset \sympp$ be a compact set.  Let $\e>0$ and let $S\in \sympp$. There exists $\d_1=\d_1(S,K,\e)>0$
such that for  all $S_0\in K$,   all $ \tS_0 \in \Fc^{-1}(S_0)$, all $ (V,\L) \in \Fc^{-1}(S)$, and all
$S'\in \Fc\big(B_{\d_1}^\td(  (V,\L) )\big)$,
\be\tag{\ref{final_est}}
\dsr(S',S_0)^2 > \dsr(S,S_0)^2 -\e
\ee
 and
\be\tag{\ref{final_est_psr}}
\dpsr(S',\tS_0)^2 > \dpsr(S,\tS_0)^2 -\e.
\ee
\end{lemma}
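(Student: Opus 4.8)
The plan is to deduce both inequalities \eqref{final_est} and \eqref{final_est_psr} from a single \emph{upper semicontinuity} property of the set-valued map $S''\mapsto\Fc^{-1}(S'')$, combined with the elementary fact that in any metric space $d(N_\eta(A),B)\ge d(A,B)-\eta$ for every nonempty $A$, every $B$ and every $\eta>0$, where $d(\cdot,\cdot)$ now denotes set-distance, $d(A,B)=\inf_{a\in A,\,b\in B}d(a,b)$. (Indeed any $a'\in N_\eta(A)$ lies within $\eta$ of some $a\in A$, so $d(a',b)>d(a,b)-\eta\ge d(A,B)-\eta$ for all $b\in B$.) Recall that by the very definitions $\dsr(X,Y)=\td(\Fc^{-1}(X),\Fc^{-1}(Y))$ and $\dpsr(X,m)=\td(\Fc^{-1}(X),m)$; so once we know $\Fc^{-1}(S')$ lies in a small $\td$-neighborhood of the \emph{entire} fiber $\Fc^{-1}(S)$, both estimates will fall out after squaring.

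First I would establish the upper-semicontinuity claim: for every $\eta>0$ there is $\rho=\rho(S,\eta)>0$ such that $\|S''-S\|_F<\rho$ forces $\Fc^{-1}(S'')\subset N_\eta(\Fc^{-1}(S))$, the neighborhood taken in $(M(p),\td)$. This is where properness of $\Fc$ (Proposition~\ref{fproper}) is used. If the claim failed for some $\eta>0$, there would be $S_n''\to S$ and $m_n\in\Fc^{-1}(S_n'')$ with $\td(m_n,\Fc^{-1}(S))\ge\eta$ for all $n$. The set $\{S\}\cup\{S_n'':n\ge1\}$ is compact, so by Proposition~\ref{fproper} its $\Fc$-preimage is compact; the sequence $(m_n)$ lies in that preimage, hence has a subsequence $m_{n_k}\to m_\infty$, and continuity of $\Fc$ gives $\Fc(m_\infty)=\lim_k\Fc(m_{n_k})=\lim_k S_{n_k}''=S$, i.e. $m_\infty\in\Fc^{-1}(S)$. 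This contradicts $\td(m_{n_k},\Fc^{-1}(S))\ge\eta$, proving the claim.

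Next I would turn a $\td$-ball around a point of the fiber into a $\|\cdot\|_F$-ball around $S$. Since $\Fc^{-1}(S)$ is compact, the closed set $\overline{N_1(\Fc^{-1}(S))}$ is again compact (its $SO(p)$-factor is automatically compact, and its $\dpp$-factor is $d_{\mathcal{D}^+}$-bounded, hence precompact via $\log$), so $\Fc$ is uniformly continuous there. Thus, for the $\rho$ of the previous paragraph, there is $\d_1\in(0,1)$ such that $\td(m,(V,\L))<\d_1$ with $(V,\L)\in\Fc^{-1}(S)$ forces $\|\Fc(m)-S\|_F<\rho$, and hence $\Fc^{-1}(\Fc(m))\subset N_\eta(\Fc^{-1}(S))$. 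Now fix $S_0\in K$, $\tS_0\in\Fc^{-1}(S_0)$, $(V,\L)\in\Fc^{-1}(S)$ and $S'=\Fc(m)$ with $m\in B_{\d_1}^\td((V,\L))$. Monotonicity of set-distance together with the displayed neighborhood inequality gives $\dsr(S',S_0)\ge\dsr(S,S_0)-\eta$ and $\dpsr(S',\tS_0)\ge\dpsr(S,\tS_0)-\eta$. Let $R:=\sup\{\dpsr(S,\tS_0'):S_0'\in K,\ \tS_0'\in\Fc^{-1}(S_0')\}$, which is finite because $\Fc^{-1}(K)$ is compact (Proposition~\ref{fproper}) and which also dominates $\dsr(S,S_0)$; having chosen $\eta$ at the outset so small that $\eta<\sqrt\e$ and $2R\eta<\e$, a one-line case split (on whether the base distance $a\in[0,R]$ is $\le\eta$ or $>\eta$, using $b\ge\max\{0,a-\eta\}$) upgrades these to $\dsr(S',S_0)^2>\dsr(S,S_0)^2-\e$ and $\dpsr(S',\tS_0)^2>\dpsr(S,\tS_0)^2-\e$, which is the assertion.

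The main obstacle I anticipate is the uniformity bookkeeping in the final paragraph: the single $\d_1$ must work simultaneously for every $S_0\in K$, every eigen-decomposition $\tS_0$ of $S_0$, and every eigen-decomposition $(V,\L)$ of $S$. This works out because $\rho$ from the second paragraph depends only on $S$ and $\eta$, the uniform-continuity step is uniform over the compact fiber $\Fc^{-1}(S)$, and $R<\infty$ is a property of $K$ alone, so none of these quantifiers interact. I would also note that this route does \emph{not} invoke the Slice Lemma or Corollary~\ref{cor:a11}: that machinery is needed for the \emph{opposite} inclusion---that $\Fc$ of a $\td$-ball \emph{contains} a genuine $\|\cdot\|_F$-ball---which is precisely what is required afterwards to pass from Lemma~\ref{lsclemma-unif} to lower semicontinuity of $\dsr(\cdot,S_0)$ and $\dpsr(\cdot,m_0)$ in Theorem~\ref{prop:lsc}, since $\Fc$ fails to be an open map.
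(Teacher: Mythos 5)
Your proposal is correct, and it takes a genuinely different route from the paper's proof. The paper argues computationally: it invokes the explicit expression of $\dsr^2$ and $\dpsr^2$ as a minimum over $h\in\G(p)$ of a rotation term plus $d_{\mathcal{D}^+}(\L',\pi_h\cdot D_0')^2$ (Proposition 3.5 of \cite{Groisser2017a}), controls the diagonal part by local uniform continuity of that squared distance, uses the stratification radius $\dstrat(S)$ to ensure $G^0_{\L'}\subset G^0_\L$ for nearby $\L'$, bounds the rotation perturbation via $\diam(SO(p))$, and finishes with a finite subcover of the compact set $\Fc^{-1}(K)$. You instead argue softly: properness of $\Fc$ (Proposition~\ref{fproper}) plus continuity gives upper semicontinuity of the fibers, so $\|S''-S\|_F<\rho$ forces $\Fc^{-1}(S'')\subset N_\eta(\Fc^{-1}(S))$; uniform continuity of $\Fc$ on the compact closure of $N_1(\Fc^{-1}(S))$ converts $\td$-closeness of $m$ to any point of $\Fc^{-1}(S)$ into Frobenius-closeness of $S'=\Fc(m)$ to $S$; the elementary set-distance inequality $d(N_\eta(A),B)\ge d(A,B)-\eta$ then yields the unsquared estimates simultaneously for all $S_0\in K$, $\tS_0\in\Fc^{-1}(S_0)$, and $(V,\L)\in\Fc^{-1}(S)$; and the finite bound $R$ over the compact set $\Fc^{-1}(K)$ (properness again) lets the choice $\eta<\sqrt{\e}$, $2R\eta<\e$ pass to the squared, strict inequalities. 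The quantifier bookkeeping indeed closes: $\rho$ depends only on $(S,\eta)$, $\d_1$ only on $(S,\rho)$, and $R$ only on $(S,K)$, so $\d_1=\d_1(S,K,\e)$ as required. What your route buys is economy and in fact a stronger intermediate statement---the estimates hold for \emph{every} $S'\in\sympp$ in a Frobenius ball around $S$, not only for $S'\in\Fc\big(B_{\d_1}^\td(V,\L)\big)$---so Theorem~\ref{prop:lsc} would follow from it directly; your closing remark that the Slice Lemma and Corollary~\ref{cor:a11} are still needed downstream is accurate only if one retains the lemma exactly as stated rather than your stronger inclusion. What the paper's computation buys is explicit, group-theoretic control of where the minimizing $h$ and stabilizer elements live, machinery that is reused elsewhere, whereas for this particular lemma your argument is the more elementary one.
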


\begin{proof}[Proof of Lemma~\ref{lsclemma-unif}]
For any
$A\subset \sympp$, let $\tilde{A}_\D$ denote the  image of $\Fc^{-1}(A)$ under the natural projection $\tim(p)\to\dpp.$  (Thus $\tilde{A}_\D$ is the set of diagonal matrices occurring in eigendecompositions of
elements of $A$.) We will need this only when $A$ is either $K$ or a one-element set. In the latter case, for $Y\in \sympp$
we  write $\tilde{Y}_\D$ for $\tilde{\{Y\}}_\D$.

For each $h\in \G(p)$ the function $\dpp\times \dpp\to\bfr$ given by $(D_1,D_2)\mapsto \ddpp(D_1, \pi_h\cdot D_2)^2$
is locally uniformly continuous.  Hence, for
each $(\L,h,D_0)\in
\lnbrk
\tS_\D\times \G(p)\times \tilde{K}_\D$ there are numbers
$\tilde{\d}_3(\L,h;D_0)\in (0,\dstrat(S)])$ and
$\tilde{\d}_4(\L,h;D_0)>0$
such that for all $\L'\in B_{\d_3(\L,h;D_0)}^\D(\L)$
and $D_0'\in B_{\d_4(\L,h;D_0)}^\D(D_0),$

\be\label{ddpp2}
\ddpp(\L',\pi_h\cdot D_0')^2 > \ddpp(\L,\pi_h\cdot D_0')^2 - \e/2.
\ee

\ssn
 Choose such numbers $\tilde{\d}_3(\L,h;D_0),$
$\tilde{\d}_4(\L,h;D_0)$ for every $(\L,h,D_0)\in \tS_\D\times \G(p)\times \tilde{K}_\D.$

  Since $\G(p)$ is finite, and the set $\tilde{Y}_\D$ is finite for every $Y\in \sympp$,
  given any $S_0\in K$
  we may choose $\d_3(S_0), \d_4(S_0)>0$ such that \eqref{ddpp2} holds
simultaneously for all $(\L,h,D_0)\in \tS_\D\times\G(p)\times(\Tilde{S_0})_\D$,
$\L'\in B_{\d_3(S_0)}^\D(\L),$ and
$D_0'\in B_{\d_4(S_0)}^\D(D_0).$  (The numbers $\d_3(S_0),\d_4(S_0)$ depend on $S$ and $\e$ as well, but $S$ and $\e$
were fixed in the
hypotheses of the proposition.)  Without loss of generality, we impose the additional restriction
$\d_3(S_0)\leq \dstrat(S).$

By Proposition 3.5 of \cite{Groisser2017a},  for any $(V',\L'), (U_0',D_0')\in \tim(p)$,
\be\label{dga_prop3.5-1}
\dsr(\F(V',\L'),F(U_0',D_0'))^2 =
\min_{h\in \G(p)}\left\{
k\, \hat{d}_h\big( (V',\L'), (U_0', D_0') \big)^2 +\ddpp(\L', \pi_h\cdot D_0')^2 \right\},
\ee
where
\be\label{defdhath}
\hat{d}_h\big((V',\L'), (U_0', D_0') \big)
= \min_{R_1\in G^0_{\L'}, R_2\in G^0_{D_0'}}
\big\{\dso(V'R_1, U_0'R_2h^{-1})\big\} \ \leq\ \diam(\sop).
\ee

 Similarly,
\be
\dpsr(\F(V',\L'), (U_0',D_0'))^2 =
\min_{h\in \G(p)}\left\{
k\, \hat{\hat{d}}_h\big( (V',\L'), (U_0', D_0') \big)^2 +\ddpp(\L', \pi_h\cdot D_0')^2 \right\},
\ee
where
\be\label{defdhathath}
\hat{\hat{d}}_h\big((V',\L'), (U_0', D_0') \big)
= \min_{R_1\in G^0_{\L'}}
\big\{\dso(V'R_1, U_0'h^{-1})\big\} \ \leq\ \diam(\sop).
\ee

Let
\ben
\d_2=\min\left\{\diam(\sop),\frac{\e}{6k\,\diam(\sop)}\right\}.
\een
By definition of $\dstrat(S),$ for all $(V,\L)\in \Fc^{-1}(S)$ and all
$\L'\in B_{\dstrat(S)}^\D(\L)$ we have
$\J_{\L'}\geq \J_\L$, implying $G_{\L'}^0\subset G_\L^0$.
 Hence
for all $S_0\in K$, \ $(U_0,D_0)\in \Fc^{-1}(S_0),$\ $(U_0',D_0')\in B^{SO}_{\d_2}(U_0)\times
B_{\d_4(S_0)}^\D(D_0),$
$(V,\L)\in \Fc^{-1}(S), \ (V',\L')\in B^{SO}_{\d_2}(V)\times B_{\d_3(S_0)}^\D(\L),$
and $h\in \G(p),$
we have
\bearray\nonumber
\min_{R_1\in G^0_{\L'}, R_2\in G^0_{D_0'}}\big\{
\dso(V'R_1, U_0'R_2h^{-1})\big\}
&\geq &
\min_{R_1\in G^0_\L, R_2\in G^0_{D_0'}}\big\{
\dso(V'R_1, U_0'R_2h^{-1})\big\} 
\eearray
 and
\bearray\nonumber
\min_{R_1\in G^0_{\L'}}\big\{
\dso(V'R_1, U_0'h^{-1})\big\}
&\geq &
\min_{R_1\in G^0_\L}\big\{
\dso(V'R_1, U_0'h^{-1})\big\};
\eearray
i.e.
\be\label{dhatdhat}
\hat{d}_h \big( (V',\L'), (U_0', D_0') \big) \geq \hat{d}_h \big( (V',\L), (U_0', D_0') \big) 
\ee
 and
\be\label{dhathatdhathat}
\hat{\hat{d}}_h \big( (V',\L'), (U_0', D_0') \big) \geq \hat{\hat{d}}_h \big( (V',\L), (U_0', D_0') \big). 
\ee

With all data as above, observe that for all $R_1,R_2\in\sop,$
\bestar
\nonumber
\big|
\dso(V'R_1, U_0'R_2h^{-1})-\dso(VR_1, U_0'R_2h^{-1})
\big|
&\leq & \dso(V'R_1,VR_1)
\nonumber
\\
&=&\dso(V',V)
\\
&<&
\d_2.
\eestar
\noi Hence
\bestar
\lefteqn{
\big|
\hat{d}_h\big( (V',\L), (U_0',D_0') \big) -
\hat{d}_h\big( (V,\L), (U_0',D_0') \big)
\big| =
} \\
&& \big|
\min_{R_1\in G^0_\L, R_2\in G^0_{D_0}}\big\{
\dso(V'R_1, U_0'R_2h^{-1})\big\}
-
\min_{R_1\in G^0_\L, R_2\in G^0_{D_0}}\big\{
\dso(VR_1, U_0'R_2h^{-1})\big\}
\big| \\
&& < \d_2\,,
\eestar
implying
$\hat{d}_h\big( (V',\L), (U_0',D_0') \big)>\hat{d}_h\big( (V,\L), (U_0',D_0') \big) -\d_2$.
 Similarly,
$\hat{\hat{d}}_h\big( (V',\L), (U_0',D_0') \big)>\hat{\hat{d}}_h\big( (V,\L), (U_0',D_0') \big) -\d_2$.
Combining  these last two inequalities with \eqref{dhatdhat} and
\eqref{dhathatdhathat}, we find
\be\label{d2bound}
\hat{d}_h\big( (V,\L),  (U_0',D_0') \big) <\hat{d}_h\big( (V',\L'), (U_0', D_0') \big) ) +\d_2 
\ee
and
\be\label{d2bound_psr}
\hat{\hat{d}}_h\big( (V,\L),  (U_0',D_0') \big) <\hat{\hat{d}}_h\big( (V',\L'), (U_0', D_0') \big) ) +\d_2\,.
\ee

\noi Letting $S_0'=F(U_0',D_0'),$ the bounds \eqref{d2bound} and \eqref{ddpp2}
then yield
\bearray
\nonumber
\lefteqn{
k\, \hat{d}_h\big( (V,\L), (U_0', D_0') \big)^2 +\ddpp(\L, \pi_h\cdot D_0')^2} \\
\nonumber
&<& k\, [\hat{d}_h ((V',\L'), (U_0', D_0'))+\d_2]^2 +\ddpp(\L',\pi_h\cdot D_0')^2 + \e/2
\ \ \ \ \  \mbox{(by \eqref{ddpp2})} \\
\nonumber
&=&k\, \hat{d}_h ((V',\L'), (U_0', D_0'))^2 +\ddpp(\L',\pi_h\cdot D_0')^2 \\
\nonumber
&&+k\d_2\left(2\, \hat{d}_h  ((V',\L'), (U_0', D_0'))+\d_2\right) +\e/2\\
\nonumber
&<&k\, \hat{d}_h ((V',\L'), (U_0', D_0'))^2 +\ddpp(\L',\pi_h\cdot D_0')^2 \\
\nonumber
&&+k\d_2\left(3\, \diam(\sop)\right) +\e/2\\
&<& k\, \hat{d}_h ((V',\L'), (U_0', D_0'))^2 +\ddpp(\L',\pi_h\cdot D_0')^2+\e
\label{almostthere}
\eearray
(by our definition of $\d_2$). Since \eqref{almostthere} holds for every $h\in \G(p)$, it follows from
\eqref{dga_prop3.5-1}
that $\dsr(S,S_0')^2<\dsr(S', S_0')^2+\e,$ where $S'=\Fc(V',\L')$.
 Additionally writing $\tS_0'=(U_0',D_0')$, the bounds \eqref{d2bound_psr}
and \eqref{ddpp2} similarly imply that
$\dpsr(S,\tS_0')^2<\dpsr(S', \tS_0')^2+\e.$

Thus
\eqref{final_est}  and \eqref{final_est_psr} hold for all
$S'\in F\big(B^{SO}_{\d_2}(V)\times B_{\d_3(S_0)}^\D(\L)\big),$
$S_0'\in F\big(B^{SO}_{\d_2}(U_0)\times B_{\d_4(D_0)}^\D(D_0)\big),$
 and
$\tS_0'\in B^{SO}_{\d_2}(U_0)\times B_{\d_4(D_0)}^\D(D_0).$

Since $\Fc$ is a proper map (Proposition \ref{fproper}),
$\Fc^{-1}(K)$ is compact.
The collection
$\left\{B^{SO}_{\d_2}(U_0)\times B_{\d_4(D_0)}^\D(D_0)\right\}_{(U_0,D_0)\in \Fc^{-1}(K)}$
is an open cover of $\Fc^{-1}(K),$ and hence has a finite subcover
$\left\{(B^{SO}_{\d_2}(U_0^{(i)})\times B_{\d_4(D_0^{(i)})}^\D(D_0^{(i)})\right\}_{i=1}^n$
with ``centers" $\tS_i=(U_0^{(i)}, D_0^{(i)}), 1\leq i\leq n.$

Define $\d_5=\min\{\d_3(\tS^{(i)}): 1\leq i\leq n\}.$ Then \eqref{final_est}  and \eqref{final_est_psr} hold whenever
$(V,\L)\in \Fc^{-1}(S),$ 
$S'\in  \Fc \big(B^{SO}_{\d_2}(V)\times B_{\d_5}^\D(\L)\big),$
$S_0\in K,$
 and $\tS_0\in \Fc^{-1}(S_0).$

Finally, let $\d_1=\min\{\sqrt{k}\,\d_2,\d_5\}.$
 Then $B_{\d_1}^\td(V,\L) \subset B^{SO}_{\d_2}(V)\times B_{\d_5}^\D(\L),$
so  \eqref{final_est}  and \eqref{final_est_psr}
hold for all $S'\in \Fc\big(B_{\d_1}^\td(V,\L)\big),$
$S_0\in K,$  and $\tS_0\in \Fc^{-1}(S_0).$
\end{proof}

\subsubsection{Proof of Theorem~\ref{prop:lsc}}

\begin{proof}[Proof of Theorem~\ref{prop:lsc}]
As mentioned after the statement of the theorem, part (a) is a special case of part (b), so it suffices to prove part (b).

Observe that if $\tilde{K}\subset \tim(p)$ is compact, then so are $\Fc(\tilde{K})$ and
(since $\Fc$ is proper [Proposition \ref{fproper}]) also $\Fc^{-1}(\Fc(\tilde{K}))$. Since $\tilde{K}\subset
\Fc^{-1}(\Fc(\tilde{K}))$, any property that is uniform over $\Fc^{-1}(\Fc(\tilde{K}))$ is uniform over $K$.
Hence, to prove the desired result for $\dpsr$, it suffices to consider compact subsets of $\tim(p)$ of
the form $\Fc^{-1}(K)$, where $K\subset\sympp$ is compact.

Let $S\in \sympp,$ let $K\subset \sympp$  be a compact set, and let $\e>0.$
Let $\d_1=\d_1(S,K,\e)$ be as in Lemma \ref{lsclemma-unif}.
Let $(V,\L)\in \Fc^{-1}(S)$, let $c_1=c_1(\L)$ and $\d_2=\d_2(\L)$ be as in Corollary \ref{slicecor},
and let $\d=\min\{\d_1/c_1,\d_2\}$.

Let $S'\in B_\d^\frob(S),$ and
let $S_0\in K,$  and let $\tS_0\in \Fc^{-1}(K)$.
   Since $\d\leq \d_2$, relations
\eqref{f(unionprodballs)}--\eqref{f(unionballs)} in Corollary \ref{slicecor},
ensure that $S'\in \Fc\left( B_{c_1\d}^\td(\tS) \right)$
for some $\tS\in  \Fc^{-1}(S).$
Since $c_1\d\leq \d_1$\,,
Lemma \ref{lsclemma-unif} implies that
$\dsr(S',S_0)^2 > \dsr(S,S_0)^2 -\e$
 and $\dpsr(S',\tS_0)^2 > \dsr(S,\tS_0)^2 -\e.$

This proves that $\dsr^2$  and $\dpsr^2$ are LSC in their first variables, locally uniformly with respect to their second variables.   The analogous
result for the unsquared functions $\dsr$  and $\dpsr$ then follow from Corollary \ref{sqrtlsc_unif}.
\end{proof}

\subsubsection{Proof of Lemma~\ref{lem:finite_variance}}

We first develop an inequality for $\dsr$ which plays the role of the triangle inequality.
\begin{lemma} \label{lem:sr_triangle}
 Let $S_2 \in \Symp$. Then there exists a constant $C = C(S_2) \in (0, \infty)$, depending only on the eigenvalues of $S_2$, such that for all $S_0,S_1\in \Symp$,
   $$\dsr(S_1,S_2) \le \dsr(S_1,S_0) + \dsr(S_0,S_2) + C(S_2).$$
\end{lemma}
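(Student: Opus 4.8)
The plan is to reduce the defect in the triangle inequality for $\dsr$ to a distance living entirely in the $SO(p)$-factor of $M(p)$, which is bounded independently of all three arguments. First I would use compactness of the fibers of $\mathcal{F}$ (noted after Definition~\ref{def:dsr}) to choose a minimal pair $(m_1^*,m_0^*)$ for $(S_1,S_0)$ and a minimal pair $(\tilde m_0,\tilde m_2)$ for $(S_0,S_2)$, and write $m_0^* = (U_0^*,D_0^*)$ and $\tilde m_0 = (V_0,\Lambda_0)$. The key observation is that $D_0^*$ and $\Lambda_0$ are both eigenvalue matrices of $S_0$, so their diagonals agree as multisets and hence differ by a permutation $\pi$; and because the action $h\cdot D = hDh^{-1}$ of $\mathcal{G}(p)$ on $\Diagp$ only sees the underlying unsigned permutation of $h$, I can pick $h\in\mathcal{G}(p)$ realizing $\pi$ — flipping one entry's sign if needed so that $\det h = 1$ — with $h\cdot\Lambda_0 = D_0^*$.

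Next I would transport the second minimal pair by $h$. Since the $\mathcal{G}(p)$-action on $M(p)$ is fiber-preserving and isometric (Section~\ref{sec:fiber}), we obtain $h\cdot\tilde m_0 = (V_0 h^{-1},D_0^*)\in\mathcal{F}^{-1}(S_0)$, $h\cdot\tilde m_2\in\mathcal{F}^{-1}(S_2)$, and $d_M(h\cdot\tilde m_0,\,h\cdot\tilde m_2) = d_M(\tilde m_0,\tilde m_2) = \dsr(S_0,S_2)$. Because $m_1^*\in\mathcal{F}^{-1}(S_1)$ and $h\cdot\tilde m_2\in\mathcal{F}^{-1}(S_2)$, the definition of $\dsr$ together with the triangle inequality for the genuine metric $d_M$ on $M(p)$ gives
\[
\dsr(S_1,S_2)\ \le\ d_M(m_1^*,\,h\cdot\tilde m_2)\ \le\ d_M(m_1^*,m_0^*) + d_M(m_0^*,\,h\cdot\tilde m_0) + d_M(h\cdot\tilde m_0,\,h\cdot\tilde m_2).
\]
The first term equals $\dsr(S_1,S_0)$ and the last equals $\dsr(S_0,S_2)$, so only the middle term remains to be controlled.

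For the middle term, note that $m_0^* = (U_0^*,D_0^*)$ and $h\cdot\tilde m_0 = (V_0 h^{-1},D_0^*)$ have the \emph{same} eigenvalue matrix $D_0^*$; hence by Definition~\ref{defn:geodesic_dist} the $\Diagp$-contribution vanishes and $d_M(m_0^*,\,h\cdot\tilde m_0)^2 = k\,d_{SO}(U_0^*,V_0 h^{-1})^2 \le k\,\diam(SO(p),d_{SO})^2$, where $(SO(p),d_{SO})$, being compact, has finite diameter (in fact $\diam(SO(p),d_{SO})\le \sqrt{\lfloor p/2\rfloor}\,\pi$). This proves the lemma with $C := \max\{1,\ \sqrt{k}\,\diam(SO(p),d_{SO})\}$, which depends only on $p$ and $k$ — in particular (trivially) only on the spectrum of $S_2$, a slightly stronger conclusion than stated. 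The only nontrivial idea is the alignment step: conjugating $\tilde m_0$ by the right element of $\mathcal{G}(p)$ forces its eigenvalue matrix to coincide with that of $m_0^*$, collapsing the whole triangle-inequality defect into the bounded rotational factor; every remaining step is routine.
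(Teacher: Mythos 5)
Your proof is correct, and it takes a genuinely different route from the paper's. The paper fixes a connected component $\mathcal{C}_0$ of $\mathcal{F}^{-1}(S_0)$ and invokes Proposition 3.5 of \cite{Groisser2017a} to choose minimal pairs whose $S_0$-elements (and then the $S_1$-element for the pair $(S_1,S_2)$) lie in prescribed components; it then chains five $d_M$-distances and absorbs the mismatch of eigenvector matrices within fibers, ending with the defect $3\sqrt{k}\,\diam(SO(p)) + d_{\mathcal{D}^+}(D_2,D_2')$, where $D_2, D_2'$ are two (possibly differently ordered) eigenvalue matrices of $S_2$ --- which is exactly why the constant in the statement is allowed to depend on the eigenvalues of $S_2$. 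You instead exploit the fact that the $\mathcal{G}(p)$-action on $M(p)$ is fiber-preserving and isometric, and that two eigenvalue matrices of $S_0$ differ by a permutation realizable (after an innocuous sign flip) by some $h\in\mathcal{G}(p)$: transporting the second minimal pair by $h$ forces the two eigendecompositions of $S_0$ to share the same $D$-component, so the whole defect collapses into one $SO(p)$-factor term bounded by $\sqrt{k}\,\diam(SO(p))$. This gives a shorter three-term chain, avoids the external minimal-pair-in-a-component result entirely, and yields the strictly stronger conclusion that $C$ can be taken independent of $S_2$ (depending only on $p$ and $k$), which trivially implies the stated lemma and is just as adequate for its downstream use in Lemma~\ref{lem:finite_variance}(b).
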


\begin{proof}
Note from Proposition 3.5 of \cite{Groisser2017a} that given any two $S', S'' \in \Symp$ and a connected component $\mathcal{C}'$ of $\mathcal{F}^{-1}(S')$, there is a minimal pair
$\big( (U',D'), (U'',D'') \big) \in \mathcal{C}' \times \Fc^{-1}(S'')$. Note also that if both $(U',D')$ and $(U,D)$ are in the same connected component $\mathcal{C}'$ of $S'$, then $D = D'$. Moreover, for any $(U',D'), (U,D) \in \mathcal{F}^{-1}(S')$,
\begin{align}
  d_M((U',D'), (U,D)) & \le  d_M((U',D'), (U,D')) + d_M((U,D'), (U,D)) \nonumber \\
                      & \le \sqrt{k}\mbox{diam}(SO(p)) +d_{\mathcal{D}^+}(D',D). \label{eq:1lem:sr_triangle}
\end{align}

Let $\mathcal{C}_0$ be a connected component of $\mathcal{F}^{-1}(S_0)$. For $i= 1,2$, let $\big( (U_i,D_i) ,(U_0^{(i)},D_0^{(i)}) \big) \in \mathcal{F}^{-1}(S_i) \times \mathcal{F}^{-1}(S_0) $ be minimal pairs with $(U_0^{(i)},D_0^{(i)})$ both lying in $\mathcal{C}_0$. Let $\mathcal{C}_1$ be the connected component of $\mathcal{F}^{-1}(S_1)$ containing $(U_1,D_1)$, and let $ \big( (U'_1,D'_1), (U_2',D_2') \big) \in \mathcal{F}^{-1}(S_1) \times \mathcal{F}^{-1}(S_2)$ be a minimal pair for $(S_1,S_2)$ with $(U'_1,D'_1) \in  \mathcal{C}_1$. Then,
$D_0^{(1)} = D_0^{(2)} =: D_0$, and $D_1' = D_1$.
Moreover,
$$\dsr(S_i,S_0) =d_M(U_i,D_i),(U_0^{(i)},D_0^{(i)})) = d_M(U_i,D_i),(U_0^{(i)},D_0 ) ),$$
and
$$\dsr(S_1,S_2) =d_M(U_1',D_1'),(U_2',D_2')) = d_M(U_1',D_1),(U_2',D_2' ) ).$$
Hence
\begin{align*}
 \dsr& (S_1,S_2) = d_M(U_1',D_1),(U_2',D_2' ) ) \\
    &\le d_M(U_1',D_1),(U_1,D_1 ) ) + d_M((U_1,D_1 ), (U_0^{(1)},D_0 ) ) + d_M( (U_0^{(1)},D_0 ) , (U_0^{(2)},D_0 ) ) \\
    &  \quad + d_M((U_0^{(2)},D_0 ), (U_2, D_2)) + d_M((U_2, D_2) , (U_2',D_2' ) )\\
    &\le \sqrt{k} \mbox{diam}(SO(p)) + \dsr(S_1,S_0) + \sqrt{k} \mbox{diam}(SO(p)) \\
    & \quad + \dsr(S_0,S_2) + (\sqrt{k} \mbox{diam}(SO(p)) + d_{\mathcal{D}^+}(D_2,D_2'))\\
    & \le \dsr(S_1,S_0) +  \dsr(S_0,S_2) + 3\sqrt{k} \mbox{diam}(SO(p)) +   d_{\mathcal{D}^+}(D_2,D_2'). \qedhere
\end{align*}
\end{proof}

\begin{proof}[Proof  of Lemma~\ref{lem:finite_variance}]
(a) Suppose that for some $(U_0,D_0) \in M(p)$, $\fpsr(U_0,D_0)<\infty$.  Then for any given $(U,D) \in M(p)$,
$$\fpsr (U,D)  = \int_{\Symp} \inf_{(U_X,D_X) \in \mathcal{F}^{-1}(X)} [ kd^2_{SO}(U_X,U) + d^2_{\mathcal{D}^+}(D_X,D)] dP.$$
By the triangle inequality, we have
\begin{align*}
  d^2_{SO}(U_X,U) & \le \{d_{SO}(U_X, U_0) + d_{SO}(U_0,U)\}^2 \\
    & \le 2 d^2_{SO}(U_X, U_0) + 2d^2_{SO}(U_0,U),
\end{align*}
and similarly $d^2_{\mathcal{D}^+}(D_X,D) \le 2d^2_{\mathcal{D}^+}(D_X,D_0) + 2d^2_{\mathcal{D}^+}(D_0,D)$.
Thus,
$$
\fpsr  (U,D)   \le 2 \int_{\Symp}  d^2_{\mathcal{PSR}}(X,(U_0,D_0)) dP + C < \infty,$$
where $C = 2k d^2_{SO}(U_0,U) + 2d^2_{\mathcal{D}^+}(D_0,D)$. Moreover, for any given $\Sigma \in \Symp$, choosing any $(U',D') \in \mathcal{F}(\Sigma)$, we have  $\fsr (\Sigma) \le \fpsr(U',D') < \infty$  by (\ref{SR_PSR_ineq}).

(b) Suppose that for some $S_0 \in \Symp$, $\fsr(S_0)<\infty$. For any given $S \in \Symp$, Lemma~\ref{lem:sr_triangle} gives
$\dsr(X,S) \le \dsr(X,S_0) + C'(S_0,S)$ for any $X \in \Symp$,  where $C'(S_0,S) = \dsr(S_0,S) + C(S) < \infty$. Thus,
\begin{align*}
 \fsr(S)& = \int_{\Symp} \dsr^2(X,S) dP \le \int_{\Symp} 2\dsr^2(X,S_0) dP + 2C'(S_0,S)^2 \\
        & = 2\fsr(S_0) + 2C'(S_0,S)^2< \infty. \qedhere
 \end{align*}
\end{proof}

\subsubsection{Proof of Lemma~\ref{lem:contfsr}}

\begin{proof}[Proof of Lemma~\ref{lem:contfsr}]
(i) By Theorem \ref{prop:lsc}, $\dsr^2\mid_{\Symp \times K}$ is LSC in the first variable, uniformly with respect to the second. Let $S_0 \in \Symp$ and $\epsilon >0$ be arbitrary, and let $U$ be the open set containing $S_0$ as in Definition~\ref{def:lsc}(i). Then for all $S \in U$ and $S' \in K$, $\dsr^2(S,S') > \dsr^2(S_0,S') - \epsilon.$ Hence
$$\fsr(S)  = \int_{K} \dsr^2(S, \cdot) dP > \int_{K} \left( \dsr^2(S_0,\cdot) - \epsilon \right) dP
          =  \fsr(S_0) - \epsilon.$$
Hence $\fsr$ is LSC at the arbitrary point $S_0$.

(ii) We will show that $(\fpsr)^{1/2}$ is Lipschitz with constant 1:
\be\label{diff_fpsrp}
\big|\fpsr(m_1)^{1/2}-\fpsr(m_2)^{1/2}\big|
\leq \td(m_1,m_2) \ \ \ \mbox{\rm for all $m_1,m_2\in \tim(p).$}
\ee
This will imply uniform continuity of $(\fpsr)^{1/2}$, and therefore continuity of $\fpsr$.
Let $m_1,m_2 \in M(p)$. Utilizing Lemma~\ref{lem:dpsr_continuity}(ii),
\begin{align}
  \big|\fpsr&(m_1)  -\fpsr(m_2)\big| \label{diff_fpsrp-1} \\
  \leq&\int_{\sympp}\big|\dpsr(X,m_1)- \dpsr(X,m_2)\big| \
\big[\dpsr(X,m_1)+ \dpsr(X,m_2)\big]\, P(dX) \nonumber \\
  \leq& d_M(m_1,m_2) \int_{\sympp} \
\big[\dpsr(X,m_1)+ \dpsr(X,m_2)\big]\, P(dX)\nonumber\\
\leq& \td(m_1,m_2) \left[ \fpsr(m_1)^{1/2} +\fpsr(m_2)^{1/2} \right].\nonumber
\end{align}
If $\dpsr(m_1)=\dpsr(m_2)=0$, then \eqref{diff_fpsrp} is true trivially. Otherwise, dividing both sides of
\eqref{diff_fpsrp-1} by $\left[ \fpsr(m_1)^{1/2} +\fpsr(m_2)^{1/2} \right]$ yields \eqref{diff_fpsrp}.
\end{proof}

\subsection{Proofs for Section 4.2}
\subsubsection{Proof of Proposition~\ref{coercivity}}

\begin{proof}[Proof of Proposition~\ref{coercivity}]
For $r>0$ let $$K_r=\{S\in\sympp : \mbox{every eigenvalue $\l$ of $S$
satisfies $|\log \l|\leq r$} \},$$ and let $\kappa_r =\diam(\Fc^{-1}(K_r)).$
For each $r$, the set $K_r$ is compact, and hence so is $\tilde{K}_r:=F^{-1}(K_r)$ (by Proposition \ref{fproper}).
Note also that $I\in K_r$ and $(I,I)\in \tilde{K}_r$ for every $r>0$.

Suppose $r_2>r_1>0$ and that $S_2\in\sympp\minus K_{r_2}$.
Then for any $m_1=(U_1,D_1)\in \tilde{K}_{r_1}$ and $m_2=(U_2,D_2)\in \Fc^{-1}(S_2)$,
the matrix $D_2$ has some eigenvalue $\l_2$ with $|\log\l_2|>r_2$, while
every eigenvalue $\l_1$ of $D_1$ satisfies $|\log\l_1|\leq r_1$, implying
$\td(m_1, m_2)\geq \ddpp(D_1,D_2)\geq r_2-r_1.$  Hence $\dsr(S_1,S_2)\geq r_2-r_1\,,$
and thus
\be\label{coercive_sr}
\fsr(S_2) \geq  \int_{K_{r_1}} \dsr(S_1, S_2)^2 P(dS_1)
\geq
(r_2-r_1)^2 P(K_{r_1}).
\ee

Now choose $r_1$  large enough that $P(K_{r_1})>0$; such
$r_1$ exists since $\Union_{r>0} K_r=\sympp$.  Let $r_2>r_1$ be large enough that
$(r_2-r_1)^2 P(K_{r_1})>\fsr(I).$ Then by \eqref{coercive_sr}, for every
$S\in \sympp\minus K_{r_2}$,
\be\label{faraway-2_sr}
\fsr(S)\geq (r_2-r_1)^2 P(K_{r_1}) >\fsr(I).
\ee
Since $I\in K_{r_2}$, \eqref{faraway-2_sr} implies that \eqref{coerce_sr} holds with $K=K_{r_2}$,
establishing part (a).

For (b), let $r_1$ be as above, consider a (new) arbitrary $r_2>r_1$, and  let
$m_1, m_2$ be as above. Then essentially the same argument as above shows that
 $\dpsr(S_1,m_2)
 \geq r_2-r_1$ and hence that
$\fpsrp(m_2)
\geq
(r_2-r_1)^2 P(K_{r_1}).$
%
Now let $r_2>r_1$  be large enough that
$(r_2-r_1)^2 P(K_{r_1})>\fpsrp(I,I).$ Then
for every $m\in \tim(p)\minus
\tilde{K}_{r_2}$ we have
$\fpsrp(m)
>\fpsrp(I,I).$
Since $(I,I)\in \tilde{K}_{r_2}$,
this implies that \eqref{coerce_psr} holds with $K=K_{r_2}$.
\end{proof}

\subsubsection{Proof of Theorem~\ref{thm:existence_popSR_PSRmeans}}

\begin{proof}[Proof of Theorem~\ref{thm:existence_popSR_PSRmeans}]
(a) By Proposition \ref{coercivity}(a), there exists a compact set $K \subset \Symp$ such that equation \eqref{coerce_sr} holds. But by Lemma~\ref{lem:contfpsr}, $\fsr$ is LSC, hence
achieves a minimum value on  $K$, say at $S_0$. By \eqref{coerce_sr},
$\fsr(S_0)$ is the minimum value
of $\fsr$ on all of $\Symp$.  Hence $\Esr$  is nonempty.

The proof for (b) is almost identical to the proof for (a), except that the finite PSR-variance condition actually ensures \emph{continuity} (not just semi-continuity) of $\fpsr$.
\end{proof}

\subsection{Proofs for Section 4.3}

\subsubsection{Proof of Lemma~\ref{eig_decomp_dist_bound}}

\begin{proof}[Proof of Lemma~\ref{eig_decomp_dist_bound}]
(a) Since $h \cdot (U,D) = (Uh^{-1},h \cdot D )$, and $h\cdot D = h  D h^{-1}$,
\begin{align*}
d_{M}((U,D),(Uh^{-1},hDh^{-1}))
&=\left\{d^2_{\mathcal{D}^+}(D,hDh^{-1}) + kd^2_{SO}(U,Uh^{-1})\right\}^{1/2} \nonumber\\
&\ge \sqrt{k}d_{SO}(U,Uh^{-1}) 
=\sqrt{k}d_{SO}(I_p,h^{-1}) \nonumber\\
&\ge \sqrt{k}\beta_{\mathcal{G}(p)}. \nonumber
\end{align*}

(b) The set $\mathcal{G}(p)$ contains the block-diagonal signed permutation matrix
\[ B =
\begin{pmatrix}
 I_{p-2} & 0 \\
0 & R(\frac{\pi}{2})
\end{pmatrix}
\]
where
\[ R(\tfrac{\pi}{2}) =
\begin{pmatrix}
\cos(\frac{\pi}{2}) & -\sin(\frac{\pi}{2}) \\
\sin(\frac{\pi}{2}) & \cos(\frac{\pi}{2})
\end{pmatrix}
=
\begin{pmatrix}
0 & -1 \\
1 & 0
\end{pmatrix}
.\]
It can be shown that
\[ {\rm Log}(B) =
\begin{pmatrix}
0 \hspace{1em} & 0 \\
0 \hspace{1em} & {\rm Log}(R(\frac{\pi}{2}))
\end{pmatrix}
,\]
where
\[ {\rm Log}\left( R(\tfrac{\pi}{2}) \right) =
\begin{pmatrix}
0 & -\frac{\pi}{2} \\
\frac{\pi}{2} & 0
\end{pmatrix}
.\]
Then we have that $d_{SO}(I_p,B)=\frac{\pi}{2}$, which implies that $\beta_{\mathcal{G}(p)} \le \frac{\pi}{2}$.

(c) Given $X \in S_p^{\rm top}$, let $(U_X,D_X)$ and $(U'_X,D'_X)$ be two distinct eigen-decompositions of $X$. From Theorem 3.3 of \cite{Jung2015}, there is an even signed-permutation $h$ such that $(U'_X,D'_X) = (Uh^{-1},h \cdot D)$. Since $(U_X,D_X)$ and $(U'_X,D'_X)$ are distinct, $h \neq I_p$. Applying Part (a) gives the result.
\end{proof}

\subsubsection{Proof of Lemma~\ref{lem:geom}}

\begin{proof}[Proof of Lemma~\ref{lem:geom}]
It is well known that $({\rm Diag}^+(p), g_{\mathcal{D}^+})$ has non-positive sectional curvature and infinite injectivity radius, and $(SO(p), k g_{SO})$ has non-negative sectional curvature (bounded above by $\Delta(SO(p), k g_{SO}) = 1/(4k)$) and injectivity radius $r_{\rm inj}(SO(p), k g_{SO}) = \sqrt{k}\pi$ (see Section 5 of \cite{Manton2004}).
Thus the injectivity radius of $(M, g_M)$ is $r_{\rm inj}(M, g_M) = r_{\rm inj}(SO(p), k g_{SO})$, and the sectional curvature of $(M, g_M)$ is bounded by $\Delta(M, g_M) = \Delta(SO(p), k g_{SO})$.

We apply Theorem 2.1 of \cite{Afsari2011}, which shows that the minimizer $\bar{m}(P)$ of $\int_{M(p)} d_M^2(\tilde{X}, m) P(d \tilde{X})$ uniquely exists and lies in $B_{r}^{d_M}(m_0)$, provided that
\begin{equation}
r \le \min\{ r_{\rm inj}(M, g_M), \pi/\sqrt{\Delta(M, g_M)} \}/2. \label{eq:lemma-bound0}
\end{equation}
Since $$\min\{ r_{\rm inj}(M, g_M), \pi/\sqrt{\Delta(M, g_M)} \}/2 = \min\{ \sqrt{k} \frac{\pi}{2} , 2\sqrt{k}\frac{\pi}{2} \} = \frac{\sqrt{k}\pi}{2}$$ and
$r \le \sqrt{k} \beta_{\mathcal{G}(p)} \le \frac{\sqrt{k}\pi}{2}$ (by Lemma~\ref{eig_decomp_dist_bound}(b)),  we have the desired bound (\ref{eq:lemma-bound0}).
\end{proof}

\subsubsection{Proof of Theorem \ref{thm:uniqueness} and Corollary \ref{uniqueness_theorem}}

\begin{proof}[Proof of Theorem \ref{thm:uniqueness}]
By the condition (\ref{eq:condition_unique}), for any $S_1,S_2 \in {\rm supp}(P)$,
\begin{equation}\label{eq:support_bound}
  \dsr(S_1, S_2) < r'_{cx}.
\end{equation}
Since the complement of $S_p^{\rm top}$ has volume zero in $\Symp$, we have for any $m \in M(p)$,
\begin{equation}\label{eq:support_bound2}
  \fpsr(m) = \int_{\Symp} \dpsr^2(X, m) P(dX) = \int_{\Sptop} \dpsr^2(X, m) P(dX).
\end{equation}

Fix an arbitrary $S_0 \in {\rm supp}(P) \cap \Sptop$. There are exactly $v_p := 2^{p-1} p !$ distinct eigen-decompositions in $\Fc^{-1}(S_0)$, and we label them by $\ell = 1,\ldots, v_p$, that is, $\Fc^{-1}(S_0) = \{m_\ell(S_0) : \ell = 1,\ldots ,v_p\}$. We claim the following:

\begin{claim}\label{lem:unique_matching}
For each $\ell$, and for each $S' \in {\rm supp}(P) \cap \Sptop$, one can uniquely choose $m_{\ell}(S') \in \Fc^{-1}(S')$ that forms a minimal pair with $m_\ell(S_0)$. Thus, one can uniquely label all eigen-decompositions  $\{m_{\ell}(S') : \ell =1, \dotsc, v_p \} = \Fc^{-1}(S')$ for all $S'  \in {\rm supp}(P) \cap \Sptop$.
\end{claim}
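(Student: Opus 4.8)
\textbf{Proof plan for Claim~\ref{lem:unique_matching}.}
The plan is to exploit the bound \eqref{eq:support_bound} together with the lower bound on the distance between distinct eigen-decompositions of a top-stratum matrix (Lemma~\ref{eig_decomp_dist_bound}(c)), which in turn rests on the convexity-radius interpretation of $r'_{cx}$. Fix $S_0\in{\rm supp}(P)\cap\Sptop$, enumerate $\Fc^{-1}(S_0)=\{m_\ell(S_0):\ell=1,\ldots,v_p\}$, and fix $\ell$ and $S'\in{\rm supp}(P)\cap\Sptop$. First I would show \emph{existence}: since $\Fc^{-1}(S')$ is finite (as $S'\in\Sptop$), there is at least one $m\in\Fc^{-1}(S')$ with $d_M(m_\ell(S_0),m)=\dpsr(S',m_\ell(S_0))$, and by Definition~\ref{def:dsr} and \eqref{SR_PSR_ineq} this common minimal value is $\dsr(S_0,S')<r'_{cx}$. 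So any minimizer $m$ lies in the open ball $B^{d_M}_{r'_{cx}}(m_\ell(S_0))$.

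The heart of the argument is \emph{uniqueness}. Suppose $m,m'\in\Fc^{-1}(S')$ both realize the minimum, so $d_M(m_\ell(S_0),m)=d_M(m_\ell(S_0),m')=\dsr(S_0,S')<r'_{cx}$. If $m\ne m'$, then by Lemma~\ref{eig_decomp_dist_bound}(c), $d_M(m,m')\ge\sqrt{k}\beta_{\mathcal{G}(p)}=4r'_{cx}$. But the triangle inequality in the metric space $(M(p),d_M)$ gives
\begin{equation*}
d_M(m,m')\le d_M(m,m_\ell(S_0))+d_M(m_\ell(S_0),m')<2r'_{cx},
\end{equation*}
a contradiction since $2r'_{cx}<4r'_{cx}$. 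Hence $m=m'$, and we may unambiguously call it $m_\ell(S')$. Doing this for each $\ell=1,\ldots,v_p$ produces a labeling of $\Fc^{-1}(S')$; the labels are distinct because $m_\ell(S_0)\ne m_{\ell'}(S_0)$ for $\ell\ne\ell'$ and each is matched to a point within distance $<r'_{cx}$, so if $m_\ell(S')=m_{\ell'}(S')$ then $d_M(m_\ell(S_0),m_{\ell'}(S_0))<2r'_{cx}<\sqrt{k}\beta_{\mathcal{G}(p)}$, again contradicting Lemma~\ref{eig_decomp_dist_bound}(c). Since $|\Fc^{-1}(S')|=v_p$, the map $\ell\mapsto m_\ell(S')$ is a bijection onto $\Fc^{-1}(S')$, as claimed.

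I expect the main obstacle to be purely a matter of careful bookkeeping rather than a genuine difficulty: one must be sure that the ``minimal value'' used to compare $m$ and $m'$ really is the same number $\dsr(S_0,S')$ regardless of which minimizer is chosen (this follows from the definition of $\dpsr$ and the fact, from \eqref{SR_distance_simplification}, that $\dpsr(S',m_\ell(S_0))=\dsr(S_0,S')$ since $S_0\in\Sptop$), and that the triangle inequality is being applied in $(M(p),d_M)$, which is a genuine metric space (unlike $(\Symp,\dsr)$). A secondary point to handle with care is that the claim asserts the labeling is \emph{consistent} across all $S'$ simultaneously --- but this is automatic once the per-$S'$ construction above is performed, since $S_0$ is held fixed throughout and each $m_\ell(S')$ is defined solely in terms of $m_\ell(S_0)$.
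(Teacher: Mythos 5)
Your proof is correct and follows essentially the same route as the paper's: existence via the support bound together with $\dpsr(S',m_\ell(S_0))=\dsr(S_0,S')$ (valid since $S_0\in\Sptop$), and uniqueness by combining Lemma~\ref{eig_decomp_dist_bound}(c) with the triangle inequality in $(M(p),d_M)$ and the relation $\sqrt{k}\beta_{\mathcal{G}(p)}=4r'_{cx}$. The paper phrases the uniqueness step as a reverse-triangle-inequality bound showing any other eigen-decomposition of $S'$ lies at distance $>r'_{cx}$ from $m_\ell(S_0)$, which is an equivalent reformulation of your contradiction argument between two minimizers.
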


\begin{proof}[Proof of Claim \ref{lem:unique_matching}]
The fact that both $S_0,S' \in {\rm supp}(P) \cap \Sptop$ and (\ref{eq:support_bound}) ensure that there exists an $m' \in \mathcal{F}^{-1}(S')$ such that
$d_M(m',m_{\ell}(S_0)) = d_{\mathcal{SR}}(S',S_0) < r'_{cx}$.
Choose such an $m'$ and label it to be $m_{\ell}(S') \in \mathcal{F}^{-1}(S')$.
Let $m_k(S') \in \mathcal{F}^{-1}(S')$ be such that $m_{k}(S') \neq m_{\ell}(S')$.
Then by the triangle inequality,
\begin{align}
 d_M(m_{k}(S') , m_\ell(S_0)) & \ge d_M(m_{k}(S'), m_{\ell}(S')) - d_M(m_{\ell}(S') , m_\ell(S_0)) \nonumber \\
 & > \sqrt{k}\beta_{\mathcal{G}(p)} - r'_{cx}
  = \frac{3}{4}\sqrt{k}\beta_{\mathcal{G}(p)}
  > r'_{cx}. \label{triangle_ineq_arg1}
\end{align}
Therefore, $m_{\ell}(S') \in \mathcal{F}^{-1}(S')$ is indeed the unique eigen-decomposition that forms a minimal pair with $m_{\ell}(S_0)$.
\end{proof}

By Claim~\ref{lem:unique_matching}, one can therefore label all eigen-decompositions $m_{\ell}(S)$ of all $S \in {\rm supp}(P) \cap \Sptop$, provided that an initial labeling of $S_0$ is given.
For
$\ell = 1,\ldots,v_p$ and for $r > 0$, define a set $H_\ell(r)$ by
\begin{align}
  H_\ell(r) & = \{ m \in M(p) : d_M(m ,m_{\ell}(S)) < r \ {\mbox{for all}} \ S \in {\rm supp}(P) \cap \Sptop \} \nonumber \\
       & = \bigcap_{S \in {\rm supp}(P) \cap \Sptop} B_{r}^{d_M} (m_{\ell}(S)). \label{eq:Hell}
\end{align}


\begin{claim}\label{lem:unique_matching2}

(a) If $r \le 2r'_{cx}$, then for $\ell \neq \ell'$, $H_\ell(r) \cap H_{\ell'}(r) = \emptyset$.

(b) If $r \ge r'_{cx}$, then for any  $S \in {\rm supp}(P) \cap \Sptop$, $m_\ell(S) \in H_\ell(r)$, for any $\ell = 1,\ldots, v_p$.

(c) If $r \le 2r'_{cx}$, then for any $S \in {\rm supp}(P) \cap \Sptop$ and for any $m \in H_\ell(r)$, the eigen-decomposition of $S$ closest to $m$ is $m_\ell(S)$.

(d) For any $S_1, S_2 \in {\rm supp}(P) \cap \Sptop$, $(m_{\ell}(S_1), m_{\ell'}(S_2))$ is a minimal pair if and only if $\ell = \ell'$.

\end{claim}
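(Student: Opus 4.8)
The plan is to reduce all four assertions to a single core estimate combined with the uniform separation bound of Lemma~\ref{eig_decomp_dist_bound}(c). Recall that $r'_{cx}=\sqrt{k}\beta_{\mathcal{G}(p)}/4$, so Lemma~\ref{eig_decomp_dist_bound}(c) says any two distinct eigen-decompositions of a matrix in $\Sptop$ are at $d_M$-distance at least $4r'_{cx}$. The core estimate to establish first is: for every label $\ell$ and every pair $S_1,S_2\in{\rm supp}(P)\cap\Sptop$, the pair $(m_\ell(S_1),m_\ell(S_2))$ is minimal, hence $d_M(m_\ell(S_1),m_\ell(S_2))=\dsr(S_1,S_2)<r'_{cx}$ by \eqref{eq:support_bound}. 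This is the forward direction of (d) (that $\ell=\ell'$ yields a minimal pair), and it will drive (a), (b), (c).

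First I would prove the core estimate. By the labeling of Claim~\ref{lem:unique_matching}, $d_M(m_\ell(S_i),m_\ell(S_0))=\dsr(S_i,S_0)<r'_{cx}$ for $i=1,2$, so the triangle inequality gives $d_M(m_\ell(S_1),m_\ell(S_2))<2r'_{cx}$. Since $S_1\in\Sptop$, \eqref{SR_distance_simplification} provides some eigen-decomposition $m_k(S_2)\in\Fc^{-1}(S_2)$ (with $k$ a priori unknown) forming a minimal pair with $m_\ell(S_1)$, so $d_M(m_\ell(S_1),m_k(S_2))=\dsr(S_1,S_2)<r'_{cx}$. If $k\neq\ell$, then Lemma~\ref{eig_decomp_dist_bound}(c) together with the triangle inequality yields $4r'_{cx}\le d_M(m_k(S_2),m_\ell(S_2))\le d_M(m_k(S_2),m_\ell(S_1))+d_M(m_\ell(S_1),m_\ell(S_2))<r'_{cx}+2r'_{cx}=3r'_{cx}$, a contradiction; hence $k=\ell$, proving the core estimate. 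For the converse direction of (d): if $(m_\ell(S_1),m_{\ell'}(S_2))$ is minimal then $d_M(m_\ell(S_1),m_{\ell'}(S_2))=\dsr(S_1,S_2)<r'_{cx}$, and combining with the core estimate gives $d_M(m_\ell(S_2),m_{\ell'}(S_2))<2r'_{cx}<4r'_{cx}$, so Lemma~\ref{eig_decomp_dist_bound}(c) forces $m_\ell(S_2)=m_{\ell'}(S_2)$, i.e. $\ell=\ell'$.

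The remaining parts are short. For (a): if $m\in H_\ell(r)\cap H_{\ell'}(r)$ with $r\le 2r'_{cx}$ and $\ell\neq\ell'$, picking any $S\in{\rm supp}(P)\cap\Sptop$ (say $S=S_0$) gives $d_M(m,m_\ell(S))<2r'_{cx}$ and $d_M(m,m_{\ell'}(S))<2r'_{cx}$, hence $d_M(m_\ell(S),m_{\ell'}(S))<4r'_{cx}$, contradicting Lemma~\ref{eig_decomp_dist_bound}(c) since $m_\ell(S)\neq m_{\ell'}(S)$; so $H_\ell(r)\cap H_{\ell'}(r)=\emptyset$. For (b): when $r\ge r'_{cx}$, the core estimate gives $d_M(m_\ell(S),m_\ell(S'))<r'_{cx}\le r$ for every $S'\in{\rm supp}(P)\cap\Sptop$, so $m_\ell(S)$ lies in each ball $B_r^{d_M}(m_\ell(S'))$ and hence in $H_\ell(r)$ by \eqref{eq:Hell}. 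For (c): let $r\le 2r'_{cx}$, $S\in{\rm supp}(P)\cap\Sptop$, and $m\in H_\ell(r)$, so $d_M(m,m_\ell(S))<2r'_{cx}$; for $k\neq\ell$, $d_M(m,m_k(S))\ge d_M(m_k(S),m_\ell(S))-d_M(m,m_\ell(S))>4r'_{cx}-2r'_{cx}=2r'_{cx}>d_M(m,m_\ell(S))$, so $m_\ell(S)$ is the unique eigen-decomposition of $S$ closest to $m$.

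I do not expect a genuine obstacle: every step is a triangle inequality fed by the $4r'_{cx}$ separation of Lemma~\ref{eig_decomp_dist_bound}(c). The point requiring the most care is the core estimate (equivalently, part (d)): one must track that the a priori unknown label $k$ of the minimal partner of $m_\ell(S_1)$ in $\Fc^{-1}(S_2)$ is forced to equal $\ell$, and keep the chain of constants $r'_{cx},2r'_{cx},3r'_{cx},4r'_{cx}$ straight; note also that the use of \eqref{SR_distance_simplification} is legitimate only because $S_1\in\Sptop$.
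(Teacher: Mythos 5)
Your proof is correct, and it relies on exactly the same two ingredients as the paper's proof -- the separation bound $d_M \ge \sqrt{k}\beta_{\mathcal{G}(p)} = 4r'_{cx}$ of Lemma~\ref{eig_decomp_dist_bound}(c) and triangle inequalities anchored at the labeling of Claim~\ref{lem:unique_matching} and the support bound \eqref{eq:support_bound} -- but it organizes them in a different order. The paper proves (c) first (its display \eqref{eq:uniquness_2} is literally your argument for (c)), declares (b) ``immediate by the definition of $H_\ell$,'' and then reads (d) off from (b) and (c) with $r \in [r'_{cx}, 2r'_{cx}]$, and finally proves (a); you instead prove the forward half of (d) directly as a ``core estimate,'' using \eqref{SR_distance_simplification} (legitimate since $S_1\in\Sptop$, with the infimum attained because the fiber of $S_2$ is finite) to produce a minimal partner $m_k(S_2)$ of $m_\ell(S_1)$ and the $3r'_{cx}$-versus-$4r'_{cx}$ comparison to force $k=\ell$, and then deduce (b) and the converse half of (d) from it. What your ordering buys is an explicit justification of (b) at $r = r'_{cx}$, which is the case actually used later in the proof of Theorem~\ref{thm:uniqueness} through $H_\ell(r'_{cx})$: membership $m_\ell(S)\in H_\ell(r'_{cx})$ requires precisely the bound $d_M(m_\ell(S),m_\ell(S')) = \dsr(S,S') < r'_{cx}$ that your core estimate supplies, whereas in the paper this is genuinely immediate only for $r$ near $2r'_{cx}$ (via the triangle inequality through $S_0$) and for $r = r'_{cx}$ it rests on the minimal-pair property (d) established afterwards. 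Your arguments for (a) and (c) coincide with the paper's; neither route is more general, and both keep the same chain of constants $r'_{cx}, 2r'_{cx}, 3r'_{cx}, 4r'_{cx}$.
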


\begin{proof}[Proof of Claim \ref{lem:unique_matching2}]
Item (b) is immediate by the definition of $H_\ell$ (\ref{eq:Hell}).

Let $S \in {\rm supp}(P) \cap \Sptop$, $m \in H_\ell(r)$ and note that for any $\ell' \neq \ell$,
\begin{align}
 d_M(m,m_{\ell'}(S))  & \ge d_M(m_{\ell'}(S),m_{\ell}(S)) - d_M(m,m_{\ell}(S)) \nonumber  \\
   & > 4r_{cx}' - d_M(m,m_{\ell}(S)) > d_M(m,m_{\ell}(S)), \label{eq:uniquness_2}
\end{align}
in which we used the triangle inequality and Lemma~\ref{eig_decomp_dist_bound} (c),  and the fact that $d_M(m,m_{\ell}(S)) < 2r_{cx}'$ (given by the condition $r \le 2r'_{cx}$ and the definition of $H_\ell(r)$). This shows (c).

Take $r \in [ r'_{cx}, 2r'_{cx}]$, then parts (b) and (c) are true. To verify (d), from part(c), replace $m$ by $m_{\ell}(S_1)$, $m_{\ell'}(S')$ by $m_{\ell'}(S_2)$ for $\ell \neq \ell'$.

To verify (a) it is sufficient to assume $r = 2r'_{cx}$. Let $m \in H_\ell(r)$, then there exists an $S' \in  {\rm supp}(P) \cap \Sptop$ such that
 $d_M(m,m_{\ell}(S'))  < 2r_{cx}'$. But
  $$d_M(m,m_{\ell'}(S')) \ge d_M(m_{\ell'}(S'),m_{\ell}(S')) - d_M(m,m_{\ell}(S')) > 4r_{cx}' - 2r_{cx}' = 2r_{cx}',$$ thus yielding $m \notin H_{\ell'}(r)$.
\end{proof}

For each $\ell = 1,\ldots, v_p$ write $H_\ell^{\rm top}(r'_{cx})$ for  $H_\ell(r'_{cx}) \cap M_p^{\rm top}$ for notational simplicity.
Fix an $\ell = 1,\ldots, v_p$, and consider the eigen-composition map $\Fc$ restricted to $H_\ell^{\rm top}(r'_{cx})$, $\Fc | _{H_\ell^{\rm top}(r'_{cx})}: H_\ell^{\rm top}(r'_{cx}) \to \Symp$. Since for any $S \in  {\rm supp}(P) \cap \Sptop$, $\Fc^{-1}(S)$ intersects with $H_\ell^{\rm top}(r'_{cx})$ at a unique point, there exists the push-forward measure $P \circ \Fc | _{H_\ell^{\rm top}(r'_{cx})}$ supported on ${H_\ell^{\rm top}(r'_{cx})} \subset M(p)$. For each $\ell$, denote $P_\ell$ for this ``push-forwarded'' probability measure on $M(p)$. Since the support of $P_\ell$ lies in $H_\ell^{\rm top}(r'_{cx}) \subset B_{r'_{cx}}(m_\ell(S))$ for any $S \in {\rm supp}(P) \cap \Sptop$, by Lemma \ref{lem:geom}, there exits a unique Fr\'{e}chet mean $\bar{m}_\ell := \bar{m}(P_\ell) \in M(p)$ of $P_\ell$, and $\bar{m}_\ell \in B_{r'_{cx}}(m_\ell(S))$. Since the above holds for any $S \in {\rm supp}(P) \cap \Sptop$, we have that
\begin{equation}\label{eq:meaninHell}
  \bar{m}_\ell  \in H_\ell(r'_{cx}).
\end{equation}

We now show that the set $\{\bar{m}_\ell : \ell = 1,\ldots, v_p\}$ is exactly the PSR mean set, or, equivalently that $\bar{m}_\ell$'s are the only minimizers of (\ref{eq:support_bound2}).
Let $m \in M(p)$ be arbitrary. Choose any $S \in  {\rm supp}(P) \cap \Sptop$.

 If $m \in H_\ell(2r'_{cx})$ for some $\ell$, but $m \neq \bar{m}_\ell$ then
\begin{align*}
  \fpsr(m) & = \int_{{\rm supp}(P) \cap \Sptop} \min_{l = 1,\ldots, v_p} d_M^2(m_l(X), m) P(dX) \\
         & = \int_{{\rm supp}(P) \cap \Sptop} d_M^2(m_\ell(X), m) P(dX) \quad (\mbox{by Lemma~\ref{lem:unique_matching2}(c)})\\
         & > \int_{{\rm supp}(P) \cap \Sptop} d_M^2(m_\ell(X), \bar{m}_\ell) P(dX)\\
         & =  \fpsr( \bar{m}_\ell),
\end{align*}
in which the strict inequality is given by the fact that $\bar{m}_\ell$ is the unique Fr\'{e}chet mean of $P_\ell$.

Next, suppose that $m \notin \bigcup_{l = 1}^{v_p} H_l(2r'_{cx})$.
For any $\ell = 1,\ldots,v_p$, we have $m \notin H_\ell(2r'_{cx})$, and there exits an $S_\ell'  \in {\rm supp}(P) \cap \Sptop$ such that
$$d_M(m,m_\ell(S_\ell' )) > 2r'_{cx}.$$
Thus, for any $S \in {\rm supp}(P) \cap \Sptop$ and for any $\ell = 1,\ldots,v_p$,
$d_M(m_\ell(S_\ell' ), m_\ell(S)) = \dsr(S_\ell' ,S)$ by Lemma~\ref{lem:unique_matching2}(d), and
\begin{align*}
  d_M(m,m_\ell(S)) &\ge d_M(m,m_\ell(S')) - d_M(m_\ell(S'), m_\ell(S)) \\
   & > 2r'_{cx} - r'_{cx} = r'_{cx}.
\end{align*}
(We used the fact that for $S_\ell' ,S \in {\rm supp}(P)$, $\dsr(S_\ell' ,S) < r_{cx}'$ (\ref{eq:support_bound}.) Therefore,
\begin{align*}
  \fpsr(m) & = \int_{{\rm supp}(P) \cap \Sptop} \min_{l = 1,\ldots, v_p} d_M^2(m_l(X), m) P(dX) \\
         & >  \int_{{\rm supp}(P) \cap \Sptop} \min_{l = 1,\ldots, v_p} (r'_{cx})^2 P(dX) \\
         & =(r'_{cx})^2.
\end{align*}

However, for any $S \in {\rm supp}(P) \cap \Sptop$
\begin{align*}
\fpsr(\bar{m}_{\ell}) & = \int_{{\rm supp}(P) \cap \Sptop} d_M^2(m_\ell(X), \bar{m}_{\ell}) P(dX) \\
               & \le  \int_{{\rm supp}(P) \cap \Sptop} d_M^2(m_\ell(X), m_\ell(S)) P(dX) \\
               & <   \int_{{\rm supp}(P) \cap \Sptop} (r'_{cx})^2 P(dX) \\
         & =(r'_{cx})^2.
\end{align*}

The above two results show that the set $E_n := \{\bar{m}_{\ell}: \ell = 1,\ldots, v_p\}$ is exactly the partial scaling-rotation mean set for the sample $X_1, \dotsc, X_n$. Since there are exactly $v_p = 2^{p-1}p!$ elements in $E_n$, it follows from (\ref{sample_PSR_mean_nonunique}) that the elements of $E_n$ must belong to the same orbit under the action of $\mathcal{G}(p)$. Part (a) is now proved.

\end{proof}

\begin{proof}[Proof of Corollary \ref{uniqueness_theorem}]
For part (a), the sample $X_1,\ldots,X_n \in \Sptop$ satisfies the support condition~(\ref{eq:support_bound}). A proof of part (a) is given by following the proof of Theorem~\ref{thm:uniqueness}, with the probability measure $P$ replaced by the empirical measure given by the sample $X_1,\ldots,X_n$.

To prove (b), for any given $\ell = 1,\ldots, v_p$,
set the initial guess $\hat{m}^{(0)}$ to be the eigen-decomposition $m_{\ell}(X_1)$ of $X_1$. Then $m_{\ell}(X_i)$ forms a minimal pair with $\hat{m}^{(0)}$ for $i = 1,\ldots,n$, and, as seen earlier, is the unique element of $\mathcal{F}^{-1}(X_i)$ with this property. Thus, for each $i = 1,\ldots,n$, $m_{\ell}(X_i)$ is the unique choice of $m_i^{(0)}$ in  Step 1 of the algorithm.  Since $\bar{m}_{\ell}$ is the unique Fr\'{e}chet mean of $\{m_{1,\ell},\ldots,m_{n,\ell} \}$ by Lemma~\ref{lem:geom} ,  Step 2 of the algorithm yields
$\hat{m}^{(1)}=\bar{m}_{\ell}.$  Thus $\hat{m}^{(1)}$ is exactly the sample PSR mean $\bar{m}_{\ell}$, and
the sample PSR mean set is the orbit $\mathcal{G}(p) \cdot \hat{m}^{(1)}$.
Since $\hat{m}^{(1)} \in H_\ell(r'_{cx})$ by (\ref{eq:meaninHell}), the unique choice of $m_i^{(1)}$ in Step 2 of the procedure is $m_{\ell}(X_i)$, the same as the previous iteration. Thus, $\hat{m}^{(2)} = \hat{m}^{(1)}$ and the algorithm terminates.
\end{proof}

\subsubsection{Proof for the statements in Remark~\ref{remark:4.9}}\label{remark:4.9proof}

Proof of ``(i) yields \eqref{eq:condition_unique2}": Since $\dsr$ is a metric when restricted to $\Sptop$, and $S_0, X_i \in \Sptop$, we have $\dsr(X_i,X_j) \le \dsr(X_i, S_0) + \dsr(X_j,S_0) < r'_{cx}$ for all $i,j = 1,\ldots, n$.

Our proof of ``(ii) yields \eqref{eq:condition_unique2}" consists of two parts.

Part (1): Suppose that $m \in M^{\rm top}(p)$. Then by (\ref{SR_PSR_ineq}),
$$\dsr(X_i, \mathcal{F}(m)) \le \dpsr(X_i, m) < r'_{cx}/2.$$
(In fact, $\dsr(X_i, \mathcal{F}(m)) = \dpsr(X_i, m)$ in this case.) Since $\mathcal{F}(m)\in \Sptop$, the statement for (i) above yields \eqref{eq:condition_unique2}.

Part (2): Suppose that $m \notin  M^{\rm top}(p)$. Since condition (ii) is true, we may choose an $\epsilon \in (0, \max\{r'_{cx}/2 - \dpsr(X_i,m): i =1,\ldots,n\})$ so that
$$\dpsr(X_i, m) < r'_{cx}/2 - \epsilon$$ for all $i = 1,\ldots, n$.
Since $ M^{\rm top}(p)$ is dense in $M(p)$, one can choose $m_\epsilon \in  M^{\rm top}(p)$ such that
$d_M(m, m_\epsilon) < \epsilon$.
Recall that for each $i$ $X_i \in \Sptop$ and we write $\mathcal{F}(X_i) = \mathcal{G}(p)\cdot m_i$ for some $m_i \in \mathcal{F}(X_i) \subset M^{\rm top}(p)$. Then,
\begin{align*}
  \dpsr(X_i, m_\epsilon)  & = \inf_{h \in \mathcal{G}(p)} d_M(h\cdot m_i, m_\epsilon) \\
                          & \le \inf_{h \in \mathcal{G}(p)}d_M(h\cdot m_i, m) + d_M(m, m_\epsilon)\\
                          & = \dpsr(X_i, m) + d_M( m_\epsilon, m)\\
                          & < (r'_{cx}/2 - \epsilon) + \epsilon = r'_{cx}/2.
\end{align*}
Since for $m_\epsilon \in M^{\rm top}(p)$, $\dpsr(X_i, m_\epsilon) < r'_{cx}/2$, Part (1) gives \eqref{eq:condition_unique2}.

The following can be verified similarly: The condition \eqref{eq:condition_unique} of Theorem \ref{thm:uniqueness} is guaranteed if either (i)$^\ast$ or (ii)$^\ast$ below is satisfied. Let $X$ be a random variable following the absolutely continuous distribution $P$ on $\Symp$.
\begin{enumerate}
  \item[(i)$^\ast$] There exists an $S_0 \in \Sptop$ such that $P(\dsr(S_0,X) < r'_{cx}/2) = 1$.
  \item[(ii)$^\ast$]  There exists an $m \in M(p)$ such that $P( \dpsr(X,m) < r'_{cx}/2) = 1$.
\end{enumerate}

We also provide a toy example for the fact: ``For an $S_0 \in \Splwr$, even if a condition $\dsr(S_0,X_i) < \epsilon$ ($i = 1,\ldots,n$) is satisfied for arbitrarily small $\epsilon$, $d_{\mathcal{SR}}(X_i,X_j)$ may be larger than $r'_{cx}$."
Fix $\epsilon >0$. Let $p =2$, $S_0 = I_2$, $X_1 = R(0)\mbox{diag}(e^{\epsilon/2},e^{-{\epsilon/2}})R(0)'$ and $X_2 = R(\pi/4)\mbox{diag}(e^{\epsilon/2},e^{-{\epsilon/2}})R(\pi/4)'$. Then $\dsr(S_0,X_i) = \epsilon/\sqrt{2} < \epsilon$ for $i = 1,2$. However, $\dsr(X_1,X_2) = \sqrt{k}\pi/4 > r'_{cx}$. (For $p = 2$, $r'_{cx} = \sqrt{k}\beta_{\mathcal{G}(p)}/4 =\sqrt{k}\pi/8$.)
%
%

\subsubsection{Proof of Corollary~\ref{cor:uniqueness}}

\begin{proof}[Proof of Corollary~\ref{cor:uniqueness}]
Since $r < r'_{cx}/2$, we have $1= P(\dsr(S_0,X_i) \le r) \le P(\dsr(S_0,X_i) \le r'_{cx}/2)$.
By Remark~\ref{remark:4.9} and Theorem~\ref{thm:uniqueness}  (more precisely, Condition (i)$^\ast$ in Appendix~\ref{remark:4.9proof} is satisfied, which in turn implies that the condition of Theorem~\ref{thm:uniqueness} is satisfied), the PSR mean is unique up to the action of $\mathcal{G}(p)$.   Assertion (ii) is given by Theorem~\ref{thm:avoid_low_strata_SR}.
Theorem~\ref{thm:SRvsPSRequivalence} is applied with assertion (ii) to yield $\Esr = \mathcal{F}(\Epsr)$. By (i), $\Epsr$ is the orbit $\mathcal{G}(p)\cdot (U,D)$ for some $(U,D) \in M(p)$, and $\Esr$ only contains the SPD matrix $\bar{X}:= UDU'$.
\end{proof}

\subsection{Proofs for Section~\ref{asymptotics}} \label{app:proof_of_consistency} \label{sec:wheretoputthese}

\subsubsection{Proofs of Theorem~\ref{thm:consistency} and related results}
\begin{proof}[Proof of Theorem~\ref{thm:consistency}]
We use the following lemma.
\begin{lemma}\label{lem:fnpsr_convergence}
Under the condition of Theorem~\ref{thm:consistency}, the following holds with probability 1: For any $m \in M(p)$ and for any sequence $m_n \in M(p)$ satisfying $d_M(m_n, m) \to 0$ as $n \to \infty$, we have
\begin{equation}\label{eq:fnpsrtofpsrmn}
\lim_{n\to\infty}\fnpsr(m_n) = \fpsr(m),
\end{equation}
and in particular
\begin{equation}\label{eq:fnpsrtofpsr}
  \lim_{n\to\infty}\fnpsr(m) = \fpsr(m).
\end{equation}
\end{lemma}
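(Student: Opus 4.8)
The plan is to deduce \eqref{eq:fnpsrtofpsrmn} and \eqref{eq:fnpsrtofpsr} from the classical strong law of large numbers applied on a countable dense subset of $M(p)$, combined with a purely deterministic Lipschitz-type continuity estimate for $\fnpsr$ that follows from Lemma~\ref{lem:dpsr_continuity}(b). First I would record the basic estimate: for any $a,b\in M(p)$ and any $X\in\Symp$, Lemma~\ref{lem:dpsr_continuity}(b) gives both $|\dpsr(X,a)-\dpsr(X,b)|\le d_M(a,b)$ and $\dpsr(X,a)\le \dpsr(X,b)+d_M(a,b)$, so that, using $\dpsr^2(X,a)-\dpsr^2(X,b)=\bigl(\dpsr(X,a)-\dpsr(X,b)\bigr)\bigl(\dpsr(X,a)+\dpsr(X,b)\bigr)$,
\[
|\dpsr^2(X,a)-\dpsr^2(X,b)|\le d_M(a,b)\bigl(2\,\dpsr(X,b)+d_M(a,b)\bigr).
\]
Averaging over a sample $X_1,\dots,X_n$ yields the deterministic bound
\begin{equation}
|\fnpsr(a)-\fnpsr(b)|\le d_M(a,b)\Bigl(\tfrac{2}{n}\sum_{i=1}^n\dpsr(X_i,b)+d_M(a,b)\Bigr),\tag{$\dagger$}
\end{equation}
and integrating against $P$ yields the analogous $|\fpsr(a)-\fpsr(b)|\le d_M(a,b)\bigl(2\,E[\dpsr(X,b)]+d_M(a,b)\bigr)$, which is finite because $E[\dpsr(X,b)]\le\fpsr(b)^{1/2}<\infty$ under the finite-PSR-variance hypothesis (Lemma~\ref{lem:finite_variance}).

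Since $M(p)$ is a separable metric space, I would then fix a countable dense set $\{q_j\}_{j\ge1}\subset M(p)$. For each fixed $j$ the variables $\dpsr^2(X_i,q_j)$ are i.i.d.\ with finite mean $\fpsr(q_j)$, and the variables $\dpsr(X_i,q_j)$ are i.i.d.\ with finite mean at most $\fpsr(q_j)^{1/2}$; by the strong law of large numbers there is therefore a single event $\Omega_0$ of probability $1$ on which, for every $j$, $\fnpsr(q_j)\to\fpsr(q_j)$ and $\tfrac1n\sum_{i=1}^n\dpsr(X_i,q_j)\to E[\dpsr(X,q_j)]$ simultaneously. Everything below is argued on $\Omega_0$. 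Using $(\dagger)$ with $b=q_j$, one sees that for an arbitrary $m\in M(p)$ and any $q_j$ with $d_M(q_j,m)\le1$ the difference $\tfrac1n\sum_i\dpsr(X_i,m)-\tfrac1n\sum_i\dpsr(X_i,q_j)$ is at most $d_M(m,q_j)$ in absolute value, so $\tfrac1n\sum_i\dpsr(X_i,m)$ is bounded in $n$ for every $m$.

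I would then upgrade pointwise convergence from $\{q_j\}$ to all of $M(p)$: given $m$, choose $q_j\to m$ and split $|\fnpsr(m)-\fpsr(m)|$ as $|\fnpsr(m)-\fnpsr(q_j)|+|\fnpsr(q_j)-\fpsr(q_j)|+|\fpsr(q_j)-\fpsr(m)|$. Taking $\limsup_{n\to\infty}$ on $\Omega_0$ annihilates the middle term, bounds the first by $d_M(m,q_j)\bigl(2\,E[\dpsr(X,q_j)]+d_M(m,q_j)\bigr)$ via $(\dagger)$, and bounds the third by $d_M(q_j,m)\bigl(2\,E[\dpsr(X,m)]+d_M(q_j,m)\bigr)$ by the population estimate; since $E[\dpsr(X,q_j)]$ stays bounded as $q_j\to m$, letting $j\to\infty$ forces $\limsup_n|\fnpsr(m)-\fpsr(m)|=0$, which is \eqref{eq:fnpsrtofpsr}. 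Finally, for $m_n\to m$, write $|\fnpsr(m_n)-\fpsr(m)|\le|\fnpsr(m_n)-\fnpsr(m)|+|\fnpsr(m)-\fpsr(m)|$; the second term vanishes by \eqref{eq:fnpsrtofpsr}, and by $(\dagger)$ the first is at most $d_M(m_n,m)\bigl(\tfrac2n\sum_i\dpsr(X_i,m)+d_M(m_n,m)\bigr)$, which tends to $0$ because $d_M(m_n,m)\to0$ and $\tfrac1n\sum_i\dpsr(X_i,m)$ is bounded in $n$; this is \eqref{eq:fnpsrtofpsrmn}.

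The main obstacle is entirely bookkeeping around the ``with probability $1$, for all $m$'' quantifier: the SLLN only yields a $P$-null exceptional set for each fixed point, so the separability of $M(p)$ and the deterministic estimate $(\dagger)$ are precisely what let the countably many good events be merged into one $\Omega_0$ and the convergence be transported off the dense set and onto moving sequences. Once $(\dagger)$ and its population analogue are established, the rest is a routine three-term-split argument.
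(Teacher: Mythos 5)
Your proposal is correct and follows essentially the same route as the paper's proof: the 1-Lipschitz bound from Lemma~\ref{lem:dpsr_continuity}(b) yielding the estimate $(\dagger)$ (the paper's inequality for $\fnpsr$), the SLLN on a countable dense subset merged into a single probability-one event, and the transfer to arbitrary points and moving sequences via the three-term split. The only cosmetic difference is that you control $|\fpsr(q_j)-\fpsr(m)|$ by the population analogue of $(\dagger)$, whereas the paper invokes the continuity of $\fpsr$ (Lemma~\ref{lem:contfpsr}), which is proved by the same estimate.
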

\begin{proof}[Proof of Lemma~\ref{lem:fnpsr_convergence}]
For any given $m \in M(p)$, since the random variable $\dpsr^2(X,m)$ is integrable (Proposition~\ref{prop:fsr_fpsr is well defined}), we have $P( \lim_{n\to\infty} \fnpsr(m) =\fpsr(m)) = 1$ by the strong law of large numbers. We shall extend this result to
\begin{equation}\label{eq:fnpsrtofpsr_proof}
  P\left( \lim_{n\to\infty}\fnpsr(m) =\fpsr(m)\ \mbox{ for all }\ m \in M(p) \right) = 1,
\end{equation}
thus showing (\ref{eq:fnpsrtofpsr}).

Let $m_1,m_2,\ldots$ be a countable dense sequence in $M(p)$.
Since for each $k$, $\lim_{n\to\infty}\fnpsr(m_k) =\fpsr(m_k)$ almost surely, and $\{m_k\}$ is countable,
\begin{equation}\label{eq:densesequence converge}
  P\left( \lim_{n\to\infty}\fnpsr(m_k) =\fpsr(m_k)\ \mbox{ for all }\ k=1,2,\ldots \right) = 1,
\end{equation}
Moreover, an argument similar to above leads us to conclude that, for every $k$,
\begin{equation}\label{eq:convergence_sequence_dpsraverage}
\frac{1}{n}\sum_{i=1}^n \dpsr(X_i,m_k) \to \int_{\Symp} \dpsr(X,m_k) P(dX) < \infty
\end{equation}
as $n\to\infty$ almost surely.

Observe that by Lemma~\ref{lem:dpsr_continuity}(ii), for any $X\in\Symp$, $m,m' \in M(p)$,
$$|\dpsr^2(X,m) - \dpsr^2(X,m')| \le d_M(m,m')(2\dpsr(X,m) + d_M(m,m')),$$
which in turn leads to
\begin{align}
  |\fnpsr(m) -& \fnpsr(m') | \le \frac{1}{n}\sum_{i=1}^n \left| \dpsr^2(X_i,m) - \dpsr^2(X_i,m')\right|\nonumber\\
   & \le g_n(m,m') := d_M(m,m')\left( \frac{2}{n}\sum_{i=1}^n \dpsr(X_i,m') +  d_M(m,m') \right). \label{eq:inequalityforfnpsr}
\end{align}


Choose an arbitrary $m_0 \in M(p)$. Since $\{m_k\}$ is dense in $M(p)$, we can choose a subsequence $m_{k_i}$ satisfying $d_M(m_{k_i}, m_0) \to 0$ as $i \to\infty$. For each $i$, the inequality (\ref{eq:inequalityforfnpsr}) with $(m,m')$ replaced by $(m_0, m_{k_i})$ is
$$\fnpsr(m_{k_i}) - g_n(m_0,m_{k_i}) \le \fnpsr(m_0) \le \fnpsr(m_{k_i}) + g_n(m_0,m_{k_i}).$$
Taking the limit as $n\to\infty$, by (\ref{eq:densesequence converge}) and (\ref{eq:convergence_sequence_dpsraverage}),
\begin{align*}
\fpsr(m_{k_i}) - g(m_0,m_{k_i}) &\le \liminf_{n\to\infty}   \fnpsr(m_0)\\
                    & \le \limsup_{n\to\infty} \fnpsr(m_0)   \\
                    & \le \fpsr(m_{k_i}) + g(m_0,m_{k_i}),
\end{align*}
where $g(m,m') = d_M(m,m')( 2\int_{\Symp} \dpsr(X,m') P(dX) +  d_M(m,m'))$. Further taking the limit as $i \to \infty$, since $\fpsr$ is continuous (see Lemma~\ref{lem:contfpsr}), we have proven (\ref{eq:fnpsrtofpsr_proof}).

To show (\ref{eq:fnpsrtofpsrmn}), let $m_n$ be a sequence such that $\lim_{n\to\infty}d_M(m_n,m) = 0$ for some $m \in M(p)$. Again from the inequality (\ref{eq:inequalityforfnpsr}), we have
$$\fnpsr(m) + g_n(m_n,m)) \le \fnpsr(m_n) \le \fnpsr(m) + g_n(m_n,m).$$
By (\ref{eq:fnpsrtofpsr_proof}), $\fnpsr(m)$ converges to $\fpsr(m)$ almost surely, while $ g_n(m_n,m) \to 0$ almost surely as well.  This proves (\ref{eq:fnpsrtofpsrmn}).
\end{proof}

We next show that with probability 1
\begin{equation}\label{eq:Ziezold-consistency}
  \cap_{k=1}^\infty \overline{ \cup_{n=k}^\infty \Enpsr} \subset \Epsr.
\end{equation}
We assume $\cap_{k=1}^\infty \overline{ \cup_{n=k}^\infty \Enpsr}$ is non-empty; otherwise, (\ref{eq:Ziezold-consistency}) holds.

Let $\ell = \inf_{m \in M(p)} \fpsr(m)$ and $\ell_n = \inf_{m \in M(p)} \fnpsr(m)$ for $n =1,2,\ldots$.
By (\ref{eq:fnpsrtofpsr}), we have for any $m \in M(p)$ there exists $\epsilon_n \to 0$ such that
$ \fpsr(m) \ge \fnpsr(m) - \epsilon_n \ge \ell_n - \epsilon_n$. Taking the limit superior of both sides, we have
$ \fpsr(m) \ge \limsup_{n\to\infty} \ell_n$. Taking the infimum over $m \in M(p)$, we get
\begin{equation}\label{eq:limsupell}
  \ell = \inf_{m \in M(p)} \fpsr(m) \ge \limsup_{n\to\infty} \ell_n.
\end{equation}
Thus, any subsequential limit of $\ell_n$ is bounded above by $\ell$.

For any $m_0 \in \cap_{k=1}^\infty \overline{ \cup_{n=k}^\infty \Enpsr}$, there exists a subsequence $\{n_k: k=1,2,\ldots\}$ of $1,2,\ldots$ such that $m_{n_k} \in E_{n_k}^{(\mathcal{PSR})}$ and $\lim_{k\to\infty} d_M(m_{n_k}, m_0)=0$.
By (\ref{eq:fnpsrtofpsrmn}),
\begin{equation}\label{eq:applicationtosubseq}
 \ell_{n_k} = f_{n_k}^{(\mathcal{PSR})}(m_{n_k}) \to \fpsr(m_0) \ge \inf_{m \in M(p)}\fpsr(m) = \ell
\end{equation}
as $k \to \infty$ almost surely. In view of (\ref{eq:limsupell}),
$\fpsr(m_0) \le \ell$, which in turn gives $\fpsr(m_0) = \ell$, i.e., $m_0 \in \Epsr$. Since $m_0 \in  \cap_{k=1}^\infty \overline{ \cup_{n=k}^\infty \Enpsr}$ was arbitrary, (\ref{eq:Ziezold-consistency}) is verified.
%

Let $a_n := \sup_{m \in \Enpsr }d_M(m,\Epsr)$. For each $n$, choose $m_n \in \Enpsr$ such that
\begin{equation}\label{eq:choosemn}
  a_n(1-\frac{1}{n}) < d_M(m_n, \Epsr) \le a_n.
\end{equation}

Assume with probability 1 that the event (\ref{eq:Ziezold-consistency}) is occurred. Then, every accumulation point of $m_n$ lies in $\Epsr$. Thus, either $a_n \to 0$ or there is no accumulation point (equivalently, $a_n \to \infty$ as $n\to\infty$).
 We will rule out the case $a_n \to \infty$ by contradiction.

Suppose that $\lim_{n\to\infty} a_n = \infty$.
Then, for any choice $m_0 \in \Epsr$, we have by (\ref{eq:choosemn}) $d_M(m_n,m_0) \to \infty$.

For $r>0$ let $K_r = \{ S \in \Symp: \mbox{every eigenvalue $\lambda$ of $S$ satisfies $|\log\lambda| \le r$}\}$. Choose $r_0$ large enough so that $P(K_{r_0}) > 0$ and $m_0 \in \mathcal{F}^{-1}(K_{r_0})$; such $r_0$ exists since $\cup_{r > 0} K_r = \Symp$.
Then for any $X, Y \in K_{r_0}$,
\begin{equation}\label{eq:compactr1}
  \sup_{m \in \mathcal{F}^{-1}(X), m' \in \mathcal{F}^{-1}(Y)}d_M(m,m') \le (k{\rm diag}(SO(p))^2 + 4pr^2 )^{1/2} =: C(r_0).
\end{equation}

We now claim that for any $m_n$ with $d_M(m_n,m_0) \to \infty$, there exists $M_n \to \infty$ satisfying
\begin{equation}\label{eq:coercivity}
  \dpsr(X,m_n) \ge M_n,\quad \mbox{for any $X \in K_{r_0}$}
\end{equation}
and $\lim_{n\to\infty}M_n = \infty$.
To verify, for any $X \in \Symp$ and $m_n \in M(p)$, let $m_X^{(n)} \in \mathcal{F}^{-1}(X)$ satisfy
$$d_M(m_X^{(n)}, m_n) = \min_{m \in \mathcal{F}^{-1}(X)} d_M(m,m_n) =  \inf_{m \in \mathcal{F}^{-1}(X)} d_M(m,m_n)= \dpsr(X,m_n).$$
By the triangle inequality, and by (\ref{eq:compactr1}), for any $X \in K_{r_0}$ we have
\begin{align*}
  d_M(m_n,m_0) & \le d_M(m_n, m_X^{(n)}) +  d_M( m_X^{(n)}, m_X^{(0)}) + d_M(m_X^{(0)},m_0) \\
              & = \dpsr(X, m_n) + d_M( m_X^{(n)}, m_X^{(0)}) + \dpsr(X,m_0) \\
              &\le \dpsr(X, m_n) + 2C(r_0).
\end{align*}
In particular, $\dpsr(X,m_n) \ge d_M(m_n,m_0) - 2C(r_0)$. Taking $M_n = d_M(m_n,m_0) - 2C(r_0)$, (\ref{eq:coercivity}) is verified.

For each $n$, choose a subsequence $n_1,\ldots, n_{k(n)}$ of $1,2,\ldots,n$ so that $X_{n_j} \in K_{r_0}$ for $j = 1,\ldots, k(n)$. Then by the strong law of large numbers, $\lim_{n\to\infty}k(n)/n  = P(K_{r_0})>0$. This fact, together with (\ref{eq:coercivity}), gives
$$
\fnpsr(m_n) = \frac{1}{n}\sum_{i=1}^n \dpsr^2(X_i,m_n) \ge \frac{1}{n}\sum_{j=1}^{k(n)} \dpsr^2(X_{n_j},m_n) \ge \frac{k(n)}{n}M_n^2 \to \infty.$$
However, $\fnpsr(m_n) = \inf_{m \in M(p)} \fnpsr(m) \le \fnpsr(m_0)$, and $\fnpsr(m_n) \to \infty$ while $\fnpsr(m_0) \to \fpsr(m_0) <\infty$ by (\ref{eq:fnpsrtofpsr}), yielding a contradiction. Thus $a_n \to 0$ under the probability one event satisfying (\ref{eq:Ziezold-consistency}).
\end{proof}

The following lemma shows that the conclusion of Theorem~\ref{thm:consistency} is equivalent to the strong consistency of \cite{Huckemann2011b} in the sense of \cite{Bhatt2003}.
\begin{lemma}\label{lem:BPconsistency-vs-sup}
Let $(M,d)$ be a metric space and let $E, E_1,E_2,\ldots$ are non-empty sets in $M$. We have $\lim_{n \to \infty} \sup_{ m \in E_n }d( m , E) = 0$ if and only if, for any $\epsilon >0$, there exists $N(\epsilon)$ such that $\cup_{n \ge N(\epsilon)} E_n \subset \{m \in M: d(E,m) \le \epsilon\}$.
\end{lemma}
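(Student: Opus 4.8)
The plan is to prove the equivalence by directly unwinding the definitions, since the set-valued notion of convergence here is just a reformulation of ordinary convergence of the nonnegative real sequence $a_n := \sup_{m \in E_n} d(m,E)$, where $d(m,E) := \inf_{e \in E} d(m,e)$ (finite and $\ge 0$ because $E \ne \emptyset$, and equal to the quantity written $d(E,m)$ in the statement). I would first observe that the containment condition in the lemma says exactly: for every $\epsilon > 0$ there is an $N(\epsilon)$ with $d(m,E) \le \epsilon$ for all $m \in \bigcup_{n \ge N(\epsilon)} E_n$.

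For the forward implication, assume $a_n \to 0$ and fix $\epsilon > 0$. Pick $N(\epsilon)$ with $a_n \le \epsilon$ for all $n \ge N(\epsilon)$. Then for any such $n$ and any $m \in E_n$ we have $d(m,E) \le \sup_{m' \in E_n} d(m',E) = a_n \le \epsilon$, so $m \in \{m'' \in M : d(E,m'') \le \epsilon\}$; taking the union over $n \ge N(\epsilon)$ gives the desired containment.

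For the reverse implication, fix $\epsilon > 0$ and let $N(\epsilon)$ be as provided by the hypothesis. For every $n \ge N(\epsilon)$, each $m \in E_n$ satisfies $d(m,E) \le \epsilon$, hence $a_n = \sup_{m \in E_n} d(m,E) \le \epsilon$. Since this holds for all $n \ge N(\epsilon)$, we get $\limsup_{n\to\infty} a_n \le \epsilon$; as $\epsilon > 0$ was arbitrary and $a_n \ge 0$, this forces $a_n \to 0$, which is the first assertion.

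I expect no genuine obstacle: the lemma is a definition-chase. The only point needing a little care is bookkeeping between strict and non-strict inequalities and the interchangeable notations $d(m,E)$ and $d(E,m)$, which is handled cleanly by passing to the $\limsup$ rather than attempting to match the thresholds $N(\epsilon)$ term by term.
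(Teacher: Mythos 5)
Your proof is correct and is essentially the same definition-chase as the paper's: both directions unwind $\lim_n \sup_{m\in E_n} d(m,E)=0$ into the $\epsilon$–$N(\epsilon)$ statement and translate membership in $\bigcup_{n\ge N(\epsilon)} E_n$ into the bound $d(m,E)\le\epsilon$. No gaps; the handling of the limit via $\limsup$ is a harmless cosmetic difference from the paper's phrasing.
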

\begin{proof}[Proof of Lemma~\ref{lem:BPconsistency-vs-sup}]
By definition, $\lim_{n \to \infty} \sup_{ m \in E_n }d_M( m , E) = 0$ is equivalent to the statement that  for any $\epsilon>0$, there exists $N(\epsilon)$ such that for all $n \ge N(\epsilon)$, $\sup_{ m \in E_n }d_M( m , E) \le \epsilon$. If $\sup_{ m \in E_n }d_M( m , E) \le \epsilon$ for all $n \ge N(\epsilon)$, then for any $m \in \cup_{n \ge N(\epsilon)} E_n$, $m \in E_n$ for some $n \ge N(\epsilon)$, and $d(m,E) \le \sup_{ m \in E_n }d_M( m , E) \le \epsilon$, which gives $m \in \overline{B}_\epsilon(E) :=\{m \in M: d(M,m) \le \epsilon\}$. On the other hand, if $N(\epsilon)$ is such that
 $\cup_{n \ge N(\epsilon)} E_n \subset \overline{B}_\epsilon(E)$, then for any $n \ge N(\epsilon)$ and
 for any $m \in E_n$, $d(m,E) \le \epsilon$. Thus, $\sup_{m\in E_n} d(m,E) \le \epsilon$ as well.
\end{proof}

\begin{proof}[Proof of Corollary~\ref{cor:Hausdorff}]
By Theorem~\ref{thm:consistency} there exists a probability 1 event $A$ in which  $d_M(m_n, \Epsr) \to 0$ as $n\to\infty$ for any sequence $\{m_n \in \Enpsr\}$.
Assume the event $A$ has occurred. Choose a sequence $\{m_n \in \Enpsr\}$.
%
Since $\Epsr = \mathcal{G}(p) \cdot \mu$, there exists a sequence $\{h_n \in \mathcal{G}(p)\}$ such that
$d_M(m_n, h_n \cdot \mu) \to 0$ as $n\to \infty$.

Now fix an arbitrary element $m_0 \in \Epsr$, and for each $n$ let $h_n' \in \mathcal{G}(p)$ be such that $m_0 = h_n' \cdot h_n \cdot \mu$. Then, as $n\to \infty$
$$d_M(h_n' \cdot m_n, m_0)  = d_M(h_n' \cdot m_n, h_n' \cdot h_n \cdot \mu) = d_M(m_n, h_n \cdot \mu) \to 0.$$
Since $h_n' \cdot m_n \in \Enpsr$ as well as $m_n\in \Enpsr$ (as observed in (\ref{sample_PSR_mean_nonunique})), we have
$$
0 \le d_M(\Enpsr, m_0) = \inf_{m \in \Enpsr} d_M(m, m_0) \le d_M(h_n' \cdot m_n, m_0),$$
and in particular $d_M(\Enpsr, m_0) \to 0$ as $n\to \infty$. Since $m_0 \in \Epsr$ was arbitrary, we have shown (\ref{eq:cor:Hausdorff1}).
Assertion (\ref{eq:cor:Hausdorff2}) is verified by the conclusion of Theorem~\ref{thm:consistency} and (\ref{eq:cor:Hausdorff1}).
\end{proof}

\begin{proof}[Proof of Corollary~\ref{cor:consistency}]
(i) The hypotheses of the corollary ensure that Theorem~\ref{thm:consistency} applies, so the event that for every $\epsilon >0$, there exists $N(\epsilon)$ such that for all $n \ge N(\epsilon)$
\begin{equation}\label{eq:cor:consist_proof}
 \sup_{m \in \Enpsr} d_M(m, \Epsr) \le \epsilon,
\end{equation}
occurs with probability 1. Assume this event has occurred.
Fix an $\epsilon>0$, and let $N = N(\epsilon)$.
For each $n \ge N$, choose $S_n \in \mathcal{F}(\Enpsr)$. Then one can choose $m_n \in \mathcal{F}^{-1}(S_n)$ such that $m_n \in \Enpsr$.
Likewise for an arbitrary $S_0 \in \mathcal{F}(\Epsr)$, let $m_0 \in M(p)$ satisfy $m_0 \in \mathcal{F}^{-1}(S_0) \cap \Epsr$.
Thus,
\begin{align*}
  \dsr(S_n, \mathcal{F}(\Epsr)) & = \inf_{S_0\in \mathcal{F}(\Epsr)} \dsr(S_n,S_0) \\
   & = \inf_{m_0\in \Epsr} \dsr(S_n,\mathcal{F}(m_0))\\
   & \le \inf_{m_0\in \Epsr} \dpsr(S_n, m_0) \quad (\mbox{by (\ref{SR_PSR_ineq})})\\
   & \le \inf_{m_0\in \Epsr} d_M(m_n, m_0)
    = d_M(m_n, \Epsr)
    \le \epsilon.
\end{align*}
The last inequality holds since we have assumed the event (\ref{eq:cor:consist_proof}) has occurred.
Since $S_n \in \mathcal{F}(\Enpsr)$ was arbitrary, $\lim_{n\to\infty} \sup_{S \in \mathcal{F}(\Enpsr)} \dsr(S, \mathcal{F}(\Epsr)) = 0$. Hence the statement that this limit equals 0 is a probability-one event.

(ii) and (iii). Since the probability measure $P$ has finite PSR-variance, the condition $\Esr \subset \Sptop$ implies that $\mathcal{F}(\Epsr) = \Esr$ (by Theorem \ref{thm:SRvsPSRequivalence}). Conclusion (i) then implies conclusion (ii), which in turn implies conclusion (iii).
\end{proof}

\subsubsection{Proof of Theorem \ref{thm:PSR_mean_CLT}}\label{sec:proof_of_thm:PSR_mean_CLT}

\begin{proof}[Proof of Theorem \ref{thm:PSR_mean_CLT}]

By Assumption (A2), $\Epsr = \mathcal{G}(p) \cdot m$ for some $m' \in \Epsr$, which implies that for any $m,m' \in \Epsr$ and any $S \in \Symp$, $\dpsr(S,m) = \dpsr(S,m')$. Therefore, in the presence of Assumption (A2), Assumption (A3) implies that $P(\dpsr(X,m_0) < r'_{cx}) =1$ for any $m_0 \in \Epsr$. Let $m_0 \in \Epsr$ be given.

Let $A_1$ be the event that $\Enpsr$ is unique up to the action of $\mathcal{G}(p)$ for all $n$, and let $A_2$ be the event that $\dpsr(X_i, m_0) < r'_{cx}$ and $X_i \in \Sptop$ for all $i \in \mathbb{N}$. Assumption (A2) implies that $P(A_1) = 1$.  Assumptions (A1) and (A3) imply that $P(A_2) = 1$ as well. In the rest of this proof, we assume that the probability 1 event $A_1\cap A_2$ has occurred.

For each $n$, let $m_n' \in \Enpsr$ be arbitrary. Since the event $A_1$ has occurred, $\Enpsr = \mathcal{G}(p) \cdot m_n'$. Let $m_n$ be any minimizer of the function $d_M(\cdot, m_0)$ over $\mathcal{G}(p) \cdot m_n'$. Thus, by definition, $m_n$ is a sample PSR mean. We first show that such an $m_n$ is unique.

Let ${\rm supp}_1(P) = \{S \in \Symp: \dpsr(S, m_0) < r'_{cx}\} \cap \Sptop$, so that $X_i \in {\rm supp}_1(P)$ for all $i \in \mathbb{N}$ since the event $A_1\cap A_2$ is occurred.
For any $S \in {\rm supp}_1(P)$, one can choose a unique $\tilde{m} := \tilde{m}(m_0; S) \in \Fc^{-1}(S)$ such that $d_M(\tilde{m},m_0) = \dpsr(S, m_0) < \min_{h \in \mathcal{G}(p) \setminus \{I_p\}} d_M( h \cdot \tilde{m}, m_0)$. Note that  $d_M(\tilde{m},m_0) < r'_{cx}$ since $S \in {\rm supp}_1(P)$.
To verify that such a choice is indeed unique, let $m' = h \cdot \tilde{m}$ for some $h \in \mathcal{G}(p) \setminus \{I_p\}$. An application of Lemma~\ref{eig_decomp_dist_bound}(c), together with the triangle inequality, gives $d(m',m_0) > 3r_{cx}'$, and thus the minimizer $m$ is unique.

For each $i = 1,\ldots,n$, $X_i \in {\rm supp}_1(P)$, so we can set $m_{_{X_i}} = \tilde{m}(m_0; X_i)$ to be the unique eigen-decomposition of $X_i$, closest to $m_0$. Then, the sample PSR mean objective function can be written as
\begin{equation}\label{eq:PSRtoM}
  \fnpsr(m) = \frac{1}{n}\sum_{i=1}^n \dpsr^2(X_i,m) = \frac{1}{n}\sum_{i=1}^n d_M^2(m_{_{X_i}},m),
\end{equation}
which is exactly the Fr\'{e}chet objective function on $(M(p),d_M)$ with data-points $m_{_{X_1}}, \dots, m_{_{X_n}}$.
Since $d_M(m_{_{X_i}}, m_0) <  {r'_{cx}}$ for all $i$, Lemma \ref{lem:geom} implies that the Fr\'{e}chet mean
$\bar{m}_n$ of $\{m_{_{X_1}}, \dots, m_{_{X_n}}\}$
is unique and satisfies $d_M(\bar{m}_n, m_0) <  {r'_{cx}}$. Moreover, by (\ref{eq:PSRtoM}), $\bar{m}_n \in \Enpsr$. Since $\Fc(\bar{m}_n) = \Fc(\Enpsr) \in {\rm supp}_1(P)$ as well, $\bar{m}_n$ is the unique minimizer of $\min_{m \in \mathcal{G}(p)\cdot \bar{m}_n} d_M(m,m_0)$. Thus, $m_n = \bar{m}_n$ is determined uniquely (if the probability 1 event $A_1 \cap A_2$ has occurred).

(a) By Corollary~\ref{cor:Hausdorff}, $\lim_{n \to \infty} d_H( \Enpsr , \Epsr) = 0$  with probability 1. Therefore, with probability 1 the sequence $\{m_n \in \Enpsr\}$ chosen above satisfies,
\begin{equation}\label{eq:clt_step_1_consistency}
  \lim_{n\to\infty}d_M(m_n, m_0) = 0.
\end{equation}

(b)
For the sample $X_1,\ldots,X_n$, the PSR mean $m_n$ minimizes $\fnpsr(\cdot)$. Thus,
for any neighborhood $V \subset B_{r'_{cx}}(0) \subset \Real^d$ containing zero,
$x_n$ is a minimizer of the function
$g_n: V \to [0,\infty)$ defined by
\begin{equation}\label{eq:clt_step_2_consistency}
  g_n(x) = \sum_{i=1}^n \dpsr^2(X_i, \phi^{-1}_{m_0}(x)),
\end{equation}
 where $\phi_{m_0}(\cdot) = {\rm vec}\circ \tilde\varphi_{m_0}(\cdot)$  (see \eqref{eq:local_chart} and \eqref{eq:vectorization}).
 Note that (\ref{eq:vectorization}) implies that for any $x \in B_{r'_{cx}}(0)$, $\phi^{-1}_{m_0}(x) \in B_{r'_{cx}}^{d_M}(m_0)$.

We now establish that for each $S \in {\rm supp}_1(P)$, there exists a unique $m_S \in \mathcal{F}^{-1}(S)$ such that
\begin{equation}\label{eq:53.5}
  \dpsr^2(S, \phi_{m_0}^{-1}(x)) =  d_M^2(m_S, \phi_{m_0}^{-1}(x))
\end{equation}
 for all $x \in  B_{r'_{cx}}(0)$.
To verify this, suppose that $S \in {\rm supp}_1(P)$ satisfies \eqref{eq:53.5} and let $m_S \in \Fc^{-1}(S)$ be the unique point at which $\min_{m \in \mathcal{F}^{-1}(S)} d_M(m, m_0)$ is achieved.
%
%
By the triangle inequality,
$$
d_M(m_S, \phi_{m_0}^{-1}(x)) \leq d_M(m_S, m_0) + d_M(m_0, \phi_{m_0}^{-1}(x)) < 2r'_{cx}.$$
For $m' \in \mathcal{F}^{-1}(S)$ such that $m' \neq m_S$, we have
$d_M(m',m_S) \ge 4r'_{cx}$ by Lemma \ref{eig_decomp_dist_bound}(c). Again by the triangle inequality,
$$d_M(m', \phi_{m_0}^{-1}(x)) \geq d_M(m',m_S) - d_M(m_S,\phi_{m_0}^{-1}(x)) > 2r'_{cx} > d_M(m_S, \phi_{m_0}^{-1}(x)).$$
Thus, $m_S$ is the unique element of $\mathcal{F}^{-1}(S)$ satisfying
\begin{equation}\label{eq:Xtom_X}
  \dpsr(S, \phi_{m_0}^{-1}(x)) =
     \inf_{m \in \mathcal{F}^{-1}(S)} d_M(m, \phi_{m_0}^{-1}(x)) = d_M(m_S, \phi_{m_0}^{-1}(x)),
\end{equation}
as asserted.

 Next, recall that for each $i$, $m_{_{X_i}} = \tilde{m}(m_0; X_i)$ is the eigen-decomposition of $X_i$ closest to $m_0$, and let $x \in B_{r'_{cx}}(0)$ be arbitrary.
Using (\ref{eq:Xtom_X}), we rewrite (\ref{eq:clt_step_2_consistency}) as
$$ g_n(x) = \sum_{i=1}^n d_M^2(m_{_{X_i}}, \phi^{-1}_{m_0}(x)),$$
where for every $i$,
$m_{_{X_i}} \in B_{r'_{cx}}^{d_M}(m_0)$, a ball that also contains $\phi^{-1}_{m_0}(x)$.
We shall now discuss the consequence of the bounded support $B_{r'_{cx}}^{d_M}(m_0)$.
It is well known that $({\rm Diag}^+(p), g_{\mathcal{D}^+})$ has non-positive sectional curvature and infinite injectivity radius, and $(SO(p), k g_{SO})$ has non-negative sectional curvature (bounded above by $\Delta(SO(p), k g_{SO}) = 1/(4k)$ and injectivity radius $r_{\rm inj}(SO(p), k g_{SO}) = \sqrt{k}\pi$. Thus, for the product Riemannian manifold $(M(p), g_M)$, it follows that  $r_{\rm inj} := r_{\rm inj}(M, g_M) = r_{\rm inj}(SO(p), k g_{SO})$,
$\Delta := \Delta(M, g_M) = \Delta(SO(p), k g_{SO})$,
and that
the radius $r'_{cx}$ of the ball $B_{r'_{cx}}^{d_M}(m_0)$ satisfies
\begin{equation}\label{eq:radius}
   r'_{cx} = \frac{\sqrt{k}\beta_{\mathcal{G}(p)}}{4} \le
      \frac{\sqrt{k}\pi}{8} = \frac{1}{2}\min\{ r_{\rm inj}, \frac{\pi}{2\sqrt{\Delta}}\} = \frac{\sqrt{k}\pi}{2},
\end{equation}
where the first inequality follows from Lemma~\ref{eig_decomp_dist_bound}(b). (The right-hand side of (\ref{eq:radius}) equals the convexity radius of $(M, g_M)$.)

By \cite{Afsari2011} and \cite{afsari2013convergence}, the inequality (\ref{eq:radius}) ensures that
(i) the open ball $B_{r'_{cx}}^{d_M}(m_0)$ is \emph{strongly convex}\footnote{A set $B$ in $(M,g)$ is strongly convex if any two points in $B$ can be connected by a unique minimal-length geodesic in $M$ and the geodesic segment entirely lies in $B$.} in $M(p)$;
(ii) for any $m \in B_{r'_{cx}}^{d_M}(m_0)$, the function $d_M^2(m,\cdot)$ is a $C^\infty$ function in $B_{r'_{cx}}^{d_M}(m_0)$ (since the cut locus of $m$ does not intersect $B_{r'_{cx}}^{d_M}(m_0$)), which in turn implies that
$g_n$ is $C^\infty$
; and
(iii) for any $m_1,\ldots,m_n \in B_{r'_{cx}}^{d_M}(m_0)$, the function $\sum_{i=1}^n d_M^2(m_i, \cdot)$ (restricted to $B_{r'_{cx}}^{d_M}(m_0)$) is convex (strictly convex if at least two $m_i$'s are distinct). In particular, when $X_i$'s are sampled from an absolutely continuous distribution $P$ (as assumed in (A1)),  with probability 1 the Hessian matrix of $d_M^2(m_{_{X_i}}, \phi^{-1}_{m'}(x))$ at $x =0$, for arbitrary $m' \in  B_{r'_{cx}}^{d_M}(m_0)$, is well-defined and positive definite.
Furthermore, thanks to the identification \eqref{eq:53.5}, we are assured that with probability 1, for any $i = 1,\ldots,n$, the function $h_{X_i}(\cdot):=\dpsr^2(X_i, \phi_{m_0}^{-1}(\cdot)) = d_M^2(m_{X_i}, \phi_{m_0}^{-1}(\cdot))$ is $C^\infty$.

For all $n\geq 0$, now let $x_n = \phi_{m_0}(m_n)  \in \Real^d$ ; note that $x_0=0$.  Observe that (\ref{eq:clt_step_1_consistency}) implies that as $n\to\infty$, $\phi_{m_0}(m_n)\to \phi_{m_0}(m_0)$, i.e. that $x_n\to 0$.
By Theorem 2.1 of \cite{Afsari2011} (or, equivalently by Theorem 2.6 of \cite{afsari2013convergence}), the gradient vector field ${\rm grad}_x\, g_n(x)$ has a unique zero in $B_{r'_{cx}}(0)$, and the location of this zero is $x_n$. 

By the Mean Value Theorem applied to each component of ${\rm grad}_x\, g_n$,
$$0 = n^{-1/2}{\rm grad}_x\, g_n(x_n) = n^{-1/2}{\rm grad}_x\, g_n(0) + n^{-1}\mathbf{H} g_n( t_n) \cdot (\sqrt{n} x_n),$$
where the $j$th coordinate of $t_n$ is $t_j   x_{n,j}$ for suitable $t_j \in [0,1]$, where $x_{n,j}$ is the $j$th coordinate of $x_n \in \Real^d$.

Let $x \in B_{r'_{cx}}(0)$ be arbitrary. Since $X_1,X_2,\ldots$ are i.i.d. with bounded support and $\dpsr^2(X_i, \phi^{-1}_{m_0}(\cdot))$ is $C^\infty$, the random vectors ${\rm grad}_x\, \dpsr^2(X_i, \phi^{-1}_{m_0}(x))$  ($i = 1,\ldots, n$) are i.i.d. and bounded. This fact leads to
\begin{align*}
   \int_{\Symp}\frac{\partial}{\partial x_i}\dpsr^2(X, \phi^{-1}_{m_0}(x)) P(dX) & = \frac{\partial}{\partial x_i} \int_{\Symp} \dpsr^2(X, \phi^{-1}_{m_0}(x))P(dX)
 \end{align*}
 (which equals 0 at $x = 0$ as $m_0 \in \Epsr$), and thus $E({\rm grad}_x\, \dpsr^2(X, \phi^{-1}_{m_0}(0))) = 0$. Moreover, since the product of any two entries of ${\rm grad}_x\, \dpsr^2(X_i, \phi^{-1}_{m_0}(x))$ is bounded as well, $\Sigma_P := {\rm Cov}({\rm grad}_x\, \dpsr^2(X, \phi^{-1}_{m_0}(0)))$ exists.

 Since the first two moments of ${\rm grad}_x\, \dpsr^2(X_i, \phi^{-1}_{m_0}(0))$ exist,  the multivariate classical  central limit theorem  \citep[\emph{cf}. ][]{anderson1958introduction} implies that
 as $n\to\infty$,
$$n^{-1/2}{\rm grad}_x\, g_n(0) = \frac{\sqrt{n}}{n}\sum_{i=1}^n {\rm grad}_x\, \dpsr^2(X_i, \phi^{-1}_{m_0}(0))$$
 weakly converges to $N_d(0, \Sigma_P)$. 


 Likewise,  the continuity of $\mathbf{H} g_n(\cdot)$   ensures that each entry of the matrix
$$H_P(x) := E \left( \mathbf{H}\dpsr^2(X, \phi^{-1}_{m_0}( x ))\right)$$ exists.
Since, with probability 1, $t_n \to 0$ (because $x_n \to 0$) and since  $\mathbf{H} g_n(\cdot) = \sum_{i=1}^n \mathbf{H} h_{X_i}(\cdot) $ is continuous, the law of large numbers implies that  $n^{-1}\mathbf{H} g_n(t_n)$ converges in probability to $H_P := H_P(0)$ as $n\to\infty$. Recall that, with probability 1, the function $h_{X}(\cdot)$ is $C^\infty$ and strictly convex on $B_{r'_{cx}}(0)$, and thus for any $x\in  B_{r'_{cx}}(0)$, both $\mathbf{H}\dpsr^2(X, \phi^{-1}_{m_0}( x ))$ and $\mathbf{H} g_n(x) = \sum_{i=1}^n  \mathbf{H} h_{X_i}(x)$ are positive definite almost surely. Therefore, $H_P = H_P(0)$ is invertible, and so is $\mathbf{H} g_n(t_n)$ almost surely.
Thus, by Slutsky's theorem, $\sqrt{n} x_n = (n^{-1} \mathbf{H} g_n (t_n))^{-1} \times n^{-1/2} {\rm grad}_x\, g_n(0)$ converges in distribution to $N_d(0, H_P^{-1}\Sigma_{P} H_P^{-1})$.
\end{proof}

\section{Additional numerical results}\label{sec:app_QQplot}
  As referenced in Section~\ref{sec:tbm}, we plot the quantiles of the log-eigenvalues and rotation angles of the linearized PSR means against the quantiles of the standard normal distribution for each group in \autoref{fig:Bootstrap_PSR_Mean_QQ_plots}, as a visual check of normality of the linearized PSR sampling distributions. With the exception of the tails, the normal QQ plots remain within the 95\% confidence envelope, despite the small sample sizes $(n_i = 19, 17)$.

\begin{figure}[pt]
\centering
\includegraphics[width = 1\textwidth]{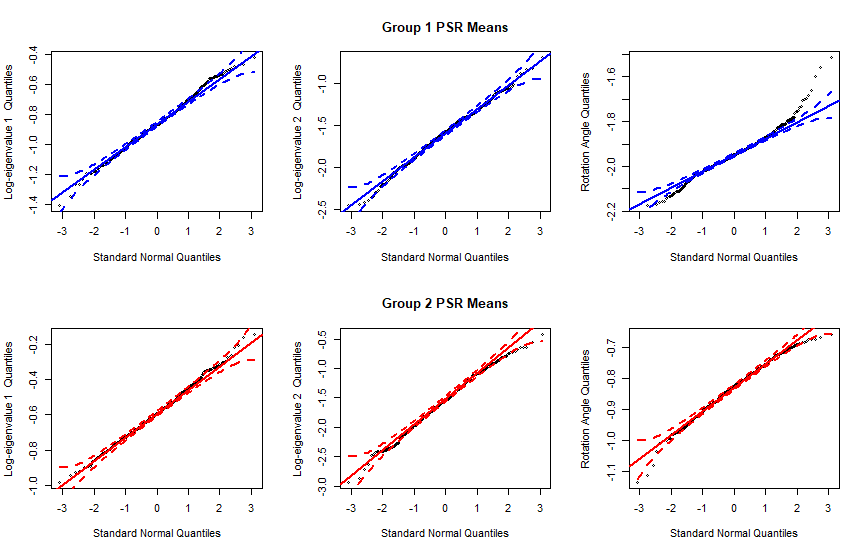}
\caption{Quantile-quantile plots of the log-eigenvalues and angles of the bootstrap PSR means versus standard normal quantiles with 95\% pointwise confidence envelopes represented by dashed lines. }
\label{fig:Bootstrap_PSR_Mean_QQ_plots}
\end{figure}

\end{appendix}
%
%

 \section*{Acknowledgements}
%
The first author was supported by the National Research Foundation of Korea (NRF) grant funded by the Korea government (MSIT) (No. 2019R1A2C2002256).
%



\bibliographystyle{imsart-nameyear} 
\bibliography{scarotmean_2023-06-07_arXiv}       

\end{document}